%% file: main.tex
\documentclass[a4paper,UKenglish,cleveref, autoref, thm-restate]{lipics-v2021}
\title{Deciding two-variable logic with global and local Presburger quantification}

 \author{Michael Benedikt}{University of Oxford, UK}{michael.benedikt@cs.ox.ac.uk}{https://orcid.org/0000-0003-2964-0880}{}
 \author{Chia-Hsuan Lu} {University of Oxford, UK}{chia-hsuan.lu@cs.ox.ac.uk}{https://orcid.org/0009-0008-5446-9150}{}
  \author{Alex Perry}{University of Oxford, UK}{}{}{}
 \author{Tony Tan} {University of Liverpool, UK}{tony.tan@liverpool.ac.uk}{https://orcid.org/0009-0005-8341-2004}{}

\input{macroswithoutcomments}
 \nolinenumbers

\authorrunning{Benedikt, Lu, Perry, and Tan} 

\Copyright{Michael Benedikt, Chia-Hsuan Lu, Alex Perry, and Tony Tan} 

\ccsdesc{Theory of computation~Logic and verification}

\keywords{Presburger quantifiers, Spectrum, Guarded logics} 

\category{} 

\relatedversion{}

\begin{document}

\maketitle

\input{0-abstract}

\newpage  

\input{1-intro}
\input{2-resultstatement}

\input{2-gnn}
\input{3-proofoutline}

\input{4-behaviorchar}

\input{4-realization}

\input{4-gcc}

\input{7-conclusion}

\bibliography{paper}

\newpage
\appendix
\input{appendix}

\end{document}

%% file: macroswithoutcomments.tex
\newcommand{\omitforarxiv}[1]{}
\newcommand{\forarxiv}[1]{#1}
\newcommand{\tildeeta}{\eta\makebox[0pt][r]{\raisebox{-0.3ex}{$\widetilde{\phantom{\eta}}$}}}

\usepackage{amssymb}
\usepackage{amsthm}
\usepackage{stmaryrd}
\usepackage{mathtools}
\usepackage{enumerate}
\usepackage{paralist}
\usepackage{tikz,tkz-graph,tkz-berge}
\usetikzlibrary{petri,topaths}
\usetikzlibrary{arrows.meta}
\usepackage{algorithm}
\usepackage{algpseudocode}

\usepackage{thmtools}
\usepackage{thm-restate}

\usepackage{makecell}
\usepackage{tabularray}

\newcommand{\itemtitle}[1] {{\normalfont \textbf{(#1)}}}
\newcommand{\myparagraph}[1]{\medskip\noindent \textbf{#1.}}

\newcommand{\myeat}[1]{}

\tikzset{
    vtx/.style={
        draw,
        circle,
        line width=.5pt,
        minimum size=5pt,
    }
}

\tikzset{
    newvtx/.style={
        draw,
        circle,
        line width=.5pt,
        minimum size=3pt,
        inner sep = 2pt
    }
}

\tikzset{
    edge/.style={
        line width=1pt,
        black
    }
}
\tikzset{
    bdr/.style={
        rounded corners,
        line width=2pt,
        gray,
        opacity=0.1
    }
}

\newcommand{\transgptwo}     {\tau^\gptwo}
\newcommand{\transgcc}       {\tau^\gcc}
\newcommand{\vocabext}       {\vocab^\mathsf{ext}}
\newcommand{\transstruct}[1] {\tau^{\mathsf{str}}_{ #1 }}
\newcommand{\transgptwop}[1]  {\transgptwo\left[ #1 \right]}
\newcommand{\transgccp}[1]    {\transgcc\left[ #1 \right]}
\newcommand{\vocabextp}[1]    {\vocabext\left[ #1 \right]}
\newcommand{\transstructp}[2] {\transstruct{#1}\left[ #2 \right]}

\newcommand{\reals}{\mathbb{R}}

\newcommand{\asentenceinourlogic}{\Phi}
\newcommand{\univpartphi}{\varphi_\forall}
\newcommand{\quantifierfreepartphi}{\varphi_{\mathsf{qf}}}
\newcommand{\anfgptwosentence}{\varphi_{\mathsf{nf}}}
\newcommand{\aoptwosentence}{\varphi^O}
\newcommand{\anfoptwosentence}{\varphi^O_{\mathsf{nf}}}

\newcommand{\anourlogicsentence}{\varphi}
\newcommand{\cA}{\mathcal{A}}
\newcommand{\cB}{\mathcal{B}}
\newcommand{\cC}{\mathcal{C}}

\newcommand{\cG}{\mathcal{G}}
\newcommand{\cH}{\mathcal{H}}

\newcommand{\cK}{\mathcal{K}}
\newcommand{\cL}{\mathcal{L}}

\newcommand{\cN}{\mathcal{N}}

\newcommand{\cP}{\mathcal{P}}
\newcommand{\cQ}{\mathcal{Q}}
\newcommand{\cS}{\mathcal{S}}

\newcommand{\cV}{\mathcal{V}}

\newcommand{\bbN}{\mathbb{N}}

\newcommand{\bbQ}{\mathbb{Q}}

\newcommand{\bbZ}{\mathbb{Z}}

\newcommand{\bfA}{\mathbf{A}}
\newcommand{\bfC}{\mathbf{C}}
\newcommand{\bfP}{\mathbf{P}}
\newcommand{\bfR}{\mathbf{R}}

\newcommand{\bfb}{\mathbf{b}}
\newcommand{\bfbbar}{\overline{\mathbf{b}}}
\newcommand{\bfc}{\mathbf{c}}

\newcommand{\bfp}{\mathbf{p}}

\newcommand{\bfu}{\mathbf{u}}
\newcommand{\bfv}{\mathbf{v}}

\newcommand{\agcc}{\mathbf{gcc}}
\newcommand{\abehave}{\mathbf{beh}} 

\newcommand{\bfw}{\mathbf{w}}
\newcommand{\bfx}{\mathbf{x}}

\newcommand{\bfzero}{\mathbf{0}}

\newcommand{\frb}{\mathfrak{b}}

\newcommand{\frp}{\mathfrak{p}}

\newcommand{\frx}{\mathfrak{x}}
\newcommand{\frP}{\mathfrak{P}}

\newcommand{\abehav}{\mathsf{beh}}

\newcommand{\gbv}{\mathsf{GBV}}
\newcommand{\agbv}{\mathsf{gbv}}
\newcommand{\lbv}{\mathsf{LBV}}
\newcommand{\albv}{\mathsf{lbv}}
\newcommand{\vocab}{\mathsf{V}}

\newcommand{\oMC}{{\cG\text{-}\sf{MC}}}

\newcommand{\bh}[1]{\mathsf{bh}( #1 )}

\newcommand{\bld}[1]{\boldsymbol{#1}}

\newcommand{\obh}{\mathsf{OutBeh}}

\newcommand{\set}[1]          {\left\{ #1 \right\}}
\newcommand{\setc}[2]         {\left\{ #1 \middle|\  #2 \right\}}
\newcommand{\tuple}[1]        {\left\langle #1 \right\rangle}
\newcommand{\intsinterval}[1] {{\left[ #1 \right]}}
\newcommand{\intinterval}[2]  {{\left[ #1 \mathinner {,}\nobreak #2 \right]}}

\newcommand{\abs}[1]          {\left| #1 \right|}
\newcommand{\sol}[1]          {\left\llbracket #1 \right\rrbracket}

\newcommand{\np}     {\mathsf{NP}}
\newcommand{\nexp}   {\mathsf{NEXP}}

\newcommand{\astructure}{\frak M}
\newcommand{\alogicstructure}{\cA}
\newcommand{\FOtwo} {\mathsf{FO^2}}
\newcommand{\fotwo}{\FOtwo}
\newcommand{\Ctwo}  {\mathsf{C^2}}
\newcommand{\ctwo}  {\Ctwo}
\newcommand{\Ptwo}  {\mathsf{P^2}}
\newcommand{\ptwo}  {\Ptwo}

\newcommand{\GCtwo} {\mathsf{GC^2}}
\newcommand{\GPtwo} {\mathsf{GP^2}}
\newcommand{\gptwo} {\GPtwo}

\newcommand{\GCC}   {\mathsf{GCC}}
\newcommand{\gcc}{\GCC}

\newcommand{\ourlogic}{\mathsf{GandLP^2}}

\newcommand{\optwo}{\mathsf{OutP^2}}
\newcommand{\ounmixedgptwo}{\mathsf{OutGandLP^2}}
\newcommand{\outonlyintegersystemsat}{\mathsf{OutLinSysSat}}
\newcommand{\krlbv}{\mathsf{KRLBV}}
\newcommand{\akrlbv}{\mathsf{krlbv}}
\newcommand{\relu}{\mathsf{ReLU}}
\newcommand{\trrelu}{\mathsf{TrReLU}}
\newcommand{\presby}[1] {{\#_y\left[ #1 \right]}}
\newcommand{\presbx}[1] {{\#_x\left[ #1 \right]}}
\newcommand{\termg}[1]  {{\#\left[ #1 \right]}}

\newcommand{\onetypeof}      {\mathsf{1Tp}}
\newcommand{\twotypeof}      {\mathsf{2Tp}}
\newcommand{\onetypes}       {\mathsf{OneTps}}
\newcommand{\twotypes}       {\mathsf{TwoTps}}
\newcommand{\twotypespi}[2]  {\twotypescmp{}{#1}{#2}}
\newcommand{\onetypescmp}[1] {\mathsf{OneTps^{ #1 }}}
\newcommand{\twotypescmp}[3] {\mathsf{TwoTps^{ #1 }_{ #2, #3 }}}
\newcommand{\twotypescmpi}[2]  {\twotypescmp{}{#1}{#2}}
\newcommand{\dual}[1]        {\widetilde{ #1 }}

\newcommand{\onetypeking}[2]    {\mathsf{KingOneTps}(#1, #2)}
\newcommand{\onetypepeasant}[2] {\mathsf{PeasantOneTps}(#1, #2)}

\newcommand{\villagesystem}{\mathsf{VillSys}}

\newcommand{\homo}        {\mathsf{homog}}
\newcommand{\basetext}    {\mathrm{base}}
\newcommand{\periodtext}  {\mathrm{period}}
\newcommand{\kingtext}    {\mathrm{king}}
\newcommand{\prekingtext} {\mathrm{pre\text{-}king}}

\newcommand{\shadowtext}  {\mathrm{shadow}}

\newcommand{\bhset}[1]        {\cB^{ #1 }}
\newcommand{\bhsethomo}[1]    {\cB^{ #1 }_\homo}
\newcommand{\bhP}[1]          {\bfP^{ #1 }}
\newcommand{\bhbase}[2]       {\bfb^{ #1 }_{ #2 }}
\newcommand{\bhperiod}[2]     {\bfp^{ #1 }_{ #2 }}
\newcommand{\numbhbases}[1]   {{\ell^{ #1 }_\basetext}}
\newcommand{\numbhperiods}[1] {{\ell^{ #1 }_\periodtext}}
\newcommand{\numbhxs}[1]      {{\ell^{ #1 }_x}}

\newcommand{\eval}[1]         {\left\llbracket #1 \right\rrbracket}

\newcommand{\twotypetoking}[1]{{#1}^{th}\mathsf{\text{-}2\text{-}type}}
\newcommand{\rep}             {\mathsf{witness}}

\newcommand{\overapprox}[1]{ {\mathsf{OverApprox}}_{#1}}
\newcommand{\bhg}[2]{\mathsf{bh}_{ #1 }( #2 )}

\newcommand{\numouterperiods}{{\textbf{\#outerperiods}}}
\newcommand{\pibar}{\overline{\pi}}

\newcommand{\phan}   {\mathsf{phan}}
\newcommand{\match}  {\mathsf{MATCH}}
\newcommand{\structcon} {\mathsf{STRUCT}}
\newcommand{\sumcon}    {\mathsf{SUM}}
\newcommand{\liftgcc} {\mathsf{LIFTGCC}}

\newcommand{\projectionbase}[1] {\abs{ #1 }_\basetext}

\newcommand{\projectionbasepi}[2] {\abs{ #1 }_{\basetext, { #2 }}}
\newcommand{\transgptwoex}   {\tau^\gptwo}
\newcommand{\transgptwoexp}[1]{\transgptwoex\left[ #1 \right]}

\newcommand{\gnndim}[1] {{d^{\left( #1 \right)}}}

\newcommand{\coefA}[2]  {\bfA^{\left( #1 \right)}_{ #2 }}
\newcommand{\coefC}[1]  {\bfC^{\left( #1 \right)}}
\newcommand{\coefR}[1]  {\bfR^{\left( #1 \right)}}
\newcommand{\coefb}[1]  {\bfb^{\left( #1 \right)}}

\newcommand{\feat}[1]   {\xi^{\left( #1 \right)}}
\newcommand{\bfwl}[1]    {\bfw^{\left( #1 \right)}}

\newcommand{\inc}       {\text{in}}
\newcommand{\out}       {\text{out}}
\newcommand{\nbr}[1]    {\cN_{ #1 }}

\newcommand{\michael}[1]{  }
\newcommand{\ch}[1]     {  }
\newcommand{\tony}[1]   {  }

%% file: 0-abstract.tex
\begin{abstract}
We look at a two-variable logic over finite graphs that allows Presburger quantifiers: linear arithmetic inequalities
between the number of elements that satisfy subformulas. The prior literature shows that closing two-variable logic under Presburger quantification leads to undecidability, while closing under \emph{local quantification} -- the counts are always over neighbors of a given node -- preserves decidability. Here we show decidability  of satisfiability for an intermediate logic that allows both local counts and global counts, but where the linear constraints do not mix the two. 
The proof requires a significant extension of the prior machinery developed for two-variable logic with counting. 

As a corollary we obtain  decidability of satisfiability for an important family of Graph Neural Networks (GNNs). GNNs are a popular graph learning formalism. For example, in a variation of GNNs considered in the past literature,  local aggregation is used in each layer to create modified values at each vertex of a graph,   and then global aggregation is used in a final layer to obtain  a single numerical value for the graph as a whole. Our logic is motivated by the fact that mixing of local counting and global aggregation in GNNs has been shown to lead to undecidability. Our results here show that if global and local aggregation are separated, one regains decidability.  Decidability of other static analysis problems, like equivalence of GNNs, follows from the decidability of satisfiability. 

\omitforarxiv{Using the same machinery developed to obtain our decidability results, we show that the spectrum (set of cardinalities of finite models) of a formula in this logic is a semilinear set.}
\end{abstract}

%% file: 1-intro.tex

\section{Introduction}\label{sec:intro}

First-order logic is well known to have an undecidable satisfiability problem over finite structures, 
even when we restrict to vocabularies with unary relations and a single binary relation~\cite{classicaldecision}: 
we refer to this as a \emph{graph signature} below. This contrasts with the following two decidability results. 
\emph{Presburger arithmetic}, which uses only additive arithmetic over the integers, is decidable: 
there is a procedure taking a sentence using $+,<$ and constant symbols, and determining whether it holds in the integers. 
\emph{Two-variable logic} ($\fotwo$) is decidable over finite graphs: 
there is a procedure taking a sentence $\phi$ in a graph signature in which every formula has at most two variables, determining whether there is some finite structure for the signature where it holds. 
Various attempts have been made to combine these paradigms. 
For example, BAPA~\cite{qfbapa} is a decidable two-sorted logic extending logic over 
an uninterpreted unary signature with Presburger arithmetic. 
\emph{Two-variable logic with counting} ($\Ctwo$)~\cite{ctwodecide, lidiactwocomplexity, ctwocomplexity} 
is another fragment of first-order logic, decidable over finite structures. 
It extends $\fotwo$ with fixed counting quantification -- e.g.  ``there are at least five $y$ such that $\phi(x,y)$''. 
Guarded two-variable Presburger logic \cite{percentage,twovarpres}, $\gptwo$, 
is a language incomparable in expressiveness with $\ctwo$, 
allowing linear arithmetic comparison between unbounded counts, 
but only counts of the number of neighbors of a given node satisfying a subformula. 
Thus one can express, for example, a property of a node $x$ in a colored graph like 
``$x$ has twice as many green neighbors as it has red neighbors''.
It is easy to define a more general Presburger quantifier ($\ptwo$)
allowing formulas in a free variable $x$ via  linear inequalities between the counts of subformulas $\phi_i(x,y)$. But this language 
has an undecidable satisfiability problem \cite{bartosz-wollic22}. 

Our main contribution is a decidable logic over finite graphs (without self-loops, the default below) intermediate between $\gptwo$ and $\ptwo$, 
allowing both ``local Presburger quantification''  -- 
arithmetic inequalities over the number of neighbors of a given node satisfying a subformula -- 
as well as ``global Presburger quantification'' -- arithmetic inequalities over 
the number of nodes in the entire graph satisfying a formula. 
This logic $\ourlogic$ is the first extension of BAPA and $\gptwo$ known to be decidable over finite structures. 
While we do not have tight complexity bounds for our full logic, 
we consider a variant over directed graphs 
where aggregation is restricted to one direction, namely ``outgoing neighbors'';
for this logic, we show satisfiability is $\nexp$-complete.

As a consequence of our main result, we provide a new decidability result for  graph neural networks (GNNs), 
the dominant formalism for graph learning.
We consider a model originating from~\cite{logicalexpressiveness}: 
\emph{aggregate-combine GNNs with truncated $\relu$ activations and global readout}. 
GNNs can be used to define Boolean queries on nodes in graphs, just as a logical formula with one free variable would. 
They are formed in layers which alternate applying a traditional neural network 
on fixed-size vectors with summation to aggregate an unbounded number of vectors into one. 
``Truncated'' refers to the fact that the neural networks do not use the standard piecewise linear $\relu$ function as building blocks, but rather an analog that is constant going out to infinity in each direction. ``Global readout'' refers to the fact that there are two kinds of aggregation, one over neighbors of a given node, and another over all nodes: analogous to the two quantifications in our logic. 
Several prior works \cite{marcoalogicfor, marcoquantized, fundamental,ourtocl} studied satisfiability for variations of these models. \cite{ourtocl,marcoalogicfor} showed that satisfiability with \emph{only} local aggregation was decidable, while \cite{ourtocl} showed that aggregations that mix global and local in the same layer lead to undecidability.  \cite{quantizedglobal} showed decidability for ``quantized GNNs'', where every operator works on a finite domain. The case where we have local aggregation and also global readout, but not in the same layer, was left open -- even though global readout is generally used only standalone at the final layer. Decidability  of these GNNs with unmixed global readout easily follows from the main result on $\ourlogic$.

To prove our main result, we require new machinery for analyzing models of $\gptwo$, building on and substantially extending the infrastructure developed in prior work on logics over uninterpreted finite graphs with arithmetic \cite{percentage, lidiactwocomplexity, ianrevisited, gctwocomplexity, rudolphcardinality}. One idea from prior work is to describe the structures satisfying a formula using linear inequalities over variables representing various cardinalities of a structure. \cite{uscsl} presented an obstacle to extending this method to mix local and global counting. The \emph{$1$-type spectrum} is the collection of tuples of numbers associated with a $\gptwo$ sentence -- for each finite structure, it includes the vectors of sizes for each Boolean combination of unary atoms. \cite{uscsl} shows that this may not be semilinear. We work around this by not describing a structure directly, but rather an associated \emph{decomposed structure}, in which  the behavior of elements is decomposed into different vector components. 

\omitforarxiv{By reusing proof techniques for the main theorem, we also provide a new result on the \emph{spectrum} for $\gptwo$ sentences. 
The spectrum of a logical sentence is the set of cardinalities of finite models of the sentence.
It has been shown that the spectrum of $\ctwo$ is a semilinear set: definable in Presburger arithmetic~\cite{ianrevisited,spectrumeryktony}.
We show that the same holds for $\gptwo$. This result is surprising, given the negative result on $1$-type spectra from \cite{uscsl} mentioned above.}

%% file: 2-resultstatement.tex

\section{Two-variable Presburger Logic with local and global aggregation, GNNs,  and our main results}\label{sec:results}

\begin{definition}[Linear, semilinear, hybrid linear sets, $\cL(\bfb; \bfP)$, Presburger formulas]\label{def:semilinearpresburger}
A \emph{linear set} is a set of the form 
$\setc{\bfb + i_1 \bfp_1 + \cdots + i_k \bfp_k}{i_1, \ldots, i_k \in \bbN}$,
for vectors of non-negative integers $\bfb, \bfp_1, \ldots, \bfp_k$.
We denote such a set by $\cL(\bfb; \bfP)$ where $\bfP := \set{\bfp_1, \ldots, \bfp_k}$. The pair $\tuple{\bfb, \bfP}$ is called a representation.
The vector $\bfb$ is called the base vector and each vector in $\bfP$ is a period vector.
We let $\cL_\homo(\bfP)$ be $\cL(\bfzero; \bfP)$.
A \emph{semilinear set} is a finite union of linear sets, and a \emph{hybrid linear set} \cite{semilinear} is a semilinear set where each component in the union has the same period vector.
It is known \cite{presburger} that the class of semilinear sets is equivalent in expressiveness to sets defined by
\emph{Presburger formulas}: 
first-order logic with $+,<$, interpreted over the natural numbers.
The conversion between Presburger formulas and semilinear representations is  effective~\cite{ginsburg-spanier}. 
\end{definition}

\myparagraph{Logic with Presburger quantifiers}
Let $\vocab$ be a vocabulary consisting of unary and binary predicates.

\begin{definition}[Structure] \label{def:structure}
A structure for a vocabulary with unary and binary symbols in this work will consist of a nonempty finite domain, along with an assignment of each unary relation to a finite set of elements (sometimes called \emph{nodes} or \emph{vertices}),
and an assignment of each binary relation to a set of pairs of elements,  \emph{without self-loops}: that is, without a pair $\langle a, a \rangle$ being assigned.
\end{definition}

We review logics with Presburger quantifiers, originating with~\cite{percentage}.

\begin{definition}[Local and global Presburger quantifiers]\label{def:localpresburger}
A \emph{local Presburger quantifier} is of the form:
$\cP(x)\ :=\ 
\sum\limits_{t \in \intsinterval{k}} \lambda_t \cdot \presby{G_t(x,y) \wedge \varphi_t(x,y)}\ \circledast\ \delta$,
where $\delta \in \bbZ$;
each $\lambda_t \in \bbZ$;
each $\varphi_t(x,y)$ is a formula with free variables from $\{x,y\}$, 
$G_t(x,y)$ is an atom $R(x,y)$ or $R(y,x)$, and 
$\circledast$ is one of $=$, $\neq$, $\leq$, $\geq$, $<$, or $>$.
Note that $\cP(x)$ has one free variable $x$.
A \emph{global Presburger quantifier} is of the form:
$\cP(x)\ :=\ 
\sum\limits_{t \in \intsinterval{k}} \lambda_t \cdot \presby{\varphi_t(y)}\ \circledast\ \delta$,
where $\lambda_t$, $\circledast$, and $\delta$ are as in the local case, and 
$\varphi_t(x)$ is a formula with $x$ the only free variable.

The semantics of these quantifiers is given inductively,
assuming a semantics for $\varphi_t(x, y)$, and $\varphi_t(x)$.  The semantic function
takes as arguments  a structure and valuation for the free variables. 
{\bf Free variable convention for Global Presburger quantifiers} Although the truth value of $\cP(x)$ is independent of the assignment to $x$, we \emph{still treat it as a formula with one free variable $x$}: the semantics will either assign true to every element of the domain or to none.

For the local quantifier, given a structure $\cG$ and an element $v$ in $\cG$,
we say that \emph{$\cP(x)$ holds in $\cG, v$},
denoted by ${\cG \models \cP(v)}$,
if the following (in)equality holds in $\bbZ$:
\begin{equation*}
\sum\limits_{t \in \intsinterval{k}} \lambda_t \cdot
\abs{\setc{u \in V}{\cG \models G_t(v,u) \wedge \varphi_t(v, u)}}
\ \circledast\ \delta.
\end{equation*}
The semantics for the global quantifier is defined in a similar manner.
Each $\presbx{\varphi_t(x)}$ is interpreted as the number of elements in $\cG$
that satisfy $\varphi_t(x)$.
\end{definition}

Note that \emph{the semantics of all logics considered here is over finite structures}, 
since finiteness guarantees that 
the linear inequalities in Presburger quantifiers are applied to ordinary integers.
Since this will be our default, we say that a sentence $\phi$ is \emph{satisfiable} if 
there is a \emph{finite} structure where it holds, and
similarly for formulas with free variables.

\begin{definition}[$\ourlogic$]
The syntax of \emph{global and local two-variable Presburger logic ($\ourlogic$)} over vocabulary $\vocab$ 
is built up starting with arbitrary relational atoms 
over the vocabulary, with closure under the usual connectives and the following rules for quantifiers.
\begin{compactitem}
\item 
If $\varphi(x)$ is a formula,
then so are $\forall x\ \varphi(x)$ and $\exists x\ \varphi(x)$.
\item 
If $\varphi(x, y)$ is a formula,
then so are $\forall x\ \epsilon(x, y) \to \varphi(x, y)$ and $\exists x\ \epsilon(x, y) \land \varphi(x, y)$, 
where $\epsilon(x, y)$ is an atom involving both variables $x$ and $y$. 
\item 
\itemtitle{Pure local Presburger formula}
If $\set{\varphi_t(x, y)}_{t \in \intsinterval{k}}$ are formulas and
$\set{\epsilon_t(x, y)}_{t \in \intsinterval{k}}$ are atoms involving both $x$ and $y$,
then 
    $\sum\limits_{t \in \intsinterval{k}} \lambda_t \cdot \presby{\epsilon_t(x,y)\land\varphi_t(x, y)}
    \ \circledast\ \delta$
is a formula.
\item \itemtitle{Pure global Presburger formula}
If $\set{\varphi_t(y)}_{t \in \intsinterval{k}}$ is a set of formulas,
then 
    $\sum\limits_{t \in \intsinterval{k}} \lambda_t \cdot \presbx{\varphi_t(x)}
    \ \circledast\ \delta$
is also a formula.
\end{compactitem}
\end{definition}

The inductive definition of free variables of a formula and the semantic function are straightforward, keeping in mind our convention about global Presburger quantifiers having
an implicit free variable.

Note that the standard existential quantification can easily be expressed using Presburger quantifiers, and
thus $\ourlogic$ extends $\GCtwo$, the  guarded two-variable first-order logic with counting.
It also extends 
\emph{the guarded fragment of two-variable logic with Presburger quantifiers} 
($\GPtwo$)~\cite{percentage,uscsl,twovarpres},
the fragment of $\ourlogic$  obtained by excluding pure global Presburger formulas.
For example,
$\exists x \left(\presby{R(x,y)} - 3 \cdot \presby{R(y,x)} = 0\right)$
is a sentence of $\GPtwo$ that holds in a directed graph when 
there is an element that has 3 times as many outgoing edges as incoming edges.
$\ourlogic$ also allows us to express \emph{cardinality constraints on the structure as a whole}.
For example, $\presbx{A(x)}-2\cdot \presbx{B(x)}\ge 0$ is a formula that holds
of a node in a structure when the size of the predicate $A$ is at least twice the size of the predicate $B$.
In fact, up to equisatisfiability, $\GPtwo$ and 
cardinality constraints on the structure are essentially \emph{all} $\ourlogic$ can express: 
see Lemma~\ref{lem:normal}.

Our main result can now be stated:

\begin{theorem}[$\ourlogic$ decidability]\label{thm:ourlogic}
Satisfiability of $\ourlogic$ is decidable.
\end{theorem}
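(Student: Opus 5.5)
The proof reduces a $\ourlogic$ sentence to a normal form, describes its models by a finite integer linear system over a decomposed structure, and checks solvability of that system.

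\textbf{Step 1: normal form.} I first bring the sentence into a normal form, as announced in Lemma~\ref{lem:normal}. Each pure global Presburger subformula has a truth value that does not depend on the element. So I guess a truth assignment to all these subformulas, finitely many guesses in all. Each one is then replaced by $\top$ or $\bot$. The remaining task is to decide whether some finite structure satisfies
\begin{itemize}
\item a $\GPtwo$ sentence $\varphi$, after fresh unary predicates $U_\psi$ have been introduced for the bodies $\psi(x)$ of the counted subformulas, with $\varphi$ forcing $\forall x\,(U_\psi(x)\leftrightarrow \psi(x))$; and
\item a conjunction of the guessed global cardinality constraints $\sum_t \lambda_t\,|U_{\psi_t}| \circledast \delta$, each asserted or negated.
\end{itemize}
By the standard Scott-style normalization for $\GPtwo$, I may further assume $\varphi = \forall x\,\alpha(x)\wedge\forall x\forall y\,(\epsilon(x,y)\to\beta(x,y))\wedge\bigwedge_i\forall x\,\cP_i(x)$, with each $\cP_i$ a local Presburger constraint over quantifier-free bodies.

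\textbf{Step 2: an integer system over a decomposed structure.} I cannot simply intersect the $1$-type spectrum of $\varphi$ with the cardinality constraints, because by~\cite{uscsl} that spectrum need not be semilinear. Instead I set up an integer system over a richer description.
\begin{itemize}
\item For every $1$-type $\pi$ and every local \emph{behavior vector} $\bfv$, a variable $x_{\pi,\bfv}$ counts the elements of type $\pi$ whose neighbor counts, split by $2$-type and by the $1$-type of the neighbor, equal $\bfv$. Behavior vectors are allowed only if they satisfy every $\cP_i$; the set of allowed vectors for each $\pi$ is semilinear, since it is defined by linear constraints.
\item I decompose each element's behavior as a base vector plus a combination of period vectors of that semilinear set, and count the elements contributing each base and each period in separate variables. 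This decomposition is what keeps the system linear. A few ``king'' elements, those of rare $1$-types, are treated explicitly in the usual way.
\item Double-counting (matching) constraints state that, for every $2$-type between $\pi$ and $\pi'$, the total number of out-edges counted on the $\pi$ side equals the total counted on the $\pi'$ side.
\item The global constraints become linear constraints on the sums $\sum_{\pi\ni U_\psi}\sum_\bfv x_{\pi,\bfv}$.
\end{itemize}
The result is a finite system of linear (in)equalities whose variables include products of the form ``number of elements times period multiplicity''. I would aim to make this Presburger-definable by bounding the relevant periods and quantifying over them existentially.

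\textbf{Step 3: soundness and completeness, then decide.} Soundness is routine: a finite model yields a solution. Completeness is the realization step: from a solution, build a graph without self-loops and without multi-edges in which every element realizes its assigned behavior. I would adapt the bipartite matching and edge-swapping arguments used for $\GCtwo$ and $\GPtwo$ (\cite{percentage,lidiactwocomplexity}), after first scaling the solution by a large common multiple. Scaling gives each type many elements, so a regular-bipartite-graph construction applies, and the period decomposition lets period contributions be spread evenly. Solvability of the system is then decidable by Presburger arithmetic, which proves Theorem~\ref{thm:ourlogic}.

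\textbf{Main obstacle.} I expect the realization step to be hardest: showing the decomposed system is complete, in particular that solutions can be scaled or pumped while respecting the global cardinality constraints and the no-self-loop, no-multi-edge requirements. I would first try to show that the solution set of the decomposed system is closed under adding suitable multiples of a fixed periodic solution that inflates every non-king type. Such pumping should change the global counts along a single direction, which may be incompatible with some fixed global constraints, so the constraints that are exactly tight would need separate handling.
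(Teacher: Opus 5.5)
You have a genuine gap in Step~3: the system you build is not complete for structures. Your Step~1 and the base/period bookkeeping of Step~2 (matching constraints plus lifted cardinality constraints) are essentially the paper's overapproximation formula. No products or bounds on periods are needed there: summing base and period multiplicities over all elements of each 1-type already gives a linear system. Period multiplicities on a king cannot be bounded anyway.

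The price of this aggregation is that the system forgets which period copies sit on which element. Every solution is realized only by a D-structure, i.e.\ a multigraph with phantom vertices, and neither scaling nor a regular-bipartite construction turns that into a structure. Here is an example:
\begin{itemize}
\item Take unary $K,P$ and binary $R$.
\item Require that every element lies in exactly one of $K,P$ and that some $P$-element exists.
\item Require that every $P$-element $x$ satisfies $\#_y[R(y,x)\wedge K(y)]=2$.
\item Add the global constraint $|K|=1$.
\end{itemize}
This is unsatisfiable, since the unique $K$-element can send at most one $R$-edge to each $P$-element. Yet with $n$ peasants your system has a solution. The $K$-element carries $2n$ copies of the period ``one $R$-edge to a $P$-element'', matching reads $2n=2n$, and $|K|=1$ holds. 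Scaling destroys $|K|=1$. The $\ctwo$-style handling of kings (guess each king with its whole neighbourhood) is unavailable, because a king's degree is unbounded and tied by Presburger constraints to the other counts.

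What is missing is an \emph{exact} account of edges between the boundedly many kings and everything else, together with a pumping operation that cannot disturb the global constraints. The paper supplies both as follows.
\begin{itemize}
\item It fixes one linear component $\cL(\frb;\frP)$ of the overapproximation's solution set. Kings are the 1-types incorporated basely only by $\frb$, so their number is constant on the whole component.
\item Periods that create king--king edges are eliminated by absorbing boundedly many copies into the base (Lemma~\ref{lem:bound_king}).
\item A model is cut into a bounded kernel and a village realizing a vector of $\cL_\homo(\frP)$. Each king is split into a concrete kernel copy and a phantom shadow in the village.
\item Peasants are described by king-restricted local behavior vectors, which record the exact 2-type (possibly null) towards each of the $k$ kings. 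There are finitely many of these, and the shadow--peasant matching constraints become exact rather than aggregated. This is precisely what rules out the example above.
\item The pumping theorem adds copies of the village only, so the realized vector stays in $\cL(\frb;\frP)$, which lies inside the overapproximation. The global constraints therefore remain true automatically, which dissolves your worry about tight constraints.
\item The compatibility conditions V3--V6 guarantee that shadows and phantom nobles can be merged away, and that multi-edges and self-loops can be removed by swaps across village copies.
\item The converse direction needs the small-solution bound of Lemma~\ref{lem:minimal} to bound the kernel. It also needs an iteration that absorbs into the base any period whose leftover multiplicity is zero.
\end{itemize}
Without some such mechanism, your Step~3 does not go through.
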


\myparagraph{The modal variant of $\ourlogic$}
We define the natural modal variant of $\ourlogic$: 
Local Presburger quantifiers that aggregate only over \emph{out-neighbors}
$\cP(x)\ :=\ 
\sum\limits_{t \in \intsinterval{k}} \lambda_t \cdot \presby{G_t(x,y) \wedge \varphi_t(y)}\ \circledast\ \delta$,
where $G_t$ now must be an atom $R(x,y)$.
Similarly, we restrict the use of universal and existential quantifiers to the modal case:
If $\varphi(y)$ is a formula,
then so are $\forall y\ R(x, y) \to \varphi(y)$ and $\exists y\ R(x, y) \land \varphi(y)$, 
for some binary relation $R$.

This ``modal'' restriction simplifies decidability considerably.
\begin{theorem}\label{thm:outgoingonlynexptimeinformal} 
Satisfiability of the modal variant of the logic is $\nexp$-complete.
\end{theorem}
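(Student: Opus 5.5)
We prove the two bounds separately; the upper bound is the substantive part.

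\textbf{Lower bound.} $\nexp$-hardness comes from an existing fragment, so we reduce rather than encode anything new. The modal variant with global Presburger quantifiers contains the two-variable logic with cardinality constraints of the form ``$\presbx{\varphi(x)} \circledast \presbx{\psi(x)}$'' over unary formulas. Its Boolean algebra fragment (BAPA-style constraints combined with modal $\mathsf{K}$ with counting, i.e.\ graded modal logic with global Presburger constraints) is already $\nexp$-hard. Concretely, we would reduce from satisfiability of modal logic with the universal modality and counting over exponentially many types, or more directly from the exponential tiling problem. Counting modalities can force exponentially large grids: binary counters encoded by unary predicates, with out-successor constraints incrementing them. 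The global Presburger constraint then checks that every position exists exactly once.

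\textbf{Upper bound: normal form.} First we put a sentence in normal form, in the spirit of Lemma~\ref{lem:normal}. Each subformula is replaced by a fresh unary predicate $P_\varphi$. Each local constraint then becomes a universal statement: for all $x$, $P_\varphi(x)$ holds iff $\sum_t \lambda_t \presby{R_t(x,y)\wedge A_t(y)} \circledast \delta$, where the $A_t$ are unary. Each global quantifier becomes a propositional variable constrained globally. The blow-up is polynomial, and we guess truth values for the global constants.

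\textbf{Upper bound: witness structure.} The key step is that in the modal case an element's constraints depend only on the multiset of 1-types among its out-neighbours. We do not need in-neighbour conditions. A structure can therefore be rebuilt from a sequence of ``realised 1-types'' $\pi$ with counts $n_\pi$, together with, for each $\pi$ and each relation pattern, out-degree vectors. For each realised 1-type $\pi$ we guess one ``representative'' out-neighbourhood profile satisfying the local constraints. We then copy that profile to all elements of type $\pi$, which is sound because constraints only look outward.

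\textbf{Upper bound: integer programme.} Edges can point to any element of the right type, as long as $n_{\pi'}$ is at least the required out-degree toward $\pi'$; we also need to avoid self-loops and respect $R$-distinctness. The whole problem then reduces to an integer linear programme whose variables are
\begin{itemize}
\item the counts $n_\pi$, for exponentially many 1-types,
\item the out-degree counts $d_{\pi,R,\pi'}$,
\end{itemize}
and whose constraints are
\begin{itemize}
\item the local Presburger inequalities, linear in the $d$'s for each realised $\pi$,
\item feasibility constraints $d_{\pi,R,\pi'} \le n_{\pi'} - [\pi=\pi']$,
\item the global Presburger constraints, linear in the $n_\pi$.
\end{itemize}

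\textbf{Upper bound: complexity.} We guess the set of realised types, which is exponential-size. The resulting exponential-size ILP has polynomial-bit coefficients, and by the small-solution theorem for integer programming (Papadimitriou) it has a solution of exponential bit-size if feasible. Checking a guessed solution is polynomial in the size of the programme, which gives $\nexp$.

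\textbf{Main obstacle.} The delicate step is justifying that it suffices to take a single out-degree profile per 1-type, and that the feasibility constraints above exactly characterise realisability. The realisability problem is one of building, for each relation $R$, a bipartite-like simple digraph with prescribed out-degrees into type classes and no self-loops. Since in-degrees are unconstrained, we would handle it by a round-robin assignment of targets within each class $\pi'$. For the same relation, distinct targets are needed, which holds exactly when $d \le n_{\pi'}$ (minus one for the same class); distinct relations are independent.
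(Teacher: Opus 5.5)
Your proposal is correct and follows essentially the paper's route for both bounds. For the upper bound, the shared steps are a normal form, a single outgoing profile per realised 1-type (the paper's behavior-homogeneity lemma), and an exponential-size integer system over type counts and out-degree counts with capacity constraints $d_{\pi,R,\pi'}\le n_{\pi'}-[\pi=\pi']$, solved after guessing the set of realised types. Your per-relation variables are a valid substitute for the paper's per-outgoing-2-type variables $y_{\pi,\eta,\pi'}$, because your normal form keeps every counted subformula unary and never defines new binary predicates from old ones, so distinct relations really are independent. For the lower bound, the paper inherits $\nexp$-hardness from outgoing-only modal logic with binary-coded global counting ($\oMC$), which is exactly your tiling-plus-global-count reduction; that global count is essential, since graded modal logic with the uncounted universal modality stays in EXPTIME.
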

In fact, \emph{we get decidability
even allowing mixing of global and local counting within Presburger quantifiers}.
Note that the outgoing-only result is quite similar to results of \cite{marcoalogicfor,nexptimemodalglobalcounting}, and the technique we use is a variation of one introduced for another logic in \cite{daumantas,daumantasgnn}.

%% file: 2-gnn.tex

\myparagraph{GNN analysis} We now apply the results to the analysis of graph neural networks (GNNs).

\begin{definition}[Truncated $\relu$] Truncated $\relu$ ($\trrelu$) is the function from $\reals$ to $\reals $ that is the identity on $[0,1]$, $1$ for $x\geq 1$, and zero otherwise. We extend this function to vectors by applying it pointwise.
\end{definition}

\begin{definition}[Colored graph]
Following prior work (e.g. \cite{ourtocl}),
we consider a GNN model that takes as input $n$-color graphs for some non-negative integer $n$.
These are standard directed  (finite) graphs without self-loops, extended with $n$ unary predicates  (colors) representing the colors.
Formally, it is a graph $\cG = \tuple{V, E, \set{U_c}_{c \in \intsinterval{n}}}$,
where each $U_c\subseteq V$.
We do not require the sets $U_1,\ldots,U_n$ to be pairwise disjoint.
Note that this restricts structures over $\vocab$ to the case
with only one binary relation $E$, thus our logics also apply to colored graphs.
\end{definition}


\begin{definition}[Syntax of GNNs]\label{def:gnn}
A \emph{graph neural network} (GNN) is a tuple\\
$\tuple{
\set{\gnndim{\ell}}_{\ell \in \intinterval{0}{L}},
\set{\coefC{\ell}}_{\ell \in \intsinterval{L}},
\set{\coefA{\ell}{x}}_{\substack{
        \ell \in \intsinterval{L} \\
x \in \set{\out, \inc}}},
\set{\coefR{\ell}}_{\ell \in \intsinterval{L}},
\set{\coefb{\ell}}_{\ell \in \intsinterval{L}}
}$,
where $L \in \bbN^+$ is the number of layers;
each ${\gnndim{\ell} \in \bbN}$, called the \emph{dimension} of the $\ell^{th}$ layer, 
requiring $\gnndim{0} = n$, the number of colors, 
each ${\coefC{\ell}, \coefA{\ell}{x}, \coefR{\ell} \in \bbQ^{\gnndim{\ell} \times \gnndim{\ell-1}}}$, 
the \emph{coefficient matrices} of the $\ell^{th}$ layer; and
$\coefb{\ell} \in \bbQ^{\gnndim{\ell}}$, the \emph{bias vector} of the $\ell^{th}$ layer. We also require  $\gnndim{i}>0$ for each $0<i \leq L$.
\end{definition}


\begin{definition}[Feature of a GNN, computation of GNNs] \label{def:gnncomp}
    For a GNN $\cA$ and a graph $\cG = \tuple{V, E, \set{U_c}_{c \in \intsinterval{n}}}$,
    the \emph{computation of $\cA$ on $\cG$} is a sequence of functions
    $\set{\feat{\ell}_\cG: V \to \bbQ^{\gnndim{\ell}}}_{\ell \in \intinterval{0}{L}}$ defined inductively on $\ell$.
    For $v \in V$, $\feat{\ell}_\cG(v)$ is called the \emph{$\ell$-feature vector of $v$} w.r.t. $\cA$, and 
    $\feat{\ell}_{\cG, i}(v)$ denotes the $i^{th}$ entry of $\feat{\ell}_\cG(v)$.
    For brevity we sometimes omit $\ell$ when it is either clear from context or irrelevant. 
    We also shorten ``feature vector'' to ``feature'' at times. 

    For $\ell = 0$, $\feat{0}_{\cG}(v)$ is the  encoding of the colors of the vertex $v$, with a one at an index $i$ exactly when $U_i(v)$ holds.
    This is also called the \emph{initial feature vector} of $v$.
    Formally, for $1 \le i \le \gnndim{0}$,
    if $v \in U_i$, then $\feat{0}_{\cG, i}(v) = 1$; otherwise, $\feat{0}_{\cG, i}(v) = 0$.
    For $1 \le \ell \le L$,
        $\feat{\ell}_\cG(v) \ :=\  \trrelu{\bfwl{\ell}(v)}$ 
    where
    \begin{equation*}
        \bfwl{\ell}(v) \ :=\ 
        \coefC{\ell} \feat{\ell-1}_\cG(v) + \!\!\!\!\!\!
        \sum_{x \in\set{\out, \inc}}\!\!
        \left(
            \coefA{\ell}{x} \!\!\!\!\!\!
            \sum_{u \in \nbr{x, \cG}(v)} \!\!\!\!\!
            \feat{\ell-1}_\cG(u)
        \right) 
        +
        \coefR{\ell} \sum_{u \in V} \feat{\ell-1}_\cG(u) +
        \coefb{\ell}.
    \end{equation*}
\end{definition}

\begin{definition}[Pure layers and non-mixing GNNs, outgoing-only]
    We say that the $\ell^{th}$ layer of a GNN is \emph{pure global} if both
    $\coefA{\ell}{\out}$ and $\coefA{\ell}{\inc}$ are zero.
    It is \emph{pure local} if $\coefR{\ell}$ is zero.
    A layer is \emph{pure} if it is either pure global or pure local.
    Intuitively, this means that the features in a layer depend either only on global aggregation or only on local neighborhood aggregation, but never on both.
 A graph neural network is \emph{non-mixing} if all of its layers are pure. It is \emph{outgoing-only} if the coefficients of matrices corresponding to $x=\inc$ are all zero. 
\end{definition}

\begin{definition}[Boolean semantics of GNNs]
    For a GNN $\cA$,
    a graph $\cG$,
    and a node $v$ in $\cG$,
    we say that $\cA$ \emph{accepts} the tuple $\tuple{\cG, v}$,
    if $\feat{L}_{\cG, 1}(v) \ge 1/2$.
\end{definition}

With the Boolean semantics, we can define several verification problems for GNNs.
The most basic one is the satisfiability problem.
\begin{definition}[Satisfiability of a GNN]
    For a GNN $\cA$,
    we say that $\cA$ is \emph{satisfiable},
    if there exist a graph $\cG$ and a node $v$ in $\cG$, such that $\cA$ accepts $\tuple{\cG, v}$.
\end{definition}

\begin{corollary}\label{cor:gnn_sat}
    Satisfiability for non-mixing GNNs with $\trrelu$ activations is decidable.
   
\end{corollary}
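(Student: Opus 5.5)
The plan is to reduce satisfiability of a non-mixing GNN $\cA$ to satisfiability of a $\ourlogic$ formula $\varphi_\cA(x)$ and then invoke Theorem~\ref{thm:ourlogic}. We introduce fresh unary predicates recording intermediate values. By induction on the layer $\ell$, we show that each feature $\feat{\ell}_{\cG,i}(v)$ ranges over a finite set of rationals. This requires the truncation to be given a logical encoding that keeps track of exact values. Truncation alone does not make the set finite. However, once every feature at layer $\ell-1$ ranges over finitely many rationals $S_{\ell-1}$, the quantity $\bfwl{\ell}_i(v)$ is a fixed linear combination of counts. These counts are of the form ``number of (in/out-)neighbors, or number of nodes globally, whose layer-$(\ell-1)$ feature vector equals $\bar s$'' for $\bar s \in S_{\ell-1}^{\gnndim{\ell-1}}$.

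Define unary predicates $P^{\ell}_{\bar s}$ expressing ``the $\ell$-feature of $x$ equals $\bar s$''. For $\ell=0$, $P^0_{\bar s}$ is the corresponding Boolean combination of colors. Assume the predicates of layer $\ell-1$ are defined, and that the layer is pure local. Then $\bfwl{\ell}_i(v)$ equals
\[
c_i(v) + \sum_{\bar s} \alpha^{\out}_{i,\bar s}\,\presby{E(x,y)\land P^{\ell-1}_{\bar s}(y)} + \sum_{\bar s} \alpha^{\inc}_{i,\bar s}\,\presby{E(y,x)\land P^{\ell-1}_{\bar s}(y)} + b_i,
\]
with rational coefficients. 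Here $c_i(v)$ is determined by which $P^{\ell-1}_{\bar s}(x)$ holds. The condition $\bfwl{\ell}_i(v) \ge 1$ is then, after clearing denominators and case-splitting on $P^{\ell-1}_{\bar s}(x)$, a pure local Presburger formula, and similarly for $\le 0$. In the pure global case we use $\presbx{P^{\ell-1}_{\bar s}(x)}$ instead and obtain a pure global Presburger formula. The two kinds are never mixed within one inequality, which is exactly what $\ourlogic$ permits.

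The main obstacle is that $\trrelu$ is the identity on $[0,1]$, so intermediate values are not finitely many a priori. A count can be arbitrarily large, and a small coefficient such as $1/N$ times it can land on any multiple of $1/N$. However, if the denominators of all coefficients and all values in $S_{\ell-1}$ divide some $D$, then $\bfwl{\ell}_i(v) \in \frac{1}{D'}\bbZ$ for an effectively computable $D'$. Hence $\trrelu(\bfwl{\ell}_i(v)) \in \{0, 1/D', \dots, 1\}$, a finite set $S_\ell$. For each $r$ in this set, ``$\feat{\ell}_{\cG,i}(x)=r$'' is expressed by the equality $\bfwl{\ell}_i=r$ (or by $\ge 1$ when $r=1$, and $\le 0$ when $r=0$). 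After multiplication by a common denominator, this is again a single pure Presburger formula with integer coefficients. Conjoining over coordinates gives $P^\ell_{\bar s}$, and we can take these to be formulas directly without new predicates.

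Finally, $\varphi_\cA(x) := \bigvee_{\bar s:\, s_1 \ge 1/2} P^L_{\bar s}(x)$ is a $\ourlogic$ formula. A graph $\cG$ and node $v$ satisfy it exactly when $\cA$ accepts $\tuple{\cG,v}$, and the construction is effective. So $\cA$ is satisfiable iff $\exists x\,\varphi_\cA(x)$ is satisfiable. The graphs here have no self-loops, matching Definition~\ref{def:structure}. Theorem~\ref{thm:ourlogic} then gives decidability. The formula is exponential in size, but only decidability is claimed.
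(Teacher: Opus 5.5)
Your proposal is correct and takes the same route as the paper: translate the GNN into a $\ourlogic$ formula $\exists x\,\varphi_\cA(x)$ and apply Theorem~\ref{thm:ourlogic}. The paper only cites the GNN-to-logic translation of~\cite{ourtocl} and observes that for non-mixing GNNs it yields a $\ourlogic$ formula, whereas you spell that translation out: feature values are finitely many rationals with bounded denominators, and each ``feature $=r$'' condition becomes an unmixed local or global Presburger constraint according to the layer type.
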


The derivation of Corollary \ref{cor:gnn_sat} is analogous to Theorem~4.12 in~\cite{ourtocl}, which translates GNNs into equiexpressive logical formulas.
Observe that, when the input GNN is non-mixing, the construction in~\cite{ourtocl} produces a $\ourlogic$ formula.
Hence, the satisfiability problem for non-mixing GNNs reduces to the corresponding satisfiability problem for $\ourlogic$,
which is decidable by Theorem~\ref{thm:ourlogic}.

%% file: 3-proofoutline.tex

\section{High-level proof outline}
\label{sec:outline}

We now give an overview of the proof of Theorem \ref{thm:ourlogic}.

We now give a normal form for the logic. We begin by defining  atomic global cardinality constraints, 
which are the simplest form of global Presburger quantifiers.

\begin{definition}[Global Cardinality Constraint] \label{def:gcc}
    An atomic \emph{global cardinality constraint ($\GCC$)} over $\vocab$ is a formula of the form
    \[
    \sumcon_{i \in \intsinterval{k}} \lambda_i \cdot \termg{U_i(x)} \ge \delta
    \]
    where $\lambda_i, \delta \in \bbZ$ and $U_i$ are unary predicates in $\vocab$.
    A $\GCC$ over $\vocab$ is built from atomic $\GCC$s using logical connectives.
    We say that a $\vocab$-structure $\cA$ satisfies the atomic $\GCC$ if
    the inequality $\sum_{i \in \intsinterval{k}} \lambda_i \cdot \abs{U_i^{\cA}} \ge \delta$ holds.
    The semantics of general $\GCC$s is straightforward.
\end{definition}

Every $\ourlogic$ sentence can be effectively converted into 
a conjunction of a $\gptwo$ sentence in normal form and a $\gcc$.
The proof is standard, so we defer it to the appendix.

{\bf Convention:} for brevity, we often write $\pi(x) \models \phi(x)$ where  $\pi$ is a $1$-type and $\phi$ is quantifier-free, to indicate
that for every loopless finite structure, every element satisfying $\pi$ satisfies $\phi$. It is straightforward to check this effectively.

\begin{restatable}[Normal form]{lemma}{normalform}\label{lem:normal}
Every $\ourlogic$ sentence $\asentenceinourlogic$ can be effectively converted to 
a sentence of the form $\anfgptwosentence \land \agcc$ over an extended vocabulary where 
$\anfgptwosentence$ is a $\GPtwo$ sentence in \emph{normal form}:
\[
\anfgptwosentence
\ :=\ 
\forall x~ \quantifierfreepartphi(x)\ \land\ 
\bigwedge_{i \in \intsinterval{m}} \forall x \forall y\ \left(R_i(x,y) \land x \neq y\right) \to \alpha_i(x, y)
\ \land\ 
\bigwedge_{i \in \intsinterval{n}} \forall x\ U_i(x) \to P_i(x),
\]
with $\quantifierfreepartphi$ is quantifier-free and asserts that $\neg R(x,x)$ holds for each binary relation $R$, each $R_i(x,y)$ a binary atom,
each $\alpha_i(x,y)$ quantifier-free, each $U_i$ a unary atom,
each $P_i(x)$ is a ``lower bound Presburger quantifier'', of the form 
\[
\sum_{t \in \intsinterval{s_i}} \lambda_{i,k} \cdot \presby{(x \neq y) \land R_{i,t}(x, y)}  \ \ge\ \delta_i \] 
with each $R_{i,t}(x,y)$ a binary atom, 
while $\agcc$ is a $\gcc$ defined in Def.~\ref{def:gcc},
such that $\asentenceinourlogic$ has a model of size $k$ if and only if
$\anfgptwosentence \land \agcc$ has a model of size $k$. 
\end{restatable}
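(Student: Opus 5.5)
The plan is a Scott-style renaming argument, done bottom-up on the syntax tree of $\asentenceinourlogic$. Every maximal subformula that is a quantifier, or a local or global Presburger formula, is replaced by a fresh predicate. Each new predicate is then axiomatized by a constraint of the shapes allowed in $\anfgptwosentence$ or in $\agcc$. Each step is an expansion: any model of the old sentence expands uniquely (on the new predicates) to a model of the new one, and any model of the new one restricts to a model of the old. So the domain is unchanged, and models of size $k$ correspond in both directions, which gives the size-preservation clause.

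First I push the global parts out. Take an innermost global Presburger formula $\sum_t \lambda_t \presbx{\varphi_t(x)} \circledast \delta$ whose $\varphi_t$ have already been reduced to unary predicates $U_t$. I introduce a fresh unary $G$ with $\forall x\,(G(x)\leftrightarrow \gamma)$ and $\forall x\,(G(x)) \vee \forall x\,(\neg G(x))$, where $\gamma$ is the constraint rewritten as a Boolean combination of atomic $\GCC$s of the form $\ge$. Here $=$ is a conjunction of $\ge$ and $\le$, $\le$ is obtained by negating coefficients, and $<$ becomes $\le \delta-1$. The side conditions are not themselves $\GCC$s, so I encode them as cardinality statements. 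The constancy condition becomes $\termg{G(x)} \le 0 \vee \termg{\neg G(x)}\le 0$, with a fresh predicate $\bar G$ axiomatized as the complement of $G$ by $\forall x\,\quantifierfreepartphi$. The biconditional becomes $(\termg{\bar G(x)}\le 0 \wedge \gamma)\vee(\termg{G(x)}\le 0\wedge\neg\gamma)$. Because structures are nonempty, this forces $G$ to be true everywhere exactly when $\gamma$ holds. I apply the same trick to the top-level Boolean structure: plain quantifiers $\forall x\,\varphi(x)$ and $\exists x\,\varphi(x)$ become $\termg{\neg U_\varphi}\le 0$ and $\termg{U_\varphi}\ge 1$. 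After this step, the global cardinality information is a Boolean combination of atomic $\GCC$s over unary predicates, which is exactly a $\GCC$ in the sense of Definition~\ref{def:gcc}.

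Next I handle the local, guarded part in the usual $\GPtwo$ normalization. For a guarded quantifier $\exists y\,\epsilon(x,y)\wedge\varphi(x,y)$ with $\varphi$ quantifier-free after renaming, I introduce a fresh binary $R'$ with $\forall x\forall y\,(R'(x,y)\wedge x\ne y)\to(\epsilon\wedge\varphi)$. I also require the converse, which fits the template via a complementary binary $\bar R'$ covering the remaining pairs, constrained through $\alpha_i$. With these, the quantifier becomes the Presburger condition $\presby{x\neq y\wedge R'(x,y)}\ge 1$. A local Presburger formula $\sum_t\lambda_t\presby{\epsilon_t\wedge\varphi_t}\circledast\delta$ is handled the same way: introduce binaries $R_t$ equivalent to $\epsilon_t\wedge\varphi_t\wedge x\ne y$, and express the inequality as conjunctions and disjunctions of $\ge$-constraints. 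Each disjunct gets its own fresh unary: write $U(x)\leftrightarrow P(x)\vee P'(x)$ with $P, P'$ lower-bound formulas. This is captured by fresh unaries $U_1, U_2$ with $U_1\to P$, $U_2\to P'$, and $U\leftrightarrow U_1\vee U_2$ in $\quantifierfreepartphi$. The direction ``$P$ holds implies $U_1$'' is obtained by also adding $\neg U_1 \to \neg P$, where $\neg P$ is again a $\ge$-constraint after negating coefficients and shifting $\delta$. The loop axioms $\neg R(x,x)$ go into $\quantifierfreepartphi$. Since we only ever count pairs with $x\ne y$, this is harmless given loopless structures.

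The main obstacle is making every defining axiom fit the rigid template: only implications $U_i\to P_i$ with lower bounds, and only universally guarded binary constraints. In particular, the ``iff'' definitions need complementary predicates so that both directions become implications, and the sign flips for $\le$, $<$, $=$, and $\neq$ must be handled carefully. I would verify this by a simple induction showing that each introduced predicate is forced to coincide with its defining formula in every model of the output. This simultaneously gives equisatisfiability and preservation of the model size.
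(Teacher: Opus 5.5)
Your proposal is correct and follows essentially the paper's route: global Presburger formulas become fresh all-or-nothing unary predicates pinned down by a $\GCC$ (the paper uses an everywhere-true $T$ where you use a complement $\bar G$), local quantifiers are rewritten into $\ge$-form, and the rest is Scott-style renaming, which the paper delegates to \cite[Lemma~3.3]{twovarpres} and you carry out by hand, routing unguarded quantifiers through the $\GCC$ where the paper keeps them in the $\GPtwo$ part. One simplification: the converse of a guarded definition needs no complementary binary, since $\forall x\forall y\,(\epsilon(x,y)\wedge x\neq y)\to(\varphi(x,y)\to R'(x,y))$ already fits the template, and a binary covering \emph{all} remaining pairs would need an unguarded axiom; by contrast, $\neg U_1\to\neg P$ does need a complement atom $\bar U_1$ on the left, since the template requires an atom there.
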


In what follows we fix a sentence $\anfgptwosentence\land \agcc$ in the normal form as above, 
and let $\univpartphi(x)$  be the formula obtained from $\anfgptwosentence$ by discarding 
the outer quantifier $\forall x$, i.e.,
$\quantifierfreepartphi(x)\ \land\ 
\bigwedge_{i \in \intsinterval{m}} \forall y\ \left(R_i(x,y) \land x \neq y\right) \to \alpha_i(x, y)
\ \land\ 
\bigwedge_{i \in \intsinterval{n}} U_i(x) \to P_i(x)$.

\begin{definition}[$1$-type, $2$-type,dual $2$-type] \label{def:oneandtwotype}
A \emph{1-type} (resp. \emph{2-type}) (w.r.t. $\vocab$) is a maximal consistent subset of
$\bigcup_{i \in \intsinterval{n}} \set{U_i(x), \neg U_i(x)}$
and \\
$\bigcup_{i \in \intsinterval{m}} \set{R_i(x, y), \neg R_i(x, y), R_i(y, x), \neg R_i(y, x)}$, respectively.

The \emph{null 2-type} is the 2-type in which every atom appears negatively.
For a given 2-type $\eta$, let $\dual{\eta}$ denote the dual of $\eta$, where the variables $x$ and $y$ are swapped.

For a structure $\cG$, the 1-type of an element $v$ is the unique 1-type $\pi$ that $v$ satisfies in $\cG$. 
Similarly, the 2-type of a pair $(u, v)$ of elements, where $u \neq v$, is the unique 2-type that $(u, v)$ satisfies in $\cG$.
Note that if the 2-type of $(u, v)$ is $\eta$, then the 2-type of $(v, u)$ is $\dual{\eta}$.
Further, if the 2-type of $(u,v)$ is the null 2-type, so is the 2-type of $(v,u)$.

An element with 1-type $\pi$ is also called a \emph{$\pi$-element}.
An element $v$ is an \emph{$\eta$-neighbor} of $u$ if the 2-type of $(u,v)$ is $\eta$.
It is called a \emph{$(\pi,\eta)$-neighbor} of $u$ if it is a $\pi$-element and an $\eta$-neighbor of $u$.
\end{definition}

Both 1- and 2-types can be identified with the conjunction  of their constituent formulas. 
When viewed as formulas, we write types as $\pi(x)$ and $\eta(x, y)$.
By default, a $1$-type will be a formula with $x$ as free variable; we write
$\pi(y)$ to denote the same $1$-type but substituting $x$ with $y$ in $\pi(x)$.

\begin{definition}[Asymmetric types, Conformant types] \label{def:conformanttypeasymmetrictype}
A 1-type $\pi$ is \emph{conformant} (w.r.t. $\anfgptwosentence$)
 if $\pi(x) \wedge \neg R(x,x) \models \quantifierfreepartphi(x)$. 

An \emph{asymmetric $2$-type} is a $2$-type $\eta$ which is distinct from its dual $2$-type.

A 2-type $\eta$ is \emph{conformant with 1-types $\pi_1$ and $\pi_2$} (w.r.t. $\anfgptwosentence$),
if it is asymmetric, and for every $i \in \intsinterval{m}$,
\begin{equation*}
    \begin{aligned}
        \pi_1(x) \land \eta(x,y) \land \pi_2(y) &\models (R_i(x,y)\land x\neq y) \to \alpha_i(x,y) \\
        \pi_2(x) \land \dual{\eta}(x,y) \land \pi_1(y) &\models (R_i(x,y)\land x\neq y) \to \alpha_i(x,y).
    \end{aligned}
\end{equation*}
Note that while the null 2-type is conformant with any choice of 1-types $\pi_{1}$ and $\pi_{2}$, 
it does not affect the satisfaction of local Presburger quantifiers, 
which merely count elements adjacent via some binary relation.

Thus, throughout the remainder of this paper, 
by default \emph{we will use the term ``2-types'' to mean ``non-null 2-types'', and explicitly highlight whenever null $2$-types are allowed}.

We let $\onetypescmp{\anfgptwosentence}$ be the set of conformant 1-types,
and $\twotypescmp{\anfgptwosentence}{\pi_1}{\pi_2}$ be 
the set of asymmetric 2-types conformant with $\pi_1$ and $\pi_2$.
Both sets can be constructed effectively from $\anfgptwosentence$ and $\pi_1,\pi_2$.
Since we fix $\anfgptwosentence$, we will omit the superscript $\anfgptwosentence$
and write $\onetypes$ and $\twotypes_{\pi_1,\pi_2}$.
\end{definition}

\begin{definition}[Asymmetric structure, Asymmetric satisfiability] \label{def:asymmetricstructuresat}
An \emph{asymmetric structure} is one where every pair of distinct elements realizes an asymmetric $2$-type as in Definition \ref{def:conformanttypeasymmetrictype} or a null 2-type.
The \emph{asymmetric satisfiability problem} asks whether a sentence in $\ourlogic$ is satisfiable in an asymmetric (finite) structure.
\end{definition}

\begin{proposition} \label{prop:aymmetricisok} For any vocabulary $\vocab$, there is an expansion
to $\vocab'$ so that every $\vocab$ structure can be expanded to an asymmetric $\vocab'$ structure.
In particular, the satisfiability problem for $\ourlogic$ reduces to the asymmetric satisfiability problem.
\end{proposition}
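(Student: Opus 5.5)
The approach is to break symmetry with a fresh binary "orientation" relation. Let $\vocab' := \vocab \cup \set{O}$, where $O$ is a new binary symbol. Given a finite $\vocab$-structure $\cG$ with domain $V$, fix an arbitrary linear order $<$ on $V$. Expand $\cG$ to $\cG'$ by setting
\[
O^{\cG'} := \setc{\tuple{u,v}}{u < v \text{ and } \tuple{u,v} \text{ or } \tuple{v,u} \text{ belongs to some relation of } \vocab}.
\]
$O$ has no self-loops, so $\cG'$ is a structure in the sense of Definition~\ref{def:structure}.

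Next I would check asymmetry. Take distinct $u,v$. If no $\vocab$-relation connects them in either direction, then no $\vocab'$-atom holds between them, so their $2$-type is null. Otherwise exactly one of $O(u,v)$ and $O(v,u)$ holds. Then $\eta$ contains $O(x,y)$ and $\neg O(y,x)$, or the reverse, while $\dual{\eta}$ has these swapped. Hence $\eta \neq \dual{\eta}$, and $\cG'$ is asymmetric as in Definition~\ref{def:asymmetricstructuresat}.

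For the reduction, take any $\ourlogic$ sentence $\asentenceinourlogic$ over $\vocab$ and read it as a $\vocab'$-sentence. Its semantics only mentions symbols of $\vocab$. This holds for atoms, connectives, guarded quantifiers, and both kinds of Presburger quantifiers: the guards $\epsilon_t$ are $\vocab$-atoms, and the counts range over the same domain. Hence $\cG \models \asentenceinourlogic$ iff $\cG' \models \asentenceinourlogic$, which one verifies by a straightforward induction on formulas. So if $\asentenceinourlogic$ has a finite model, it has an asymmetric finite $\vocab'$-model. Conversely, any $\vocab'$-model restricts to a $\vocab$-model of $\asentenceinourlogic$. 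Thus satisfiability reduces to asymmetric satisfiability over $\vocab'$, and the domain size is preserved. The same argument applies verbatim to the normal form $\anfgptwosentence \land \agcc$ of Lemma~\ref{lem:normal}, since the $\gcc$ and the normal-form conjuncts likewise mention only symbols of $\vocab$.

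There is essentially no obstacle. The only point needing care is that adding $O$ must not change the null/non-null status of any pair, so that local Presburger counts are unaffected. This is exactly why $O$ is placed only on pairs already connected by some $\vocab$-relation, rather than on all pairs.
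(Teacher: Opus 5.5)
Your proof is correct and follows the paper's argument: add a fresh binary symbol and orient it on exactly the pairs already joined by some relation, so that every non-null $2$-type differs from its dual. You fix the orientation with a linear order where the paper picks it arbitrarily, and you also spell out the routine invariance of the sentence under expansion, which the paper leaves implicit. One small quibble: since $\asentenceinourlogic$ never mentions $O$, its truth (local counts included) does not depend on where $O$ is placed, and orienting every pair would also give an asymmetric model, so restricting $O$ to connected pairs is a convenience rather than the point ``needing care''.
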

\begin{proof}
Suppose $\astructure$ is a structure over $\vocab$ witnessing $\phi$. 
Let $R$ be a fresh binary predicate not in $\vocab$.
Let $\astructure'$ be the expansion of $\astructure$ to the vocabulary $\vocab \cup \set{R}$ 
in which the predicate $R$ is interpreted as follows:
for every two distinct elements $u,v$ such that $(u,v)$ satisfies some relation symbol,
we pick arbitrarily a pair $(u,v)$ and set $(u,v)$ to be in $R$ and
the reverse $(v,u)$ to be not in $R$.
Then, the structure $\astructure'$ is asymmetric as required.
\end{proof}

By Proposition \ref{prop:aymmetricisok}, \emph{it suffices to show that asymmetric satisfiability is decidable}, and
this is what we will do.

Our goal is to characterize a structure using a vector of integers. 
We start with characterizing a single element in a structure.

\begin{definition}[Behavior]\label{def:behavior}
A \emph{behavior} $\abehav$ is a vector of non-negative integers indexed
by a 2-type and a 1-type.
We write $\abehav(\eta, \pi)$ to denote the entry for 2-type $\eta$ and 1-type $\pi$.
We say that a $\abehav$ is the behavior of an element $v$, 
if for each 2-type $\eta$ and target 1-type $\pi$, 
$\abehav(\eta, \pi)$ is
the number of $(\pi,\eta)$-neighbors of $v$. 
\end{definition}


Observe that whether an element satisfies the $\GPtwo$ formula $\univpartphi(x)$ 
depends only on its 1-type and its behavior.
More precisely, let $\cG_1$ and $\cG_2$ be structures, and 
let $v_1$ be an element in $\cG_1$ and $v_2$ be an element in $\cG_2$.
If $v_1$ and $v_2$ have the same 1-type and the same behavior, then
$\cG_1 \models \univpartphi(v_1)$ if and only if
$\cG_2 \models \univpartphi(v_2)$.
This motivates the following notion of a \emph{conformant behavior}, 
which formalizes when a behavior is compatible with $\univpartphi(x)$.

\begin{definition}[Conformant behavior and $\bhset{\pi}$]\label{def:conf_behavior}
We say that a behavior $\abehav$ is \emph{conformant} with $\pi$ and $\anfgptwosentence$ (or equivalently, with $\pi$ and $\univpartphi(x)$)
if there exists a structure $\cG$ and an element $v$ in $\cG$ with behavior $\abehav$ and 1-type $\pi$
such that
\begin{compactenum}    
    \item
        $\cG \models \univpartphi(v)$;
    \item
        for every element $u$ in $\cG$, $u$ has a conformant 1-type, and for every pair of elements $u_1, u_2$ in $\cG$ with $u_1 \neq u_2$, $(u_1, u_2)$ has conformant 2-type or null 2-type.
\end{compactenum}

We denote this set as $\bhset{\pi}$, the set of \emph{behaviors conformant with} $\pi$ and $\univpartphi(x)$.
\end{definition}

Recall the definition of hybrid linear set, a strengthening of semilinearity from Definition \ref{def:semilinearpresburger}.
The following result is easily shown (see Appendix \ref{app:almostcsl}):
\begin{restatable}{proposition}{almostcsl} \label{prop:conformantsemilinear} For any 1-type $\pi$, $\univpartphi(x)$, the set of conformant behaviors is hybrid linear, and a  hybrid linear representation can be effectively generated from $\pi$ and $\univpartphi(x)$.
\end{restatable}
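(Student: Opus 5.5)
The plan is to show that $\bhset{\pi}$ is exactly the set of solutions of a Presburger formula over the coordinates $\abehav(\eta,\pi')$, and then convert that formula into a hybrid linear representation.

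\emph{Step 1: behaviors determine satisfaction.} Fix a conformant $1$-type $\pi$; if $\pi$ is not conformant, $\bhset{\pi}=\emptyset$, which is trivially hybrid linear. By the observation preceding Definition~\ref{def:conf_behavior}, whether $v$ satisfies $\univpartphi(v)$ depends only on $\pi$ and the behavior of $v$. Concretely:
\begin{compactitem}
\item The guarded universal conjuncts $\forall y\,(R_i(x,y)\land x\neq y)\to\alpha_i(x,y)$ hold iff $\abehav(\eta,\pi')=0$ whenever $\eta\notin\twotypes_{\pi,\pi'}$. (Null $2$-types are excluded from the index set anyway.)
\item For each $i$ with $\pi\models U_i(x)$, the quantifier $P_i$ becomes a linear inequality. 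Each count $\presby{(x\neq y)\land R_{i,t}(x,y)}$ equals $\sum_{\eta\ni R_{i,t}(x,y)}\sum_{\pi'}\abehav(\eta,\pi')$, so $P_i$ reads $\sum_t\lambda_{i,t}\sum_{\eta\ni R_{i,t}}\sum_{\pi'}\abehav(\eta,\pi')\ge\delta_i$.
\item The conjunct $\quantifierfreepartphi$ is already guaranteed by conformance of $\pi$.
\end{compactitem}
So the candidate set $S_\pi$ is the set of vectors in $\bbN^{I}$ satisfying a finite system of linear inequalities together with some coordinates forced to zero, where $I$ ranges over pairs of a $2$-type and a conformant $1$-type $\pi'$.

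\emph{Step 2: $\bhset{\pi}=S_\pi$.} Any behavior in $\bhset{\pi}$ lies in $S_\pi$ by Step 1, because the witnessing structure has only conformant types. For the converse, take $\abehav\in S_\pi$ and build a star. Take a centre $v$ of type $\pi$; for each index $(\eta,\pi')$ add $\abehav(\eta,\pi')$ fresh elements of type $\pi'$, with $(v,u)$ realizing $\eta$. All other pairs get the null $2$-type. All $1$-types used are conformant, and every non-null $2$-type used is conformant with its endpoints, so condition~2 of Definition~\ref{def:conf_behavior} holds. The centre $v$ has behavior exactly $\abehav$ and satisfies $\univpartphi(v)$ by Step 1. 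Note that Definition~\ref{def:conf_behavior} does not require the leaves to satisfy $\univpartphi$, which is what makes the star construction suffice.

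\emph{Step 3: hybrid linearity.} This step is the only nontrivial point. The solution set in $\bbN^I$ of a system $A\bfx\ge\bfc$ (equalities expressed as pairs of inequalities) is hybrid linear. The standard argument runs as follows:
\begin{compactitem}
\item Let $\bfP$ be the Hilbert basis of the homogeneous cone $\{\bfx\in\bbN^I: A\bfx\ge 0\}$; it is finite by Gordan's lemma / Dickson's lemma.
\item Let $B$ be the set of minimal solutions of $A\bfx\ge\bfc$ in the componentwise order; it is finite by Dickson's lemma.
\item Every solution decomposes as $\bfb+\bfp$ with $\bfb$ a "small" solution and $\bfp$ a sum of basis periods. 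The precise form is the Pottier/Domenjoud-type result that the solution set equals $\bigcup_{\bfb\in B'}\cL(\bfb;\bfP)$ for a finite $B'$ of solutions bounded by an explicit function of $A$ and $\bfc$.
\end{compactitem}
Cite this classical fact, as in~\cite{semilinear}, rather than reprove it. All sets involved, namely $\twotypes_{\pi,\pi'}$, the coefficients, and the bounds, are computable from $\univpartphi$ and $\pi$, so the representation is effective.
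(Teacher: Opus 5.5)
Your proposal is correct and follows the paper's proof essentially step for step. Like you, the paper builds a characteristic system of linear inequalities (it uses variables only for conformant $\eta \in \twotypes_{\pi,\pi'}$ instead of forcing the other coordinates to zero), shows its solutions are exactly the conformant behaviors via a star-shaped (tree) structure around a $\pi$-root, and cites \cite[Proposition 4]{semilinear} for hybrid linearity of the solution set of $A\bfx\ge\bfc$. One small precision: in Step~1 the zero-forcing of non-conformant $2$-types does not follow from the guarded conjuncts of $\univpartphi(v)$ alone, since those see only the forward direction and not the dual direction or asymmetry. It follows from condition~2 of Definition~\ref{def:conf_behavior}, which is exactly what your Step~2 invokes, so the argument stands.
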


Accordingly, we define the following bases for conformant behaviors, which can also be computed effectively.

\begin{definition}[Basis for conformant behaviors, $\bhsethomo{\pi}$, $\bhbase{\pi}{i}$, and $\bhperiod{\pi}{i}$]\label{def:localbehbasis}
For every 1-type $\pi\in\onetypes$,
we construct $\numbhbases{\pi}$ vectors of non-negative integers $\bhbase{\pi}{1},\ldots,\bhbase{\pi}{\numbhbases{\pi}}$
and a set of $\numbhperiods{\pi}$ vectors of non-negative integers 
$\bhP{\pi} = \set{\bhperiod{\pi}{1}, \ldots, \bhperiod{\pi}{\numbhperiods{\pi}}}$,
where every $\bhperiod{\pi}{1}$ is not the zero vector,
such that
$\bhset{\pi} = \bigcup_{i \in \intsinterval{\numbhbases{\pi}}} \cL(\bhbase{\pi}{i}; \bhP{\pi})$.
We let $\bhsethomo{\pi}$ be the set $\cL_\homo(\bhP{\pi}) = \cL(\bfzero; \bhP{\pi})$.
\end{definition}
Note that $\bhset{\pi}$ forms a union of linear sets sharing the same finite set of period vectors. 
This follows from the fact that $\bhset{\pi}$ is defined by 
a quantifier-free Presburger formula without disjunction --
it is a conjunction of linear (in)equalities in the Presburger quantifiers.

Fixing all these vectors $\bhbase{\pi}{i}$'s and $\bhperiod{\pi}{j}$'s,
we define the notion of \emph{local behavior vector},
which records its base index and the multiplicities of its periods.

\begin{definition}[Local behavior vector ($\lbv$), realizing $\albv$, valid, homogeneously valid]\label{def:lbv}
    Let $\pi$ be a $1$-type.
    A \emph{local behavior vector ($\lbv$)} $\albv$ of a 1-type $\pi$ is a vector of non-negative integers
  with
    $\numbhbases{\pi} + \numbhperiods{\pi}$ entries.
    We denote its entries by $\albv(x, i)$ for $x \in \set{\basetext, \periodtext}$ and
    $1 \le i \le \numbhxs{\pi}$.

    We say that $\albv$ is \emph{valid} if there is exactly one $i$ such that
    $\albv(\basetext, i) = 1$, and for $j \neq i$, $\albv(\basetext, j) = 0$.
    This unique index $i$ is called \emph{the base index of $\albv$}.
    We say that $\albv$ is \emph{homogeneously valid} if $\albv(\basetext, i) = 0$ for every $i$.

    {\noindent}The behavior generated by $\albv$ is:
    $\eval{\albv} := 
    \sum\limits_{i \in \intsinterval{\numbhbases{\pi}}} \albv(\basetext, i) \cdot \bhbase{\pi}{i} +
    \sum\limits_{i \in \intsinterval{\numbhperiods{\pi}}} \albv(\periodtext, i) \cdot \bhperiod{\pi}{i}$.

    Given a structure $\cG$, an element \emph{$u$ realizes $\albv$} if its behavior is $\eval{\albv}$.
    In this case, we say that $\albv$ is an $\lbv$ of $u$. 
\end{definition}

\begin{remark}[Non-unique representations] \label{rem:nonuniquelbv} Although we have used the term ``basis'' for the representation of our hybrid linear set in Definition \ref{def:localbehbasis},  we emphasize that this does not give a \emph{unique representation}, as one would have from a basis for a vector space:
The $\lbv$ realized by a vertex $u$ is not necessarily unique.
Indeed, it is possible for two distinct valid $\lbv$s to generate the same behavior.
We can say that an element realizes \emph{some} valid $\lbv$ of 1-type $\pi$ whenever
its behavior is in the set $\bhset{\pi}$.
\end{remark}

Having discussed local behavior vectors, which capture local neighborhoods, 
we introduce the notion of \emph{global behavior vector} ($\gbv$).
While an $\lbv$ represents a behavior of a given element,
a $\gbv$ will give a global description of a structure.
Intuitively, a $\gbv$ records the multiplicities of behaviors in a structure.
It will do this not in terms of the number of occurrences of each conformant behavior -- 
since there are infinitely many -- but in terms of the base and period components 
given by the hybrid linear representation.

More precisely, for each 1-type, a $\gbv$ counts the occurrences 
of the base and period components of the conformant behaviors.
In doing so, however, it loses the information about the connectivity of the structure and 
which period components belong to the same element.

\begin{definition}[Global behavior vector ($\gbv$), slice]\label{def:gbv}
A \emph{global behavior vector ($\gbv$)} $\agbv$ is a vector with
$\sum_{\pi \in \onetypes} \left(\numbhbases{\pi} + \numbhperiods{\pi}\right)$ entries.
We denote its entries by $\agbv(\pi, x, i)$ for $\pi \in \onetypes$, $x \in \set{\basetext, \periodtext}$, and
$1 \le i \le \numbhxs{\pi}$.

The \emph{slice} of $\agbv$ on 1-type $\pi$, denoted by $\agbv(\pi,-)$, is a vector with 
$\numbhbases{\pi} + \numbhperiods{\pi}$ entries where for $x \in \set{\basetext, \periodtext}$, and
$1 \le i \le \numbhxs{\pi}$, $\agbv(\pi,-)(x,i) = \agbv(\pi,x,i)$.
\end{definition}

We illustrate the different notions of behavior in Figure \ref{fig:beh}.

We now define the notion of realization between structures and $\gbv$s.
Intuitively, a $\gbv$ $\agbv$ is realized by a structure $\cG$ if 
$\agbv$ is the sum of  $\lbv$s of all vertices in $\cG$.

\begin{definition}[Structure realizing a $\gbv$, realization witness]\label{def:struct_realize_gbv}
    Let $\cG$ be a structure.
    We say that  $\cG$ \emph{realizes} a $\gbv$ $\agbv$ if 
    there exists a mapping $\rep$ from the vertices of $\cG$ to valid $\lbv$s such that
    \begin{inparaenum}
        \item for every element $u$, $u$ realizes $\rep(u)$ (Def. \ref{def:lbv}) and
        \item for every $\pi \in \onetypes$, 
        the slice of $\agbv$ on $\pi$ is precisely the sum of $\rep(u)$ over all elements $u$ of 1-type $\pi$ in $\cG$.
    \end{inparaenum}

    The mapping $\rep$ is called a \emph{witness of the realization}.
\end{definition}

We can show that a structure $\cG$ satisfies $\anfgptwosentence$ if and only if it realizes some $\gbv$.
Indeed, if $\cG$ does not satisfy $\anfgptwosentence$, then 
there exists an element $u$ of $\cG$ with 1-type $\pi$ that does not satisfy $\univpartphi(x)$,  
and hence, its behavior is not in $\bhset{\pi}$.
Thus, such a mapping does not exist, and $\cG$ realizes no $\gbv$.
Conversely, if $\cG$ satisfies $\anfgptwosentence$, then every element $u$ of $\cG$ 
satisfies $\univpartphi(x)$, and in particular,
its behavior is in $\bhset{\pi}$, where $\pi$ is the 1-type of $u$.
Thus, the mapping $\rep$ that maps every element $u$ to its $\lbv$
is the witness for $\cG$ realizing some $\gbv$.

Note that since there may exist multiple $\lbv$s that generate the same behavior, 
a structure may realize multiple $\gbv$s. 
Our goal is to reduce the satisfiability of $\anfgptwosentence \land \agcc$
to the existence of a realizable $\gbv$ for a given hybrid linear set,
which in turn is reduced to the satisfiability of Presburger formulas.

\begin{figure}[htbp]
    \centering
    \resizebox{1.0\textwidth}{!}{%
        \input{fig/behaviorfigure}
    }
    \caption{Illustration of behavior, local behavior vector, and global behavior vector.}
    \label{fig:beh}
\end{figure}

\myparagraph{Overapproximation of $\gbv$s realized by models of formulas; D-structures}
It is easy to see that a $\gcc$, which constrains cardinalities of elements in a structure,
implies a linear constraint over a $\gbv$, with variables corresponding to the numbers $\agbv(\pi,x,i)$.
We can combine this with some other sanity conditions to derive an \emph{overapproximation Presburger formula} 
for $\gbv$s realized by models of $\anfgptwosentence \land \agcc$.
This is formally defined in Section~\ref{subsec:overapproximation}. 
An example of a sanity condition is a \emph{matching constraint} saying
that the sum of the entries corresponding to a given 2-type is the same as the sum of entries for
the dual of the 2-type as defined in Def.~\ref{def:oneandtwotype}.

However, not all $\gbv$s in the overapproximation are realizable.
This brings us to the main and most involved part of our proof:  
deciding whether there is a realizable $\gbv$ in a given hybrid linear set.

Let us fix a representation $\frb$ and $\frP = \set{\frp_1, \ldots, \frp_\numouterperiods}$, and let $\cL(\frb; \frP)$
be the solution set of the corresponding overapproximation.
We will show that 
necessary and sufficient conditions 
for $\agbv$ in $\cL(\frb; \frP)$ to be realizable by a structure can be encoded
as a Presburger formula.

First, we show that every  $\agbv$ satisfying the overapproximation is realizable by what we call a \emph{decomposed structure} (\emph{D-structure}).
Intuitively, a D-structure is a relaxation of a structure
where we allow phantom elements and multiple 2-types between two elements.
For technical convenience, we define D-structures in terms of directed multigraphs.

\begin{definition}[D-structure, phantom vertex]\label{def:D-structure}
    A \emph{D-structure} is a nonempty directed, node- and edge-labelled \emph{multigraph} 
    $\tuple{V, V^\phan, E, \onetypeof, \twotypeof}$, where
    \begin{inparaenum}
        \item
            $V$ is the set of \emph{concrete vertices} and
            $V^\phan$ is the set of \emph{phantom vertices},
            where we require $V$ and $V^\phan$ to be disjoint.
        \item
            $E$ is the \emph{multiset} of directed edges,
        \item
            $\onetypeof$ labels vertices in $V \cup V^\phan$ by 1-types, and
        \item
            $\twotypeof$ is a mapping that assigns a $2$-type to each edge $e\in E$. 
    \end{inparaenum}


\begin{definition}[implicit $\eta$-edge, explicit $\eta$-edge, $\eta$-edge] \label{def:implicitexplicitetaedge}
We refer to an edge $e$ from $u$ to $v$ as an \emph{explicit $\eta$-edge} if $\twotypeof(e) = \eta$.
We will also refer to it 
as an \emph{implicit $\dual{\eta}$-edge from $v$ to $u$}.
An \emph{$\eta$-edge from $u$ to $v$} is either 
an explicit $\eta$-edge from $u$ to $v$, or an implicit $\eta$-edge from $v$ to $u$.
\end{definition}

    The \emph{behavior} of a vertex $v \in V \cup V^\phan$, denoted by $\bh{v}$,
    is the vector whose $(\eta,\pi)$-entry is the number of distinct $\eta$-edges from $v$ to vertices $u$ with $\onetypeof(u)=\pi$.

    A D-structure is \emph{phantomless} if it contains no phantom vertices, i.e.,
    $V^\phan = \emptyset$.

  A D-structure is asymmetric if all two-types associated with edges between distinct nodes are asymmetric.
\end{definition}

A D-structure is a kind of pre-model in which there are two obstacles to being a proper model. 
First, there can be \emph{multi-edges connecting two vertices}: multiple directed edges, possibly in different directions and possibly assigned to different $2$-types,  which a model does not allow. 
Second, the behavior that we ultimately want for a node in a real structure might be split between a concrete node and one or more phantoms.
Phantoms represent partial behaviors that may not be conformant,
but can be merged with a concrete vertex while maintaining conformance. 
In certain situations -- and under further restrictions -- 
a subset of the phantoms can be pumped to make a D-structure larger, 
which is useful for enabling swapping, which in turn will allow the removal of multi-edges.
Lastly, we emphasize that 
\emph{$D$-structures allow self-loops: edges or multi-edges where the source and target are the same}.
Note that a phantomless D-structure 
without multi-edges and self-loops corresponds to a structure where
the $2$-type of a non-edge $(u,v)$ is the null $2$-type.

\begin{definition}[Realization of a $\gbv$ in a D-structure]\label{def:realizingdstructure}
We extend the notion of a structure realizing a $\gbv$ to D-structures. This requires a realization witness map as in Def. \ref{def:struct_realize_gbv}, but mapping \emph{the concrete vertices to valid $\lbv$s} and 
\emph{the phantom vertices to homogeneously valid $\lbv$s}, where these notions are defined in Definition \ref{def:lbv}.
\end{definition}

The intention is that when we merge a concrete vertex with 1-type $\pi$ with
a non-adjacent phantom vertex with the same 1-type $\pi$,
its behavior will stay in the hybrid linear set $\bhset{\pi}$.

We observe that the following operations on D-structures preserve $\gbv$s, and also preserve asymmetry of a structure. 
They will be used to transform  D-structures to structures.
\begin{restatable}{lemma}{opspreserve}\label{lem:operation}
For a D-structure $\cG$, each of the following operations preserves the realized $\gbv$s.
\begin{inparaenum}
\item \itemtitle{Merge}
Merge a phantom vertex $v$ into a concrete or phantom vertex $u$, where $\onetypeof(v) = \onetypeof(u)$. We will typically apply this where we want to  preserve the property that there are no self-loops. In such a case, we must further require that $v$ and $u$ are not adjacent.
\item \itemtitle{Swap}
Let $e_1$ be an $\eta$-edge from $v_1$ to $u_1$, and $e_2$ an $\eta$-edge from $v_2$ to $u_2$ with
$\onetypeof(v_1) = \onetypeof(v_2)$ and $\onetypeof(u_1) = \onetypeof(u_2)$.
Suppose that $e_1$ and $e_2$ are distinct.
Remove the edges $e_1$ and $e_2$.
Add an edge $e_3$ from $v_1$ to $u_2$ and an edge $e_4$ from $v_2$ to $u_1$.
Assign the 2-types so that
$\twotypeof(e_3) = \twotypeof(e_4) = \eta$.
\end{inparaenum}
\end{restatable}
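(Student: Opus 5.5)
The plan is to take a realization witness $\rep$ for $\cG$ and the $\gbv$ $\agbv$, and build from it a witness $\rep'$ for the modified D-structure $\cG'$. The key observation is that $\agbv$ depends only on the multiset of $\lbv$s assigned to vertices, grouped by 1-type. So it suffices to show two things: every vertex of $\cG'$ still realizes its assigned $\lbv$, and the assigned $\lbv$s have the right validity type (valid for concrete vertices, homogeneously valid for phantoms). The operations preserve 1-types, so the grouping by 1-type is unchanged.

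\textbf{Merge.} Let $v$ be a phantom merged into $u$ with $\onetypeof(v)=\onetypeof(u)=\pi$. Set $\rep'(u) := \rep(u)+\rep(v)$ and $\rep'(w):=\rep(w)$ for all other $w$.

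1. Since $\rep(v)$ is homogeneously valid, all its base entries are $0$. Hence $\rep'(u)$ has exactly the base entries of $\rep(u)$, so it is valid if $u$ is concrete and homogeneously valid if $u$ is a phantom.
2. Every $\eta$-edge incident to $v$ (explicit or implicit) is redirected to $u$. Endpoint 1-types are unchanged, so $\bh{u}$ in $\cG'$ equals $\bh{u}+\bh{v}$ in $\cG$. By linearity of $\eval{\cdot}$ this is $\eval{\rep(u)}+\eval{\rep(v)}=\eval{\rep'(u)}$.
3. Every other vertex $w$ loses its edges to $v$ and gains the same number of edges, with the same 2-types, to $u$. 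Since $u$ and $v$ have the same 1-type, $\bh{w}$ is unchanged.
4. The slice sum on $\pi$ is unchanged, because $\rep(u)+\rep(v)$ simply replaces the two summands.

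In the non-adjacent case no self-loop is created. This is only a side condition, not something the $\gbv$ argument needs.

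\textbf{Swap.} Keep $\rep'=\rep$. The vertex set is unchanged, so only behaviors need checking.

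1. $v_1$ loses an $\eta$-edge to $u_1$ and gains one to $u_2$. Since $\onetypeof(u_1)=\onetypeof(u_2)$, its $(\eta,\pi)$ counts are unchanged. The same holds for $v_2$.
2. Dually, $u_1$ loses an implicit $\dual{\eta}$-edge from $v_1$ and gains one from $v_2$, and these sources have equal 1-types. The same holds for $u_2$.

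The degenerate coincidences need care. Examples are $v_1=v_2$, $u_1=u_2$, or an edge whose explicit direction is the reverse of its $\eta$ direction. In each case I would count per (vertex, 2-type, target 1-type) that the removed and added edges contribute equal amounts. The cleanest way is to note that $\bh{w}$ is a sum over incident edge-ends, and that the multiset of (2-type, far-endpoint 1-type) labels seen at each endpoint is preserved.

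\textbf{Asymmetry.} Swap introduces only $\eta$ and $\dual{\eta}$, which were already present. Merge keeps all existing 2-types. So asymmetry is preserved.

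The only real subtlety is the implicit/explicit edge bookkeeping in Swap. I expect it to be routine once behavior is written as a sum over edge-ends.
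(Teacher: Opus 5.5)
Your proposal is correct and follows the paper's proof. For Merge you set $\rep'(u)=\rep(u)+\rep(v)$ and use that the merged vertex has behavior $\bh{u}+\bh{v}$; for Swap you keep the witness unchanged, since no vertex's behavior changes. Beyond this you check that the summed $\lbv$ keeps the right validity type, that other vertices are unaffected, and the degenerate coincidences in Swap via edge-end counting. You also cover merging adjacent vertices, which the paper's appendix proof excludes by assumption even though the paper later relies on such merges (for example in Claim~\ref{clm:g1props}). That case works because a self-loop of asymmetric type $\eta$ counts once in the $(\eta,\pi)$ entry and once in the $(\dual{\eta},\pi)$ entry.
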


\myparagraph{Conditions for realizability of a linear set of $\gbv$s}
We now introduce the conditions under which a D-structure whose $\gbv$ is in $\cL(\frb; \frP)$ 
can be transformed into another structure whose (possibly different) $\gbv$ is also in $\cL(\frb; \frP)$.
Then we show how to encode these conditions in a Presburger formula.

From $\frb$ and $\frP$ we can deduce which 1-types have a bounded number of vertices for every structure realizing a $\gbv$ in $\cL(\frb;\frP)$.
These are called \emph{king} 1-types. 
The rest are called \emph{peasant} 1-types.
We will be interested in the scenario where a D-structure realizing a $\agbv$ in $\cL(\frb; \frP)$ 
can be decomposed into two disjoint D-structures --
the \emph{kernel} $\cK$ and the \emph{village} $\cV$ -- 
with $\cK$ realizing some $\gbv$ in $\cL(\frb; \frP)$ and
$\cV$ realizing some $\gbv$ in $\cL_{\homo}(\frP)$.
The kernel has a bounded number of vertices and a bounded number of edges.
Both bounds can be computed effectively from $\frb$ and $\frP$. 

In the decomposition, we can impose additional conditions on the kernel such as:
\begin{compactenum}
\item[K1.]
Only phantom vertices may have peasant 1-types.
We call concrete vertices with king 1-types \emph{kings},
non-king concrete vertices \emph{nobles},
and phantom vertices \emph{phantom nobles}.
\item[K2.]
Multi-edges cannot involve kings. 
Kings cannot have self-loops.
\end{compactenum}
A village which is compatible with a given kernel will have the following properties:
\begin{compactenum}
    \item[V1.] 
        All phantom vertices have king 1-types,
        while concrete vertices in the village have peasant 1-types.
        We call the phantom vertices \emph{shadows} and the concrete vertices \emph{peasants}.
    \item[V2.]
        There are no edges between shadows.
        Multi-edges cannot involve shadows.
        There are no self-loops in the village. 
    \item[V3.]
        There is a $1$-type preserving bijection between village shadows and kings in the  kernel.
    \item[V4.]
        For a phantom noble in the kernel, there is a peasant in the village with the same $1$-type.
    \item[V5.]    
        For every multi-edge in the kernel, there is a pair of peasants in the village with  1-types matching the endpoints of the edge in the kernel and the same 2-type as the edge.
    \item[V6.]
        For every self-loop with 2-type $\eta$ in the vertex with 1-type $\pi$ in the kernel,
        there is a pair of peasants in the village with $\pi$ and an edge with $\eta$ between them.
\end{compactenum}

See Figure~\ref{fig:kernel_village} for an illustration.

\begin{figure}[h!]
    \centering
    \resizebox{0.4\textwidth}{!}{
        \input{fig/kernel_village}
    }
    \caption{
        Illustration of (a) a kernel and (b) a compatible village.
        Concrete vertices are drawn with solid outlines (vertices 1-7 and 12-15), and 
        phantom vertices are drawn with dashed outlines (8-11).
        Vertices of king 1-types are shown in red (1-3 and 9-11), while 
        vertices of peasant 1-types are shown in blue (4-8 and 12-15).
        The gray arrows indicate the bijection between kings and shadows in the condition \emph{shadow--king correspondence}.
    }
    \label{fig:kernel_village}
\end{figure}

To obtain a structure from a kernel $\cK$ and village $\cV$ compatible with $\cK$, we have a \emph{pumping theorem} (see Theorem \ref{thm:pump}).
We create enough copies of villages so that we can:
\begin{compactenum}
    \item 
        Merge the shadows in the villages with the kings in the kernel
        according to the bijection defined in V3.
Note that, by conditions K2 and V2, no multi-edges involving kings
        are created by this merging process.

    \item 
        Merge each phantom noble in the kernel with an appropriate peasant
        in the village. Condition V4 guarantees that such a peasant exists.
After these first two steps, every vertex satisfies the local
        conditions, and no phantom vertices remain. It remains to eliminate
        the multi-edges.

    \item
        Eliminate the multi-edges arising from the peasants in the villages.
        Sufficient pumping guarantees that we can always find suitable swap
        partners.

    \item 
        Eliminate the remaining multi-edges arising from the phantom nobles
        in the kernel. Condition V5 guarantees that, after enough pumping
        of the villages, suitable swap partners are available.

    \item 
        Eliminate the remaining self-loop arising from the nobles and phantom nobles
        in the kernel. Condition V6 guarantees that, after enough pumping
        of the villages, suitable swap partners are available.
\end{compactenum}
See Figure~\ref{fig:pumping} for an illustration.

\begin{figure}[ht]
    \centering
    \resizebox{0.8\textwidth}{!}{
        \input{fig/pumping}
    }
    \caption{
        Illustration of the pumping argument.
        The construction starts with one kernel and three disjoint copies of a village.
        The green edges all have the same 2-type and connect vertices of the same 1-types.
        Similarly, the magenta edges all have the same 2-type and connect vertices of the same 1-types.
        The vertices marked with $\times$ have the same 1-type.
        In Steps~1 and~2,
        we merge the corresponding shadows in the three village copies with the kings in the kernel.
        We also merge each phantom noble in the kernel with a peasant of the same 1-type in one of the village copies.
        Condition~V4 guarantees the existence of such a peasant.
        In Steps~3 and~4,
        we apply the swapping operation to the green and magenta edges to eliminate the resulting multi-edges.
        Since the green (respectively, magenta) edges originate from corresponding edges in different copies of the same village, they all have the same 2-type.
        Condition~V5 guarantees that a suitable edge is always available for each swapping operation.
    }
    \label{fig:pumping}
\end{figure}

For a fixed kernel $\cK$, we construct a \emph{Presburger formula describing a decomposition into a village for the kernel}, or just ``village formula'' below. 
Informally, the village formula will be a  linear system whose solutions
represent a village $\cV$ compatible with the kernel $\cK$ such that the union $\cK\cup \cV$ 
realizes a $\gbv$ in $\cL(\frb;\frP)$.
Our main theorem then breaks down into two further components:
\begin{inparaenum}
\item 
    A \emph{satisfiability-to-solvability theorem} (Theorem \ref{thm:sattosolve}):
    Existence of a structure realizing a $\gbv$ in $\cL(\frb;\frP)$ implies existence of a kernel and solvability of the corresponding constraints.
\item 
    A \emph{solvability-to-village-existence theorem} (Theorem \ref{thm:solvetovillage}):
    Given a kernel $\cK$, the solvability of the corresponding constraints implies existence of a village compatible with $\cK$.
    Note that, by the pumping theorem, this implies the existence of a structure realizing a $\gbv$ in $\cL(\frb;\frP)$.
\end{inparaenum}

Combined with the \emph{pumping theorem}, which allows us to obtain a model from a kernel and compatible village,
we complete the proof.
The details are in Section~\ref{sec:fullproofmain}.

%% file: fig/behaviorfigure.tex
\usetikzlibrary{arrows.meta,positioning,shapes.geometric,calc}

\definecolor{panelblue}{RGB}{238,246,255}
\definecolor{panelgreen}{RGB}{239,249,241}
\definecolor{panelyellow}{RGB}{255,249,232}
\definecolor{edgeRed}{RGB}{205,35,35}
\definecolor{edgeBlue}{RGB}{40,80,180}
\definecolor{edgeGreen}{RGB}{40,125,55}

\tikzset{
  >=Latex,
  panel/.style={
    draw=black!20,
    rounded corners=4pt,
    very thin,
    inner sep=10pt,
    minimum height=10.2cm,
  },
  title/.style={
    font=\bfseries\large,
    align=center
  },
  body/.style={
    font=\normalsize,
    align=center
  },
  elem/.style={
    circle,
    draw=black,
    fill=white,
    minimum size=7mm,
    inner sep=0pt
  },
  basisnode/.style={
    circle,
    draw=black,
    fill=white,
    minimum size=6mm,
    inner sep=0pt
  },
  localbox/.style={
    draw=edgeGreen!75!black,
    fill=panelgreen!75,
    rounded corners=3pt,
    inner sep=7pt,
    align=center
  },
  globalbox/.style={
    draw=orange!85!black,
    fill=panelyellow!80,
    rounded corners=3pt,
    inner sep=7pt,
    align=center
  },
  flow/.style={
    -Latex,
    line width=1.1pt
  }
}

\begin{tikzpicture}[x=1cm,y=1cm]

\begin{scope}[shift={(0,0)}]

  \node[
    panel,
    fill=panelblue,
    minimum width=6cm,
    anchor=north west
  ] (P1) at (0,0) {};

  \node[title,text width=5.2cm] at (3,-0.3)
    {1. Behavior (Def.~\ref{def:behavior})};

  \node[body,text width=5.0cm] at (3,-1.3)
    {For an element $v$, count the number
     of incident edges of each type.};

  \node[elem] (x) at (3,-3.95) {$v$};

  \draw[edgeRed,line width=1.5pt]
    (x) -- ++(-1.55,0.15) coordinate (r1);

  \draw[edgeRed,line width=1.5pt]
    (x) -- ++(-0.85,1.20) coordinate (r2);

  \fill[edgeRed] (r1) circle (2pt);
  \fill[edgeRed] (r2) circle (2pt);

  \draw[edgeBlue,dashed,line width=1.5pt]
    (x) -- ++(1.55,0.72) coordinate (b1);

  \draw[edgeBlue,dashed,line width=1.5pt]
    (x) -- ++(1.30,-1.05) coordinate (b2);

  \fill[edgeBlue] (b1) rectangle ++(0.13,0.13);
  \fill[edgeBlue] (b2) rectangle ++(0.13,0.13);

  \draw[edgeGreen,dotted,line width=1.5pt]
    (x) -- ++(0.48,1.45) coordinate (g1);

  \draw[edgeGreen,dotted,line width=1.5pt]
    (x) -- ++(0.32,-1.62) coordinate (g2);

  \draw[edgeGreen,dotted,line width=1.5pt]
    (x) -- ++(-1.34,-1.25) coordinate (g3);

  \foreach \p in {g1,g2,g3}{
    \node[
      regular polygon,
      regular polygon sides=3,
      draw=edgeGreen!80!black,
      fill=edgeGreen!55,
      minimum size=6mm,
      inner sep=0pt
    ] at (\p) {};
  }

  \node[
    draw=edgeBlue!60,
    fill=white,
    rounded corners=3pt,
    align=center,
    text width=4.7cm,
    inner sep=7pt
  ] at (3,-7.18)
  {
    $\displaystyle
    \bh{v}=
    \begin{pmatrix}
      2\\
      2\\
      3
    \end{pmatrix}$
  };

\end{scope}

\begin{scope}[shift={(8,0)}]

  \node[
    panel,
    fill=panelgreen,
    minimum width=7cm,
    anchor=north west
  ] (P2) at (0,0) {};

  \node[title,text width=10cm] at (3.5,-0.3)
    {2. Local behavior vector (Def.~\ref{def:lbv})};

  \node[body,text width=5.5cm] at (3.5,-1.30)
    {Express $\bh{v}$ in the fixed basis of
     admissible behaviors\\
     (satisfying the local constraints).};

  \node[
    draw=gray!65,
    fill=white,
    rounded corners=3pt,
    minimum width=5.65cm,
    minimum height=2.15cm
  ] (basis) at (3.5,-3.35) {};

  \node[font=\bfseries] at (3.5,-2.66)
    {Fixed basis $\mathcal B=\{B_1,B_2,B_3\}$};

  \node[basisnode] (c1) at (1.75,-3.68) {};

  \draw[edgeRed,line width=1.25pt]
    (c1) -- ++(-0.45,0.25);

  \draw[edgeRed,line width=1.25pt]
    (c1) -- ++(0.45,0.25);

  \fill[edgeRed]
    ($(c1)+(-0.45,0.25)$) circle (1.7pt);

  \fill[edgeRed]
    ($(c1)+(0.45,0.25)$) circle (1.7pt);

  \node at (1.75,-4.25) {$B_1$};

  \node[basisnode] (c2) at (3.5,-3.68) {};

  \draw[edgeBlue,dashed,line width=1.25pt]
    (c2) -- ++(-0.45,0.28);

  \draw[edgeBlue,dashed,line width=1.25pt]
    (c2) -- ++(0.45,0.28);

  \fill[edgeBlue]
    ($(c2)+(-0.45,0.28)$)
    rectangle ++(0.11,0.11);

  \fill[edgeBlue]
    ($(c2)+(0.45,0.28)$)
    rectangle ++(0.11,0.11);

  \node at (3.5,-4.25) {$B_2$};

  \node[basisnode] (c3) at (4.45,-3.70) {};

  \draw[edgeGreen,dotted,line width=1.25pt]
    (c3) -- ++(-0.10,0.48);

  \draw[edgeGreen,dotted,line width=1.25pt]
    (c3) -- ++(0.62,0.10);

  \draw[edgeGreen,dotted,line width=1.25pt]
    (c3) -- ++(0.58,0.62);

  \foreach \dx/\dy in {
    -0.10/0.48,
     0.62/0.10,
     0.58/0.62
  }{
    \node[
      regular polygon,
      regular polygon sides=3,
      draw=edgeGreen!80!black,
      fill=edgeGreen!55,
      minimum size=4.5mm,
      inner sep=0pt
    ] at ($(c3)+(\dx,\dy)$) {};
  }

  \node at (4.45,-4.32) {$B_3$};

  \node[body,text width=5.4cm] at (3.5,-4.92)
    {Write the behavior as a linear combination
     of the basis elements:};

  \node[
    draw=black!15,
    fill=white,
    rounded corners=3pt,
    text width=5.35cm,
    inner sep=5pt,
    align=center
  ] at (3.5,-6)
  {
    \small
    $\displaystyle
    \bh{v}=
    \lambda_1(v)B_1+
    \lambda_2(v)B_2$\\[-1mm]
    $\displaystyle
    \hspace{1.9cm}+
    \lambda_3(v)B_3$
  };

  \node[
    localbox,
    text width=4.55cm
  ] at (3.5,-8)
  {
    {\bfseries Local behavior vector of $v$:}\\[2mm]
    $\displaystyle
    \boldsymbol{\lambda}(v)=
    \begin{pmatrix}
      \lambda_1(v)\\[-1mm]
      \lambda_2(v)\\[-1mm]
      \lambda_3(v)
    \end{pmatrix}$
  };

\end{scope}

\begin{scope}[shift={(17,0)}]

  \node[
    panel,
    fill=panelyellow,
    minimum width=7cm,
    anchor=north west
  ] (P3) at (0,0) {};

  \node[title,text width=10cm] at (3.50,-0.3)
    {3. Global behavior vector (Def.~\ref{def:gbv})};

  \node[body,text width=6.2cm] at (3.50,-1.3)
    {Aggregate the local behavior vectors
     over all elements (with the same basis)
     of the model.};

  \node[elem] (x1) at (1.30,-2.90) {$v_1$};
  \node[elem] (x2) at (1.30,-4.25) {$v_2$};
  \node[elem] (xn) at (1.30,-5.60) {$v_n$};

  \draw[edgeRed,line width=1.1pt]
    (x1) -- ++(0.55,0.22);

  \draw[edgeBlue,dashed,line width=1.1pt]
    (x1) -- ++(0.55,-0.24);

  \draw[edgeGreen,dotted,line width=1.1pt]
    (x1) -- ++(0.48,-0.55);

  \draw[edgeBlue,dashed,line width=1.1pt]
    (x2) -- ++(0.55,0.22);

  \draw[edgeBlue,dashed,line width=1.1pt]
    (x2) -- ++(0.55,-0.24);

  \draw[edgeGreen,dotted,line width=1.1pt]
    (x2) -- ++(0.48,-0.55);

  \draw[edgeRed,line width=1.1pt]
    (xn) -- ++(0.55,0.22);

  \draw[edgeBlue,dashed,line width=1.1pt]
    (xn) -- ++(0.55,-0.24);

  \draw[edgeGreen,dotted,line width=1.1pt]
    (xn) -- ++(0.48,-0.55);

  \node[font=\small] at (2.30,-2.68)
    {$\boldsymbol{\lambda}(v_1)$};

  \node[font=\small] at (2.30,-4.03)
    {$\boldsymbol{\lambda}(v_2)$};

  \node[font=\small] at (2.30,-5.38)
    {$\boldsymbol{\lambda}(v_n)$};

  \node[font=\large] at (1.30,-4.92)
    {$\vdots$};

  \node[
    draw=gray!65,
    fill=white,
    rounded corners=3pt,
    minimum width=2.0cm,
    minimum height=2.0cm
  ] (sum) at (4.65,-4.25)
    {$\displaystyle\sum_v\boldsymbol{\lambda}(v)$};

  \draw[flow] (2.80,-2.68) -- (sum.west);
  \draw[flow] (2.80,-4.25) -- (sum.west);
  \draw[flow] (2.80,-5.38) -- (sum.west);

  \node[
    globalbox,
    text width=5cm
  ] at (3.5,-8)
  {
    {\bfseries Global behavior vector:}\\[2mm]
    {\small
    $\displaystyle
    \begin{aligned}
    \boldsymbol{\Lambda}
      &=\sum_v\boldsymbol{\lambda}(v)
      &=\begin{pmatrix}
          \Lambda_1\\[-1mm]
          \Lambda_2\\[-1mm]
          \Lambda_3
        \end{pmatrix}
    \end{aligned}$}
  };

\end{scope}


\draw[flow] ([xshift=0.1cm]P1.east) -- ([xshift=-0.1cm]P2.west);

\draw[flow] ([xshift=0.1cm]P2.east) -- ([xshift=-0.1cm]P3.west);

\end{tikzpicture}

%% file: fig/kernel_village.tex

\begin{tikzpicture}[scale=.9]
    \begin{scope}[shift={(0, 2.5)}]
        \draw[bdr] (-0.25, -0.25) rectangle (6.75, 2.00);
        \node at (-1, 1) {(a)};

        \begin{scope}[shift={(0, 0)}]
            \draw[bdr] (0.0, 0.0) rectangle (2.0, 1.5);
            \node[anchor=south] at (1.0, 1.5) {\footnotesize Kings};

            \node[newvtx, red] (kk1) at (0.3, 0.75) {\tiny 1};
            \node[newvtx, red] (kk2) at (1.0, 0.75) {\tiny 2};
            \node[newvtx, red] (kk3) at (1.7, 0.75) {\tiny 3};

            \draw[edge]               (kk1) to (kk2);
            \draw[edge, bend left=40] (kk1) to (kk3);
        \end{scope}

        \begin{scope}[shift={(2.25, 0)}]
            \draw[bdr] (0.0, 0.0) rectangle (2.0, 1.5);
            \node[anchor=south] at (1.0, 1.5) {\footnotesize Nobles};

            \node[newvtx, blue] (kn1) at (0.3, 0.4) {\tiny 4};
            \node[newvtx, blue] (kn2) at (0.3, 1.1) {\tiny 5};
            \node[newvtx, blue] (kn3) at (1.0, 0.75) {\tiny 6};
            \node[newvtx, blue] (kn4) at (1.7, 0.75) {\tiny 7};

            \draw[edge]                (kk3) to (kn1);
            \draw[edge]                (kk3) to (kn2);
            \draw[edge, bend left=20]  (kk2) to (kn2);

            \draw[edge]                (kn1) to (kn3);
            \draw[edge]                (kn2) to (kn3);
            \draw[edge, bend left=10]  (kn3) to (kn4);
            \draw[edge, bend right=10] (kn3) to (kn4);
        \end{scope}

        \begin{scope}[shift={(4.5, 0)}]
            \draw[bdr] (0.0, 0.0) rectangle (2.0, 1.5);
            \node[anchor=south] at (1.0, 1.5) {\footnotesize Phantom nobles};

            \node[newvtx, blue, dashed] (kp1) at (1.0, 0.75) {\tiny 8};

            \draw[edge] (kp1) to (kn4);
        \end{scope}
    \end{scope}

    \begin{scope}[shift={(0, 0)}]
        \draw[bdr] (-0.25, -0.25) rectangle (4.50, 2.00);
        \node at (-1, 1) {(b)};

        \begin{scope}[shift={(0, 0.25)}]
            \draw[bdr] (0.0, 0.0) rectangle (2.0, 1.5);
            \node[anchor=north] at (1.0, 0.0) {\footnotesize Shadows};

            \node[newvtx, red, dashed] (vs1) at (0.3, 0.75) {\tiny 9};
            \node[newvtx, red, dashed, inner sep = 1pt] (vs2) at (1.0, 0.75) {\tiny 10};
            \node[newvtx, red, dashed, inner sep = 1pt] (vs3) at (1.7, 0.75) {\tiny 11};
        \end{scope}

        \begin{scope}[shift={(2.25, 0.25)}]
            \draw[bdr] (0.0, 0.0) rectangle (2.0, 1.5);
            \node[anchor=north] at (1.0, 0.0) {\footnotesize Peasants};

            \node[newvtx, blue, inner sep = 1pt] (vp1) at (0.5, 0.4) {\tiny 12};
            \node[newvtx, blue, inner sep = 1pt] (vp2) at (0.5, 1.1) {\tiny 13};
            \node[newvtx, blue, inner sep = 1pt] (vp3) at (1.5, 0.4) {\tiny 14};
            \node[newvtx, blue, inner sep = 1pt] (vp4) at (1.5, 1.1) {\tiny 15};

            \draw[edge]                (vp1) to (vp2);
            \draw[edge]                (vp1) to (vp3);
            \draw[edge]                (vp2) to (vp3);
            \draw[edge, bend right=20] (vp3) to (vp4);
            \draw[edge, bend left=20]  (vp3) to (vp4);

            \draw[edge, bend right=20] (vs1) to (vp1);
            \draw[edge, bend left=20]  (vs1) to (vp2);
            \draw[edge]                (vs3) to (vp2);
        \end{scope}
    \end{scope}

    \draw[line width=1pt, gray, ->] (vs1) to (kk1);
    \draw[line width=1pt, gray, ->] (vs2) to (kk2);
    \draw[line width=1pt, gray, ->] (vs3) to (kk3);
\end{tikzpicture}

%% file: fig/pumping.tex
\begin{tikzpicture}[scale=.8]
    \begin{scope}[shift={(0, 0)}]
        \begin{scope}[shift={(0, 5)}]
            \draw[bdr] (-0.25, -0.25) rectangle (6.75, 2.00);

            \begin{scope}[shift={(0, 0)}]
                \draw[bdr] (0.0, 0.0) rectangle (2.0, 1.5);
                \node[anchor=south] at (1.0, 1.5) {\footnotesize Kings};

                \node[vtx, red] (k1k1) at (0.3, 0.75) {};
                \node[vtx, red] (k1k2) at (1.0, 0.75) {};
                \node[vtx, red] (k1k3) at (1.7, 0.75) {};

                \draw[edge]                (k1k1) to (k1k2);
                \draw[edge, bend right=40] (k1k1) to (k1k3);
            \end{scope}

            \begin{scope}[shift={(2.25, 0)}]
                \draw[bdr] (0.0, 0.0) rectangle (2.0, 1.5);
                \node[anchor=south] at (1.0, 1.5) {\footnotesize Nobles};

                \node[vtx, blue] (k1n1) at (0.3, 0.4) {};
                \node[vtx, blue] (k1n2) at (0.3, 1.1) {};
                \node[vtx, blue] (k1n3) at (1.0, 0.75) {};
                \node[vtx, blue] (k1n4) at (1.7, 0.75) {};

                \draw[edge]                (k1k3) to (k1n1);
                \draw[edge]                (k1k3) to (k1n2);
                \draw[edge, bend left=20]  (k1k2) to (k1n2);

                \draw[edge]                (k1n1) to (k1n3);
                \draw[edge]                (k1n2) to (k1n3);
                \draw[edge, bend left=10]  (k1n3) to (k1n4);
                \draw[edge, bend right=10, magenta] (k1n3) to (k1n4);
            \end{scope}

            \begin{scope}[shift={(4.5, 0)}]
                \draw[bdr] (0.0, 0.0) rectangle (2.0, 1.5);
                \node[anchor=south] at (1.0, 1.5) {\footnotesize Phantom nobles};

                \node[vtx, blue, dashed] (k1p1) at (1.0, 0.75) {};

                \draw[edge] (k1p1) to (k1n4);
            \end{scope}
        \end{scope}

        \begin{scope}[shift={(0, 2.5)}]
            \draw[bdr] (-0.25, -0.25) rectangle (4.50, 2.00);

            \begin{scope}[shift={(0, 0.25)}]
                \draw[bdr] (0.0, 0.0) rectangle (2.0, 1.5);
                \node[anchor=north] at (1.0, 0.0) {\footnotesize Shadows};

                \node[vtx, red, dashed] (v2s1) at (0.3, 0.75) {};
                \node[vtx, red, dashed] (v2s2) at (1.0, 0.75) {};
                \node[vtx, red, dashed] (v2s3) at (1.7, 0.75) {};
            \end{scope}

            \begin{scope}[shift={(2.25, 0.25)}]
                \draw[bdr] (0.0, 0.0) rectangle (2.0, 1.5);
                \node[anchor=north] at (1.0, 0.0) {\footnotesize Peasants};

                \node[vtx, blue] (v2p1) at (0.5, 0.4) {};
                \node[vtx, blue] (v2p2) at (0.5, 1.1) {};
                \node[vtx, blue] (v2p3) at (1.5, 0.4) {};
                \node[vtx, blue] (v2p4) at (1.5, 1.1) {};

                \draw[edge]                (v2p1) to (v2p2);
                \draw[edge]                (v2p1) to (v2p3);
                \draw[edge, magenta]       (v2p2) to (v2p3);
                \draw[edge, bend right=20, green] (v2p3) to (v2p4);
                \draw[edge, bend left=20]  (v2p3) to (v2p4);

                \draw[edge, bend right=20] (v2s1) to (v2p1);
                \draw[edge, bend left=20]  (v2s1) to (v2p2);
                \draw[edge]                (v2s3) to (v2p2);
            \end{scope}
        \end{scope}

        \begin{scope}[shift={(0, 0)}]
            \draw[bdr] (-0.25, -0.25) rectangle (4.50, 2.00);

            \begin{scope}[shift={(0, 0.25)}]
                \draw[bdr] (0.0, 0.0) rectangle (2.0, 1.5);
                \node[anchor=north] at (1.0, 0.0) {\footnotesize Shadows};

                \node[vtx, red, dashed] (v3s1) at (0.3, 0.75) {};
                \node[vtx, red, dashed] (v3s2) at (1.0, 0.75) {};
                \node[vtx, red, dashed] (v3s3) at (1.7, 0.75) {};
            \end{scope}

            \begin{scope}[shift={(2.25, 0.25)}]
                \draw[bdr] (0.0, 0.0) rectangle (2.0, 1.5);
                \node[anchor=north] at (1.0, 0.0) {\footnotesize Peasants};

                \node[vtx, blue] (v3p1) at (0.5, 0.4) {};
                \node[vtx, blue] (v3p2) at (0.5, 1.1) {};
                \node[vtx, blue] (v3p3) at (1.5, 0.4) {};
                \node[vtx, blue] (v3p4) at (1.5, 1.1) {};

                \draw[edge]                (v3p1) to (v3p2);
                \draw[edge]                (v3p1) to (v3p3);
                \draw[edge, magenta]       (v3p2) to (v3p3);
                \draw[edge, bend right=20, green] (v3p3) to (v3p4);
                \draw[edge, bend left=20]  (v3p3) to (v3p4);

                \draw[edge, bend right=20] (v3s1) to (v3p1);
                \draw[edge, bend left=20]  (v3s1) to (v3p2);
                \draw[edge]                (v3s3) to (v3p2);
            \end{scope}
        \end{scope}

        \begin{scope}[shift={(0, -2.5)}]
            \draw[bdr] (-0.25, -0.25) rectangle (4.50, 2.00);

            \begin{scope}[shift={(0, 0.25)}]
                \draw[bdr] (0.0, 0.0) rectangle (2.0, 1.5);
                \node[anchor=north] at (1.0, 0.0) {\footnotesize Shadows};

                \node[vtx, red, dashed] (vxs1) at (0.3, 0.75) {};
                \node[vtx, red, dashed] (vxs2) at (1.0, 0.75) {};
                \node[vtx, red, dashed] (vxs3) at (1.7, 0.75) {};
            \end{scope}

            \begin{scope}[shift={(2.25, 0.25)}]
                \draw[bdr] (0.0, 0.0) rectangle (2.0, 1.5);
                \node[anchor=north] at (1.0, 0.0) {\footnotesize Peasants};

                \node[vtx, blue] (vxp1) at (0.5, 0.4) {};
                \node[vtx, blue] (vxp2) at (0.5, 1.1) {};
                \node[vtx, blue] (vxp3) at (1.5, 0.4) {};
                \node[vtx, blue] (vxp4) at (1.5, 1.1) {};

                \draw[edge]                (vxp1) to (vxp2);
                \draw[edge]                (vxp1) to (vxp3);
                \draw[edge, magenta]       (vxp2) to (vxp3);
                \draw[edge, bend right=20, green] (vxp3) to (vxp4);
                \draw[edge, bend left=20]  (vxp3) to (vxp4);

                \draw[edge, bend right=20] (vxs1) to (vxp1);
                \draw[edge, bend left=20]  (vxs1) to (vxp2);
                \draw[edge]                (vxs3) to (vxp2);
            \end{scope}
        \end{scope}

        \node at (k1p1) {$\times$};
        \node at (v2p4) {$\times$};
        \node at (v3p4) {$\times$};
        \node at (vxp4) {$\times$};
    \end{scope}

    \begin{scope}[shift={(7.5, 0)}]
        \begin{scope}[shift={(0, 5)}]
            \draw[bdr] (-0.25, -0.25) rectangle (4.5, 2.00);

            \begin{scope}[shift={(0, 0)}]
                \draw[bdr] (0.0, 0.0) rectangle (2.0, 1.5);
                \node[anchor=south] at (1.0, 1.5) {\footnotesize Kings};

                \node[vtx, red] (k4k1) at (0.3, 0.75) {};
                \node[vtx, red] (k4k2) at (1.0, 0.75) {};
                \node[vtx, red] (k4k3) at (1.7, 0.75) {};

                \draw[edge]                (k4k1) to (k4k2);
                \draw[edge, bend right=40] (k4k1) to (k4k3);
            \end{scope}

            \begin{scope}[shift={(2.25, 0)}]
                \draw[bdr] (0.0, 0.0) rectangle (2.0, 1.5);
                \node[anchor=south] at (1.0, 1.5) {\footnotesize Nobles};

                \node[vtx, blue] (k4n1) at (0.3, 0.4) {};
                \node[vtx, blue] (k4n2) at (0.3, 1.1) {};
                \node[vtx, blue] (k4n3) at (1.0, 0.75) {};
                \node[vtx, blue] (k4n4) at (1.7, 0.75) {};

                \draw[edge]                (k4k3) to (k4n1);
                \draw[edge]                (k4k3) to (k4n2);
                \draw[edge, bend left=20]  (k4k2) to (k4n2);

                \draw[edge]                (k4n1) to (k4n3);
                \draw[edge]                (k4n2) to (k4n3);
                \draw[edge, bend left=10]  (k4n3) to (k4n4);
                \draw[edge, bend right=10, magenta] (k4n3) to (k4n4);
            \end{scope}

            %
            %
        \end{scope}

        \begin{scope}[shift={(0, 2.5)}]
            \draw[bdr] (-0.25, -0.25) rectangle (4.50, 2.00);

            %

            \begin{scope}[shift={(2.25, 0.25)}]
                \draw[bdr] (0.0, 0.0) rectangle (2.0, 1.5);
                \node[anchor=north] at (1.0, 0.0) {\footnotesize Peasants};

                \node[vtx, blue] (v5p1) at (0.5, 0.4) {};
                \node[vtx, blue] (v5p2) at (0.5, 1.1) {};
                \node[vtx, blue] (v5p3) at (1.5, 0.4) {};
                \node[vtx, blue] (v5p4) at (1.5, 1.1) {};

                \draw[edge]                (v5p1) to (v5p2);
                \draw[edge]                (v5p1) to (v5p3);
                \draw[edge, magenta]       (v5p2) to (v5p3);
                \draw[edge, bend right=20, green] (v5p3) to (v5p4);
                \draw[edge, bend left=20]  (v5p3) to (v5p4);

            \end{scope}
        \end{scope}

        \begin{scope}[shift={(0, 0)}]
            \draw[bdr] (-0.25, -0.25) rectangle (4.50, 2.00);

            %

            \begin{scope}[shift={(2.25, 0.25)}]
                \draw[bdr] (0.0, 0.0) rectangle (2.0, 1.5);
                \node[anchor=north] at (1.0, 0.0) {\footnotesize Peasants};

                \node[vtx, blue] (v6p1) at (0.5, 0.4) {};
                \node[vtx, blue] (v6p2) at (0.5, 1.1) {};
                \node[vtx, blue] (v6p3) at (1.5, 0.4) {};
                \node[vtx, blue] (v6p4) at (1.5, 1.1) {};

                \draw[edge]                (v6p1) to (v6p2);
                \draw[edge]                (v6p1) to (v6p3);
                \draw[edge, magenta]       (v6p2) to (v6p3);
                \draw[edge, bend right=20, green] (v6p3) to (v6p4);
                \draw[edge, bend left=20]  (v6p3) to (v6p4);
            \end{scope}
        \end{scope}

        \begin{scope}[shift={(0, -2.5)}]
            \draw[bdr] (-0.25, -0.25) rectangle (4.50, 2.00);

            %

            \begin{scope}[shift={(2.25, 0.25)}]
                \draw[bdr] (0.0, 0.0) rectangle (2.0, 1.5);
                \node[anchor=north] at (1.0, 0.0) {\footnotesize Peasants};

                \node[vtx, blue] (vyp1) at (0.5, 0.4) {};
                \node[vtx, blue] (vyp2) at (0.5, 1.1) {};
                \node[vtx, blue] (vyp3) at (1.5, 0.4) {};
                \node[vtx, blue] (vyp4) at (1.5, 1.1) {};

                \draw[edge]                (vyp1) to (vyp2);
                \draw[edge]                (vyp1) to (vyp3);
                \draw[edge, magenta]       (vyp2) to (vyp3);
                \draw[edge, bend right=20, green] (vyp3) to (vyp4);
                \draw[edge, bend left=20]  (vyp3) to (vyp4);
            \end{scope}
        \end{scope}
        \draw[edge, bend right=20] (k4k1) to (v6p1);
        \draw[edge, bend right=20] (k4k1) to (v6p2);
        \draw[edge, bend right=20] (k4k3) to (v6p2);

        \draw[edge, bend right=20] (k4k1) to (v5p1);
        \draw[edge, bend right=20] (k4k1) to (v5p2);
        \draw[edge, bend right=20] (k4k3) to (v5p2);

        \draw[edge, bend right=20] (k4k1) to (vyp1);
        \draw[edge, bend right=20] (k4k1) to (vyp2);
        \draw[edge, bend right=20] (k4k3) to (vyp2);

        \draw[edge] (k4n4) to (v5p4);

        \node at (v5p4) {$\times$};
        \node at (v6p4) {$\times$};
        \node at (vyp4) {$\times$};
    \end{scope}

    \begin{scope}[shift={(15.00, 0)}]
        \begin{scope}[shift={(0, 5)}]

            \begin{scope}[shift={(0, 0)}]
                \draw[bdr] (0.0, 0.0) rectangle (2.0, 1.5);

                \node[vtx, red] (k7k1) at (0.3, 0.75) {};
                \node[vtx, red] (k7k2) at (1.0, 0.75) {};
                \node[vtx, red] (k7k3) at (1.7, 0.75) {};

                \draw[edge]               (k7k1) to (k7k2);
                \draw[edge, bend right=40] (k7k1) to (k7k3);
            \end{scope}

            \begin{scope}[shift={(2.25, 0)}]
                \draw[bdr] (0.0, 0.0) rectangle (2.0, 1.5);

                \node[vtx, blue] (k7n1) at (0.3, 0.4) {};
                \node[vtx, blue] (k7n2) at (0.3, 1.1) {};
                \node[vtx, blue] (k7n3) at (1.0, 0.75) {};
                \node[vtx, blue] (k7n4) at (1.7, 0.75) {};

                \draw[edge]                (k7n1) to (k7n3);
                \draw[edge]                (k7n2) to (k7n3);
                \draw[edge]                (k7n3) to (k7n4);

                \draw[edge]                (k7k3) to (k7n1);
                \draw[edge]                (k7k3) to (k7n2);
                \draw[edge, bend left=20]  (k7k2) to (k7n2);
            \end{scope}
        \end{scope}

        \begin{scope}[shift={(0, 2.5)}]

            %

            \begin{scope}[shift={(2.25, 0.25)}]
                \draw[bdr] (0.0, 0.0) rectangle (2.0, 1.5);

                \node[vtx, blue] (v8p1) at (0.5, 0.4) {};
                \node[vtx, blue] (v8p2) at (0.5, 1.1) {};
                \node[vtx, blue] (v8p3) at (1.5, 0.4) {};
                \node[vtx, blue] (v8p4) at (1.5, 1.1) {};

                \draw[edge]                (v8p1) to (v8p2);
                \draw[edge]                (v8p1) to (v8p3);
                \draw[edge]                (v8p3) to (v8p4);

            \end{scope}
        \end{scope}

        \begin{scope}[shift={(0, 0)}]

            %

            \begin{scope}[shift={(2.25, 0.25)}]
                \draw[bdr] (0.0, 0.0) rectangle (2.0, 1.5);

                \node[vtx, blue] (v9p1) at (0.5, 0.4) {};
                \node[vtx, blue] (v9p2) at (0.5, 1.1) {};
                \node[vtx, blue] (v9p3) at (1.5, 0.4) {};
                \node[vtx, blue] (v9p4) at (1.5, 1.1) {};

                \draw[edge]          (v9p1) to (v9p2);
                \draw[edge]          (v9p1) to (v9p3);
                \draw[edge, magenta] (v9p2) to (v9p3);
                \draw[edge]          (v9p3) to (v9p4);
            \end{scope}
        \end{scope}

        \begin{scope}[shift={(0, -2.5)}]
            \draw[bdr] (-0.25, -0.25) rectangle (4.50, 9.50);

            %

            \begin{scope}[shift={(2.25, 0.25)}]
                \draw[bdr] (0.0, 0.0) rectangle (2.0, 1.5);

                \node[vtx, blue] (vzp1) at (0.5, 0.4) {};
                \node[vtx, blue] (vzp2) at (0.5, 1.1) {};
                \node[vtx, blue] (vzp3) at (1.5, 0.4) {};
                \node[vtx, blue] (vzp4) at (1.5, 1.1) {};

                \draw[edge]          (vzp1) to (vzp2);
                \draw[edge]          (vzp1) to (vzp3);
                \draw[edge, magenta] (vzp2) to (vzp3);
                \draw[edge]          (vzp3) to (vzp4);
            \end{scope}
        \end{scope}
    \end{scope}

    \draw[edge]                (k7n4) to (v8p4);

    \draw[edge, bend right=20] (k7k1) to (v9p1);
    \draw[edge, bend right=20] (k7k1) to (v9p2);
    \draw[edge, bend right=20] (k7k3) to (v9p2);

    \draw[edge, bend right=20] (k7k1) to (v8p1);
    \draw[edge, bend right=20] (k7k1) to (v8p2);
    \draw[edge, bend right=20] (k7k3) to (v8p2);

    \draw[edge, bend right=20] (k7k1) to (vzp1);
    \draw[edge, bend right=20] (k7k1) to (vzp2);
    \draw[edge, bend right=20] (k7k3) to (vzp2);

    \draw[edge, bend right=40, green] (v9p3) to (v8p4);
    \draw[edge, bend right=40, green] (vzp4) to (v8p3);
    \draw[edge, bend right=40, green] (vzp3) to (v9p4);

    \draw[edge, magenta, bend right=10] (k7n3) to (v8p3);
    \draw[edge, magenta]                (v8p2) to (k7n4);

    \node at (v8p4) {$\times$};
    \node at (v9p4) {$\times$};
    \node at (vzp4) {$\times$};

    \draw[
        -{Stealth[length=3mm]},
        thick
    ]
    (5, 3) -- node[above]{Step 1, 2} (7, 3);

    \draw[
        -{Stealth[length=3mm]},
        thick
    ]
    (12.5, 3) -- node[above]{Step 3, 4} (14.5, 3);
\end{tikzpicture}

%% file: 4-behaviorchar.tex

\section{Proof of Theorem~\ref{thm:ourlogic}} \label{sec:fullproofmain}

We now flesh out the outline, referring to the  notation 
from Table~\ref{table:notation}.
\input{notationtable}


We need some terminology to describe components of $\gbv$s.
\begin{definition}[$\gbv$ incorporating $1$- and $2$-types, basely or periodically]\label{def:gbvincorporates}
Let $\agbv$ be a $\gbv$.
    \begin{compactitem}
        \item
            $\agbv$ \emph{incorporates a $1$-type $\pi$ basely} if
            $\agbv(\pi, \basetext, i)>0$ for some $1 \le i\le\numbhbases{\pi}$.

        \item
            $\agbv$ \emph{incorporates a $1$-type $\pi$ periodically} if
            $\agbv(\pi, \periodtext, i)>0$ for some $1 \le i\le\numbhperiods{\pi}$.

        \item
            $\agbv$ \emph{incorporates a 1-type $\pi$} if
            $\agbv$ incorporates $\pi$ either basely or periodically.

        \item
            $\agbv$ \emph{incorporates a $2$-type $\eta$ between $\pi_1$ and $\pi_2$} if either
            \begin{compactitem}
                \item
                    there exists $1 \le i \le \numbhbases{\pi_1}$ such that
                    $\agbv(\pi_1, \basetext, i) > 0$ and
                    $\bhbase{\pi_1}{i}(\eta, \pi_2) > 0$, or
                \item
                    there exists $1 \le i \le \numbhperiods{\pi_1}$ such that
                    $\agbv(\pi_1, \periodtext, i) > 0$ and
                    $\bhperiod{\pi_1}{i}(\eta, \pi_2) > 0$.
            \end{compactitem}
    \end{compactitem}
\end{definition}

Intuitively, a $\gbv$ records which $1$-types and 2-types must appear in a realization,
and whether the corresponding $1$-types are witnessed by concrete or phantom vertices.

We record the following simple consequences of the definitions:
\begin{lemma}\label{lem:propertiesincorporating}
Let $\agbv$ be a $\gbv$ and $\cG$ be a D-structure that realizes $\agbv$.
\begin{compactitem}
\item
If $\agbv$ incorporates a $1$-type $\pi$ \emph{basely},
then there exists a \emph{concrete} vertex in $\cG$ realizing $\pi$.

\item
If $\agbv$ incorporates a 1-type $\pi$ \emph{periodically but not basely},
then there exists a \emph{phantom} vertex in $\cG$ realizing $\pi$.

\item%
If $\agbv$ incorporates a 2-type $\eta$ between $\pi_1$ and $\pi_2$,
then there exist nodes $u$ and $v$ and an $\eta$-edge from $v$ to $u$ in $\cG$ (as in Definition \ref{def:implicitexplicitetaedge})
with $\onetypeof(v) = \pi_1$ and $\onetypeof(u) = \pi_2$.
\end{compactitem}
\end{lemma}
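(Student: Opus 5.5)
All three items follow by unwinding Definitions~\ref{def:realizingdstructure}, \ref{def:lbv} and \ref{def:gbvincorporates}. Fix a realization witness $\rep$ for $\cG$ and $\agbv$. By Definition~\ref{def:struct_realize_gbv}, for every $\pi \in \onetypes$ the slice $\agbv(\pi,-)$ equals $\sum \rep(u)$, where $u$ ranges over all vertices (concrete and phantom) with $\onetypeof(u)=\pi$. All entries are non-negative integers. So whenever a coordinate $\agbv(\pi,x,i)$ is positive, some vertex $u$ of 1-type $\pi$ has $\rep(u)(x,i)>0$. This single observation drives the whole argument.

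\emph{First item.} Suppose $\agbv(\pi,\basetext,i)>0$. Take $u$ with $\onetypeof(u)=\pi$ and $\rep(u)(\basetext,i)>0$. Phantom vertices are mapped to homogeneously valid $\lbv$s, whose base entries are all zero, so $u$ cannot be a phantom. Hence $u$ is concrete.

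\emph{Second item.} Suppose $\agbv(\pi,\periodtext,i)>0$ while $\agbv(\pi,\basetext,j)=0$ for all $j$. Some vertex $u$ has 1-type $\pi$ (with $\rep(u)(\periodtext,i)>0$). If $u$ were concrete, $\rep(u)$ would be valid, so $\rep(u)(\basetext,j)=1$ for some $j$. This would force $\agbv(\pi,\basetext,j)\ge 1$, a contradiction. In fact every vertex of 1-type $\pi$ must be a phantom, and at least one such vertex exists.

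\emph{Third item.} Choose the coordinate $(x,i)$ of $\pi_1$ witnessing incorporation: $\agbv(\pi_1,x,i)>0$ and the corresponding base or period vector $\bfv$ satisfies $\bfv(\eta,\pi_2)>0$. As above, pick $v$ with $\onetypeof(v)=\pi_1$ and $\rep(v)(x,i)>0$. Since $v$ realizes $\rep(v)$, we have $\bh{v}=\eval{\rep(v)}$. This is a non-negative combination of base and period vectors in which $\bfv$ appears with coefficient at least $1$, so $\bh{v}(\eta,\pi_2)\ge \bfv(\eta,\pi_2)>0$. By the definition of behavior in a D-structure, there is then an $\eta$-edge (explicit or implicit, per Definition~\ref{def:implicitexplicitetaedge}) from $v$ to some $u$ with $\onetypeof(u)=\pi_2$.

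There is no real obstacle in this argument. The only points to watch are two. The second item relies on validity giving a base entry exactly $1$, which is what forces the contradiction. The third item relies on non-negativity of all base and period vectors, which ensures $\eval{\cdot}$ is monotone in each coordinate.
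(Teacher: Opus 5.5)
Your proposal is correct and follows essentially the same route as the paper: each item comes from unwinding the realization witness. The two facts used are that phantom vertices receive homogeneously valid $\lbv$s (all base entries zero) and concrete vertices receive valid ones (some base entry equal to $1$). The differences are cosmetic: you argue the first item directly rather than by contradiction, and you handle the base and period options of the third item uniformly via non-negativity, whereas the paper writes out only the base case.
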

\begin{proof} 
We start with the first item. Note that by Definition \ref{def:realizingdstructure}, the witness map
assigns each phantom vertex a homogeneously valid $\lbv$: one where the coefficients corresponding to \emph{base} vectors are all zero. In particular this is true for vertices with $1$-type $\pi$.
We also know from Definition \ref{def:realizingdstructure} that the slice of $\agbv$ on $\pi$ is the sum
of the $\lbv$s associated (via the witness map) to vertices with $1$-type $\pi$. If there are only phantom vertices
with $1$-type $\pi$, this means that for each component of the slice of $\agbv$ on $\pi$, given by some index $i$,
that component of the slice is the sum of the witness-defined $\lbv$s of phantom vertices, and for a base component this would be zero by homogeneous validity.
But this would mean that each base component of the slice of $\agbv$ on $\pi$ is zero, contradicting the fact
that $\pi$ is incorporated basely.

For the second part, suppose that $\agbv$ incorporates a 1-type $\pi$ periodically but not basely.
We first show that there is no concrete vertex $v$ in $\cG$ with 1-type $\pi$.
Suppose, for contradiction, that $v$ is such a vertex.
Then $\rep(v)$ is a valid $\lbv$, which implies that there exists some $1 \le i \le \numbhbases{\pi}$ such that
$\rep(v)(\basetext, i) > 0$.
Thus, $\agbv(\pi, \basetext, i) \ge \rep(v)(\basetext, i) > 0$,
contradicting the assumption that $\agbv$ does not incorporate $\pi$ basely.
We next show that there exists a phantom vertex $u$ in $\cG$ with 1-type $\pi$.
Again, suppose for contradiction that there is no such vertex.
Since there is also no concrete vertex with 1-type $\pi$, there is no vertex with 1-type $\pi$ in $\cG$.
It follows that $\agbv(\pi, \periodtext, j) = 0$ for every $1 \le j \le \numbhperiods{\pi}$.
This contradicts the assumption that $\agbv$ incorporates $\pi$ periodically.

For the third part, suppose $\agbv$ incorporates a 2-type $\eta$ between $\pi_1$ and $\pi_2$. 
The definition requires one of two options,
and the proofs are similar in both cases.
Let us only consider the first option:  
there is a base component present that includes an $\eta$ edge to $\pi_2$: 
there exists $1 \le i \le \numbhbases{\pi_1}$ such that
$\agbv(\pi_1, \basetext, i) > 0$ and
$\bhbase{\pi_1}{i}(\eta, \pi_2) > 0$. Fix such an $i$. For $\pi_1$, the witness map must assign
some vertex $v$ a $\lbv$ in which the $i^{th}$ base component is non-zero: otherwise $\agbv$ 
could not have a total for the base component above zero. 
Since this base component has a non-zero entry for $\eta$ and $\pi_2$ and $v$ realizes the $\lbv$,
there must be a $u$ as required, completing the argument in this case.
\end{proof}

We need a few more quantities associated with a $\gbv$.
\begin{definition}[Size and base size of a $\gbv$]\label{def:basesize}
    Let $\agbv$ be a $\gbv$.
    The \emph{size} of $\agbv$, denoted by $\abs{\agbv}$,
    the \emph{base size} of $\agbv$, denoted by $\projectionbase{\agbv}$, and
    the \emph{$\pi$-base size} of $\agbv$ for 1-type $\pi$, denoted by $\projectionbasepi{\agbv}{\pi}$,
    are defined by
    \begin{equation*}
        \begin{aligned}
            \abs{\agbv}
            \ :=\ &
            \sum\limits_{\pi \in \onetypes}
            \sum\limits_{i \in \intsinterval{\numbhbases{\pi}}}
            \agbv(\pi, \basetext, i)
            +
            \sum\limits_{\pi \in \onetypes}
            \sum\limits_{i \in \intsinterval{\numbhperiods{\pi}}}
            \agbv(\pi, \periodtext, i) \\
            \projectionbase{\agbv}
            \ :=\ &
            \sum\limits_{\pi \in \onetypes}
            \sum\limits_{i \in \intsinterval{\numbhbases{\pi}}}
            \agbv(\pi, \basetext, i) \\
            \projectionbasepi{\agbv}{\pi}
            \ :=\ &
            \sum\limits_{i \in \intsinterval{\numbhbases{\pi}}}
            \agbv(\pi, \basetext, i).
        \end{aligned}
    \end{equation*}
\end{definition}

Intuitively, the \emph{size} of a $\gbv$ $\agbv$ counts the total number of components,
while the \emph{base size} counts only base components.
In particular, the number of concrete vertices in a D-structure realizing $\agbv$ 
is the base size of $\agbv$.
\begin{lemma}\label{lem:num_concreate}
Let $\agbv$ be a $\gbv$, and let $\cG$ be a D-structure realizing $\agbv$.
Then the number of concrete vertices in $\cG$ is exactly $\projectionbase{\agbv}$, and the number of concrete vertices with 1-type $\pi$ in $\cG$ is exactly $\projectionbasepi{\agbv}{\pi}$.
\end{lemma}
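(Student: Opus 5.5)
The plan is to unfold Definitions~\ref{def:realizingdstructure} and~\ref{def:struct_realize_gbv} and count base components vertex by vertex. Since $\cG$ realizes $\agbv$, fix a witness map $\rep$. It sends each concrete vertex to a valid $\lbv$ and each phantom vertex to a homogeneously valid $\lbv$. Moreover, for every $\pi \in \onetypes$, the slice $\agbv(\pi,-)$ equals $\sum_{u : \onetypeof(u)=\pi} \rep(u)$, where the sum ranges over all vertices of 1-type $\pi$, both concrete and phantom.

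Fix a 1-type $\pi$ and restrict attention to the base coordinates. Summing the slice identity over $i \in \intsinterval{\numbhbases{\pi}}$ gives
\[
\projectionbasepi{\agbv}{\pi} \;=\; \sum_{u : \onetypeof(u)=\pi} \;\sum_{i \in \intsinterval{\numbhbases{\pi}}} \rep(u)(\basetext, i).
\]
The inner sum is now evaluated in two cases.

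\begin{itemize}
\item If $u$ is concrete, $\rep(u)$ is valid. Validity means exactly one base entry equals $1$ and the others are $0$, so the inner sum is $1$.
\item If $u$ is a phantom vertex, $\rep(u)$ is homogeneously valid. All its base entries are $0$, so the inner sum is $0$.
\end{itemize}

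Hence the right-hand side counts precisely the concrete vertices of 1-type $\pi$. This proves the second claim.

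For the first claim, sum the second claim over all $\pi \in \onetypes$; by Definition~\ref{def:basesize} the left-hand side becomes $\projectionbase{\agbv}$. One point needs care: every concrete vertex must have a 1-type in $\onetypes$, so that it is counted in some slice. This holds because the witness map must assign each concrete vertex a valid $\lbv$ for its own 1-type $\pi$, and $\lbv$s, bases and periods are only defined for $\pi \in \onetypes$. By the same reasoning, every vertex carrying a witness has its 1-type indexed in $\agbv$.

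There is no real obstacle here. The only point to state explicitly is that the slice identity sums over all vertices of type $\pi$, phantoms included, and that phantoms contribute nothing to base coordinates.
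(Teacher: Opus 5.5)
Your proof is correct and is essentially the paper's argument: the paper partitions the concrete vertices of 1-type $\pi$ by their unique base index and then sums over that index, which is the same double sum you evaluate in the other order. Your explicit remark that phantom vertices contribute nothing to the base coordinates (by homogeneous validity) makes precise a step the paper leaves implicit, namely where it equates the base entries of the slice with a sum over concrete vertices only.
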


\begin{proof}
    Let $\rep$ be a witness of $\cG$ realizing $\agbv$.
    Recall that $\rep$ maps every concrete vertex of $\cG$ with $1$-type $\pi$ to a valid $\lbv$ $\albv$ of $\pi$. And validity means that there exists exactly one $1 \le i \le \numbhbases{\pi}$ such that
    $\albv(\basetext,i)=1$, and $\albv(\basetext,j)=0$ for every $1 \le j \le \numbhbases{\pi}$ with $j\neq i$.

    For every $1 \le i \le \numbhbases{\pi}$, let $V_{\pi,i}$ be the set of concrete vertices $v$ of $\cG$ such that $v$ has $1$-type $\pi$ and $\rep(v)(\basetext,i)=1$.
    Clearly, the sets $V_{\pi,i}$ partition the set of concrete vertices of $\cG$ with $1$-type $\pi$.
    Let $V_\pi$ denote this set.

    Observe that $\abs{V_{\pi,i}} = \sum_{v\in V_\pi}\rep(v)(\basetext,i)$.
    Moreover, by the definition of a realization witness,
    $\agbv(\pi,\basetext,i) = \sum_{v\in V_\pi}\rep(v)(\basetext,i)$.
    Hence,
    \begin{equation*}
        \abs{V_\pi}
        =
        \sum_{i\in\intsinterval{\numbhbases{\pi}}}\abs{V_{\pi,i}}
        =
        \sum_{i\in\intsinterval{\numbhbases{\pi}}}
        \agbv(\pi,\basetext,i)
        =
        \projectionbasepi{\agbv}{\pi}.
    \end{equation*}

    Similarly, the total number of concrete vertices is exactly
    \begin{equation*}
        \sum_{\pi\in\onetypes}\abs{V_\pi}
        =
        \sum_{\pi\in\onetypes}\projectionbasepi{\agbv}{\pi}
        =
        \projectionbase{\agbv}.
    \end{equation*}
\end{proof}

Since the entries of every $\gbv$ are non-negative integers, the size is additive:
$\abs{\agbv+\agbv'} = \abs{\agbv} + \abs{\agbv'}$.
The same property also holds for the base size.

\subsection{Overapproximation of $\gbv$s for a $\ourlogic$ sentence}
\label{subsec:overapproximation}

We now move to the next part of the proof outline, obtaining an \emph{overapproximation Presburger formula}
which will be denoted by $\overapprox{\anfgptwosentence\land\agcc}(\frx)$.
Here, $\frx$ is a tuple of variables used to encode a $\gbv$;
 we index its coordinates in the same way as the entries of a $\gbv$.
\begin{definition}[$\overapprox{\anfgptwosentence \land \agcc}$, $\match$, $\structcon$, $\liftgcc$] \label{def:overapprox}%

We define\\
    $\overapprox{\anfgptwosentence \land \agcc}(\frx)
    \ :=\ 
    \match(\frx)
    \land
    \structcon(\frx)
    \land
    \liftgcc(\frx)$,
where
    \begin{align*}
        \match(\frx)
        \ :=\ &
        \bigwedge_{\substack{\pi_1, \pi_2 \in \onetypes\\ \eta \in \twotypescmpi{\pi_1}{\pi_2}}} 
        \Bigg(
            \sumcon_{\pi_1, \eta, \pi_2}(\frx)
            \ =\ 
            \sumcon_{\pi_2, \tildeeta, \pi_1}(\frx)
        \Bigg) \\
        \sumcon_{\pi_1, \eta, \pi_2}(\frx)
        \ :=\ &
        \Bigg(
            \sum\limits_{i \in \intsinterval{\numbhbases{\pi_1}}}   \frx(\pi_1, \basetext, i)   \cdot \bhbase{\pi_1}{i}(\eta, \pi_2) +
            \sum\limits_{i \in \intsinterval{\numbhperiods{\pi_1}}} \frx(\pi_1, \periodtext, i) \cdot \bhperiod{\pi_1}{i}(\eta, \pi_2)
        \Bigg) \\
        \structcon(\frx)
        \ :=\ &
        \bigwedge_{\pi \in \onetypes} 
        \Bigg(
            \sum\limits_{i \in \intsinterval{\numbhperiods{\pi}}} \frx(\pi, \periodtext, i) > 0
        \Bigg)
        \to
        \Bigg(
            \sum\limits_{i \in \intsinterval{\numbhbases{\pi}}} \frx(\pi, \basetext, i) > 0
        \Bigg) 
    \end{align*}
and $\liftgcc(\frx)$ is obtained from $\agcc$ by replacing every term $\termg{U_i(x)}$ with 
$\sum\limits_{\pi \in \onetypes_i} \sum\limits_{t \in \intsinterval{\numbhbases{\pi}}} \frx(\pi, \basetext, t)$,
where $\onetypes_i$ is the set of conformant 1-types that contain $U_i$.
\end{definition}


The constraint $\structcon(\frx)$ requires that
if a 1-type $\pi$ is incorporated periodically, then $\pi$ must also be incorporated basely. 
This property always holds in a structure, since structures contain no phantom vertices.
The formula $\liftgcc(\frx)$ encodes the $\GCC$s.
It is routine to show that $\overapprox{\anfgptwosentence \land \agcc}(\frx)$ is an overapproximation of
the set of $\gbv$s realized by models of $\anfgptwosentence \land \agcc$.

\begin{lemma}\label{lem:overapprox}
    Let $\cG$ be structure.
    If $\cG$ is a model of $\anfgptwosentence \land \agcc$,
    then for every $\gbv$ $\agbv$ realized by $\cG$,
    $\agbv$ is a solution to $\overapprox{\anfgptwosentence\land\agcc}(\frx)$.
\end{lemma}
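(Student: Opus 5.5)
We fix a model $\cG$ of $\anfgptwosentence \land \agcc$ and a $\gbv$ $\agbv$ realized by $\cG$ with witness $\rep$ (Definition~\ref{def:struct_realize_gbv}), and check the three conjuncts $\match$, $\structcon$, $\liftgcc$ separately. A fact used throughout: since $\rep(u)$ is realized by $u$, the behavior of $u$ is $\eval{\rep(u)}$. Summing over all elements $u$ of 1-type $\pi_1$ and using linearity of $\eval{\cdot}$ together with the slice condition, we obtain for every $\eta,\pi_2$
\[
\sumcon_{\pi_1,\eta,\pi_2}(\agbv) \;=\; \sum_{u:\ \onetypeof(u)=\pi_1} \bh{u}(\eta,\pi_2).
\]
Here $\sumcon_{\pi_1,\eta,\pi_2}(\agbv)$ is $\sumcon_{\pi_1,\eta,\pi_2}(\frx)$ evaluated at $\frx:=\agbv$. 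The right-hand side counts ordered pairs $(u,v)$ with $u$ of 1-type $\pi_1$, $v$ of 1-type $\pi_2$, and $(u,v)$ of 2-type $\eta$.

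For $\match$, this count equals the number of pairs $(v,u)$ with $v$ of 1-type $\pi_2$, $u$ of 1-type $\pi_1$, and $(v,u)$ of 2-type $\dual{\eta}$, because $(u,v)$ has 2-type $\eta$ exactly when $(v,u)$ has 2-type $\dual{\eta}$ (Definition~\ref{def:oneandtwotype}). Swapping the roles of the two 1-types in the displayed identity, that second count is $\sumcon_{\pi_2,\tildeeta,\pi_1}(\agbv)$, so the required equality holds. One point needs a little care: the behavior vector is indexed by (non-null) 2-types. The equality is only required for conformant $\eta$, and the dual of a conformant asymmetric 2-type is again conformant with the swapped 1-types, so both sides refer to genuine coordinates.

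For $\structcon$, we use that $\cG$ is a genuine structure, so every element is concrete and $\rep$ assigns it a valid $\lbv$, which has exactly one base entry equal to $1$. Suppose $\sum_i \agbv(\pi,\periodtext,i)>0$. Then some element of 1-type $\pi$ exists, since the slice is a sum over such elements. Its $\lbv$ contributes $1$ to some $\agbv(\pi,\basetext,i)$, so the consequent holds. This is essentially Lemma~\ref{lem:num_concreate} applied to $\cG$ viewed as a phantomless D-structure.

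For $\liftgcc$, Lemma~\ref{lem:num_concreate} gives that the number of elements of 1-type $\pi$ equals $\projectionbasepi{\agbv}{\pi}$. Every element of $\cG$ has a conformant 1-type, since $\cG\models\forall x\,\quantifierfreepartphi(x)$. Hence
\[
\abs{U_i^{\cG}} \;=\; \sum_{\pi\in\onetypes_i}\projectionbasepi{\agbv}{\pi},
\]
which is exactly the term substituted for $\termg{U_i(x)}$. So each atomic $\GCC$ evaluates identically in $\cG$ and under the substitution, and by induction on Boolean connectives $\liftgcc(\agbv)$ holds because $\cG\models\agcc$.

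No step is a serious obstacle. The only delicate point is the bookkeeping in $\match$, namely getting the orientation of $\dual{\eta}$ and of the 1-types right, and confirming that only conformant, non-null 2-types arise in an asymmetric model.
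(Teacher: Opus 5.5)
Your proof is correct and follows the paper's own argument. It uses the same three checks: $\match$ by double-counting $\eta$-edges against $\dual{\eta}$-edges, $\structcon$ by validity of each element's $\lbv$ forcing a unit base entry, and $\liftgcc$ by each element contributing exactly one base unit, followed by induction on the Boolean structure of $\agcc$. If anything, your version is slightly more careful than the paper's, since you work with an arbitrary realization witness, as the quantifier ``for every $\gbv$'' requires, and you explicitly justify that every element has a conformant 1-type.
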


\begin{proof}
Let $\cG$ be a model of $\anfgptwosentence \land \agcc$.
In particular, $\univpartphi(a)$ holds in $\cG$ for every element $a$ in $\cG$,
and hence its behavior is in $\bhset{\pi}$ with $\pi$ the 1-type of $a$
and therefore its $\lbv$ is valid.

Let $\rep$ be the witness that maps each element $a$ to its $\lbv$
and let $\agbv$ be the $\gbv$ whose slice on each 1-type $\pi$ is the sum of the $\lbv$ associated by the witness map of $\pi$-elements.
We verify that $\gbv$ satisfies each condition in $\overapprox{\anfgptwosentence\land\agcc}(\frx)$.
\begin{compactitem}
\item\itemtitle{$\match(\frx)$}
Observe that
$\sumcon_{\pi_1, \eta, \pi_2}(\agbv)$
is the number of $\eta$-edges (Definition \ref{def:implicitexplicitetaedge}) from $u$ to $v$ such that
$\onetypeof(u) = \pi_1$ and
$\onetypeof(v) = \pi_2$.
Note that if $e$ is an $\eta$-edge from $u$ to $v$,
then it is also an $\dual{\eta}$-edge from $v$ to $u$,
it follows that
$\sumcon_{\pi_1, \eta, \pi_2}(\agbv)$
and
$\sumcon_{\pi_2, \dual{\eta}, \pi_1}(\agbv)$
are the same. 
Hence, $\match(\agbv)$ holds.

We comment that for this part, we just use that $\agbv$ is a global behavior vector of a structure, regardless of whether the structure satisfies any formula. 
\item\itemtitle{$\structcon(\frx)$}
For every 1-type $\pi \in \onetypes$,
suppose that $\agbv(\pi, \periodtext, j) > 0$ for some $1 \le j \le \numbhperiods{\pi}$.
By the definition of the realization witness for a $\gbv$ in Definition \ref{def:struct_realize_gbv},
the value $\agbv(\pi, \periodtext, j)$ is the sum of the $(\periodtext, j)$-entries of the witness of vertices in $\cG$ satisfying 1-type $\pi$,
which implies that there exists a vertex $v$ with $\rep(v)(\periodtext, j) > 0$.
Since $\rep(v)$ is a valid $\lbv$ (Def. \ref{def:lbv}),
there exists $1 \le i \le \numbhbases{\pi}$, such that $\rep(v)(\basetext, i) > 0$.
By the definition of realization witness of $\gbv$, this implies that
$\sum\limits_{i \in \intsinterval{\numbhbases{\pi}}} \agbv(\pi, \basetext, i) > 0$.

\item\itemtitle{$\liftgcc(\frx)$}
Suppose that $\cG$ satisfies the atomic $\gcc$
$\Sigma_{i \in \intsinterval{k}} \lambda_i \cdot \presbx{U_i(x)} \ge \delta$.
By the semantics of $\gcc$s,
it holds that
$\sum\limits_{i \in \intsinterval{k}} \lambda_i \cdot \abs{U_i^\cG} \ge \delta$.
For every element $v$ in $\cG$ satisfying 1-type $\pi$,
since $\rep(v)$ is a valid $\lbv$,
we have
$\sum\limits_{i \in \intsinterval{\numbhbases{\pi}}} \rep(v)(\basetext, i) = 1$.
Therefore
$\abs{U_i^\cG} = \sum\limits_{\pi \in \onetypes_i} \sum\limits_{t \in \intsinterval{\numbhbases{\pi}}} \agbv(\pi, \basetext, t)$.
The case for a general $\gcc$ can be proven inductively.
It follows that the corresponding lifted $\gcc$ constraint for atomic $\gcc$ holds.
The result for a general $\gcc$ then follows by induction on its structure.
\end{compactitem}
\end{proof}

Although $\overapprox{\anfgptwosentence \land \agcc}(\frx)$ is only an overapproximation,
the only ``excess'' $\gbv$s satisfied in it are those that are not realizable in any structure:
whenever a structure $\cG$ realizes a $\gbv$ satisfying $\overapprox{\anfgptwosentence \land \agcc}(\frx)$,
 $\cG$ is a model of the $\ourlogic$ sentence $\anfgptwosentence \land \agcc$:

\begin{lemma}\label{lem:pseudo_model_tight}
Let $\cG$ be a structure. If $\cG$ realizes some $\gbv$
$\agbv \in \sol{\overapprox{\anfgptwosentence \land \agcc}(\frx)}$,
then $\cG$ is a model of $\anfgptwosentence \land \agcc$.
\end{lemma}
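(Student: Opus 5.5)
We need to show two things: that $\cG \models \anfgptwosentence$ and that $\cG \models \agcc$. We treat them separately, fixing a realization witness $\rep$ for $\agbv$ as in Definition~\ref{def:struct_realize_gbv}.

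\emph{The $\GPtwo$ part.} By Definition~\ref{def:struct_realize_gbv}, every element $u$ of $\cG$ with $1$-type $\pi$ realizes the valid $\lbv$ $\rep(u)$. So its behavior is $\eval{\rep(u)} = \bhbase{\pi}{i} + \sum_j \rep(u)(\periodtext,j)\,\bhperiod{\pi}{j}$, where $i$ is the base index of $\rep(u)$. This vector lies in $\cL(\bhbase{\pi}{i};\bhP{\pi}) \subseteq \bhset{\pi}$ by Definition~\ref{def:localbehbasis}.

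Here $\pi$ ranges over $\onetypes$, since the $\gbv$ is indexed only by conformant $1$-types. This also requires reading Definition~\ref{def:struct_realize_gbv} as forcing every element to have a conformant $1$-type, because $\rep$ must assign it a valid $\lbv$ of its own $1$-type.

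By Definition~\ref{def:conf_behavior}, some structure $\cG'$ has an element $v$ with $1$-type $\pi$, the same behavior, and $\cG' \models \univpartphi(v)$. By the earlier observation that satisfaction of $\univpartphi(x)$ depends only on the $1$-type and the behavior, we get $\cG \models \univpartphi(u)$. Since $u$ was arbitrary, $\cG \models \forall x\,\univpartphi(x) = \anfgptwosentence$.

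One subtlety needs care. The observation about $\univpartphi$ implicitly treats the quantifier-free conjuncts $\quantifierfreepartphi$ and the guarded $\alpha_i$ as already respected. The behavior records counts per non-null $2$-type, so if $\cG$ realized a non-conformant $2$-type we could not simply transfer. I would argue as follows. A behavior in $\bhset{\pi}$ only has nonzero entries at conformant $2$-types, since it comes from a structure where all $2$-types are conformant or null. Hence each pair $(u,w)$ in $\cG$ with a non-null $2$-type $\eta$ contributes to $\bh{u}$ at $(\eta,\onetypeof(w))$, so $\eta \in \twotypes_{\pi_1,\pi_2}$. Null $2$-types trivially satisfy the guarded conjuncts.

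With $1$-types, $2$-types and behavior all matching the witness structure, truth of each conjunct of $\univpartphi(u)$ transfers. For the Presburger conjuncts $U_i(x)\to P_i(x)$, each count $\presby{(x\neq y)\wedge R_{i,t}(x,y)}$ is a sum of behavior entries over $2$-types containing $R_{i,t}(x,y)$. This is the step I expect to need the most care, mainly checking that the ``depends only on $1$-type and behavior'' observation indeed covers it.

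\emph{The $\GCC$ part.} For each unary predicate $U_i$, every element has a conformant $1$-type and $\rep(u)$ has exactly one base entry equal to $1$. Summing over elements therefore gives $\abs{U_i^\cG} = \sum_{\pi\in\onetypes_i}\projectionbasepi{\agbv}{\pi}$; this is the count from Lemma~\ref{lem:num_concreate} applied to $\cG$ viewed as a phantomless D-structure. Thus each atomic $\GCC$ of $\agcc$ evaluated on $\cG$ has the same truth value as the corresponding atom of $\liftgcc(\agbv)$. Since $\agbv$ satisfies $\liftgcc$, a straightforward induction over the Boolean structure of $\agcc$ gives $\cG \models \agcc$.

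Note that $\match$ and $\structcon$ are not needed for this direction.
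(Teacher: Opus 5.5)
Your proof is correct and follows the paper's argument. Valid $\lbv$s put every element's behavior into $\bhset{\pi}$, so $\univpartphi$ holds at every element; validity also makes each element contribute exactly one base entry, so $\abs{U_i^\cG}$ equals the lifted term and $\agcc$ follows by induction on its Boolean structure. Your extra checks that every $1$-type and non-null $2$-type realized in $\cG$ is conformant are left implicit in the paper, and they are sound and harmless.
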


\begin{proof}
Recall that $\anfgptwosentence$ is a $\gptwo$ sentence in normal form,
and $\univpartphi(x)$ is the formula obtained from $\varphi$ by removing the outermost quantifier $\forall x$.
By the definition of conformant behaviors, 
the behavior realized by an element $v$ of $\cG$ is conformant (w.r.t $\anfgptwosentence$)  if and only if $\cG \models \univpartphi(v)$. 

Let $\rep$ be a witness to the realization of $\agbv$ by $\cG$.
By definition, $\rep(v)$ is a valid $\lbv$ for every $v$ in $\cG$.
Hence, the behavior realized by $v$, namely $\eval{\rep(v)}$, is conformant. 
It follows that $\cG \models \univpartphi(v)$ for every $v$ in $\cG$, and therefore
$\cG \models \anfgptwosentence$.

It remains to show that $\cG \models \agcc$.
We use the same correspondence as in the argument for $\liftgcc(\frx)$ in Lemma~\ref{lem:overapprox}.
For every unary predicate $U_i$, 
the validity of $\rep(v)$ ensures that each element of $\cG$ contributes exactly once 
to the base component corresponding to its $1$-type. Consequently,
$\abs{U_i^\cG}
=
\sum\limits_{\pi \in \onetypes_i}
\sum\limits_{t \in \intsinterval{\numbhbases{\pi}}}
\agbv(\pi,\basetext,t)$,
where $\onetypes_i$ is the set of conformant $1$-types containing $U_i$.
Thus, for every atomic $\gcc$ of the form
$\sum\limits_{i \in \intsinterval{k}}
\lambda_i \cdot \termg{U_i(x)} \ge \delta$,
its interpretation in $\cG$ satisfies
$\sum\limits_{i \in \intsinterval{k}}
\lambda_i \cdot \abs{U_i^\cG}
=
\sum\limits_{i \in \intsinterval{k}}
\lambda_i
\sum\limits_{\pi \in \onetypes_i}
\sum\limits_{t \in \intsinterval{\numbhbases{\pi}}}
\agbv(\pi,\basetext,t)$.
Hence, an atomic $\gcc$ holds in $\cG$ if and only if its corresponding lifted constraint holds for $\agbv$.
By induction on the Boolean structure of $\agcc$, we conclude that
$\cG \models \agcc$.
\end{proof}

Finally, we note that if we allow $D$-structures rather than structures, there is no overapproximation:
We begin with the following building block for the matching constraint involving a pair of 1-types and a 2-type.

\begin{restatable}[Rewiring lemma]{lemma}{rewiring} \label{lem:rewiring}
    Let $\pi_1$, $\pi_2$ be 1-types and $\eta$ be an asymmetric 2-type as in Definition \ref{def:conformanttypeasymmetrictype}.
    Let $\cG$ be a D-structure, 
    $V_1$ be the set of concrete and phantom vertices of $\cG$ with 1-type $\pi_1$, and
    $V_2$ be the set of concrete and phantom vertices of $\cG$ with 1-type $\pi_2$.
    Suppose that for every $v \in V_1$, $\bh{v}(\eta, \pi_2) = 0$; and for every
    $u \in V_2$, $\bh{u}(\dual{\eta}, \pi_1) = 0$.
    Let $f_1: V_1 \to \bbN$ and $f_2: V_2 \to \bbN$ be functions satisfying the ``local match equation''
        $\sum_{v \in V_1} f_1(v)
        \ =\ 
        \sum_{v \in V_2} f_2(v)$.

    Then there exists a D-structure $\cH$ with the same sets of concrete and phantom vertices as $\cG$,
    obtained by adding edges of 2-type $\eta$ between vertices in $V_1$ and $V_2$,
    such that:
    \begin{itemize}
        \item
            For every vertex $v$, $\onetypeof_\cH(v) = \onetypeof_\cG(v)$.
        \item
            For every vertex $v \in V_1$,
            $\bhg{\cH}{v}(\eta, \pi_2) = f_1(v)$,
            while all other entries of $\bhg{\cH}{v}$ coincide with the
            corresponding entries of $\bhg{\cG}{v}$.
        \item
            For every vertex $v \in V_2$,
            $\bhg{\cH}{v}(\dual{\eta}, \pi_1) = f_2(v)$,
            while all other entries of $\bhg{\cH}{v}$ coincide with the
            corresponding entries of $\bhg{\cG}{v}$.
        \item
            For every vertex $v$ such that $\onetypeof_\cH(v) \not\in \set{\pi_1, \pi_2}$,
            $\bhg{\cH}{v} = \bhg{\cG}{v}$.
    \end{itemize}

    Furthermore, if $\cG$ is asymmetric, the new structure is asymmetric as well.
\end{restatable}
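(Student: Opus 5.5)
The construction will be explicit: lay out ``stubs'' on each side and pair them off arbitrarily. Since D-structures are multigraphs that allow multi-edges and self-loops, no combinatorial realizability condition (of Gale--Ryser type) is needed. The local match equation is the only requirement.

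Let $N := \sum_{v \in V_1} f_1(v) = \sum_{u \in V_2} f_2(u)$. Form a list $a_1, \ldots, a_N$ of vertices of $V_1$ in which each $v \in V_1$ occurs exactly $f_1(v)$ times. Form a list $b_1, \ldots, b_N$ of vertices of $V_2$ in which each $u \in V_2$ occurs exactly $f_2(u)$ times. Both lists have length $N$ by the local match equation. Define $\cH$ by adding, for each $j \in \intsinterval{N}$, a new edge $e_j$ from $a_j$ to $b_j$ with $\twotypeof(e_j) = \eta$. Vertices, their concrete/phantom status and $\onetypeof$ are unchanged, which gives the first bullet.

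Next comes the behavior bookkeeping, via Definition~\ref{def:implicitexplicitetaedge}. The edge $e_j$ is an explicit $\eta$-edge from $a_j$ to $b_j$ and an implicit $\dual{\eta}$-edge from $b_j$ to $a_j$. It contributes exactly one to $\bh{a_j}(\eta, \pi_2)$ and exactly one to $\bh{b_j}(\dual{\eta}, \pi_1)$, and nothing to any other entry of any vertex. Here I use that $\eta$ is asymmetric, so $\eta \neq \dual{\eta}$; in particular, if $\pi_1 = \pi_2$ the edge does not also contribute to $\bh{a_j}(\dual{\eta},\pi_1)$ or $\bh{b_j}(\eta,\pi_2)$ in a way that conflicts.

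The one delicate point, which I expect to be the main thing to check carefully, is the case $\pi_1 = \pi_2$. Then $V_1 = V_2$, and a single vertex $v$ receives both $f_1(v)$ from its occurrences as a source and $f_2(v)$ from its occurrences as a target. These land in the distinct coordinates $(\eta,\pi_1)$ and $(\dual{\eta},\pi_1)$, so the counts do not interfere. The same observation covers self-loops, i.e.\ $a_j = b_j$: such an edge counts once as an $\eta$-edge from $v$ and once as a $\dual{\eta}$-edge from $v$, and this is still consistent with the definition of behavior, since self-loops are allowed in D-structures. Combined with the hypotheses $\bhg{\cG}{v}(\eta,\pi_2) = 0$ and $\bhg{\cG}{u}(\dual{\eta},\pi_1) = 0$, we get $\bhg{\cH}{v}(\eta,\pi_2) = f_1(v)$ and $\bhg{\cH}{u}(\dual{\eta},\pi_1) = f_2(u)$. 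All other entries are unchanged, and vertices outside $V_1 \cup V_2$ are untouched.

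Asymmetry is preserved because every new edge carries the asymmetric type $\eta$, and the existing edges are unchanged.
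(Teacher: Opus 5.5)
Your proposal is correct and is essentially the paper's argument. The paper builds the same pairing greedily, one edge at a time. At each step it adds an $\eta$-edge between a vertex of $V_1$ whose $(\eta,\pi_2)$-entry is still below $f_1$ and a vertex of $V_2$ whose $(\dual{\eta},\pi_1)$-entry is still below $f_2$. It then maintains invariants to show such a pair always exists. You fix the stub pairing in one shot, which makes that invariant bookkeeping unnecessary, and your use of asymmetry for the case $\pi_1=\pi_2$ (including possible self-loops) is the same as the paper's.
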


Intuitively, the functions $f_1$ and $f_2$ specify the intended behavior entries of the vertices in $V_1$ and $V_2$, respectively:
$f_1(v)$ specifies the intended $(\eta,\pi_2)$-entry for each $v \in V_1$,
while $f_2(v)$ specifies the intended $(\dual{\eta},\pi_1)$-entry for each $v \in V_2$.
We show that, whenever these functions satisfy the ``local match equation'',
the D-structure can be extended so that every vertex in $V_1 \cup V_2$ realizes its intended behavior entry.

One application of Lemma~\ref{lem:rewiring} is the following lemma concerning $\match$.

\begin{restatable}[Assembling a D-structure from a $\match$ solution]{lemma}{constructDstruct}
\label{lem:match}
For every $\agbv$ satisfying $\match$, 
there exists a D-structure $\cG$ realizing $\agbv$. 
In particular, this holds for any $\agbv$
satisfying $\overapprox{\anfgptwosentence \land \agcc}$.
\end{restatable}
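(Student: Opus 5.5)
The plan is to build $\cG$ by first fixing its vertex set and $1$-types, then adding edges one triple $(\pi_1,\eta,\pi_2)$ at a time with the Rewiring Lemma (Lemma~\ref{lem:rewiring}). The $\match$ equations provide exactly the hypothesis that lemma needs.

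\emph{Step 1: vertices and witness map.} Let $\agbv$ satisfy $\match$. For each $\pi \in \onetypes$ and each $i \in \intsinterval{\numbhbases{\pi}}$, create $\agbv(\pi,\basetext,i)$ concrete vertices of $1$-type $\pi$. Assign each of them the valid $\lbv$ whose only nonzero entry is $(\basetext,i)$, with value $1$. For each $j \in \intsinterval{\numbhperiods{\pi}}$, create $\agbv(\pi,\periodtext,j)$ phantom vertices of $1$-type $\pi$. Assign each of them the homogeneously valid $\lbv$ whose only nonzero entry is $(\periodtext,j)$, with value $1$. This is also correct when $\structcon$ fails: then some $\pi$ has period mass but no concrete vertex, and those vertices are simply phantoms. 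The $\lbv$s on each slice sum to $\agbv(\pi,-)$ by construction. So once every vertex $v$ has behavior $\eval{\rep(v)}$, the map $\rep$ witnesses realization in the sense of Definition~\ref{def:realizingdstructure}. Non-emptiness of the multigraph is a degenerate point: if $\agbv=\bfzero$, I would either add an isolated phantom with the zero $\lbv$ (homogeneously valid), or note that the lemma is only applied to nonzero $\agbv$.

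\emph{Step 2: edges via rewiring.} Start with no edges, so all behaviors are $\bfzero$. Choose one representative from each unordered pair $\set{(\pi_1,\eta,\pi_2),(\pi_2,\dual\eta,\pi_1)}$ with $\eta \in \twotypescmpi{\pi_1}{\pi_2}$. The two triples in a pair are distinct even when $\pi_1=\pi_2$, because $\eta$ is asymmetric. For a representative, define $f_1(v) := \eval{\rep(v)}(\eta,\pi_2)$ on $V_1$ and $f_2(u) := \eval{\rep(u)}(\dual\eta,\pi_1)$ on $V_2$. Then $\sum_{V_1} f_1 = \sumcon_{\pi_1,\eta,\pi_2}(\agbv)$ and $\sum_{V_2} f_2 = \sumcon_{\pi_2,\dual\eta,\pi_1}(\agbv)$. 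These are equal by $\match$, which is the local match equation. The hypotheses $\bh{v}(\eta,\pi_2)=0$ and $\bh{u}(\dual\eta,\pi_1)=0$ hold because each entry is touched only when its own pair is processed. Lemma~\ref{lem:rewiring} therefore sets exactly these two families of entries and leaves all others unchanged. After all pairs are processed, every entry $(\eta,\pi')$ of every vertex equals $\eval{\rep(v)}(\eta,\pi')$. This uses two facts. First, a $2$-type that is not conformant with $(\pi,\pi')$ has entry $0$ in every base and period vector of $\bhset{\pi}$: by Definition~\ref{def:conf_behavior}, witnessing structures use only conformant $2$-types, and every period $\bhperiod{\pi}{i}$ lies in the difference of two members of $\bhset{\pi}$ (for instance $\bhbase{\pi}{1}+\bhperiod{\pi}{i}$ and $\bhbase{\pi}{1}$). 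Second, null $2$-types are not indexed. Asymmetry is preserved by the last clause of Lemma~\ref{lem:rewiring}.

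\emph{Main obstacle.} The delicate step is the bookkeeping for $\pi_1=\pi_2$. In that case $V_1=V_2$, and one must check that a single application of Lemma~\ref{lem:rewiring} correctly sets both the $\eta$ and the $\dual\eta$ entries without double counting. I expect this to follow because an explicit $\eta$-edge from $v$ to $u$ counts once as an $\eta$-edge at $v$ and once as a $\dual\eta$-edge at $u$. Self-loops are permitted in D-structures, so the rewiring may create them freely. The final claim, that the conclusion holds for any $\agbv$ satisfying $\overapprox{\anfgptwosentence \land \agcc}$, follows immediately because that formula contains $\match$ as a conjunct.
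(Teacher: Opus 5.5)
Your proposal is correct and follows essentially the same route as the paper. You create $\agbv(\pi,\basetext,i)$ concrete vertices and $\agbv(\pi,\periodtext,j)$ phantom vertices, each carrying the corresponding unit $\lbv$ as its witness. You then apply the rewiring lemma (Lemma~\ref{lem:rewiring}) once per unordered pair $\set{(\pi_1,\eta,\pi_2),(\pi_2,\dual{\eta},\pi_1)}$, with $\match$ supplying the local match equation. Your extra observations are sound refinements of points the paper leaves unaddressed: that base and period vectors vanish on non-conformant entries (the paper uses this only implicitly, since $\match$ ranges over conformant triples), and the degenerate case $\agbv=\bfzero$.
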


The proof is also deferred to Appendix \ref{app:dstructurefrommatch}.

The gap between $\sol{\overapprox{\anfgptwosentence \land \agcc}(\frx)}$ and 
the set of $\gbv$s realizing models of $\anfgptwosentence \land \agcc$ is that
$\sol{\overapprox{\anfgptwosentence \land \agcc}(\frx)}$ may contain $\gbv$s that 
are not realized by any structure.
Thus, by Lemma~\ref{lem:pseudo_model_tight},
we can reduce the finite satisfiability problem of $\anfgptwosentence \land \agcc$ to 
the \emph{realizability problem of $\sol{\overapprox{\anfgptwosentence \land \agcc}(\frx)}$}.

Note that $\overapprox{\anfgptwosentence \land \agcc}(\frx)$ is an existential Presburger formula.
Hence, its solution set is semilinear.
We discuss the property of the solution set of $\overapprox{\anfgptwosentence \land \agcc}(\frx)$ 
in the next section. 

\subsection{King and peasant $1$-types, and concretifying bases}

We next investigate the properties of a linear subset of $\sol{\overapprox{\anfgptwosentence \land \agcc}(\frx)}$, 
identifying which $1$-types have bounded occurrences. 
We have our overapproximation $\overapprox{\anfgptwosentence \land \agcc}(\frx)$, and we choose
one linear set within it.
This is specified by  a $\gbv$ $\frb$ along with a finite set of $\gbv$s
$\frP = \set{\frp_1, \ldots, \frp_p}$  -- 
that is, $\cL(\frb;\frP)\subseteq \sol{\overapprox{\anfgptwosentence \land \agcc}(\frx)}$.
We write $\numouterperiods$ for $p$: 
the number of ``outer periods'', as opposed to the basis vectors we had for local behaviors.
And we write $\cL(\frb;\frP)$ for the linear set,
the set of $\gbv$s in the form $\frb + \sum\limits_{i \in \intsinterval{\numouterperiods}} n_i\frp_i$.

To characterize which $\gbv$s in $\cL(\frb;\frP)$ are realized by structures,
we introduce the notion of king and peasant 1-types w.r.t. $\frb$ and $\frP$. Here we refer to the definition of a $\gbv$ incorporating a one type
basely from Definition \ref{def:gbvincorporates}.
\begin{definition}[King 1-type, peasant 1-type w.r.t. $\frb$ and $\frP$]\label{def:kingandpeasantonetypeforbcvs}
\hfill
\begin{compactitem}
\item
A 1-type $\pi$ is called a \emph{king 1-type} w.r.t. $\frb$ and $\frP$ if 
$\pi$ is incorporated basely by $\frb$ and is not incorporated basely  by any $\frp_i\in \frP$.
\item
A 1-type $\pi$ is called a \emph{peasant 1-type} w.r.t. $\frb$ and $\frP$ if 
$\pi$ is incorporated basely by some $\frp_i \in \frP$.
\end{compactitem}

We denote the set of king 1-types w.r.t. $\frb$ and $\frP$ by $\onetypeking{\frb}{\frP}$
and the set of peasant 1-types w.r.t. $\frb$ and $\frP$ by $\onetypepeasant{\frb}{\frP}$.
\end{definition}

The following notion, \emph{concretifying}, addresses the requirement that
the existence of a phantom vertex with 1-type $\pi$ implies
the existence of a concrete vertex with the same 1-type.
This is crucial when we want to transform a D-structure to a structure.
Recall that in a D-structure that realizes a $\gbv$,
the witnessed-assigned $\lbv$ of a phantom vertex must be homogeneously valid so that it can be merged
to a concrete vertex with the same 1-type.


\begin{definition}[Concretifying $\frb$ and $\frP$]\label{def:concretifying}
We say that $\frb$ and $\frP$ are \emph{concretifying} if the following condition holds.
\begin{compactitem}
\item \itemtitle{No unsupported phantoms in $\frP$}
For every 1-type $\pi$,
for every $1 \le i \le \numouterperiods$,
if $\pi$ is incorporated periodically in $\frp_i$, 
then $\pi$ is also incorporated basely in $\frb$ or in some $\frp_j\in\frP$.
\end{compactitem}
\end{definition}
We can then prove that every basis of a linear subset of the solution set of 
$\overapprox{\anfgptwosentence \land \agcc}(\frx)$ is concretifying.

The following lemma follows from the fact that 
$\cL(\frb;\frP)\subseteq \sol{\overapprox{\anfgptwosentence \land \agcc}(\frx)}$.
\begin{lemma}\label{lem:sol_concretifying}
 Let $\frb$ be a $\gbv$ 
 and let
$\frP = \set{\frp_1, \ldots, \frp_{\numouterperiods}}$
 be a finite set of $\gbv$s 
 such that
 $\cL(\frb; \frP)
 \ \subseteq\ 
 \sol{\overapprox{\anfgptwosentence \land \agcc}(\frx)}$.
 Then $\frb$ and $\frP$ are concretifying.
\end{lemma}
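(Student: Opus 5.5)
The plan is to use only the $\structcon$ conjunct of $\overapprox{\anfgptwosentence \land \agcc}$, applied to well-chosen members of $\cL(\frb;\frP)$. Fix a 1-type $\pi$ and an index $i \in \intsinterval{\numouterperiods}$ such that $\pi$ is incorporated periodically in $\frp_i$. This means $\frp_i(\pi,\periodtext,j)>0$ for some $j$. We must show that $\pi$ is incorporated basely in $\frb$ or in some $\frp_k$.

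First consider the vector $\frb+\frp_i$, which lies in $\cL(\frb;\frP)$ and hence satisfies $\structcon$. All entries of $\gbv$s are non-negative integers, so
\[
\sum_{j'} (\frb+\frp_i)(\pi,\periodtext,j') \;\ge\; \frp_i(\pi,\periodtext,j) \;>\; 0 .
\]
The implication in $\structcon$ then gives $\sum_{t}(\frb+\frp_i)(\pi,\basetext,t)>0$. So some $t$ satisfies $\frb(\pi,\basetext,t)+\frp_i(\pi,\basetext,t)>0$. Since both summands are non-negative, either $\frb(\pi,\basetext,t)>0$, and $\pi$ is incorporated basely in $\frb$, or $\frp_i(\pi,\basetext,t)>0$, and $\pi$ is incorporated basely in $\frp_i\in\frP$. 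Either case is the required condition of Definition~\ref{def:concretifying}.

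There is no real obstacle. The only points to check are that membership of $\frb+\frp_i$ in $\cL(\frb;\frP)$ is immediate (take coefficient $1$ on $\frp_i$ and $0$ elsewhere), and that non-negativity of all entries is what lets a positive sum be attributed to a single summand. Note that the weaker condition asked for ("in $\frb$ or in some $\frp_j$") is exactly what this argument yields; no information about king versus peasant status is needed here.
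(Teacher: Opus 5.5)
Your proof is correct and follows the same route as the paper: apply $\structcon$ to $\frb+\frp_i\in\cL(\frb;\frP)$ and attribute the resulting positive base sum to $\frb$ or $\frp_i$ using non-negativity of the entries. The paper does exactly this in one line; you simply spell out the details.
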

\begin{proof}
We verify that $\frb$ and $\frP$ satisfy the condition of concretifying.
\begin{compactitem}
\item \itemtitle{No unsupported phantoms in $\frP$}
For each $1 \le i \le \numouterperiods$, we have
$\frb + \frp_i \in \sol{\overapprox{\anfgptwosentence \land \agcc}(\frx)}$.
Hence $\frb + \frp_i$ satisfies $\structcon(\frx)$, from which the required condition follows.
\end{compactitem}
\end{proof}

\begin{lemma}\label{lem:concretifying}
Suppose that $\frb$ and $\frP$ are concretifying.
For every $\frb' \in \cL(\frb; \frP)$ and $\frP' \subseteq \frP$,
$\frb'$ and $\frP'$ are concretifying.
\end{lemma}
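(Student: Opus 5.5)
The plan is to unfold Definition~\ref{def:concretifying} for the pair $\frb',\frP'$ and check the single condition, \emph{No unsupported phantoms}, directly. Fix a 1-type $\pi$ and a period $\frp_i \in \frP'$ that incorporates $\pi$ periodically. We must find a base witness, that is, show that $\pi$ is incorporated basely in $\frb'$ or in some $\frp_j \in \frP'$. Write $\frb' = \frb + \sum_{k} n_k \frp_k$ with $n_k \in \bbN$. Every $\gbv$ has non-negative integer entries. Hence $\frb' \ge \frb$ componentwise, and $\frb' \ge \frp_k$ componentwise whenever $n_k>0$. Incorporating $\pi$ basely is an upward-closed property: it only asks that some entry $\frb'(\pi,\basetext,i)$ be positive.

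Since $\frp_i \in \frP' \subseteq \frP$ and $\frb,\frP$ are concretifying, $\pi$ is incorporated basely in $\frb$ or in some $\frp_j \in \frP$. We split into three cases according to where this base witness lies.
\begin{compactenum}
\item If it is $\frb$, then $\frb' \ge \frb$ gives the witness for $\frb'$.
\item If it is some $\frp_j$ with $n_j>0$, then $\frb' \ge \frp_j$ again gives the witness for $\frb'$.
\item If it is some $\frp_j \in \frP'$, we are done immediately.
\end{compactenum}

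The remaining case is the main obstacle. Here the only base witness is some $\frp_j \in \frP \setminus \frP'$ with $n_j = 0$. The bare concretifying property of $\frb,\frP$ carries no information that transfers to $\frb'$ in this case. I will therefore use the standing assumption of this subsection: the linear set was chosen with $\cL(\frb;\frP) \subseteq \sol{\overapprox{\anfgptwosentence \land \agcc}(\frx)}$, and this is also the setting in which Lemma~\ref{lem:sol_concretifying} produces concretifying pairs. Since $\frp_i \in \frP$, the vector $\frb' + \frp_i$ again lies in $\cL(\frb;\frP)$. It therefore satisfies $\structcon(\frx)$.

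Now $\frp_i$ incorporates $\pi$ periodically, so $\frb'+\frp_i$ does as well. By $\structcon$, $\frb'+\frp_i$ must then incorporate $\pi$ basely, that is, $(\frb'+\frp_i)(\pi,\basetext,t)>0$ for some $t$. Both summands are non-negative, so this positive entry comes from $\frb'$ or from $\frp_i$. In either case we obtain a base witness among $\frb'$ and $\frP'$, because $\frp_i \in \frP'$.

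In fact this last argument alone settles every case, which is the reason for organising the proof around it. I would present the monotonicity observations first, since they cover the cases needing only the concretifying hypothesis. I would then close with the $\structcon$ argument applied to $\frb'+\frp_i$, stating explicitly that it relies on $\cL(\frb;\frP)$ lying inside the overapproximation.
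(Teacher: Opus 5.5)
Your proof is correct under the containment hypothesis you invoke, and it differs from the paper's argument. In fact it repairs that argument.

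The paper's proof consists only of your case~1: if $\frb$ incorporates $\pi$ basely, so does $\frb'$. It never addresses base witnesses lying in $\frP$. Your \emph{remaining case} is a genuine obstruction: with only the concretifying hypothesis, the statement is false. Fix a 1-type $\pi$ and choose the following, with all entries for other 1-types equal to $0$:
\begin{compactitem}
\item $\frb$ does not incorporate $\pi$;
\item $\frp_1$ incorporates $\pi$ periodically but not basely;
\item $\frp_2$ incorporates $\pi$ basely.
\end{compactitem}
Then $\frb,\set{\frp_1,\frp_2}$ is concretifying, but $\frb'=\frb$ with $\frP'=\set{\frp_1}$ is not. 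Here $\frb+\frp_1$ violates $\structcon$, which is exactly what your fix exploits.

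So the containment $\cL(\frb;\frP)\subseteq\sol{\overapprox{\anfgptwosentence\land\agcc}(\frx)}$ is necessary, not merely convenient. It is the standing assumption announced at the start of the subsection. It also holds wherever the lemma is used, for instance in Lemma~\ref{lem:strongly_concretifying} within the construction of Lemma~\ref{lem:sol_semilinear_strong}. This is because every linear subset $\cL(\frb_1;\frP_1)\subseteq\cL(\frb;\frP)$ inherits it.

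Your closing argument never uses that $\frb,\frP$ is concretifying. It is exactly the proof of Lemma~\ref{lem:sol_concretifying}, applied to $\frb',\frP'$ via $\cL(\frb';\frP')\subseteq\cL(\frb;\frP)$. The cleanest presentation is therefore to state the lemma with the containment hypothesis and derive it in one line from Lemma~\ref{lem:sol_concretifying}. The three monotonicity cases can then be dropped.
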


\begin{proof}
    Let $\frb' := \frb + \sum\limits_{i \in \intsinterval{\numouterperiods}} a_i \frp_i$.
    We verify that $\frb'$ and $\frP'$ satisfy each condition.
    \begin{compactitem}

        \item \itemtitle{No unsupported phantoms in $\frP'$}
            The property follows immediately from the observation that
            if $\frb$ incorporates $\pi$ basely, then $\frb'$ also incorporates $\pi$ basely.
    \end{compactitem}
\end{proof}

Recall that, for every $\gbv$ $\agbv \in \cL(\frb;\frP)$, the base vector $\frb$ appears only once.
Therefore, if a 1-type $\pi$ is  incorporated basely only in $\frb$ and not basely in any period vector in $\frP$,
then every structure realizing $\agbv$ contains only a bounded number of vertices of 1-type $\pi$.
We therefore call these \emph{king $1$-types} (Def. \ref{def:kingandpeasantonetypeforbcvs}): as we promised in Section \ref{sec:outline}, such types are controlled entirely by the bounded base part $\frb$, so their cardinality is bounded.
Moreover, this bound can be lifted to the edges between king $1$-types.
Since there are no multi-edges between vertices of a structure, the number of such edges is also bounded.
As a consequence, we can also upper bound the number of period $\gbv$s whose induced behavior incorporates such an edge.
\begin{lemma}[Bounding entries for king types] \label{lem:bound_king}
    For every $1 \le t \le \numouterperiods$,
    if $\frp_t$ incorporates a 2-type $\eta$ between king 1-types $\pi_1$ and $\pi_2$,
    then for every
    $\agbv :=
    \frb + \sum\limits_{i \in \intsinterval{\numouterperiods}} a_i \frp_i
    \in \cL(\frb; \frP)$
    with $a_t > \left(\projectionbase{\frb}\right)^2$,
    there is no structure realizing $\agbv$.
\end{lemma}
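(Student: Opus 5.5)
We argue by contradiction: suppose a structure $\cG$ realizes $\agbv = \frb + \sum_{i} a_i \frp_i$ with $a_t > (\projectionbase{\frb})^2$. We then count the $\eta$-edges between $\pi_1$- and $\pi_2$-elements in two ways.

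The upper bound comes from the number of king elements. Since $\pi_1$ and $\pi_2$ are king 1-types, no $\frp_i$ incorporates them basely. Hence $\projectionbasepi{\agbv}{\pi_j} = \projectionbasepi{\frb}{\pi_j}$ for $j=1,2$. By Lemma~\ref{lem:num_concreate}, and because a structure has no phantom vertices, $\cG$ contains exactly $\projectionbasepi{\frb}{\pi_1}$ elements of 1-type $\pi_1$ and $\projectionbasepi{\frb}{\pi_2}$ elements of 1-type $\pi_2$. A structure has no multi-edges, so each pair $(u,v)$ realizes at most one 2-type. The number of pairs $(v,u)$ with $v$ of type $\pi_1$, $u$ of type $\pi_2$ and 2-type $\eta$ is therefore at most
\[
\projectionbasepi{\frb}{\pi_1}\cdot\projectionbasepi{\frb}{\pi_2} \le \left(\projectionbase{\frb}\right)^2 .
\]
When $\pi_1=\pi_2$, the same bound holds because self-loops are absent.

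The lower bound comes from $\sumcon_{\pi_1,\eta,\pi_2}(\agbv)$. Let $\rep$ be a realization witness. For each $\pi_1$-element $v$, the entry $\bh{v}(\eta,\pi_2)$ equals $\eval{\rep(v)}(\eta,\pi_2)$. Summing over all $\pi_1$-elements and using linearity of $\eval{\cdot}$ together with the definition of the slice, this sum equals $\sumcon_{\pi_1,\eta,\pi_2}(\agbv)$, and it also equals the number of $\eta$-pairs counted above. The coefficients $\bhbase{\pi_1}{i}$, $\bhperiod{\pi_1}{i}$ and the entries of $\frb$ and the $\frp_i$ are non-negative, so $\sumcon$ is monotone and additive. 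This gives
\[
\sumcon_{\pi_1,\eta,\pi_2}(\agbv) \ \ge\ a_t \cdot \sumcon_{\pi_1,\eta,\pi_2}(\frp_t).
\]
The hypothesis that $\frp_t$ incorporates $\eta$ between $\pi_1$ and $\pi_2$ means some coordinate $\frp_t(\pi_1,x,i)>0$ with a matching local vector having a positive $(\eta,\pi_2)$ entry. Hence $\sumcon_{\pi_1,\eta,\pi_2}(\frp_t) \ge 1$, and the total is at least $a_t > (\projectionbase{\frb})^2$. This contradicts the upper bound.

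The main point needing care is the identification of $\sumcon_{\pi_1,\eta,\pi_2}(\agbv)$ with the actual number of $\eta$-neighbour pairs in $\cG$. This is exactly the computation already used for $\match$ in the proof of Lemma~\ref{lem:overapprox}, so I would cite it. The other facts to invoke are that a structure has no phantoms and no multi-edges, and that king types receive no base contribution from periods. All of these follow directly from the definitions and Lemma~\ref{lem:num_concreate}.
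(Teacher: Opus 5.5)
Your proof is correct and follows the same route as the paper's. King types get no base contribution from the periods, so Lemma~\ref{lem:num_concreate} bounds the $\pi_1$- and $\pi_2$-elements. The absence of multi-edges then caps the $\eta$-pairs by $\left(\projectionbase{\frb}\right)^2$, and $a_t$ copies of $\frp_t$ force more $\eta$-edges than that. Your identification of the edge count with $\sumcon_{\pi_1,\eta,\pi_2}(\agbv) \ge a_t \cdot \sumcon_{\pi_1,\eta,\pi_2}(\frp_t) \ge a_t$ supplies a step the paper only asserts.
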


\begin{proof}
    We argue by contradiction.
    Suppose that $\cG$ is a structure realizing $\agbv$ with $a_t > \left(\projectionbase{\frb}\right)^2$,
    and $\frp_t$ incorporates a 2-type $\eta$ between king 1-types $\pi_1$ and $\pi_2$.

    Since $\pi_1$ and $\pi_2$ are king 1-types,
    they are incorporated basely only in $\frb$.
    Hence, by Lemma~\ref{lem:num_concreate}, $\cG$ contains at most $\projectionbase{\frb}$ vertices of 1-type $\pi_1$ or $\pi_2$.

    On the other hand, $a_t > \left(\projectionbase{\frb}\right)^2$ implies that $\cG$ contains
    more than $\left(\projectionbase{\frb}\right)^2$ edges with 2-type $\eta$ from vertices with 1-type $\pi_1$ to vertices with 1-type $\pi_2$.
    However, since $\cG$ is a structure, it contains no multi-edges.
    Therefore, there can be at most $\left(\projectionbase{\frb}\right)^2$
    edges from vertices with 1-types $\pi_1$ to vertices with 1-types $\pi_2$, a contradiction.
\end{proof}

The lemma tells us that if we are interested in structures, we can impose the following restriction on the family of $\gbv$s, \emph{strongly concretifying} $\frb$ and $\frP$.

\begin{definition}[Strongly concretifying $\frb$ and $\frP$]\label{def:strong_concretifying}
We say that $\frb$ and $\frP$ are \emph{strongly concretifying} if they are concretifying and satisfy: 
\begin{compactitem}
\item \itemtitle{No edges incorporated between king 1-types in $\frP$}
For every $\frp_t \in \frP$, 
every pair of king 1-types $\pi_1$ and $\pi_2$,
and every 2-type $\eta$,
$\frp_t$ does not incorporate the $2$-type $\eta$ between $\pi_1$ and $\pi_2$, 
in the sense of Definition~\ref{def:gbvincorporates}.
\end{compactitem}
A linear set $\cL(\frb;\frP)$ is \emph{strongly concretifying}
if $\frb$ and $\frP$ are strongly concretifying.
\end{definition}

Unlike with concretifying, we cannot prove that 
every linear subset of the solution set of $\overapprox{\anfgptwosentence \land \agcc}(\frx)$ 
is strongly concretifying.
Instead, we show that the realizable $\gbv$s in this solution set can be covered by 
finitely many strongly concretifying linear sets,
the representations of which are computable from $\overapprox{\anfgptwosentence \land \agcc}(\frx)$.

The key idea is as follows.
Suppose that some period vector $\frp_i$ does not satisfy the required condition.
Then, by Lemma~\ref{lem:bound_king}, there exists a threshold such that no $\gbv$ containing sufficiently many copies of $\frp_i$ can be realized by a structure.
Consequently, it suffices to consider only finitely many copies of $\frp_i$.
By explicitly enumerating all possible numbers of copies below the threshold, we can replace the original linear set with finitely many linear sets, each of which satisfies the required condition.

\begin{lemma}\label{lem:strongly_concretifying}
Suppose that $\frb$ and $\frP$ are concretifying but not strongly concretifying.
Then, for every realizable $\gbv$ $\agbv \in \cL(\frb; \frP)$,
there exist $\frb' \in \cL(\frb; \frP)$ and $\frP' \subsetneq \frP$
such that
\begin{compactitem}
\item
$\frb'$ and $\frP'$ are strongly concretifying;
\item
$\abs{\frb'} \le \left(2M\abs{\frb}\right)^{2^{\numouterperiods}}$,
where $M$ is the maximum of $\abs{\frp_i}$ for $1 \le i \le \numouterperiods$;
\item
$\agbv \in \cL(\frb'; \frP')$.
\end{compactitem}
\end{lemma}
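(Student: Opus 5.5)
We argue by induction on $\numouterperiods$, repeatedly absorbing ``bad'' period vectors into the base. Fix a realizable $\agbv = \frb + \sum_{i \in \intsinterval{\numouterperiods}} a_i \frp_i$, realized by some structure $\cG$.

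\emph{Step 1: choose a bad period and bound its coefficient.} Since $\frb,\frP$ are concretifying but not strongly concretifying, some $\frp_t$ incorporates a 2-type $\eta$ between king 1-types $\pi_1,\pi_2$ w.r.t.\ $\frb,\frP$. By Lemma~\ref{lem:bound_king}, realizability of $\agbv$ forces $a_t \le (\projectionbase{\frb})^2 \le \abs{\frb}^2$.

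\emph{Step 2: absorb it into the base.} Set $\frb_1 := \frb + a_t \frp_t$ and $\frP_1 := \frP \setminus \set{\frp_t}$. Then $\agbv \in \cL(\frb_1;\frP_1)$ and $\frb_1 \in \cL(\frb;\frP)$. By Lemma~\ref{lem:concretifying}, $\frb_1,\frP_1$ are concretifying. The size satisfies
\[
\abs{\frb_1} \le \abs{\frb} + \abs{\frb}^2 M \le (2M\abs{\frb})^2 ,
\]
using additivity of size and assuming, harmlessly, $\abs{\frb}\ge 1$ and $M\ge 1$.

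\emph{Step 3: stop or recurse.} If $\frb_1,\frP_1$ are strongly concretifying, we take $\frb' := \frb_1$ and $\frP' := \frP_1 \subsetneq \frP$. Otherwise we apply the same reasoning to $\frb_1,\frP_1$. The recursion is legitimate because Lemma~\ref{lem:bound_king} holds for any representation: it only needs that $\agbv$ lies in $\cL(\frb_1;\frP_1)$ and is realized by a structure. Each round removes one period, so there are at most $\numouterperiods$ rounds. The final pair has $\frP' \subsetneq \frP$ and $\frb' \in \cL(\frb;\frP)$, the latter because membership is transitive under absorbing periods.

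\emph{Step 4: the bound.} Iterating $s \mapsto (2Ms)^2$ at most $\numouterperiods$ times from $\abs{\frb}$ gives a crude bound. To get $(2M\abs{\frb})^{2^{\numouterperiods}}$, write $x_k := 2M s_k$, so that $s_{k+1} \le x_k^2$ and $x_{k+1} \le 2M x_k^2$. This recursion picks up extra factors of $2M$ at each step, so we instead use the sharper estimate
\[
s_{k+1} \le s_k + M s_k^2 \le 2M s_k^2 .
\]
Then $2M s_{k+1} \le (2M s_k)^2$, and hence $2M s_k \le (2M\abs{\frb})^{2^k}$, which gives $s_k \le (2M\abs{\frb})^{2^k}$.

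\emph{Main obstacle.} The delicate point is that king 1-types change as periods are absorbed. Removing $\frp_t$ from $\frP$ can turn a 1-type incorporated basely only by $\frp_t$ into a king w.r.t.\ $(\frb_1,\frP_1)$. This creates new bad periods, and the bound in Lemma~\ref{lem:bound_king} is then in terms of $\projectionbase{\frb_1}$, not $\projectionbase{\frb}$. This is exactly why the recursion is genuinely iterative and why the bound is doubly exponential rather than a single quadratic. Termination is still guaranteed because each round strictly shrinks the period set. I will check carefully that Lemma~\ref{lem:bound_king} is applied with the current representation at each round.
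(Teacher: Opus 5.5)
Your proposal is correct and follows the paper's proof. You pick a period incorporating an edge between king 1-types, bound its coefficient by $\projectionbase{\frb}^2$ via Lemma~\ref{lem:bound_king}, absorb it into the base while preserving concretifying by Lemma~\ref{lem:concretifying}, iterate at most $\numouterperiods$ times, and solve the recurrence $\abs{\frb_{i+1}} \le 2M\abs{\frb_i}^2$.
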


\begin{proof}
Let $\agbv := \frb + \sum\limits_{i\in\intsinterval{\numouterperiods}} a_i\frp_i$.
Choose an index $1\le t\le\numouterperiods$ such that $\frp_t$ violates the condition 
\emph{no edges between king 1-types in period vectors}.
Since $\agbv$ is realizable, Lemma~\ref{lem:bound_king} implies that
$a_t
\ \le\ 
\projectionbase{\frb}^2$.

We absorb all copies of $\frp_t$ into the base vector by defining
$ \frb_1
\ :=\ 
\frb+a_t\frp_t,
\quad
\frP_1
\ :=\ 
\frP\setminus\set{\frp_t}$.
Then $\agbv \in \cL(\frb_1; \frP_1)$, and
$\abs{\frb_1} = \abs{\frb} + a_t\abs{\frp_t}
\le \abs{\frb} + \projectionbase{\frb}^2 \abs{\frp_t} \ \le\ 2M\abs{\frb}^2$,
where $M$ is the maximum of $\abs{\frp_i}$ for $1 \le i \le \numouterperiods$.
Moreover, by Lemma~\ref{lem:concretifying}, the $\frb_1$ and $\frP_1$ are still concretifying.

Note that the cardinality $\frP'$ is strictly less than the cardinality of $\frP$.
We may therefore repeat the same argument.
Each iteration removes one period vector from $\frP$
while preserving the containment of $\agbv$ 
and the concretifying property by Lemma~\ref{lem:concretifying}.
Hence, after at most $\numouterperiods$ iterations,
we obtain $\frb'$ and $\frP'$ such that no period vector in $\frP'$ violates the strongly concretifying condition.
Therefore, $\frb'$ and $\frP'$ are strongly concretifying.

It remains to bound the size of $\frb'$.
The sequence $\frb_0, \frb_1, \ldots, \frb_k$
with $k \le \numouterperiods$,
$\abs{\frb_0} = \abs{\frb}$, and
$\abs{\frb_k} = \abs{\frb'}$
satisfies the recurrence
$\abs{\frb_{i+1}} \le 2M\abs{\frb_i}^2$.
Consequently, $\abs{\frb'} = \abs{\frb_k} \le \left(2M\abs{\frb}\right)^{2^\numouterperiods}$.
\end{proof}

\begin{lemma}\label{lem:strongly_concretifying_cover}
    Suppose that $\frb$ and $\frP$ are concretifying but not strongly concretifying.
    Then we can construct linear sets of $\gbv$s
    $\set{
        \cL(\frb_1; \frP_1),
        \ldots,
        \cL(\frb_k; \frP_k)
    }$,
    such that
    \begin{compactitem}
        \item
            for every $1 \le i \le k$, $\frb_i$ and $\frP_i$ are strongly concretifying and
            $\cL(\frb_i; \frP_i) \subseteq \cL(\frb; \frP)$;

        \item
            every realizable $\gbv$ in $\cL(\frb; \frP)$ belongs to one of these linear sets.
    \end{compactitem}
\end{lemma}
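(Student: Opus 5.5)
The plan is to turn the existence statement of Lemma~\ref{lem:strongly_concretifying} into an effective finite enumeration. Lemma~\ref{lem:strongly_concretifying} shows that every realizable $\agbv \in \cL(\frb;\frP)$ lies in some $\cL(\frb';\frP')$ with $\frP' \subsetneq \frP$, $\frb' \in \cL(\frb;\frP)$, $\frb',\frP'$ strongly concretifying, and $\abs{\frb'} \le B := \left(2M\abs{\frb}\right)^{2^{\numouterperiods}}$. The key point is that $B$ is computable from $\frb$ and $\frP$ alone, independently of $\agbv$.

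I will therefore take the family to consist of all pairs $(\frb',\frP')$ such that:
\begin{compactitem}
\item $\frP' \subseteq \frP$;
\item $\frb'$ is a $\gbv$ with $\abs{\frb'} \le B$ (finitely many, since entries are non-negative integers summing to at most $B$);
\item $\frb' \in \cL(\frb;\frP)$;
\item $\frb'$ and $\frP'$ are strongly concretifying.
\end{compactitem}
Each condition is decidable:
\begin{compactitem}
\item Membership $\frb' \in \cL(\frb;\frP)$ amounts to solving $\frb' = \frb + \sum a_i \frp_i$ in non-negative integers. Since $\frb' - \frb$ has bounded entries, each $a_i$ with $\frp_i \neq \bfzero$ is bounded by $\abs{\frb'}$, so this is a finite search. (Periods with $\frp_i = \bfzero$ can be discarded from $\frP$ in advance, as they contribute nothing.)
\item The concretifying and strongly concretifying conditions refer only to which 1-types and 2-types are incorporated basely or periodically in $\frb'$ and in the vectors of $\frP'$. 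Incorporation is read off directly from finitely many entries together with the fixed vectors $\bhbase{\pi}{i}$ and $\bhperiod{\pi}{i}$. The king 1-types w.r.t.\ $\frb'$ and $\frP'$ are computed in the same way.
\end{compactitem}

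The first property, $\cL(\frb';\frP') \subseteq \cL(\frb;\frP)$, follows because $\frb' = \frb + \sum a_i\frp_i$ and $\frP' \subseteq \frP$, so any $\frb' + \sum_{\frp\in\frP'} n_\frp \frp$ is again of the form $\frb + \sum c_i \frp_i$ with $c_i \ge 0$. The second property, that every realizable $\agbv$ is covered, is exactly the content of Lemma~\ref{lem:strongly_concretifying}: the pair it produces satisfies all the filter conditions, so it appears in the enumerated list.

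I do not expect any real obstacle, since the substantive work is already in Lemma~\ref{lem:strongly_concretifying}. Two points need care. First, the bound $B$ must be stated uniformly in terms of $\frb,\frP$ (it is). Second, the king 1-types must be recomputed relative to the new pair $(\frb',\frP')$ when checking strong concretification, rather than reused from $(\frb,\frP)$. This is harmless because the filter checks the condition directly for each candidate pair.
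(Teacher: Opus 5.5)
Your proposal is correct and takes the same route as the paper's proof: enumerate all $\frb' \in \cL(\frb;\frP)$ within the computable size bound of Lemma~\ref{lem:strongly_concretifying} together with all $\frP' \subseteq \frP$, keep the strongly concretifying pairs, and invoke that lemma for coverage. Your added remarks make explicit effectiveness details that the paper leaves implicit, namely bounding the coefficients to decide membership in $\cL(\frb;\frP)$ and recomputing the king 1-types for each candidate pair.
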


\begin{proof}
    By Lemma~\ref{lem:strongly_concretifying}, every realizable $\gbv$ in $\cL(\frb; \frP)$ belongs to a linear set
    $\cL(\frb'; \frP')$, where $\frb'$ and $\frP'$ are strongly concretifying,
    $\abs{\frb'}$ is bounded by the bound given in Lemma~\ref{lem:strongly_concretifying},
    and $\frP' \subseteq \frP$.

    We construct the finite family of representations of linear sets of $\gbv$s using the following procedure.
    We enumerate all $\frb' \in \cL(\frb; \frP)$ satisfying this size bound, together with all subsets $\frP' \subseteq \frP$.
    For each pair of $\frb'$ and $\frP'$, we test whether $\frb'$ and $\frP'$ are strongly concretifying and retain those that satisfy the condition.
    Since the strongly concretifying property is effectively decidable, this procedure constructs a finite family of linear sets effectively.

    It is clear that every linear set is contained in $\cL(\frb; \frP)$ by construction.
    Moreover, by Lemma~\ref{lem:strongly_concretifying}, every realizable $\gbv$ in $\cL(\frb; \frP)$ belongs to one of these linear sets.
\end{proof}

\begin{lemma}\label{lem:sol_semilinear_strong}
    Given a  sentence $\anfgptwosentence \land \agcc$ in normal form, 
    we can  construct a finite family of linear sets of $\gbv$s
    $\set{
        \cL(\frb_1; \frP_1),
        \ldots,
        \cL(\frb_m; \frP_m)
    }$,
    such that
    \begin{compactitem}
        \item
            for every $1 \le i \le m$,
            $\frb_i$ and $\frP_i$ are strongly concretifying and\
            $\cL(\frb_i; \frP_i)$ is contained in $\sol{\overapprox{\anfgptwosentence \land \agcc}(\frx)}$

        \item
            every $\gbv$ realized by a model of $\anfgptwosentence \land \agcc$ belongs to one of these linear sets.
    \end{compactitem}
\end{lemma}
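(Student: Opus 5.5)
The plan is to combine the semilinearity of the overapproximation with the covering lemma just proved. First, $\overapprox{\anfgptwosentence \land \agcc}(\frx)$ is an existential Presburger formula that can be written down effectively from $\anfgptwosentence$ and $\agcc$. This uses the effectively computable hybrid linear representations of each $\bhset{\pi}$ (Proposition~\ref{prop:conformantsemilinear} and Definition~\ref{def:localbehbasis}) together with the effective sets $\onetypes$ and $\twotypes_{\pi_1,\pi_2}$. By the effective conversion between Presburger formulas and semilinear sets (Definition~\ref{def:semilinearpresburger}, \cite{ginsburg-spanier}), we can compute finitely many representations $\tuple{\frb^{(1)},\frP^{(1)}},\ldots,\tuple{\frb^{(r)},\frP^{(r)}}$ such that $\sol{\overapprox{\anfgptwosentence \land \agcc}(\frx)} = \bigcup_{j} \cL(\frb^{(j)};\frP^{(j)})$.

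Second, for each $j$ we have $\cL(\frb^{(j)};\frP^{(j)}) \subseteq \sol{\overapprox{\anfgptwosentence \land \agcc}(\frx)}$, so Lemma~\ref{lem:sol_concretifying} makes $\frb^{(j)},\frP^{(j)}$ concretifying. Strong concretifying is decidable from the representation, since it only requires checking which entries of finitely many explicit vectors are positive. If the pair is strongly concretifying, we keep it. Otherwise we apply Lemma~\ref{lem:strongly_concretifying_cover}. This effectively yields finitely many strongly concretifying linear sets, each contained in $\cL(\frb^{(j)};\frP^{(j)})$ and hence in the solution set of the overapproximation. Together they cover every realizable $\gbv$ of $\cL(\frb^{(j)};\frP^{(j)})$. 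Collecting the outputs over all $j$ gives the family $\cL(\frb_1;\frP_1),\ldots,\cL(\frb_m;\frP_m)$, which satisfies the first bullet.

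Third, for the covering bullet, let $\agbv$ be realized by a model $\cG$ of $\anfgptwosentence \land \agcc$. Lemma~\ref{lem:overapprox} gives $\agbv \in \sol{\overapprox{\anfgptwosentence \land \agcc}(\frx)}$, so $\agbv$ lies in some $\cL(\frb^{(j)};\frP^{(j)})$. It is realizable by the structure $\cG$, so it lies in one of the sets produced for index $j$.

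The only point needing care is consistency of the word ``realizable'' with its use in Lemmas~\ref{lem:bound_king} and~\ref{lem:strongly_concretifying}: it must mean realized by a genuine structure, not a D-structure. This holds here because $\cG$ is a structure. Everything else is bookkeeping about effectiveness.
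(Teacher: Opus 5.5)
Your proposal is correct and follows the paper's proof. Like the paper, you compute a semilinear representation of $\sol{\overapprox{\anfgptwosentence \land \agcc}(\frx)}$, note that each component is concretifying by Lemma~\ref{lem:sol_concretifying}, and refine each one via Lemma~\ref{lem:strongly_concretifying_cover}. Two bookkeeping points you make explicit are left implicit in the paper: keeping components that are already strongly concretifying (a case that lemma's hypothesis excludes), and invoking Lemma~\ref{lem:overapprox} for the covering bullet.
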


\begin{proof}
    Since $\overapprox{\anfgptwosentence \land \agcc}(\frx)$ is an existential Presburger formula,
    its solution set is semilinear, and a semilinear representation can be computed effectively.
    By Lemma~\ref{lem:sol_concretifying}, every linear set in this representation is concretifying.

    We then apply Lemma~\ref{lem:strongly_concretifying_cover} to each concretifying linear set.
    For each such linear set, the lemma produces a finite family of strongly concretifying linear sets whose union contains every realizable $\gbv$ in the original linear set.
    Applying this transformation to every linear set in the semilinear representation yields the desired finite family.

    Since every step of the construction is effective, the resulting family can be computed effectively.
\end{proof}


%% file: notationtable.tex

\begin{table}[h!]
\center
\begin{tblr}{
        colspec={c|l},
        hlines
    }
    $\onetypes$                   & set of 1-types  conformant with $\anfgptwosentence$ (Def. \ref{def:conformanttypeasymmetrictype}) \\
    $\twotypescmpi{\pi_1}{\pi_2}$ & set of non-null 2-types conformant with $\anfgptwosentence$ and 1-types $\pi_1$ and $\pi_2$  (Def. \ref{def:conformanttypeasymmetrictype}) \\
    $\bhset{\pi}$                 & set of behaviors of 1-type $\pi$ (Def. \ref{def:localbehbasis}) \\
    $\bhsethomo{\pi}$             & homogeneous set of behaviors of 1-type $\pi$ (Def. \ref{def:localbehbasis})\\
    $\bhbase{\pi}{i}$             & base vector for the behaviors of 1-type $\pi$ (Def. \ref{def:localbehbasis})\\
    $\bhperiod{\pi}{i}$           & period vector for the behaviors of 1-type $\pi$ (Def. \ref{def:localbehbasis})\\
    $\numbhbases{\pi}$            & number of  base vectors for the behaviors of 1-type $\pi$ (Def. \ref{def:localbehbasis}) \\ 
    $\numbhperiods{\pi}$          & number of period vectors for the behaviors of 1-type $\pi$  (Def. \ref{def:localbehbasis}) \\
    $\eval{\albv}$                & evaluation of the local behavior vector  $\albv$  (Def. \ref{def:lbv})\\
\end{tblr}
\caption{Summary of notation.}
\label{table:notation}
\end{table}


%% file: 4-realization.tex

\subsection{Kernels, villages, and village systems}\label{subsec:realizable-gbv}

We continue to \emph{fix a representation for one of the linear components of our  overapproximating semilinear set, consisting of $\gbv$ $\frb$ and
a finite set of $\gbv$s $\frP = \set{\frp_1, \ldots, \frp_\numouterperiods}$.} Note that this
determines \emph{a set of king and peasant $1$-types} (Definition \ref{def:kingandpeasantonetypeforbcvs}), which we are thus also fixing.

We can assume that $\frb$ and $\frP$ are strongly concretifying by Lemma~\ref{lem:sol_semilinear_strong}. 
We aim to determine the condition under which there exists a structure realizing some $\gbv$ in $\cL(\frb; \frP)$.

Recall the intuition from Section \ref{sec:outline}.
We want to show that there exists a structure realizing some $\gbv \in \cL(\frb;\frP)$ if and only if
it can be decomposed into two parts: a bounded-size \emph{kernel} and an unbounded-size \emph{village}.
Although the size of the village is unbounded,
we will further show that the existence of a village can be characterized by a bounded-size existential Presburger formula,
which can be constructed from $\frb$ and $\frP$ effectively.

We now formalize the intuitions behind kernel and village, which were illustrated in Figure \ref{fig:kernel_village}.
\begin{definition}[Kernel, king, noble, and phantom noble] \label{def:kernalkingnoblephantomnoble}
    We say that a D-structure $\cK$ is a \emph{kernel} of $\frb$ and $\frP$ if it is asymmetric and
    realizes a $\gbv$ in $\cL(\frb;\frP)$ and satisfies the following conditions.
    \begin{compactitem}
        \item[\emph{K1.}]
            $\cK$ contains no phantom vertices realizing 
            1-types in $\onetypeking{\frb}{\frP}$. 

            We call the concrete vertices of $\cK$ realizing 1-types in $\onetypeking{\frb}{\frP}$ \emph{kings}.
            The remaining concrete vertices realizing 1-types in $\onetypepeasant{\frb}{\frP}$ are called \emph{nobles},
            while the phantom vertices realizing  1-types in $\onetypepeasant{\frb}{\frP}$ are called \emph{phantom nobles}.

        \item[\emph{K2.}]
            Referring to the definitions in K1 above, no multi-edge of $\cK$ involves a king.
            No edge of $\cK$ connects a king and a phantom noble.
           A king  of $\cK$ has no self-loop.
    \end{compactitem}
\end{definition}

\begin{definition}[Village, shadow, and peasant] \label{def:villageshadowpeasant}
    We say that a D-structure  $\cV$ is a \emph{village} of $\frb$ and $\frP$ if it is asymmetric,
    it realizes a $\gbv$ in $\cL_\homo(\frP)$, and satisfies the following conditions.
    \begin{compactitem}

        \item[\emph{V1.}]
            Every phantom vertex of $\cV$ realizes a  1-type in $\onetypeking{\frb}{\frP}$.
            Every concrete vertex of $\cV$ realizes a  1-type in $\onetypepeasant{\frb}{\frP}$.

            We call the phantom vertices of $\cV$ \emph{shadows},
            and the concrete vertices of $\cV$ \emph{peasants}.

        \item[\emph{V2.}]
            No multi-edge of $\cV$ involves a shadow.
            No edge of $\cV$ connects two shadows.
        There is no self-loop in $\cV$.
    \end{compactitem}
\end{definition}

\begin{definition}[Compatibility between kernel and village] \label{def:kernel_village_comp}
    Let $\cK$ be a kernel of $\frb$ and $\frP$ and
    let $\cV$ be a village of $\frb$ and $\frP$.
    We say that $\cK$ and $\cV$ are compatible if: 
    \begin{compactitem}
        \item[\emph{V3.}] \itemtitle{Shadow--king correspondence}
            There exists a bijection
            from shadows of $\cV$ to kings of $\cK$
            that preserves 1-types. Equivalently, for every $1$-type, the number of shadows of $\cV$ that satisfy it in the village is the same as the number of kings that satisfy it in the kernel.

        \item[\emph{V4.}] \itemtitle{Absorption of phantom nobles in $\cK$}
            For every phantom noble of $\cK$ realizing the 1-type $\pi$,
            $\cV$ contains a peasant realizing $\pi$.

        \item[\emph{V5.}] \itemtitle{Absorption of multi-edges in $\cK$}
            For every vertex $v_1$ and $v_2$ in $\cK$,
            if there exists distinct $\eta_1$-edge $e_1$ and $\eta_2$-edge $e_2$ from $v_1$ to $v_2$,
            then $\cV$ contains an $\eta$-edge $e$ from $u_1$ to $u_2$ which satisfies 
            $\onetypeof(u_1) = \onetypeof(v_1)$,
            $\onetypeof(u_2) = \onetypeof(v_2)$, and
            $\twotypeof(e) \in \set{ \twotypeof(e_1), \twotypeof(e_2) }$.
         
    \item[\emph{V6.}] \itemtitle{Absorption of self-loops in $\cK$}
            For every self-loop $e$ for $v$ in $\cK$,
            $\cV$ contains an edge $e'$, such that
            letting $u_1$ be the source vertex of $e'$ and $u_2$ be the destination of $e'$ , we have: 
                $\onetypeof(u_1)
                \ =\ 
                \onetypeof(v), 
                \quad
                \onetypeof(u_2)
                \ =\ 
                \onetypeof(v)$,
            and
                $\twotypeof(e') = \twotypeof(e)$.
        
    \end{compactitem}
\end{definition}

We now formally state the pumping theorem mentioned in the proof sketch:

\begin{theorem}[Pumping] \label{thm:pump}
    Let $\cK$ be a kernel of $\frb$ and $\frP$, and let $\cV$ be a village of $\frb$ and $\frP$.
    If $\cK$ and $\cV$ are compatible, then there exists an asymmetric structure $\cG$ realizing a $\gbv$ in $\cL(\frb;\frP)$.
\end{theorem}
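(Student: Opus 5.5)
We build $\cG$ from one copy of $\cK$ and $N$ disjoint copies $\cV_1,\dots,\cV_N$ of $\cV$, with $N$ a large constant fixed at the end (depending only on $\cK$). The disjoint union realizes $\frb_\cK + N\cdot\agbv_\cV$, which lies in $\cL(\frb;\frP)$ because the $\gbv$ of $\cK$ is in $\cL(\frb;\frP)$ and that of $\cV$ is in $\cL_\homo(\frP)$; realized $\gbv$s add under disjoint union by taking the union of the witness maps. Every step afterwards uses only Merge and Swap, which by Lemma~\ref{lem:operation} preserve the realized $\gbv$ and asymmetry. So it suffices to reach a D-structure with no phantoms, no multi-edges and no self-loops.

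\textbf{Merging steps.} First, for each copy $\cV_j$, use the bijection of V3 to merge every shadow of $\cV_j$ into its king in $\cK$. Merging a phantom is legitimate because its witness $\lbv$ is homogeneously valid, so the king's $\lbv$ stays valid. Kings have no self-loops (K2), and shadows are pairwise non-adjacent (V2), so no self-loops arise. Shadows are not involved in multi-edges (V2), distinct copies attach to kings through different peasants, and kings are not in multi-edges in $\cK$ (K2). Hence every king still has simple incidences.

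Second, merge each phantom noble of $\cK$ with type $\pi$ into a peasant of type $\pi$ in some copy; such a peasant exists by V4. Use a different copy for each phantom noble, which is possible since $N$ exceeds the number of phantom nobles. No phantom noble is adjacent to a king (K2), and the chosen peasant lies in a fresh copy, so the two are non-adjacent. After these steps there are no phantoms. The result is realized with valid $\lbv$s on concrete vertices, so every vertex has a conformant behavior.

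\textbf{Eliminating multi-edges and self-loops.} This step is the main obstacle. The defects come from three sources: multi-edges inside each village copy among peasants, multi-edges in $\cK$ (possibly now landing on absorbed peasants), and self-loops of nobles and phantom nobles in $\cK$. For a defective edge $e$ of 2-type $\eta$ from $v$ to $u$, I pick a partner $\eta$-edge $e'$ from $v'$ to $u'$ with matching 1-types, chosen in a copy $\cV_j$ untouched so far and disjoint from $\{u,v\}$'s neighborhoods. Swapping produces edges $v\to u'$ and $v'\to u$.

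These new edges join vertices in different components that were previously non-adjacent, so they create no new multi-edges and no self-loops. The defect count strictly decreases, provided $u', v'$ are not kings: partners are taken among peasant-peasant edges in a copy, and the endpoint types of any defect are peasant types, since kings are never involved in defects.

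Partners exist for the three sources as follows:
\begin{itemize}
\item For intra-village multi-edges, the same edge in another copy is a partner.
\item For multi-edges of $\cK$, V5 supplies an edge of one of the two 2-types with the right endpoint types, which lies between peasants. Swapping that one strand of the pair suffices.
\item For self-loops, V6 supplies the partner. The swap turns a loop at $v$ into edges $v\to u_2$ and $u_1\to v$ with $u_1\neq u_2$ in a fresh copy, so no loop remains.
\end{itemize}

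Since each swap consumes a bounded number of fresh copies, choosing $N$ larger than (number of copies for phantom nobles) plus (number of kernel defects plus village defects per copy, times a constant) guarantees that fresh copies are always available.

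A careful count is needed for the intra-village multi-edges, which number proportionally to $N$. For these I would instead pair copies cyclically: swap a chosen strand of the multi-edge in copy $j$ with the same strand in copy $j+1$. For an odd number of copies, use a 3-cycle on the last ones. This removes all such defects without needing fresh copies.

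The final object is phantomless, with no multi-edges or self-loops, and asymmetric. It is therefore an asymmetric structure realizing a $\gbv$ in $\cL(\frb;\frP)$.
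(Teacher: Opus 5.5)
Your proposal is correct and follows the paper's proof essentially step by step. Like the paper, you take a disjoint union of $\cK$ with many copies of $\cV$, merge shadows into kings (V3) and phantom nobles into peasants (V4), remove village multi-edges by swaps across copies, and remove kernel multi-edges and self-loops by swaps with village edges supplied by V5 and V6, with Lemma~\ref{lem:operation} keeping the realized $\gbv$ in $\cL(\frb;\frP)$. Your bookkeeping is more careful than the paper's, but two points need tightening. First, for a village multi-edge of multiplicity at least three you must use a different pairing of copies for each removed strand, since reusing the same pairing recreates parallel edges on the same cross-copy pair. Second, because your cyclic pairing touches every copy, choose kernel-defect partners among copies not yet used for another kernel defect whose endpoints avoid the (bounded) neighbourhoods of the defect's endpoints, rather than among ``untouched'' copies.
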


\begin{proof}
Let $k$ be the number of kings in $\cK$. 
The construction consists of several steps.
We begin with the disjoint union of $\cK$ and copies of $\cV$. We then perform the following operations
\begin{compactitem}
\item \itemtitle{Eliminating shadows.}
For each copy of $\cV$ and each shadow $v$ in that copy,
merge $v$ into the king $f(v)$ of $\cK$, where $f$ is the bijection from the shadows of $\cV$ to the kings of $\cK$ that preserves 1-types, given
by (V3).
Note that there cannot be an edge from a shadow $v$ to $f(v)$.
Moreover, for each king $u$ in $\cK$, for every copy of the village, there is exactly one shadow from that copy merged into $u$.
Since there are no edges between shadows in different copies,
the merge procedure cannot create a self-loop, as required for a structure.

\item \itemtitle{Eliminating phantom nobles.}
For each phantom noble in $\cK$ of 1-type $\pi$, merge it into a peasant of 1-type $\pi$ in one of the copies of $\cV$.

We do not have edges between phantom nobles and peasants. With sufficiently many copies (as many as there are phantom nobles), we can ensure
that we do not merge two phantom nobles to the same peasant. Thus again there is no self-loop created in this process.

\item \itemtitle{Eliminating multi-edges in the villages.}
For each multi-edge in a copy of $\cV$, repeatedly apply the swapping operation to the corresponding pair of vertices across the other copies.
Since there are sufficiently many copies, every multi-edge can be eliminated.

\item \itemtitle{Eliminating multi-edges in the kernel.}
For each multi-edge in $\cK$, apply the swapping operation using 
a suitable pair of peasants from one of the copies of $\cV$ with the corresponding 1-types.

\item \itemtitle{Eliminating self-loops in the kernel.}
For each self-loop in $\cK$, apply the swapping operation, using 
a suitable pair of peasants from one of the copies of $\cV$ with the corresponding 1-types.
\end{compactitem}

Let us check that these operations are possible and achieve the required goals.
Because $\cK$ and $\cV$ are compatible, 
the \emph{absorption of phantom nobles in $\cK$} (V4), 
\emph{absorption of multi-edges in $\cK$} (V5) and 
\emph{absorption of self-loops in $\cK$} (V6)
conditions
ensure that the required peasants in $\cV$ always exist.
Moreover, since the numbers of phantom nobles and multi-edges in $\cK$ are finite, by taking sufficiently many copies of $\cV$,
every phantom noble and every multi-edge in $\cK$ can be eliminated.
Consequently, the resulting D-structure is phantom-free and without multi-edges, and hence is a structure.

Since $\cK$ realizes a $\gbv$ in $\cL(\frb;\frP)$ and $\cV$ realizes a $\gbv$ in $\cL_\homo(\frP)$,
the disjoint union of $\cK$ and sufficiently many copies of $\cV$ also realizes a $\gbv$ in $\cL(\frb;\frP)$.
By Lemma~\ref{lem:operation}, each of the above operations preserves the $\gbv$ realized by the D-structure.
Therefore, the resulting structure also realizes a $\gbv$ in $\cL(\frb;\frP)$.
\end{proof}

As outlined in Section~\ref{sec:outline}, our next goal is to construct an existential Presburger formula, called the \emph{village system},
and then show a three-way equivalence:
existence of a village for a given kernel, solvability of the village system, and existence of a structure realizing an appropriate $\gbv$.

We first establish a basic property of kernels of $\frb$ and $\frP$.
\begin{lemma}\label{lem:kernel}
    Let $\cG$ be a D-structure realizing a $\gbv$ $\agbv \in \cL(\frb;\frP)$.
    For every king 1-type $\pi$,
    $\cG$ contains exactly $\projectionbasepi{\frb}{\pi}$ concrete vertices of 1-type $\pi$.

    Moreover, let $\rep$ be a realization witness of $\cG$ for $\agbv$.
    Then, for every king 1-type $\pi$, for every $1 \le i \le \numbhbases{\pi}$,
    there are exactly $\frb(\pi,\basetext,i)$ vertices $v$ of 1-type $\pi$ such that
    $\rep(v)(\basetext,i)=1$.
\end{lemma}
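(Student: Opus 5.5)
Write $\agbv = \frb + \sum_{j \in \intsinterval{\numouterperiods}} a_j \frp_j$ with $a_j \in \bbN$. The plan is to show that the base entries of $\agbv$ on a king 1-type $\pi$ coincide with those of $\frb$, and then apply the counting argument of Lemma~\ref{lem:num_concreate} entrywise.

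\emph{Step 1: the base entries of $\agbv$ on $\pi$ are those of $\frb$.} Since $\pi$ is a king 1-type, Definition~\ref{def:kingandpeasantonetypeforbcvs} says that no $\frp_j$ incorporates $\pi$ basely. By Definition~\ref{def:gbvincorporates}, this means $\frp_j(\pi,\basetext,i)=0$ for every $j$ and every $1\le i\le\numbhbases{\pi}$. Hence $\agbv(\pi,\basetext,i)=\frb(\pi,\basetext,i)$ for all such $i$, and consequently $\projectionbasepi{\agbv}{\pi}=\projectionbasepi{\frb}{\pi}$.

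\emph{Step 2: the first claim.} Lemma~\ref{lem:num_concreate} states that the number of concrete vertices of 1-type $\pi$ in $\cG$ equals $\projectionbasepi{\agbv}{\pi}$. By Step 1 this is $\projectionbasepi{\frb}{\pi}$, which gives the first statement.

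\emph{Step 3: the refined per-index claim.} Fix the witness $\rep$, a king 1-type $\pi$, and an index $i$. By Definition~\ref{def:struct_realize_gbv} together with Definition~\ref{def:realizingdstructure}, we have
\[
\agbv(\pi,\basetext,i)=\sum_{v}\rep(v)(\basetext,i),
\]
where the sum ranges over all vertices of 1-type $\pi$, concrete and phantom alike. Phantom vertices receive homogeneously valid $\lbv$s, so they contribute $0$ to this sum. Concrete vertices receive valid $\lbv$s, so each contributes either $0$ or $1$. Therefore the sum equals the number of vertices $v$ of 1-type $\pi$ with $\rep(v)(\basetext,i)=1$; no phantom vertex can satisfy this equality. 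Combining with Step 1, this number is $\frb(\pi,\basetext,i)$.

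The only point that needs care is including phantoms in the realization sum. Definition~\ref{def:realizingdstructure} extends the witness to phantom vertices, so the slice is a sum over all vertices of that 1-type. Homogeneous validity then ensures phantoms leave the base entries unchanged. The rest is a direct computation.
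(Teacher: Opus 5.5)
Your proof is correct and follows essentially the paper's argument. The king condition forces every $\frp_j$ to have zero base entries on $\pi$, so $\agbv$ and $\frb$ agree there; Lemma~\ref{lem:num_concreate} and the $0/1$ base entries of valid $\lbv$s then give both claims. Your explicit handling of phantom vertices, which contribute zero because their $\lbv$s are homogeneously valid, is slightly more careful than the paper, which simply asserts that every witness value is a valid $\lbv$.
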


\begin{proof}
    By the definition of a king 1-type, for every
    $1 \le j \le \numouterperiods$, we have $\projectionbasepi{\frp_j}{\pi}=0$.
    By Lemma~\ref{lem:num_concreate}, the number of vertices of 1-type $\pi$ in $\cG$ is
    \begin{equation*}
        \projectionbasepi{\agbv}{\pi}
        \ =\ 
        \projectionbasepi{\frb}{\pi}
        +
        \sum_{j \in \intsinterval{\numouterperiods}} a_j\projectionbasepi{\frp_j}{\pi}
        \ =\ 
        \projectionbasepi{\frb}{\pi}.
    \end{equation*}

    For the second claim, since $\rep$ is a realization witness for $\agbv$, we have
    \begin{equation*}
        \agbv(\pi,\basetext,i)
        \ =\ 
        \sum_{\substack{v \in V\\
                        \onetypeof(v)=\pi}}
            \rep(v)(\basetext, i).
    \end{equation*}
    Since $\pi$ is a king 1-type, for every $1 \le j \le \numouterperiods$, $\frp_j(\pi,\basetext,i)=0$.
    Hence $\agbv(\pi,\basetext,i) = \frb(\pi,\basetext,i)$.
    Moreover, since $\rep(v)$ is valid $\lbv$ for every vertex $v$, each entry $\rep(v)(\basetext,i)$ is either $0$ or $1$.
    Therefore, $\agbv(\pi,\basetext,i)$ is precisely the number of vertices $v$ of 1-type $\pi$ satisfying
    $\rep(v)(\basetext,i) = 1$.
    Consequently, there are exactly $\frb(\pi,\basetext,i)$ such vertices.
\end{proof}

\begin{definition}[Base Behavior of an $\lbv$] \label{def:basebehavior}
Given $\lbv$ $\albv$ of 1-type $\pi$, its \emph{base behavior} is 
$\sum_{j \in \intsinterval{\pi}} \albv(\basetext, j) \cdot \bhbase{\pi}{j}$.
\end{definition}
We note a consequence of the lemma, which we will use throughout the subsequent arguments.
\begin{corollary} \label{cor:kingonetypesindependentofdstructure}
    For every D-structure realizing a $\gbv$ $\agbv \in \cL(\frb;\frP)$,
    the multiset of 1-types and the corresponding base behaviors  of its concrete vertices with king 1-types is the same. 
    In particular, this multiset is determined solely by $\frb$ and $\frP$, independently of the choice of D-structure and realization witness.
\end{corollary}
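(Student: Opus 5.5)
The corollary follows from Lemma~\ref{lem:kernel}, read through the definition of base behavior (Definition~\ref{def:basebehavior}). Fix a D-structure $\cG$ realizing some $\agbv\in\cL(\frb;\frP)$ with realization witness $\rep$. By condition K1-style reasoning, which here is just the definition of realization (Definition~\ref{def:realizingdstructure}), every concrete vertex $v$ is assigned a valid $\lbv$. So there is a unique base index $i$ with $\rep(v)(\basetext,i)=1$, and all other base entries are $0$. Hence the base behavior of $\rep(v)$ is exactly $\bhbase{\pi}{i}$, where $\pi=\onetypeof(v)$. The multiset in question is therefore the multiset of pairs $(\pi,\bhbase{\pi}{i})$, one pair for each concrete vertex $v$ of king 1-type $\pi$, where $i$ is the base index of $\rep(v)$.

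The next step is to count. For each king 1-type $\pi$ and each $1\le i\le\numbhbases{\pi}$, the second claim of Lemma~\ref{lem:kernel} says that exactly $\frb(\pi,\basetext,i)$ vertices $v$ of 1-type $\pi$ satisfy $\rep(v)(\basetext,i)=1$. Phantom vertices carry homogeneously valid $\lbv$s and so contribute nothing to base entries. These vertices are therefore exactly the concrete $\pi$-vertices with base index $i$. By the first claim of Lemma~\ref{lem:kernel}, these classes, as $i$ varies, partition all $\projectionbasepi{\frb}{\pi}$ concrete $\pi$-vertices. It follows that the multiplicity of the pair $(\pi,\bhbase{\pi}{i})$ in the multiset equals $\frb(\pi,\basetext,i)$, summed over the indices $i$ giving the same base vector if the representation repeats base vectors. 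This quantity depends only on $\frb$, and on $\frP$ through which 1-types are kings. It does not depend on $\cG$, on $\agbv$ (that is, on the coefficients $a_j$), or on $\rep$.

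The only point needing care is the phrase ``corresponding base behaviors''. The base behavior is attached to the witness-assigned $\lbv$, not intrinsically to the vertex. By Remark~\ref{rem:nonuniquelbv}, different witnesses may assign different base indices to the same vertex. So the claim should be read as saying that the multiset of (1-type, base behavior) pairs, taken over kings and computed via any witness, is always the same. The counting argument above gives exactly this statement.
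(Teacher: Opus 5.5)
Your proposal is correct and follows the paper's route: the corollary is read off directly from the second claim of Lemma~\ref{lem:kernel}, which fixes, for each king 1-type $\pi$ and base index $i$, the number $\frb(\pi,\basetext,i)$ of concrete $\pi$-vertices whose witness-assigned $\lbv$ has base index $i$. Your extra care (phantom vertices contribute no base entries, base vectors may repeat, and the base behavior is tied to the witness) only refines that same counting argument.
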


\begin{definition}[$k$, $\pibar_1,\ldots,\pibar_k$,
    and $\bfbbar_1,\ldots,\bfbbar_k$] \label{def:kpibarbfpibar}
 Based on this observation, for convenience,
    we can define $k := \sum_{\pi \text{ a king 1-type}} \projectionbasepi{\frb}{\pi}$ be the number of concrete vertices with king 1-types in some (equivalently, in any, by Corollary \ref{cor:kingonetypesindependentofdstructure}) D-structure realizing a $\gbv$ $\agbv \in \cL(\frb;\frP)$. 
    We fix an enumeration $\pibar_1,\ldots,\pibar_k$ of these 1-types
    and $\bfbbar_1,\ldots,\bfbbar_k$ their corresponding base behaviors of $\rep(v)$.
\end{definition}
To obtain a Presburger description of a village $\cV$, we describe its shadows and peasants separately.

To describe shadows,
recall that if a village and a kernel $\cK$ are compatible,
the number of shadows is fixed and their 1-types depend only on the kernel $\cK$.
Thus, we can enumerate the shadows and their 1-types and encode their $\lbv$s by variables.

Recall that the number of possible $\lbv$s is unbounded.
However, for a village to be compatible with a fixed kernel
the number of shadows and the 1-types of shadows are fixed.
Thus, the number of possible behaviors of vertices in the village toward the shadows is bounded.
This motivates the notion of king-reaching $\lbv$ in Definition~\ref{def:king-restricted-lbvwellbehaved}.

\begin{definition}[King-reaching]
A behavior $\abehave$ is \emph{king-reaching} with respect to $\frb$ and $\frP$
if there exist a 2-type $\eta$ and a king 1-type
$\pi \in \onetypeking{\frb}{\frP}$
such that $\abehave(\eta, \pi) > 0$.
\end{definition}

\begin{definition}[King-restricted $\lbv$, well-behaved]\label{def:king-restricted-lbvwellbehaved}
Let $\pi$ be a peasant 1-type in $\onetypepeasant{\frb}{\frP}$.
A \emph{king-restricted $\lbv$ ($\krlbv$)} of $\pi$ with respect to $\frb$ and $\frP$ 
is a pair
$\akrlbv := \tuple{\albv,\tuple{\eta_1, \ldots, \eta_k}}$,
where $\albv$ is a valid $\lbv$ and each $\eta_i$ is an ordinary (non-null) 2-type or the null $2$-type.

We use the same notation $\eval{\akrlbv}$ to indicate the behavior generated by $\albv$ as in Definition~\ref{def:lbv}.
For $x \in \set{\basetext, \periodtext}$ and $1 \le i \le \numbhxs{\pi}$,
we write $\akrlbv(x, i)$ for $\albv(x, i)$.
For $1 \le j \le k$, we write $\twotypetoking{j}(\akrlbv)$ for $\eta_j$.

We say that $\akrlbv$ is \emph{well-behaved} if it satisfies the following conditions.
\begin{compactitem}
\item \itemtitle{King-reaching Periods}
For every $1 \le j \le \numbhperiods{\pi}$,
if $\albv(\periodtext, j) > 0$,
then the behavior $\bhperiod{\pi}{j}$ is king-reaching,

\item \itemtitle{Cardinality linking}
For every 2-type $\eta$ and every king 1-type $\pi'$,
$\eval{\akrlbv}(\eta, \pi')$
equals the number of indices
$1 \le j \le k$
satisfying
$\twotypetoking{j}(\akrlbv) = \eta$ and
$\pibar_j = \pi'$.
\end{compactitem}

\end{definition}

Informally, $\akrlbv$ is the king-restricted behavior of a peasant if
the peasant has behavior $\albv$ and is connected to a shadow of 1-type $\pibar_j$ by 
an edge of 2-type $\twotypetoking{j}(\akrlbv)$ for every $1 \le j \le k$, or they are not connected to
such a shadow at all in case 2-type $\twotypetoking{j}(\akrlbv)$ is the null $2$-type.

Intuitively, a peasant in a village is king-reaching  if
it is connected to a shadow by an edge. 
For example, in Figure~\ref{fig:kernel_village}
vertices $12$ and $13$ are king-reaching, while 
vertices $14$ and $15$ are not.
Suppose $\pibar_1, \ldots, \pibar_k$ are the 1-types of the shadows.

\begin{lemma}
    Let $\pi$ be a peasant $1$-type.
    We can compute, from $\pi$, $\pibar_1 \ldots \pibar_k$, and the bases, a bound on the number of well-behaved $\krlbv$s of $\pi$ with respect to $\frb$ and $\frP$. 
\end{lemma}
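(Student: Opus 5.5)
We bound the number of well-behaved $\krlbv$s by bounding each component of a pair $\akrlbv = \tuple{\albv,\tuple{\eta_1,\ldots,\eta_k}}$. We treat the $2$-type tuple first and then the coordinates of $\albv$.

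For the $2$-type tuple, each $\eta_j$ is either the null $2$-type or a non-null $2$-type over the fixed vocabulary. So there are at most $(N+1)^k$ choices, where $N$ is the number of $2$-types. The base part of $\albv$ is also bounded: $\albv$ is valid, so its base component is one of $\numbhbases{\pi}$ unit vectors.

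The main step bounds the period coordinates. Fix $j$ with $\albv(\periodtext,j)>0$. By the \emph{King-reaching Periods} condition, there exist a $2$-type $\eta$ and a king $1$-type $\pi'$ with $\bhperiod{\pi}{j}(\eta,\pi')\ge 1$. All entries of the base and period vectors are non-negative. Hence
\[
\albv(\periodtext,j)\ \le\ \albv(\periodtext,j)\cdot\bhperiod{\pi}{j}(\eta,\pi')\ \le\ \eval{\akrlbv}(\eta,\pi').
\]
By \emph{Cardinality linking}, $\eval{\akrlbv}(\eta,\pi')$ equals the number of indices $j'\le k$ with $\eta_{j'}=\eta$ and $\pibar_{j'}=\pi'$. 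This number is at most $k$. Every period coordinate with $\albv(\periodtext,j)>0$ is therefore at most $k$, and the remaining period coordinates are $0$, which also satisfies the bound.

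Putting the pieces together, the number of well-behaved $\krlbv$s of $\pi$ is at most
\[
\numbhbases{\pi}\cdot (k+1)^{\numbhperiods{\pi}}\cdot (N+1)^k .
\]
This quantity is computable from $\pi$, $\pibar_1,\ldots,\pibar_k$ (which give $k$), and the effectively given bases of Definition~\ref{def:localbehbasis}. One can also enumerate the well-behaved $\krlbv$s explicitly. Membership of a candidate is decidable by evaluating $\eval{\albv}$ and checking the two conditions, so the bound is effective.

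The only step needing care is the period-coordinate bound. It is the one place where king-reaching is used, and without that condition a non-king-reaching period could be added arbitrarily often. The rest is routine counting.
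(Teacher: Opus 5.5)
Your proof is correct and follows the paper's proof. Both argue that the 2-type tuple and the base index have finitely many choices, and that each period coordinate satisfies $\albv(\periodtext,j)\le \eval{\akrlbv}(\eta,\pi')\le k$ by King-reaching Periods together with Cardinality linking. You also give the explicit count $\numbhbases{\pi}\cdot (k+1)^{\numbhperiods{\pi}}\cdot (N+1)^k$, which the paper leaves implicit.
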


\begin{proof}
    Let $\akrlbv = \tuple{
        \albv,
        \tuple{
            \eta_1, \ldots, \eta_k
            }
        }$
    be a well-behaved $\krlbv$ of $\pi$ with respect to $\frb$ and $\frP$. 
    Each $\eta_1, \ldots, \eta_k$ can take only finitely many values, since the set of
    $2$-types, supplemented with the null $2$-type, is fixed.


    We next bound the size of $\albv$.
    Recall that well-behavedness implies that $\albv$ is valid (Definition \ref{def:lbv}),
    and thus there exists exactly one $i$ such that $\albv(\basetext, i) = 1$ and the rest are set to $0$. Thus, there are only finitely many options for the base component of $\albv$.
    It remains to bound the entries of $\albv$.
    By (King-reaching Periods),
    if the behavior $\bhperiod{\pi}{j}$ is not king-reaching,
    we have $\albv(\periodtext, j) = 0$. Thus we only need to compute a bound in the king-reaching case.
    Suppose that the behavior $\bhperiod{\pi}{j}$ is king-reaching and
    $\bhperiod{\pi}{j}(\eta, \pi') > 0$ for some 2-type $\eta$ and king 1-type $\pi'$.
    By the definition of behavior generated by $\akrlbv$,
    $\albv(\pi, j) \le \albv(\pi, j) \cdot \bhperiod{\pi}{j}(\eta, \pi') \le \eval{\akrlbv}(\eta, \pi')$.
    By (Cardinality linking), we have that
    $\eval{\akrlbv}(\eta, \pi')$ is the number of indices
    $1 \le j \le k$
    satisfying
    $\twotypetoking{j}(\akrlbv) = \eta$ and
    $\pibar_j = \pi'$,
    which is bounded by $k$.
 
    
\end{proof}

\begin{definition}[$\cS^\pi_{\frb, \frP}$] \label{def:krlbvs}
    Let $\cS^\pi_{\frb, \frP}$
    denote the set of well-behaved $\krlbv$s of $\pi$ w.r.t. $\frb$ and $\frP$. 
\end{definition}

The idea is that whenever we have a village satisfying a $\gbv$ compatible with a fixed kernel, we will
have a D-structure where each concrete element satisfies a well-behaved $\krlbv$, and hence the number of $\krlbv$s of peasants is bounded. This will allow us to describe the D-structure by specifying
the quantity of each $\krlbv$. From such a D-structure we can create a village for the kernel.

We are now ready to introduce the village system,
which captures the conditions for the existence of a village with a specified list of shadows and their 1-types.

\begin{definition}[Village system, Weak village system, Homogeneous village system]\label{def:village_system}
    The village system with respect to $\frb$ and $\frP$ is an existential Presburger formula $\villagesystem_{\frb, \frP}$ defined as follows.
    \begin{compactitem}
        \item \itemtitle{Variables}
            The variables of $\villagesystem_{\frb, \frP}$ are:
            \begin{compactitem}
                \item
                    $x_i$ for $1 \le i \le \numouterperiods$;
                \item
                    $y_{i, j}$ for $1 \le i \le k$ and
                    $1 \le j \le \numbhperiods{\pibar_i}$;
                \item
                    $z^\pi_{\akrlbv}$ for every peasant 1-type $\pi$ (relative to $\frb$ and $\frP$, which are fixed
                    throughout Section \ref{subsec:realizable-gbv}), 
                    and every $\akrlbv \in \cS^\pi_{\frb, \frP}$.
            \end{compactitem}

            Intuitively,
            \begin{compactitem}
                \item
                    $x_i$ encodes the number of occurrences of the $\gbv$ period $\frp_i$ of the village;

                \item
                    for $1 \le i \le k$, $y_{i, 1}, \ldots, y_{i, \numbhperiods{\pibar_i}}$
                    encodes the intended $\lbv$ of the $i^{th}$ shadow in the village, which is homogeneously valid (in the sense of Definition \ref{def:lbv})
        
                \item
                    $z^\pi_{\akrlbv}$ encodes the number of peasants with 1-type $\pi$ and $\krlbv$ $\akrlbv$ in the village.
            \end{compactitem}

            Thus $ \sum\limits_{i \in \intsinterval{\numouterperiods}} x_i\frp_i$ represents the intended $\gbv$ of the village,
            and for $1 \le i \le k$,
            $\sum\limits_{j \in \intsinterval{\numbhperiods{\pibar_i}}} y_{i, j}\bhperiod{\pibar_i}{j}$
            represents the intended behavior of the $i^{th}$ shadow in the village.
    \end{compactitem}

    $\villagesystem_{\frb, \frP}$ consists of three types of constraints.
    The first type ensures that the behavior vector bases and periods of the vertices in the intended village agree with those specified by the intended $\gbv$.
    \begin{compactitem}
        \item \itemtitle{Counting shadows constraint}
            For every king 1-type $\pi$ and
            $1 \le j \le \numbhperiods{\pi}$,
            $\villagesystem_{\frb, \frP}$ contains the constraints
                $\frb(\pi, \periodtext, j) + 
                \sum\limits_{i \in \intsinterval{\numouterperiods}}
                x_i\frp_i(\pi, \periodtext, j)
                \ =\ 
                \sum\limits_{i \in I_\pi} y_{i, j}$,\\
            where $I_\pi \subseteq \intsinterval{k}$, for a king $1$-type $\pi$
            denotes the set of indices $i$ such that $\pibar_i = \pi$.

            The constraint guarantees that the number of behavior vector periods contributed by shadows is exactly the same as the number required by the intended  $\gbv$.

        \item \itemtitle{Counting peasant bases constraint}
            For each peasant 1-type $\pi$ and
            $1 \le j \le \numbhbases{\pi}$,
            $\villagesystem_{\frb, \frP}$ contains the constraint\\
                $\frb(\pi, \basetext, j) +
                \sum\limits_{i \in \intsinterval{\numouterperiods}}
                x_i\frp_i(\pi, \basetext, j)
                \ =\ 
                \sum\limits_{\akrlbv \in \cS^\pi_{\frb, \frP}}
                \akrlbv(\basetext, j) \cdot z^\pi_{\akrlbv}$.

            The constraint guarantees that the number of the behavior-vector base components contributed by peasants 
            is exactly the same as the number required by the intended  $\gbv$. 

        \item \itemtitle{Counting peasant periods constraint}
            For every peasant 1-type $\pi$ and
            $1 \le j \le \numbhperiods{\pi}$,
            if $\bhperiod{\pi}{j}$ is king-reaching,
            then
            $\villagesystem_{\frb, \frP}$ contains the constraints\\
                $\frb(\pi, \periodtext, j) + 
                \sum\limits_{i \in \intsinterval{\numouterperiods}}
                x_i\frp_i(\pi, \periodtext, j)
                \ =\ 
                \sum\limits_{\akrlbv \in \cS^\pi_{\frb, \frP}}
                \akrlbv(\periodtext, j) \cdot z^\pi_{\akrlbv}$.

            The constraint guarantees that the number of behavior vector periods contributed by peasants is exactly the same as the number required by the intended $\gbv$.
    \end{compactitem}

\medskip

    The second type of constraint 
is the {\bf (Shadow--peasant matching constraint)}:
            For $1 \le i \le k$,
            every 2-type $\eta$, and
            every peasant 1-type $\pi$,
            $\villagesystem_{\frb, \frP}$ contains the constraint\\
                $
                \bfbbar_i(\eta, \pi) + 
                \sum\limits_{j \in \intsinterval{\numbhperiods{\pibar_i}}}
                y_{i, j} \bhperiod{\pibar_i}{j}(\eta, \pi)
                \ =\ 
                \sum\limits_{\akrlbv \in \cS^{\pi}_{\frb, \frP}}
                \kappa_{\akrlbv, i, \eta} \cdot z^\pi_{\akrlbv}$,
            where $\kappa_{\akrlbv, i, \eta}$ is $1$ if $\twotypetoking{i}(\akrlbv) = \dual{\eta}$, otherwise, it is $0$.

            This constraint guarantees that the number of edges incident to the $i^{th}$ shadow equals
            the total number of edges incident to that shadow as specified by the $\krlbv$s.

    The last type of constraint enforces an additional structural property of a village.
    Recall from property (V1) of Definition \ref{def:villageshadowpeasant}  that a village contains no phantom vertices of peasant 1-type.
    If a structure realizes a $\gbv$ in which a peasant 1-type $\pi$ is incorporated periodically but not basely,
    then by Lemma~\ref{lem:propertiesincorporating},
    it contains a phantom vertex with 1-type $\pi$,
    which implies that the structure cannot be a village.
    Since $\frb$ and $\frP$ are concretifying,
    if the peasant 1-type $\pi$ is incorporated periodically in $\frP_i$,
    then $\pi$ is incorporated basely in some $\frP_j$.
    Therefore, the following condition implies
    that every peasant $1$-type of $\frb$ and $\frP$ is incorporated basely by the intended $\gbv$ $\agbv = \sum_{i \in \intsinterval{\numouterperiods}} x_i \frp_i$.

   {\bf (Absorption of phantom vertices with peasant 1-type constraint)}
            For every $1 \le i \le \numouterperiods$,
          $x_i > 0$.
          
\medskip

    We also define the \emph{homogeneous village system},
    denoted by $\villagesystem_{\frb,\frP}^{\homo}$,
    to be the system obtained by removing the constant terms from the first four constraints.
    
    The \emph{weak village system},
    denoted by $\villagesystem_{\frb,\frP}^{-}$,
    is obtained by removing the  {\bf Absorption of phantom vertices with peasant 1-type constraint}.
\end{definition}

To capture the compatibility between kernel and village,
we need additional conditions.

\begin{definition}[Kernel-friendly solution of the village system] \label{def:village_friendly}
    We say that a solution
       $ \set{
            x_i \gets a_i,
            \quad
            y_{i, j} \gets b_{i, j},
            \quad
            z^\pi_{\akrlbv} \gets c^\pi_{\akrlbv}
        }$
    of the village system $\villagesystem_{\frb, \frP}$ is \emph{$\cK$-friendly}
    if it satisfies the following conditions.
    \begin{compactitem}
        \item \itemtitle{Absorption of phantom nobles in $\cK$ constraint}
            For every phantom noble in $\cK$ realizing the 1-type $\pi$,
            there exists $1 \le i \le \numouterperiods$ such that
            $a_i > 0$ and $\pi$ is incorporated basely by $\frp_i$,
            where $\frp_i$ is the $i^{th}$ element in $\frP$.

        \item \itemtitle{Absorption of multi-edges in $\cK$ constraint}
            For  vertices $v_1$ and $v_2$ in $\cK$,
            if there exists edges $e_1$ and $e_2$ from $v_1$ to $v_2$, 
            then there exists $1 \le i \le \numouterperiods$ such that
            $a_i > 0$ and $\frp_i$ incorporates $\twotypeof(e_1)$ or $\twotypeof(e_2)$
            between $\onetypeof(v_1)$ and $\onetypeof(v_2)$.

        \item \itemtitle{Absorption of self-loops in $\cK$ constraint}
            For every self-loop $e$ on vertex $v$ in $\cK$,
            there exists $1 \le i \le \numouterperiods$ such that
            $a_i > 0$ and $\frp_i$ incorporates $\twotypeof(e)$
            between $\onetypeof(v)$
            and $\onetypeof(v)$.
    \end{compactitem}
\end{definition}

\subsection{From a solution to a village}
\begin{theorem}[Solvability-to-village-existence] \label{thm:solvetovillage}
Let $\frb$ be a $\gbv$, and $\frP$ a finite set of $\gbv$s
such that $\frb$ and $\frP$ are strongly concretifying and $\cL(\frb ; \frP) \subseteq \sol{\overapprox{\anfgptwosentence \land \agcc}(\frx)}$.
Let $\cK$ be a kernel of $\frb$ and $\frP$. 
If the homogeneous village system $\villagesystem_{\frb, \frP}^\homo$ has a $\cK$-friendly solution, then
there exists a village compatible with $\cK$.
\end{theorem}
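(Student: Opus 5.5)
Given a $\cK$-friendly solution $\set{x_i \gets a_i, y_{i,j} \gets b_{i,j}, z^\pi_{\akrlbv} \gets c^\pi_{\akrlbv}}$ of $\villagesystem_{\frb,\frP}^\homo$, I will build the village directly from the counts the solution prescribes. The intended $\gbv$ of the village is $\agbv_\cV := \sum_i a_i \frp_i \in \cL_\homo(\frP)$.

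\textbf{Vertices.} For each $1 \le i \le k$ I create a shadow $s_i$ of 1-type $\pibar_i$. Its witness $\lbv$ is the homogeneously valid vector with period entries $b_{i,j}$.

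For each peasant 1-type $\pi$ and each $\akrlbv \in \cS^\pi_{\frb,\frP}$ I create $c^\pi_{\akrlbv}$ concrete peasants of 1-type $\pi$, and assign each of them the valid $\lbv$ underlying $\akrlbv$ as its witness.

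With these assignments, the homogeneous counting constraints give the following. The shadow periods of king 1-types sum to the king-period slice of $\agbv_\cV$. The king base slices are zero, since the homogeneous version drops the constants and king 1-types are not incorporated basely by any $\frp_i$. The peasant base slice, and the peasant period slice for king-reaching periods, match $\agbv_\cV$ exactly.

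The delicate point is the non-king-reaching peasant periods $j$, for which no counting constraint exists. Every $\akrlbv$ in $\cS^\pi$ has zero in those entries. I will add the missing amounts $\sum_i a_i\frp_i(\pi,\periodtext,j)$ to the witness of an arbitrary peasant of type $\pi$. This is still a valid $\lbv$, since adding periods keeps it valid.

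Such a peasant exists whenever the amount is positive. The reason is that concretifying (Lemma~\ref{lem:sol_concretifying}) together with the absorption constraint $a_i > 0$ for all $i$ makes every peasant 1-type incorporated periodically also incorporated basely. The counting peasant bases constraint then yields a concrete peasant of that type. So the witness map is a correct realization witness for $\agbv_\cV$, provided each vertex's actual behavior matches the behavior its witness generates.

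\textbf{Edges.} I add edges in three groups.

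Shadow--peasant edges: for every peasant $u$ with $\krlbv$ $\akrlbv$ and every $j$ with $\twotypetoking{j}(\akrlbv)$ non-null, add one edge from $u$ to $s_j$ of 2-type $\twotypetoking{j}(\akrlbv)$. By (Cardinality linking), $u$'s behavior entries towards king types are then exactly right. By the shadow--peasant matching constraint, each shadow $s_i$ receives exactly $\bfbbar_i(\eta,\pi)+\sum_j b_{i,j}\bhperiod{\pibar_i}{j}(\eta,\pi)$ edges of each kind.

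This is where I expect the main obstacle. The matching constraint includes the base behavior $\bfbbar_i$, yet the shadow's witness $\lbv$ is homogeneous, so its generated behavior omits that base part. The correct target behavior of a shadow is therefore its period behavior only, with the base behavior left to the king it is later merged with. I will have to reconcile this, either by reading the homogeneous system as also dropping $\bfbbar_i$, or by reinterpreting the matching constraint so that it does.

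Shadow--shadow edges: none are needed. Periods of $\frP$ incorporate no 2-types between king 1-types, by strong concretifying, so the required count between shadows is zero. This gives V2 for shadows. Since each peasant sends at most one edge to each shadow, no multi-edge involves a shadow.

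Peasant--peasant edges: for each pair of peasant 1-types $\pi_1,\pi_2$ and each 2-type $\eta$, I invoke the Rewiring lemma (Lemma~\ref{lem:rewiring}) with $f_1(u)$ equal to the $(\eta,\pi_2)$-entry of the behavior generated by $u$'s witness. Its hypothesis, the local match equation, follows from $\match$, which holds for $\frb+\agbv_\cV$ and for $\frb$. I expect this step to need care because the contributions to and from king types cancel: matching on peasant--peasant totals follows by subtracting the shadow--peasant counts already matched.

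\textbf{Remaining conditions and compatibility.} Rewiring may create self-loops, and V2 forbids them. I will remove them by swaps (Lemma~\ref{lem:operation}) with another $\eta$-edge between $\pi$-peasants, doubling the structure first if necessary: the disjoint union of two villages is again a village in $\cL_\homo(\frP)$, but the shadows must stay at $k$. So I will instead double only the peasant part together with the $b_{i,j}$ values, which remains a solution because the system is homogeneous.

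Compatibility then follows directly. V1 holds by construction, and V3 holds via $s_i \mapsto$ the $i$th king. V4, V5 and V6 follow from $\cK$-friendliness: $a_i>0$ and $\frp_i$ incorporates the relevant type basely, or the relevant 2-type. By Lemma~\ref{lem:propertiesincorporating}, the required peasant or edge then exists in $\cV$.
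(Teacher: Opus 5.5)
Your proposal is correct and follows essentially the paper's construction (Lemmas~\ref{lem:village_exists} and~\ref{lem:village_friendly}): peasants from the $z$-counts, shadow edges read off the $\krlbv$s, peasant--peasant edges via the rewiring lemma using $\match$ for $\sum_i a_i\frp_i$ (which gives the local match equation directly, with no cancellation of king contributions needed), self-loops removed by doubling the peasant part over shared shadows and swapping, and compatibility from $\cK$-friendliness plus Lemma~\ref{lem:propertiesincorporating}. The only variation is that you fold the non-king-reaching periods straight into a peasant's witness instead of creating phantom $(\pi,i)$ prototype-vertices and merging them afterwards, and the obstacle you flag is not one: $\villagesystem^\homo_{\frb,\frP}$ is defined by deleting the constant terms of all four constraints, including $\bfbbar_i(\eta,\pi)$ in the shadow--peasant matching constraint, so each shadow's edge counts agree exactly with the behavior generated by its homogeneous witness.
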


We start with a result implying the existence of a village, without any compatibility condition with regard to a kernel.

\begin{lemma} \label{lem:village_exists}
    If the homogeneous village system $\villagesystem_{\frb, \frP}^\homo$ has a solution,
    then there exists a village of $\frb$ and $\frP$ with $k$ shadows $v^\shadowtext_1, \ldots, v^\shadowtext_k$
    with $\onetypeof(v^\shadowtext_i) = \pibar_i$ for $1 \le i \le k$.
\end{lemma}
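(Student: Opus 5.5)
Given a solution $\{x_i \gets a_i,\ y_{i,j}\gets b_{i,j},\ z^\pi_{\akrlbv}\gets c^\pi_{\akrlbv}\}$ of $\villagesystem^\homo_{\frb,\frP}$, I will build the village directly as a D-structure. Its intended $\gbv$ is $\agbv := \sum_{i} a_i\frp_i \in \cL_\homo(\frP)$. For each $1\le i\le k$ I create a phantom shadow $v^\shadowtext_i$ with $\onetypeof(v^\shadowtext_i)=\pibar_i$ and intended homogeneously valid $\lbv$ given by the periods $b_{i,1},\ldots,b_{i,\numbhperiods{\pibar_i}}$ (no base part). For each peasant 1-type $\pi$ and each $\akrlbv\in\cS^\pi_{\frb,\frP}$ I create $c^\pi_{\akrlbv}$ concrete peasants of type $\pi$, each assigned the valid $\lbv$ of $\akrlbv$. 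The homogeneous counting constraints say exactly that these witness $\lbv$s sum, slice by slice, to $\agbv$. For king types this uses the homogeneous counting-shadows constraint. For peasant bases and for king-reaching peasant periods it uses the peasant counting constraints. For non-king-reaching peasant periods the constraint is absent, so I distribute the required multiplicity $\sum_i a_i\frp_i(\pi,\periodtext,j)$ onto the peasants of type $\pi$. This keeps validity, but it breaks the King-reaching Periods property of the $\krlbv$, so I only use that property for bounding. Since non-king-reaching periods have zero entries toward king types, the Cardinality linking counts are unchanged. If there is no peasant of type $\pi$ but such a period is required, the base-count constraint together with concretifying (no unsupported phantoms) should supply one. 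This is the delicate point, and I would check it via the structural property inherited from $\cL(\frb;\frP)\subseteq\sol{\overapprox{}}$.

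Next I add the edges. First, the shadow–peasant edges: for each peasant $u$ with $\krlbv$ $\akrlbv$ and each $i$ with $\twotypetoking{i}(\akrlbv)=\eta\neq$ null, I add one $\eta$-edge from $u$ to $v^\shadowtext_i$. Each peasant then has exactly one edge to each shadow, so no multi-edges involve shadows, and there are no shadow–shadow edges, as required by V2. Cardinality linking makes each peasant's king-directed entries exactly the intended ones. The shadow–peasant matching constraint makes each shadow's $(\eta,\pi)$-entries for peasant $\pi$ equal its intended ones; this requires $\bfbbar_i$ to be removed in the homogeneous version, which matches the shadow having no base part. The shadows' intended entries toward king types must be zero. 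I will argue this from strong concretifying: periods in $\frP$ incorporate no 2-type between king types, and hence, since the counting-shadows constraint expresses the shadow periods via $\frP$, the periods actually used by shadows have zero king–king entries. This is the step I expect to be the main obstacle, because it needs a careful argument that every $y_{i,j}>0$ comes from a period incorporated in some $\frp_t$ with $a_t>0$.

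Finally, the peasant–peasant edges. For each pair of peasant types $\pi_1,\pi_2$ and each 2-type $\eta$, I apply the Rewiring lemma (Lemma~\ref{lem:rewiring}), with $f_1$ and $f_2$ the intended $(\eta,\pi_2)$- and $(\dual\eta,\pi_1)$-entries of the peasants. The local match equation is the $\match$ constraint for $\agbv$ restricted to peasants. It holds because $\match(\frb+\agbv)$ and $\match(\frb+2\agbv)$ both hold, so by subtraction $\match(\agbv)$ holds, and the shadows contribute nothing to peasant–peasant counts. Rewiring may create self-loops when $\pi_1=\pi_2$. I remove these with the swap operation (Lemma~\ref{lem:operation}), either against another edge of the same 2-type between distinct vertices or by doubling the solution (homogeneity allows scaling by $2$) so that two copies exist. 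Asymmetry is preserved throughout, which gives V1, V2 and the required shadow types.
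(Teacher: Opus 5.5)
Your construction matches the paper's in all essentials: shadows carrying the periods $b_{i,j}$, $c^\pi_{\akrlbv}$ concrete peasants per $\krlbv$, shadow--peasant edges read off from $\twotypetoking{i}(\akrlbv)$, peasant--peasant edges via Lemma~\ref{lem:rewiring} using $\match(\agbv)$, and self-loops removed by swapping across two copies glued along the shadows. Folding the non-king-reaching periods directly into the concrete peasants' $\lbv$s is equivalent to the paper's phantom $(\pi,i)$ prototype-vertices, which are merged into concrete peasants afterwards.

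\textbf{The gap} is at the point you call delicate. Both routes need, for each peasant 1-type $\pi$ with $\agbv(\pi,\periodtext,j)>0$ for a non-king-reaching $j$, a concrete peasant of type $\pi$. In other words, $\agbv$ must incorporate $\pi$ basely. Concretifying and $\structcon$ cannot give you this. They are properties of the representation $\frb,\frP$ and say nothing about which $a_t$ are positive. Suppose $a_{t'}=0$ for every $\frp_{t'}$ that incorporates $\pi$ basely, while $a_{t''}>0$ for some $\frp_{t''}$ that incorporates $\pi$ only periodically. Then:
\begin{itemize}
\item all $\pi$-base entries of $\agbv$ vanish;
\item the counting peasant bases constraint forces $c^\pi_{\akrlbv}=0$ for every $\akrlbv$;
\item by Lemma~\ref{lem:propertiesincorporating}, every D-structure realizing $\agbv$ (or a multiple of it) has a phantom of peasant type $\pi$, violating V1.
\end{itemize}

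\textbf{The fix} is the Absorption of phantom vertices with peasant 1-type constraint, $x_t>0$ for all $t$. It is part of $\villagesystem^\homo_{\frb,\frP}$, and your proposal never invokes it. Since $\pi$ is a peasant type, some $\frp_t$ incorporates it basely. Then $a_t>0$ makes some $\agbv(\pi,\basetext,i)$ positive, and the counting peasant bases constraint yields some $c^\pi_{\akrlbv}>0$. This is exactly how the paper secures a target for each merge.

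By contrast, the step you expect to be the main obstacle follows in one line. The homogeneous counting-shadows constraint reads $\sum_{i'\in I_{\pibar_i}} b_{i',j}=\sum_t a_t\frp_t(\pibar_i,\periodtext,j)$. So $b_{i,j}>0$ forces $\frp_t(\pibar_i,\periodtext,j)>0$ for some $t$. Strong concretifying (no 2-type incorporated between king 1-types in $\frP$) then gives $\bhperiod{\pibar_i}{j}(\eta,\pi')=0$ for every 2-type $\eta$ and every king 1-type $\pi'$. Keep this argument. The paper's own verification of the shadows silently extends the shadow--peasant matching constraint, which is stated only for peasant 1-types, to all 1-types, and your observation is exactly what justifies that step.
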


\begin{proof}
    Suppose that the homogeneous village system $\villagesystem_{\frb, \frP}^\homo$ has a solution  \\
    $\set{
        x_i \gets a_i,
        \quad
        y_{i,j} \gets b_{i,j},
        \quad
        z^\pi_s \gets c^\pi_s
    }$.
    Let $\agbv := \sum\limits_{i \in \intsinterval{\numouterperiods}} a_i \frp_i$.

    \begin{definition}[$v^\shadowtext_i$,$(\pi, \akrlbv)$ prototype-vertices. $(\pi,i)$ prototype-vertices]  \label{def:vertexflavors}
    We start with an empty D-structure $\cG_0$, and add the following vertices.
    \begin{compactitem}
        \item \itemtitle{$v^\shadowtext_i$}
            Add $k$ fresh phantom vertices $v^\shadowtext_1, \ldots, v^\shadowtext_k$ as the shadows,
            and set the 1-type of $v^\shadowtext_i$ to $\pibar_i$.

        \item \itemtitle{$(\pi, \akrlbv)$ prototype-vertices}
            For every peasant 1-type $\pi$ and every king-restricted $\lbv$
            $\akrlbv \in \cS^\pi_{\frb, \frP}$,
            add $c^\pi_\akrlbv$ fresh concrete vertices with 1-type $\pi$.
            We call these the \emph{$(\pi, \akrlbv)$ prototype-vertices}.

        \item \itemtitle{$(\pi, i)$ prototype-vertices}
            Recall the behavior period  vectors $\bhperiod{\pi}{i}$ from Table \ref{table:notation}. For every peasant 1-type $\pi$ and every
            $1 \le i \le \numbhperiods{\pi}$,
            if the behavior $\bhperiod{\pi}{i}$ is not king-reaching, then add
            $\agbv(\pi, \periodtext, i)$ fresh phantom vertices with 1-type $\pi$.
            We call these the \emph{$(\pi, i)$ prototype-vertices}.
    \end{compactitem}
 \end{definition}

    The intuition is that the vertices $v^\shadowtext_i$ will serve as shadows in the village,
    while the remainder of the construction will ensure that $(\pi, \akrlbv)$ prototype-vertices satisfy $\akrlbv$
    and that $(\pi, i)$ prototype-vertices has behavior $\bhperiod{\pi}{i}$.
    We add edges to $\cG_0$ with the following procedure:
    \begin{compactitem}
        \item \itemtitle{Edges between $v^\shadowtext_i$ and $(\pi, \akrlbv)$ prototype-vertices}
            For every $1 \le i \le k$,
            for every $(\pi, \akrlbv)$ prototype-vertex $u$,
            if $\twotypetoking{i}(\akrlbv)$ is not the null type,
            then add an edge from $u$ to $v^\shadowtext_i$ with 2-type $\eta$.

        \item \itemtitle{Edges within $(\pi, \akrlbv)$ prototype-vertices and $(\pi, i)$ prototype-vertices}
            For every pair of peasant 1-types $\pi_1, \pi_2$ and every asymmetric 2-type $\eta \in \twotypescmpi{\pi_1}{\pi_2}$,
            let $V_1$ be the set of vertices with 1-type $\pi_1$
            and $V_2$ be the set of vertices with 1-type $\pi_2$.
            Let $f_1: V_1 \to \bbN$ and $f_2: V_2 \to \bbN$ be functions defined as
            \begin{equation*}
                \begin{aligned}
                    f_1(v)
                    \ :=\ &
                    \begin{cases}
                        \eval{\akrlbv}(\eta, \pi_2), &\text{if $v$ is a $(\pi_1, \akrlbv)$ prototype-vertex} \\
                        \bhperiod{\pi_1}{i}(\eta, \pi_2), &\text{if $v$ is a $(\pi_1, i)$ prototype-vertex}
                    \end{cases} \\
                    f_2(v)
                    \ :=\ &
                    \begin{cases}
                        \eval{\akrlbv}(\dual{\eta}, \pi_1), &\text{if $v$ is a $(\pi_2, \akrlbv)$ prototype-vertex} \\
                        \bhperiod{\pi_2}{i}(\dual{\eta}, \pi_1), &\text{if $v$ is a $(\pi_2, i)$ prototype-vertex}
                    \end{cases}
                \end{aligned}
            \end{equation*}
            We will show that the conditions of the rewiring lemma, Lemma \ref{lem:rewiring}, are satisfied. 
            Recall that for every $\akrlbv \in \cS^{\pi_1}_{\frb, \frP}$,
            \begin{equation*}
                \begin{aligned}
                    \eval{\akrlbv}
                    \ =\ 
                    &\sum_{i \in \intsinterval{\numbhbases{\pi_1}}} \akrlbv(\basetext, i) \cdot \bhbase{\pi_1}{i} +
                    \sum_{i \in \intsinterval{\numbhperiods{\pi_1}}} \akrlbv(\periodtext, i) \cdot \bhperiod{\pi_1}{i}  \\
                    \ =\ 
                    &\sum_{i \in \intsinterval{\numbhbases{\pi_1}}} \akrlbv(\basetext, i) \cdot \bhbase{\pi_1}{i} +
                    \sum_{i \in \intsinterval{\numbhperiods{\pi_1}}, \text{$\bhperiod{\pi_1}{i}$ is king-reaching}} \akrlbv(\periodtext, i) \cdot \bhperiod{\pi_1}{i}.
                \end{aligned}
            \end{equation*}

            Since $\akrlbv$ is well-behaved (Definition \ref{def:king-restricted-lbvwellbehaved})
            the (King-reaching Periods)  condition guarantees that $\akrlbv(\periodtext, i) = 0$ if $\bhperiod{\pi_1}{i}$ is not king-reaching.
            For $1 \le i \le \numbhbases{\pi_1}$,
            the  {\bf Counting peasant bases constraint} in $\villagesystem^\homo_{\frb, \frP}$ implies:
            \begin{equation*}
                \agbv(\pi_1, \basetext, i)
                \ =\ 
                \sum\limits_{\akrlbv \in \cS^{\pi_1}_{\frb, \frP}}
                \akrlbv(\basetext, i) \cdot c^{\pi_1}_{\akrlbv}. 
            \end{equation*}

            For $1 \le i \le \numbhperiods{\pi_1}$,
            if $\bhperiod{\pi_1}{i}$ is king-reaching,
      the  {\bf Counting peasant periods constraint} in $\villagesystem^\homo_{\frb, \frP}$,
         implies
            \begin{equation*}
                \agbv(\pi_1, \periodtext, i)
                \ =\ 
                \sum\limits_{\akrlbv \in \cS^\pi_{\frb, \frP}}
                \akrlbv(\periodtext, i) \cdot c^\pi_{\akrlbv}.
            \end{equation*}
            Combining the above, the sum of the $(\eta,\pi_2)$-entries of the intended behaviors of all $(\pi_1, \akrlbv)$ prototype-vertices and $(\pi_1,i)$ prototype-vertices is     $\sum_{v \in V_1} f_1(v)$, which is 
          
            \begin{equation*}
                \begin{aligned}
                    &\sum_{\akrlbv \in \cS^{\pi_1}_{\frb, \frP}}
                    c^{\pi_1}_{\akrlbv} \cdot \eval{\akrlbv}(\eta, \pi_2)
                    +
                    \sum_{i \in \intsinterval{\numbhperiods{\pi_1}}, \text{$\bhperiod{\pi_1}{i}$ is not king-reaching}} \agbv(\pi_1, \periodtext, i) \cdot \bhperiod{\pi_1}{i}(\eta, \pi_2) \\
                    \ =\ 
                    &\sum_{i \in \intsinterval{\numbhbases{\pi_1}}} \agbv(\pi_1, \basetext, i) \cdot \bhbase{\pi_1}{i}(\eta, \pi_2) +
                    \sum_{i \in \intsinterval{\numbhperiods{\pi_1}}} \agbv(\pi_1, \periodtext, i) \cdot \bhperiod{\pi_1}{i}(\eta, \pi_2) \\
                    \ =\ 
                    &\sumcon_{\pi_1, \eta, \pi_2}(\agbv).
                \end{aligned}
            \end{equation*}
            Symmetrically, we have $\sum_{v \in V_2} f_2(v) = \sumcon_{\pi_2, \dual{\eta}, \pi_1}(\agbv)$.

            Since each $\frp_i$ satisfies the $\match$ condition,
            $\agbv$ also satisfies the $\match$ condition,
            which implies that $\sumcon_{\pi_1, \eta, \pi_2}(\agbv) = \sumcon_{\pi_2, \dual{\eta}, \pi_1}(\agbv)$.
            Finally we have $\sum_{v \in V_1} f_1(v) = \sum_{v \in V_2} f_2(v)$.

            Thus, by Lemma~\ref{lem:rewiring}, 
            we can add edges of 2-type $\eta$
            between the $V_1$ and $V_2$ so that:

            \medskip (*)
            for every $(\pi_1,\akrlbv)$ prototype-vertex $v$,
            $\bh{v}(\eta,\pi_2) = f_1(v) = \eval{\akrlbv}(\eta,\pi_2)$, and
            for every $(\pi_1, i)$ prototype-vertices $v$,
            $\bh{v}(\eta,\pi_2) = f_1(v) = \bhperiod{\pi_1}{i}(\eta,\pi_2)$.

            Symmetrically, the corresponding equalities hold for the $(\pi_2,\akrlbv)$ and $(\pi_2,i)$ prototype-vertices.

    \end{compactitem}

    We argue that $\cG_0$ satisfies certain properties of a village.
    In the subsequent steps, we will further transform $\cG_0$ into a village. Note that $\cG_0$ is asymmetric as required for a village (Definition \ref{def:villageshadowpeasant}).
    \begin{compactitem}
        \item
            We first argue that $\cG_0$ realizes $\agbv$.
            We define the realization witness $\rep$ on the three kinds of vertices,
            and show that $\bh{v} = \eval{\rep(v)}$ holds for every $v$ in $\cG_0$.
            \begin{compactitem}
                \item \itemtitle{$v^\shadowtext_i$}
                    For every $1 \le i \le k$,
                    we let $\rep(v^\shadowtext_i)(\periodtext, j) = b_{i, j}$ and set all base entries to 0.
                    Note that this implies that $\eval{\rep(v^\shadowtext_i)} = \sum_{j \in \intsinterval{\numbhperiods{\pibar_i}}} b_{i, j} \bhperiod{\pibar_i}{j}$.
                    Clearly, $\rep(v^\shadowtext_i)$ is homogeneously valid.

                    In the construction of $\cG_0$, the only edges incident to $v^\shadowtext_i$ are
                    those introduced according to the $\akrlbv$'s of the
                    $(\pi,\akrlbv)$ prototype-vertices.
                    More precisely, for every 2-type $\eta$ and every 1-type $\pi$,
                    the number of such edges can be counted as follows:
                    \begin{equation*}
                        \bh{v^\shadowtext_i}(\eta, \pi)
                        \ =\ 
                        \sum\limits_{\akrlbv \in \cS^{\pi}_{\frb, \frP}}
                        \kappa_{\akrlbv, i, \eta} \cdot c^\pi_{\akrlbv}
                    \end{equation*}
                    where $\kappa_{\akrlbv,i,\eta}$ is equal to $1$ if
                    $\twotypetoking{i}(\akrlbv) = \dual{\eta}$, and is equal to $0$
                    otherwise.

                    On the other hand, by the {\bf Shadow--peasant matching constraint} in
                    $\villagesystem^\homo_{\frb, \frP}$,
                    we have
                    \begin{equation*}
                        \sum\limits_{j \in \intsinterval{\numbhperiods{\pibar_i}}}
                        b_{i, j} \bhperiod{\pibar_i}{j}(\eta, \pi)
                        \ =\ 
                        \sum\limits_{\akrlbv \in \cS^{\pi}_{\frb, \frP}}
                        \kappa_{\akrlbv, i, \eta} \cdot c^\pi_{\akrlbv}.
                    \end{equation*}
                    By the definition of $\rep(v^\shadowtext_i)$, the left-hand side is
                    precisely $\eval{\rep(v^\shadowtext_i)}(\eta,\pi)$.
                    Since this holds for every 2-type $\eta$ and every 1-type $\pi$, we
                    conclude that $\eval{\rep(v^\shadowtext_i)} = \bh{v^\shadowtext_i}$.
                
                \item \itemtitle{$(\pi,\akrlbv)$ prototype-vertex}
                    For every $(\pi,\akrlbv)$ prototype-vertex $v$, we define  $\rep(v)$ be
                    the $\lbv$ corresponding to $\akrlbv$.
                    
                    For every king 1-type $\pi'$ and every 2-type $\eta$, the (Cardinality linking)
                    condition for a well-behaved $\krlbv$ ensures that
                    $\eval{\rep(v)}(\eta,\pi') = \bh{v}(\eta,\pi')$.
                 For every peasant 1-type $\pi'$ and every 2-type $\eta$,
                    because the edges of 2-type $\eta$ were added according to the procedure described in Lemma~\ref{lem:rewiring},
                    we have $\eval{\rep(v)}(\eta,\pi') = \bh{v}(\eta,\pi')$.
                    Hence, $\eval{\rep(v)} = \bh{v}$.

                \item \itemtitle{$(\pi,i)$ prototype-vertex}
                    For every $(\pi,i)$ prototype-vertex $u$, define $\rep(u)$ by
                    $\rep(u)(\periodtext,i)=1$,
                    with all other entries set to $0$.
                    By construction, there are no edges in $\cG_0$ from $u$ to any
                    shadow. Consequently, for every king 1-type $\pi'$ and every
                    2-type $\eta$,
                    $\bh{u}(\eta,\pi') = 0 = \eval{\rep(u)}(\eta,\pi')$,
                    since $\bhperiod{\pi}{i}$ is not king-reaching.
                    For every peasant 1-type $\pi'$ and every 2-type $\eta$,
                    since the edges of 2-type $\eta$ are added according to the procedure described in Lemma~\ref{lem:rewiring},
                    we have $\eval{\rep(u)}(\eta,\pi') = \bh{u}(\eta,\pi')$.
                    Hence, $\eval{\rep(u)} = \bh{u}$.
            \end{compactitem}
            It remains to show that
            for every $\pi \in \onetypes$, 
            the slice of $\agbv$ on $\pi$ is precisely the sum of the witnessed-assigned $\lbv$s of the elements with 1-type $\pi$.
            This follows straightforwardly from the  {\bf Counting shadows constraint}, {\bf Counting peasant bases constraint}, and {\bf Counting peasant periods constraint}
            in $\villagesystem^\homo_{\frb, \frP}$.

        \item \itemtitle{Partial V1}
            By construction, we see that
            every concrete vertex of $\cG_0$ realizes a 1-type in $\onetypepeasant{\frb}{\frP}$.
            There are still some phantom vertices of $\cG_0$ realizing a 1-type in $\onetypepeasant{\frb}{\frP}$.
            We will remove them in further steps.

        \item \itemtitle{Partial V2}
            By construction,
            there is no multi-edge of $\cG_0$ involving a shadow, and
            no edge of $\cG_0$ connects two shadows.
            There are still some self-loops for vertices with peasant 1-types.
            We will remove them in further steps.
    \end{compactitem}

    We next transform the D-structure $\cG_0$ into the desired village in two steps.
    First, we obtain $\cG_1$ by merging each $(\pi,j)$ prototype-vertex into a
    $(\pi,\akrlbv)$ prototype-vertex of the same 1-type.

    To see that such a $(\pi,\akrlbv)$ prototype-vertex always exists, let $\pi$
    be a peasant 1-type and let $u$ be a $(\pi,j)$ prototype-vertex.
    By definition of peasant 1-type (Definition \ref{def:kingandpeasantonetypeforbcvs}),
    there exists some $1 \le t \le \numouterperiods$ and $1 \le i \le \numbhbases{\pi}$ such that
    $\frp_t(\pi,\basetext,i) > 0$. 
    By the {\bf Absorption of phantom vertices with peasant 1-type constraint}
    in $\villagesystem^\homo_{\frb, \frP}$, we have
    $a_t > 0$.
    By the {\bf Counting peasant bases constraint}
    in $\villagesystem^\homo_{\frb, \frP}$,
      $\sum\limits_{t \in \intsinterval{\numouterperiods}}
        a_t\frp_t(\pi, \basetext, i)
        \ =\ 
        \sum\limits_{\akrlbv \in \cS^\pi_{\frb, \frP}}             \akrlbv(\basetext, i) \cdot c^\pi_{\akrlbv}$.
    The left hand side is positive, and at least one of the vectors being summed on the right is nonzero. Thus there exists $\akrlbv \in \cS^{\pi}_{\frb, \frP}$
    such that $c^\pi_\akrlbv > 0$.
    Therefore our construction contains a $(\pi,\akrlbv)$ prototype-vertex $v$ which $u$ can be merged.

    \begin{claim}[$\cG_1$ properties] \label{clm:g1props}
        $\cG_1$ has the following properties. \begin{inparaenum} \item It  realizes $\agbv$ and is (still) asymmetric. \item It realizes (V1) from the definition of a village (Definition \ref{def:villageshadowpeasant}). \item
        It realizes part  of the property (V2) of a village (``Partial V2''): There are no multi-edges incident to
        a shadow. \end{inparaenum}
    \end{claim}
    Note that we do not claim the other part of (V2):  some vertices of peasant 1-type may still have self-loops.       
    \begin{claimproof}(of Claim \ref{clm:g1props})
        For the first item  note that the merge operation preserves the $\gbv$ realized by the D-structure, by Lemma~\ref{lem:operation}.
        The merge operation also does not introduce new two types between distinct elements, so clearly preserves asymmetry.

        For the second item,  note that every phantom vertex of peasant 1-type is merged with some
        concrete vertex, every concrete vertex of $\cG_1$ realizes a 1-type
        in $\onetypepeasant{\frb}{\frP}$, while every phantom vertex of $\cG_1$
        realizes a 1-type in $\onetypeking{\frb}{\frP}$.
        This implies (V1).

        For the last claim. By construction, there are no edges between shadows and
        $(\pi,i)$ prototype-vertices.
        Therefore, the merge step introduces no multi-edges incident to
        a shadow.
    \end{claimproof}

    Finally, we obtain $\cV$ from $\cG_1$ by eliminating all self-loops.
    Let $\ell$ be the maximum number of self-loops incident to any vertex
    of $\cG_1$.

            
            We take the disjoint union of $2$ copies of
            $\cG_1$, and then merge the copies of each shadow
            $v^\shadowtext_i$ into a single vertex.
        
            We eliminate the self-loops by performing suitable edge swaps
            between different copies. More precisely, for each vertex $v$, we
            resolve its self-loops by swapping the corresponding edges between
            distinct copies of $v$.
            
    We argue that $\cV$ is the desired village. 
    \begin{compactitem}
        \item
            Since $\cG_1$ realizes $\agbv \in \cL_\homo(\frP)$,
            the disjoint union of $2$ 
            copies of $\cG_1$ satisfy some  $2 \cdot \agbv \in \cL_\homo(\frP)$.
            Since the merge and swap operations preserve the $\gbv$ realized by the D-structure by Lemma~\ref{lem:operation},
            this will imply that $\cV$ realizes some $\gbv$ in $\cL_\homo(\frP)$.

        \item \itemtitle{V1}
            By Claim \ref{clm:g1props}, $\cG_1$ satisfies (V1). The construction of $\cV$ from $\cG_1$ preserves (V1).

        \item \itemtitle{V2}
            Claim \ref{clm:g1props} established (V2) except for the requirements on self-loops for $\cG_1$. By construction,
            we removed self-loops by swapping.
            Hence (V2) holds for $\cV$.
    \end{compactitem}

    Note that swapping does not introduce new $2$-types, so does not impact asymmetry. The swapping \emph{may} introduce multi-edges, but this is allowed within a village according to Definition \ref{def:villageshadowpeasant}. This completes the proof of Lemma \ref{lem:village_exists}, 

\end{proof}

We now refine the lemma above to take into account a kernel.

\begin{lemma} \label{lem:village_friendly}
    Let $\cK$ be a kernel of $\frb$ and $\frP$.
    If the village system $\villagesystem^\homo_{\frb, \frP}$ has a $\cK$-friendly solution,
    then there exists a village $\cV$ of $\frb$ and $\frP$,
    with $\cK$ and $\cV$ are compatible.
\end{lemma}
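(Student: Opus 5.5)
The plan is to apply Lemma~\ref{lem:village_exists} to the given $\cK$-friendly solution and then check that the village it produces already satisfies (V3)--(V6). Write the solution as $\set{x_i \gets a_i, y_{i,j} \gets b_{i,j}, z^\pi_{\akrlbv}\gets c^\pi_{\akrlbv}}$ and set $\agbv := \sum_i a_i\frp_i$. We run the construction of Lemma~\ref{lem:village_exists} verbatim: build $\cG_0$, merge the non-king-reaching prototype phantoms to get $\cG_1$, then remove self-loops via two copies and swaps to get $\cV$. That lemma already shows $\cV$ is asymmetric, realizes a $\gbv$ in $\cL_\homo(\frP)$, and satisfies (V1) and (V2). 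Its shadows are exactly $v^\shadowtext_1,\ldots,v^\shadowtext_k$, with $\onetypeof(v^\shadowtext_i)=\pibar_i$, and the copies of each shadow are merged. By Definition~\ref{def:kpibarbfpibar} and Corollary~\ref{cor:kingonetypesindependentofdstructure}, the kings of $\cK$ are $k$ concrete vertices whose 1-types enumerate $\pibar_1,\ldots,\pibar_k$ (K1 forbids king-type phantoms). So $v^\shadowtext_i\mapsto$ the $i$-th king is a 1-type-preserving bijection, which gives (V3).

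The other conditions come from a single observation. If $a_t>0$ and $\frp_t$ incorporates $\pi$ basely (respectively, incorporates $\eta$ between $\pi_1$ and $\pi_2$), then $\agbv$, and hence $2\agbv$, does too. Since $\cV$ realizes $2\agbv$, Lemma~\ref{lem:propertiesincorporating} gives a concrete vertex of type $\pi$ (respectively, an $\eta$-edge between vertices of types $\pi_1,\pi_2$). For (V4), a phantom noble of type $\pi$ in $\cK$ gives, by the \textbf{Absorption of phantom nobles} clause of friendliness, some $t$ with $a_t>0$ and $\pi$ incorporated basely in $\frp_t$. So $\cV$ has a concrete vertex of type $\pi$, and by (V1) it is a peasant. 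For (V5) and (V6), the corresponding friendliness clauses give an $\eta$-edge in $\cV$ between vertices with the required 1-types, where $\eta\in\set{\twotypeof(e_1),\twotypeof(e_2)}$ or $\eta=\twotypeof(e)$.

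The main obstacle is matching the orientation and vertex kinds in (V5)/(V6). The definitions there speak of an edge $e$ \emph{from} $u_1$ \emph{to} $u_2$ with $\twotypeof(e)$ prescribed, while Lemma~\ref{lem:propertiesincorporating} only gives an $\eta$-edge in the explicit-or-implicit sense of Definition~\ref{def:implicitexplicitetaedge}. I would argue that an implicit $\eta$-edge from $u_1$ to $u_2$ is an explicit $\eta$-edge after relabelling its direction, and that swaps are insensitive to this distinction, so the condition is met up to this harmless normalization.

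I would also verify that the endpoints are peasants and not shadows. The multi-edges and self-loops of $\cK$ avoid kings by K2, so their endpoint 1-types are peasant 1-types. Every vertex of $\cV$ with a peasant 1-type is concrete by (V1), hence a peasant. The witnessing edge therefore lies between peasants, as the pumping argument of Theorem~\ref{thm:pump} requires.
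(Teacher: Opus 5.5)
Your proposal is correct and follows the paper's own proof. You take the village built in Lemma~\ref{lem:village_exists}, obtain (V3) from the shadow 1-types $\pibar_i$, and obtain (V4)--(V6) by passing each friendliness clause's $a_t>0$ through $2\agbv$ and Lemma~\ref{lem:propertiesincorporating}. Your extra checks on edge orientation and on the witnessing endpoints being peasants (via K2 and V1) are details the paper leaves implicit, and you handle them correctly.
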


\begin{proof}
    Let $\cV$ be the village constructed in Lemma~\ref{lem:village_exists} from the $\cK$-friendly solution.
    We now show that $\cK$ and $\cV$ are compatible according to Definition \ref{def:kernel_village_comp}.
    \begin{compactitem}
        \item[\emph{V3.}] \itemtitle{Shadow--king correspondence}
            By the construction in Lemma~\ref{lem:village_exists}, 
            $v^\shadowtext_i$ in $\cV$ has 1-type $\pibar_i$,
            which is also the 1-type of $v^\kingtext_i$ in $\cK$.
            Thus, the property holds.

        \item[\emph{V4.}] \itemtitle{Absorption of phantom nobles in $\cK$}
            By the definition of $\cK$-friendly (Definition \ref{def:village_friendly}), in particular by the Absorption of phantom vertices with
            peasant $1$-type constraint,
            for every phantom vertex in $\cK$ with 1-type $\pi$,
            there exists $1 \le i \le \numouterperiods$ such that
            $a_i > 0$ and $\pi$ is incorporated basely by $\frp_i$.

            Since $\cV$ realizes
            $n \cdot \agbv = \sum_{i \in \intsinterval{\numouterperiods}} n \cdot a_i \frp_i$,
            where $n > 0$, we have $n \cdot a_i > 0$.
            Therefore, $\pi$ is also incorporated basely by
            $n \cdot \agbv$.
            By Lemma~\ref{lem:propertiesincorporating},
            $\cV$ contains a vertex $v$ with 1-type $\pi$.
            Note that, by (K1), every phantom vertex in $\cK$ has peasant 1-type,
            which implies that $\pi$ is a peasant 1-type.
            Thus, $v$ is a peasant in $\cV$, and the desired property holds.

        \item[\emph{V5.}] \itemtitle{Absorption of multi-edges in $\cK$}
            By the definition of $\cK$-friendly,
            for vertices $v_1$ and $v_2$ in $\cK$,
            if there exist edges $e_1$ and $e_2$ from $v_1$ to $v_2$,
            then there exists $1 \le i \le \numouterperiods$ such that
            $a_i > 0$ and $\frp_i$ incorporates some
            $\eta \in \set{\twotypeof(e_1), \twotypeof(e_2)}$
            between $\pi_1 = \onetypeof(v_1)$ and
            $\pi_2 = \onetypeof(v_2)$.

            Again, since $\cV$ realizes
            $n \cdot \agbv = \sum_{i \in \intsinterval{\numouterperiods}} n \cdot a_i \frp_i$,
            where $n > 0$, we have $n \cdot a_i > 0$.
            Therefore, $\eta$ is also incorporated between
            $\pi_1$ and $\pi_2$ by $n \cdot \agbv$.
            By Lemma~\ref{lem:propertiesincorporating},
            $\cV$ contains an edge $e$ from a vertex $v$ of 1-type $\pi_1$
            to a vertex $u$ of 1-type $\pi_2$ such that
            $\twotypeof(e) = \eta$.
            Thus, the desired property holds.

        \item[\emph{V6.}] \itemtitle{Absorption of self-loops in $\cK$}
            By the definition of $\cK$-friendly,
            for every self-loop $e$ on a vertex $v$ in $\cK$,
            there exists $1 \le i \le \numouterperiods$ such that
            $a_i > 0$ and $\frp_i$ incorporates
            $\eta = \twotypeof(e)$ between $\pi$ and $\pi$,
            where $\pi = \onetypeof(v)$.

            Again, since $\cV$ realizes
            $n \cdot \agbv = \sum_{i \in \intsinterval{\numouterperiods}} n \cdot a_i \frp_i$,
            where $n > 0$, we have $n \cdot a_i > 0$.
            Therefore, $\eta$ is also incorporated between
            $\pi$ and $\pi$ by $n \cdot \agbv$.
            By Lemma~\ref{lem:propertiesincorporating},
            $\cV$ contains an edge $e$ from a vertex $v$ of 1-type $\pi$
            to a vertex $u$ of 1-type $\pi$ such that
            $\twotypeof(e) = \eta$.
            Thus, the desired property holds.
    \end{compactitem}

\end{proof}

We are now ready to prove Theorem \ref{thm:solvetovillage}.
Since the corresponding village system has a $\cK$-friendly solution,
Lemma~\ref{lem:village_friendly} implies that there exists a village $\cV$ of $\frb$ and $\frP$ that is compatible with $\cK$.

\subsection{Satisfiability-to-solvability theorem: From a structure satisfying the constraints to a solution of the system}
\label{sec:satisfiabilitytosolvability}

Recall from the proof outline that our final algorithm will first guess one of the linear sets within the overapproximation.
It will then guess a small kernel $\cK$ and look for a $\cK$-friendly solution to the
village system. To show one direction of the correctness, we need to argue that if there is a structure realizing a $\gbv$ in the  linear set, then we can ``decompose'' the
structure to obtain a small kernel $\cK$ and a village, along with a $\cK$-friendly solution to the village system. This is the last component of the proof.

\begin{theorem}[Satisfiability-to-solvability theorem] \label{thm:sattosolve}
    Let $\frb$ be a $\gbv$, and let $\frP$ be a finite set of $\gbv$s, with $\frb$ and $\frP$ strongly concretifying. 
    For every asymmetric structure realizing a $\gbv$ in $\cL(\frb;\frP)$,
    there exist \begin{inparaenum} \item  $\frb' \in \cL(\frb;\frP)$ of size bounded by $f(\frb;\frP)$, where $f$ is a computable function;
    \item a subset $\frP' \subseteq \frP$, where  $\frb'$ and $\frP'$ are strongly concretifying;
    \item and a kernel $\cK$ of $\frb'$ and $\frP'$ which realizes the $\gbv$ $\frb'$ \end{inparaenum}
    such that the homogeneous village system $\villagesystem^\homo_{\frb',\frP'}$ has a $\cK$-friendly solution.
\end{theorem}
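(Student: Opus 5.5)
The plan is to start from an asymmetric structure $\cG$ realizing $\agbv = \frb + \sum_{i \in \intsinterval{\numouterperiods}} a_i \frp_i$ with realization witness $\rep$, and split it into a bounded part (the kernel) and a homogeneous part (the village). First fix a threshold $N$, computable from $\frb$ and $\frP$, large enough to pay for all kernel-side absorption requirements. Then let $\frP' := \setc{\frp_i}{a_i > N}$ and
\[
\frb' := \frb + \sum_{a_i \le N} a_i \frp_i + \sum_{a_i > N} N \frp_i .
\]
This gives $\agbv \in \cL(\frb';\frP')$ with all coefficients $a_i - N > 0$, and $\abs{\frb'}$ is bounded by a computable $f(\frb;\frP)$. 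Strong concretification of $\frb',\frP'$ follows from Lemma~\ref{lem:concretifying} as far as concretifying goes. For the ``no king edges in periods'' part, king 1-types w.r.t. $\frb',\frP'$ can only be more numerous than those w.r.t. $\frb,\frP$. I would handle this as in Lemma~\ref{lem:strongly_concretifying}: Lemma~\ref{lem:bound_king} bounds the coefficient of any offending period, so choosing $N$ above $\projectionbase{\frb}^2$ (iterating as in that lemma if needed) absorbs every offending period into $\frb'$.

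Next I would distribute the ``tokens'' of $\rep$ between kernel and village. Each vertex $v$ contributes one base token and $\rep(v)(\periodtext,j)$ period tokens. I must assign to the kernel a sub-multiset of tokens whose per-type totals equal the slice of $\frb'$, and to the village the remainder, which totals $\sum_{\frp_i \in \frP'} (a_i - N)\frp_i$. All king vertices (whose number is fixed by Lemma~\ref{lem:kernel}) keep their base tokens in the kernel and become kings. The other kernel base tokens belong to peasant-type vertices, which become nobles. Period tokens assigned to the kernel that sit on a vertex whose base token is in the village become phantom nobles, grouped so that each phantom gets a homogeneously valid $\lbv$.

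I would not keep $\cG$'s edges literally. Instead, the kernel D-structure is rebuilt with Lemma~\ref{lem:rewiring}, using $\match$ for $\frb'$; note that $\frb' \in \cL(\frb;\frP)$ satisfies $\match$, and the counting argument of Lemma~\ref{lem:match} applies. This has to be done so that edges at kings are explicitly kept simple and are not attached to phantom nobles (K1, K2). For that, I would place the king-incident edges first, mirroring the actual edges of $\cG$ among the kernel vertices, and only then rewire the remaining peasant--peasant edges, which may create multi-edges and self-loops.

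For the village system I would read off a solution directly from the village part of $\cG$:
\begin{compactitem}
\item set $x_i := a_i - N$;
\item let $y_{i,j}$ be the period tokens of the $i$-th king assigned to the village;
\item let $z^\pi_{\akrlbv}$ count village peasants by their $\krlbv$, where $\akrlbv$ pairs the vertex's $\lbv$ (restricted to king-reaching periods, with the non-king-reaching ones counted separately) with the actual 2-types to the $k$ kings in $\cG$.
\end{compactitem}
Cardinality linking holds because $\cG$ has no multi-edges, so each peasant meets each king in at most one 2-type. The counting constraints then hold by construction of the split. Shadow--peasant matching is the statement that the kings' village-side edges are exactly the peasants' king edges, and positivity holds since $x_i > 0$. $\cK$-friendliness then follows because every phantom noble, multi-edge and self-loop of $\cK$ has a type incorporated basely (respectively between the right 1-types) by some $\frp_i \in \frP'$, and $x_i > 0$.

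The main obstacle is making the token split consistent vertex by vertex. A real vertex's king-incident edges are produced jointly by its base and its periods, so I must ensure that a peasant's king-reaching behavior lies wholly on one side. Otherwise the kernel would contain a phantom with king edges (violating K2), or the village would contain a peasant whose $\krlbv$ is not well-behaved. My first attempt would be to put any peasant vertex whose tokens are split entirely into the kernel whenever it is king-reaching. This is possible since there are at most $k \cdot \abs{\twotypes}$ king edges overall, a bounded number, given the king-edge-free periods of strong concretification. I would then enlarge $N$ accordingly, so that the kernel-side surplus absorbs those vertices.
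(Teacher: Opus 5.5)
\paragraph{The gap.} You fix the kernel's $\gbv$ in advance by a threshold, $\frb'=\frb+\sum_i\min(a_i,N)\frp_i$ with $x_i:=a_i-N$, and then hope the tokens of $\rep$ can be split to match it. Whether $\cG$ can be cut so that three things hold is itself a linear condition:
\begin{itemize}
\item the kernel realizes exactly $\frb'$;
\item each king's village-side period tokens generate exactly its edges to village peasants;
\item no king-reaching period is detached from its vertex.
\end{itemize}
A prescribed threshold generally violates this condition, for every $N$.

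\paragraph{Counterexample.} Let the $\GCC$ be $\termg{K(x)}=1$, and let $R$-edges go only from $K$ to $\neg K$. Then the king type $\pi_0=\set{K(x)}$ and the peasant type $\pi_1=\set{\neg K(x)}$ each have local base $\bfzero$ and one unit period. In coordinates $(\pi_0,\basetext),(\pi_0,\periodtext),(\pi_1,\basetext),(\pi_1,\periodtext)$, take $\frb=(1,0,1,0)$, $\frp_1=(0,1,0,1)$, $\frp_2=(0,0,1,0)$. This is strongly concretifying and lies in the overapproximation. The star in which the king points to all $a+1$ peasants realizes $\frb+(a+1)\frp_1+a\frp_2$.

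For $a>N$ your rule gives $\frP'=\frP$ and $x=(a+1-N,\,a-N)$. Here $k=1$, so cardinality linking lets each well-behaved $\krlbv$ of $\pi_1$ carry at most one period token, and $\villagesystem^\homo_{\frb',\frP'}$ forces $x_1\le x_2$, which fails. On the kernel side you would need $N+1$ nobles, all king-adjacent in $\cG$, attached to a king with only $N$ kernel-side edges.

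Moving split vertices wholesale into the kernel and enlarging $N$ cannot fix this, because the kernel must realize $\frb'$ exactly. With $\frP'=\frP$, any admissible kernel here needs $a'_1=a'_2+1$, while your threshold always gives $a'_1=a'_2$.

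Your bound of $k\cdot\abs{\twotypes}$ king edges is also false. Strong concretification bounds only king--king edges. A king may have unboundedly many peasant neighbours, and this is exactly what the shadow's part $\sum_j y_{i,j}\bhperiod{\pibar_i}{j}$ records.

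\paragraph{The missing idea.} Linear arithmetic has to choose $\frb'$ adaptively. You are right that king-reaching periods must stay with their vertex. The paper ensures this by splitting off only the non-king-reaching periods of peasants as phantoms. It then groups the concrete peasants by $\krlbv$, which records their 2-types to each king, so the true counts solve the non-homogeneous weak village system.

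Lemma~\ref{lem:minimal} then yields a small solution $(a',b',c')$ lying componentwise below the true one, and one sets $\frb':=\frb+\sum_i a'_i\frp_i$. Putting the prescribed number of vertices of each $\krlbv$ class into the kernel, and redirecting the kings' remaining peasant edges to shadows, is automatically consistent. Vertices in one class have identical king adjacencies, so the difference solves the homogeneous system, including shadow--peasant matching, by linearity. Positivity of every $x_i$ comes from absorbing periods with $a''_t=0$ and restarting, not from a threshold. In the example this gives, e.g., $\frb'=\frb+\frp_1$.

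\paragraph{A secondary problem.} Rebuilding the kernel's peasant--peasant edges with Lemma~\ref{lem:rewiring} can create multi-edges or self-loops whose 2-types are incorporated only by $\frb$ or by absorbed periods. That breaks $\cK$-friendliness. The paper instead keeps the simple edges of $\cG$ and removes kernel--village crossing edges by swaps. The only new multi-edges and self-loops then carry 2-types incorporated by the village part's $\gbv$ $\sum_i a''_i\frp_i$.
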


Note that a kernel realizing a $\gbv$ $\frb'$ must have size bounded by the size of $\frb'$ and the basis for local behaviors,
since the two of these together describe the total number of edges in a D-structure.

\begin{proof}
Recall that a solution to the village system in Definition \ref{def:village_system}
has coefficients
for variables $x_i$, $y_{i,j}$, and $z^\pi_{\akrlbv}$. The $x_i$ are indexed by $1 \le i \le \numouterperiods$, the
                    $y_{i, j}$ by $1 \le i \le k$ and
                    $1 \le j \le \numbhperiods{\pibar_i}$, and the
                    $z^\pi_{\akrlbv}$ by a peasant 1-type $\pi$
                    and $\akrlbv \in \cS^\pi_{\frb, \frP}$.
We first show how to get the bindings of the $x_i$, which will be denoted $a_i$.

We fix a structure $\cG$ realizing a $\gbv$ $\agbv$ in the linear set generated by base vector $\frb$ and period
vectors $\frp_i$ (Definition \ref{def:semilinearpresburger}).

\begin{definition}[$a_i$] \label{def:ai}
We let $a_i$ for $i \in \numouterperiods$ be such that:
    $\agbv
    \ :=\ 
    \frb + \sum\limits_{i \in \intsinterval{\numouterperiods}} a_i \frp_i
    \ \in\ 
    \cL(\frb;\frP)$.
\end{definition}

Let $\rep_\cG$ be a realization witness for $\cG$ realizing $\agbv$.
Following Corollary \ref{cor:kingonetypesindependentofdstructure},
we can set $v^\prekingtext_1,\ldots,v^\prekingtext_k$ to be the concrete vertices of $\cG$ realizing king 1-types such that
for $1 \le t \le k$, $v^\prekingtext_t$ has 1-type $\pibar_i$ and $\rep_\cG(v)$ base behavior $\bfbbar_i$. 
Note that, since $\cG$ is a structure, all its vertices are concrete.

We prove the theorem by transforming $\cG$ so that it has a substructure that is the desired kernel. We will proceed in  several steps. \emph{Each step will preserve asymmetry, so the asymmetry requirement of a kernel will be satisfied.}

\myparagraph{Step 1: decomposing the initial structure}
We first transform $\cG$ into a D-structure $\cH$ by the following splitting procedure.
Let $\cG_0 := \cG$ and $\rep_0 := \rep_\cG$.
We define $\cG_{t+1}$ and $\rep_{t+1}$ recursively.

Suppose that there exists a concrete vertex $v$ of $\cG_t$ with peasant 1-type $\pi$ such that
$\rep_t(v)(\periodtext,i) > 0$ for some $i$ for which $\bhperiod{\pi}{i}$ is not king-reaching.
We introduce a new phantom vertex $u$ of 1-type $\pi$.
For every 2-type $\eta$ and every 1-type $\pi'$,
we select $\bhperiod{\pi}{i}(\eta,\pi')$ many $\eta$-edges from $v$ to vertices of 1-type $\pi'$ and rewire them from $v$ to $u$.

More precisely, for each selected $\eta$-edge $e$ from $v$ to a vertex $w$ (as in Definition \ref{def:implicitexplicitetaedge}), we proceed as follows.
If $e$ is explicit, we remove it and add an explicit $\eta$-edge from $u$ to $w$.
Otherwise, if $e$ is implicit, we remove it and add an explicit $\dual{\eta}$-edge from $w$ to $u$.

We define the corresponding realization witness by setting
$\rep_{t+1}(v)(\periodtext,i) = \rep_t(v)(\periodtext,i)-1$,
while leaving all other entries of $\rep_{t+1}(v)$ unchanged,
and by setting $\rep_{t+1}(u)(\periodtext,i)=1$,
with all other entries of $\rep_{t+1}(u)$ equal to $0$.
For every other vertex $w$, we set $\rep_{t+1}(w)=\rep_t(w)$.

The following properties are invariant under the splitting procedure:
\begin{compactitem}
    \item
        For every vertex $w$ of $\cG_{t+1}$, $\bhg{t+1}{w} = \eval{\rep_{t+1}(w)}$.

        Indeed, let $v$ be the vertex split at $t^{th}$ iteration,
        let $\pi$ and $i$ be the corresponding 1-type and index,
        and let $u$ be the newly added phantom vertex.
        By construction, we transfer edges from $v$ to $u$
        precisely prescribed by $\bhperiod{\pi}{i}$.
        Hence,
        \begin{equation*}
            \bhg{t+1}{v}
            \ =\
            \bhg{t}{v} - \bhperiod{\pi}{i}
            \ =\
            \eval{\rep_t(v)} - \bhperiod{\pi}{i}
            \ =\
            \eval{\rep_{t+1}(v)}.
        \end{equation*}
        Moreover,
            $\bhg{t+1}{u}
            \ =\
            \bhperiod{\pi}{i}
            \ =\
            \eval{\rep_{t+1}(u)}$.
        For every other vertex $w$, the redirection of edges does not change
        its behavior, and $\rep_{t+1}(w) = \rep_t(w)$. Therefore,
            $\bhg{t+1}{w}
            \ =\ 
            \bhg{t}{w}
            \ =\ 
            \eval{\rep_t(w)}
            \ =\ 
            \eval{\rep_{t+1}(w)}$.

    \item
        The D-structure $\cG_{t+1}$ realizes $\agbv$ with witness
        $\rep_{t+1}$.

        Using the same notation as above, we have
        \begin{equation*}
            \rep_{t+1}(v)(\periodtext,i)
            +
            \rep_{t+1}(u)(\periodtext,i)
            \ =\
            \rep_t(v)(\periodtext,i).
        \end{equation*}
        All other witness entries are unchanged. Thus, for every 1-type,
        the sum of the witnesses over all vertices of that 1-type remains unchanged.
        Consequently, $\rep_{t+1}$ is a realization witness for the same $\agbv$.
\end{compactitem}

At each iteration,
the sum of the entries $\rep_t(v)(\periodtext,i)$ over concrete vertices $v$ and indices $i$ for which $\bhperiod{\onetypeof_{\cG_t}(v)}{i}$ is not king-reaching
decreases by one. Hence, the procedure eventually terminates.
Let $\cH$ be the resulting D-structure and let $\rep_\cH$ be the
corresponding realization witness.

By construction, $\cH$ contains the following three kinds of vertices:
\begin{definition}[$v^\prekingtext_i$, $(\pi,\akrlbv)$ split-vertices,$(\pi,i)$ split-vertices] \label{def:splitvertices}
    \leavevmode
\begin{compactitem}
    \item
        \itemtitle{$v^\prekingtext_i$}
        These are the concrete vertices realizing king 1-types.

    \item
        \itemtitle{$(\pi,\akrlbv)$ split-vertices}
        For every concrete vertex $v$ in $\cH$ of peasant 1-type $\pi$,
        we define a $\krlbv$ $\akrlbv$ associated to $v$. Recall from Definition \ref{def:krlbvs} that a $\krlbv$ consists
        of a local behavior vector and a sequence of $2$-types (possibly the null $2$-type). The $\lbv$ will be $\rep_\cH(v)$, and the sequence of 2-types  or null $2$-types will
        be those associated with the edges from $v$ to the vertices $v^\prekingtext_1, \ldots, v^\prekingtext_k$ in $\cH$.
        Since the splitting procedure has terminated,
        for every concrete vertex with peasant 1-type,
        $\rep_\cH(v)(\periodtext,i)=0$,
        whenever $\bhperiod{\pi}{i}$ is not king-reaching.

        We call $v$ a \emph{$(\pi,\akrlbv)$ split-vertex}.

    \item
        \itemtitle{$(\pi,i)$ split-vertices}
        These are the phantom vertices introduced by the splitting procedure.
        For every such vertex $u$ with 1-type $\pi$,
        there exists $1 \le j \le \numbhperiods{\pi}$,
        such that $\rep_\cH(u)(\periodtext,i)=1$,
        with all other entries of $\rep_\cH(u)$ equal to $0$.
        We call $u$ a \emph{$(\pi,i)$ split-vertex}.
\end{compactitem}
\end{definition}

$\cH$ is illustrated in Figure~\ref{fig:step1structuretosolution}.
\begin{figure}[h!]
    \centering
    \input{fig/intuition.tex}
    \caption{
        Illustration of $\cH$ obtained in Step~1 in the proof of Theorem~\ref{thm:sattosolve}. 
        Concrete vertices are drawn with solid outlines, while phantom vertices are drawn with dashed outlines.
        Vertices of king 1-types are shown in red, and vertices of peasant 1-types are shown in blue.
        For simplicity, we assume that there is only one 2-type.
        After the splitting procedure, concrete vertices of peasant 1-types are grouped according to their $\krlbv$s.
        Consequently, all vertices in the same group are adjacent to the same set of vertices of king 1-types and to the same number of vertices of each peasant 1-type.
        For example, the $(\pi_1,\krlbv_{1,1})$ split-vertices are adjacent to $v^{\prekingtext}_1$, $v^{\prekingtext}_2$, and one vertex of 1-type $\pi_1$.
        Similarly, phantom vertices of peasant 1-types are grouped according to the period behaviors they realize.
        For example, each $(\pi_1,1)$ split-vertex is adjacent to exactly one vertex of 1-type $\pi_1$.
        Note that the splitting procedure guarantees that the resulting D-structure contains no multi-edges and no edges between vertices of king 1-types and phantom vertices.
    }
    \label{fig:step1structuretosolution}
\end{figure}

\begin{claim} \label{clm:krlbvwellbehaved}
Each $\akrlbv$ associated to a $(\pi,\akrlbv)$ split-vertex $v$ above is well-behaved, according to Definition \ref{def:king-restricted-lbvwellbehaved}.
\end{claim}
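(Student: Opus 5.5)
We need to check the two conditions of well-behavedness for $\akrlbv = \tuple{\rep_\cH(v), \tuple{\eta_1,\ldots,\eta_k}}$, where $\eta_j$ is the 2-type (possibly null) of the pair $(v, v^\prekingtext_j)$ in $\cH$. Before that, note that the pair is a legitimate $\krlbv$. First, $\rep_\cH(v)$ is a valid $\lbv$: the splitting procedure never changes base entries of concrete vertices, and $\rep_\cG(v)$ was valid. Second, each $\eta_j$ is well defined as a single 2-type or the null 2-type. This holds because $\cG$ is a structure with no multi-edges, and the splitting procedure only removes edges from $v$ and reattaches them to fresh phantoms; it never adds edges between $v$ and a king. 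So between $v$ and $v^\prekingtext_j$ there is at most one edge in $\cH$.

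\textbf{King-reaching Periods.} This is immediate from termination of the splitting procedure. If $\rep_\cH(v)(\periodtext,i) > 0$ with $\bhperiod{\pi}{i}$ not king-reaching, then another splitting step would still apply to the concrete vertex $v$. So every period with positive coefficient in $\rep_\cH(v)$ is king-reaching, as already remarked in Definition~\ref{def:splitvertices}.

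\textbf{Cardinality linking.} Fix a 2-type $\eta$ and a king 1-type $\pi'$. By the invariant of the splitting procedure, $\eval{\akrlbv}(\eta,\pi') = \eval{\rep_\cH(v)}(\eta,\pi') = \bhg{\cH}{v}(\eta,\pi')$. This is the number of $\eta$-edges in $\cH$ from $v$ to vertices of 1-type $\pi'$. The key step is to show that every vertex of king 1-type in $\cH$ is one of the $v^\prekingtext_j$. The splitting only creates phantoms of peasant 1-type, so the vertices of king 1-type are exactly the concrete king vertices of $\cG$. By Lemma~\ref{lem:kernel}, there are exactly $\projectionbasepi{\frb}{\pi'}$ of these for each $\pi'$, so they are enumerated by the $v^\prekingtext_j$ with $\pibar_j = \pi'$. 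Since there is at most one edge between $v$ and each $v^\prekingtext_j$, the count of $\eta$-edges from $v$ to $\pi'$-vertices equals $\abs{\setc{j}{\eta_j = \eta,\ \pibar_j = \pi'}}$, which is exactly the linking condition.

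\textbf{Main obstacle.} The delicate point is the edge bookkeeping in the splitting procedure, which splits off periods of peasant vertices $v$. We need to confirm that it cannot move an edge between $v$ and a king away from $v$, since otherwise $\eta_j$ would not record the edge that $\rep_\cH(v)$ accounts for. The argument I would give is this: a split step at $v$ transfers exactly $\bhperiod{\pi}{i}$ worth of edges, and since $\bhperiod{\pi}{i}$ is not king-reaching, $\bhperiod{\pi}{i}(\eta,\pi')=0$ for all king $\pi'$. Hence no edge from $v$ to a king is ever rewired. When another peasant $w$ is split, only $w$'s own edges move, so the edges at $v$ stay with $v$. Thus the edges between $v$ and the kings in $\cH$ are exactly those in $\cG$, which justifies both the single-edge claim and the counting above.
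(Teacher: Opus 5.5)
Your proof is correct and follows essentially the same route as the paper's: King-reaching Periods comes from the termination of the splitting procedure, and Cardinality linking comes from equating $\eval{\rep_\cH(v)}(\eta,\pi')$ with the number of $\eta$-edges from $v$ to the $\pi'$-kings. You also make explicit facts the paper only asserts, namely that $\rep_\cH(v)$ is still valid, that kings are never split, that king edges of $v$ are never rewired, and hence that there is at most one edge between $v$ and each $v^\prekingtext_j$, which is what justifies the final counting step.
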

\begin{claimproof}
Recall that well-behavedness requires two properties, according to Definition \ref{def:king-restricted-lbvwellbehaved}.
Just after the definition of $(\pi, \akrlbv)$-split vertex we have established the contraposition of the property King-reaching periods.

Let us consider the other property, Cardinality linking: for every 2-type $\eta$ and every king 1-type $\pi'$,
$\eval{\akrlbv}(\eta, \pi')$
equals the number of indices $1 \le j \le k$ satisfying
$\twotypetoking{j}(\akrlbv) = \eta$ and
$\pibar_j = \pi'$.

Fix king $1$-type $\pi'$ and a $2$-type $\eta$.
Our construction guarantees that for every  $(\pi, \krlbv)$ split vertex $v$,
the number of $\eta$-edges from $v$ to vertices with king 1-type $\pi'$ is
$\eval{\rep_\cH(v)}(\eta, \pi')
= \eval{\akrlbv}(\eta, \pi')
$.
But the number of $\eta$-edges to vertices with $1$-type $\pi'$ is exactly the number of indices
satisfying $\twotypetoking{j}(\akrlbv) = \eta$ and
$\pibar_j = \pi'$.
\end{claimproof}

We next define $b_{i,j}$ and $c^\pi_\akrlbv$,
which form the initial weak assignment.

\begin{definition}[$b_{i,j}$, $c^\pi_\akrlbv$, initial weak assignment] \label{def:weakinitialbij}
For every $1 \le i \le k$ and for $1 \le j \le \numbhperiods{\pibar_i}$, let
$b_{i, j} := \rep_{\cH}(v^\prekingtext_i)(\periodtext, j)$.
For every peasant 1-type $\pi$,
let $c^\pi_\akrlbv$ denote the number of $(\pi, \akrlbv)$ vertices of $\cH$.
By the \emph{weak initial assignment} we mean the assignment to the variables of the village system:
$ \set{
    x_i \gets a_i,
    \quad
    y_{i, j} \gets b_{i, j},
    \quad
    z^\pi_{\akrlbv} \gets c^\pi_{\akrlbv}
}$, where $a_i$ are as in Definition \ref{def:ai}.
\end{definition}

We argue that the weak initial assignment 
forms a solution of the weak village system $\villagesystem_{\frb, \frP}^-$.

\begin{claim} \label{clm:weaksolution}
    The weak initial assignment $ \set{
        x_i \gets a_i,
        \quad
        y_{i, j} \gets b_{i, j},
        \quad
        z^\pi_{\akrlbv} \gets c^\pi_{\akrlbv}
    }$
    is a solution to the weak village system $\villagesystem_{\frb, \frP}^-$.
\end{claim}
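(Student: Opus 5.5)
The weak village system $\villagesystem_{\frb,\frP}^-$ has four groups of constraints: counting shadows, counting peasant bases, counting peasant periods (for king-reaching periods only), and the shadow--peasant matching constraint. The plan is to check each group directly from three facts: $\cH$ realizes $\agbv = \frb + \sum_i a_i\frp_i$ with witness $\rep_\cH$ (the invariant of the splitting procedure), the definitions of $b_{i,j}$ and $c^\pi_\akrlbv$ in Definition~\ref{def:weakinitialbij}, and the identity $\bhg{\cH}{w} = \eval{\rep_\cH(w)}$ for every vertex $w$. For every 1-type $\pi$, the left-hand side of each counting constraint is exactly the corresponding entry of the $\gbv$. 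So each counting constraint amounts to writing the slice-sum of $\rep_\cH$ over all vertices of a given 1-type and grouping the terms by vertex kind.

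\emph{Counting shadows.} Fix a king 1-type $\pi$. By K1-style reasoning, the vertices of 1-type $\pi$ in $\cH$ are exactly the $v^\prekingtext_i$ with $i\in I_\pi$. Splitting never creates phantoms of king type, since it is only applied to peasant vertices, so these are the only ones. Hence $\agbv(\pi,\periodtext,j)=\sum_{i\in I_\pi}\rep_\cH(v^\prekingtext_i)(\periodtext,j)=\sum_{i\in I_\pi} b_{i,j}$.

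\emph{Counting peasant bases and periods.} Fix a peasant 1-type $\pi$. The vertices of type $\pi$ are the $(\pi,\akrlbv)$ split-vertices and the $(\pi,i)$ split-vertices. The latter are phantoms, so their base entries are zero, and their only nonzero entry is at a non-king-reaching period index. Consequently, for a base index $j$, or a king-reaching period index $j$, only the concrete split-vertices contribute. Grouping these by their $\krlbv$ yields $\sum_\akrlbv \akrlbv(\basetext,j)\, c^\pi_\akrlbv$, and similarly for periods. Claim~\ref{clm:krlbvwellbehaved} guarantees that each such $\akrlbv$ lies in $\cS^\pi_{\frb,\frP}$, so the variable $z^\pi_\akrlbv$ exists.

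\emph{Shadow--peasant matching.} This is the step requiring the most care. Fix $i$, $\eta$, and a peasant $\pi$. The left-hand side is $\bfbbar_i(\eta,\pi)+\sum_j b_{i,j}\bhperiod{\pibar_i}{j}(\eta,\pi) = \eval{\rep_\cH(v^\prekingtext_i)}(\eta,\pi)$. Here I use that $v^\prekingtext_i$ is never split, so its base behavior stays $\bfbbar_i$ and it has exactly one base index, by Corollary~\ref{cor:kingonetypesindependentofdstructure}. This quantity equals $\bhg{\cH}{v^\prekingtext_i}(\eta,\pi)$, the number of $\eta$-edges from $v^\prekingtext_i$ to $\pi$-vertices. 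I must then argue two things: that no $(\pi,i')$ phantom split-vertex is adjacent to a king, and that no multi-edge joins a king to a peasant. For the first, the rewired edges come from a non-king-reaching period $\bhperiod{\pi}{i'}$, which has zero entries toward king 1-types. For the second, $\cG$ was a structure and splitting only removes edges from concrete peasants. Granting both, each $\pi$-neighbor of $v^\prekingtext_i$ is a concrete split-vertex $u$ joined by a single edge. The $\eta$-edge from $v^\prekingtext_i$ to $u$ is a $\dual\eta$-edge from $u$ to $v^\prekingtext_i$, i.e.\ $\twotypetoking{i}(\akrlbv_u)=\dual\eta$. Summing the indicators $\kappa_{\akrlbv,i,\eta}$ over the $\krlbv$ groups then gives $\sum_\akrlbv \kappa_{\akrlbv,i,\eta}\, c^\pi_\akrlbv$, as required.

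Finally, the weak system omits the constraint $x_i>0$, so nothing further is needed and the claim follows.
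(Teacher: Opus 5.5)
Your proposal is correct and follows the paper's own proof. Both check the four constraints one by one: each counting constraint is read off the slice sums of $\rep_\cH$, and the shadow--peasant matching constraint comes from counting the edges at $v^\prekingtext_i$ in two ways. You also make explicit several points the paper leaves implicit: splitting leaves the kings untouched, phantom split-vertices contribute only at non-king-reaching period indices, and Claim~\ref{clm:krlbvwellbehaved} guarantees the needed $z$-variables exist.
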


\begin{claimproof}
        We start with the {\bf Counting shadows constraint}.
        For every king 1-type $\pi$ and $1 \le j \le \numbhperiods{\pi}$,
        \begin{equation*}
            \agbv(\pi, \periodtext, j)
            \ =\ 
            \frb(\pi, \periodtext, j) +
            \sum\limits_{i \in \intsinterval{\numouterperiods}} a_i\frp_i(\pi, \periodtext, j).
        \end{equation*}
        
        On the other hand, by the definition of realization (Definition~\ref{def:realizingdstructure}),
        the $(\pi, \periodtext, j)$-entry of $\gbv$ is the some of witness of $(\periodtext, j)$-entries of vertices with 1-type $\pi$,
        that is
        \begin{equation*}
            \agbv(\pi, \periodtext, j)
            \ =\ 
            \sum\limits_{i \in I_\pi} \rep_{\cH}(v^\prekingtext_i)(\periodtext, j)
            \ =\ 
            \sum\limits_{i \in I_\pi} b_{i, j}.
        \end{equation*}
        Thus we have
        \begin{equation*}
            \frb(\pi, \periodtext, j) +
            \sum\limits_{i \in \intsinterval{\numouterperiods}} a_i\frp_i(\pi, \periodtext, j)
            \ =\ 
            \sum\limits_{i \in I_\pi} b_{i, j}.
        \end{equation*}
        This completes the proof.

        We now prove the {\bf Counting peasant bases constraint}.
        For every peasant $1$-type $\pi$ and $1 \le j \le \numbhbases{\pi}$,
        we count the number of occurrences of the behavior base $\bhbase{\pi}{j}$,
        \begin{equation*}
            \agbv(\pi, \basetext, j)
            \ =\ 
            \frb(\pi, \basetext, j) +
            \sum\limits_{i \in \intsinterval{\numouterperiods}} a_i \frp_i(\pi, \basetext, j).
        \end{equation*}
        On the other hand, by the definition of witness,
        $\agbv(\pi, \basetext, j)$ is the sum of $(\basetext, j)$-entries of vertices with 1-type $\pi$,
        that is
        \begin{equation*}
            \agbv(\pi, \basetext, j)
            \ =\ 
            \sum\limits_{\akrlbv \in \cS^\pi_{\frb, \frP}}
            \akrlbv(\basetext, j) \cdot c^\pi_{\akrlbv}.
        \end{equation*}

        Hence we conclude as required \\
            $\sum\limits_{i \in \intsinterval{\numouterperiods}} a_i \frp_i(\pi, \basetext, j) +
            \frb(\pi, \basetext, j)
            \ =\ 
            \sum\limits_{\akrlbv \in \cS^\pi_{\frb, \frP}}
            \akrlbv(\basetext, j) \cdot c^\pi_{\akrlbv}$.

        We now prove  the {\bf Counting peasant periods constraint}.
        For every peasant $1$-type $\pi$ and
        $1 \le j \le \numbhperiods{\pi}$,
        if $\bhperiod{\pi}{j}$ is king-reaching,
        we count the number of occurrences of the behavior period $\bhperiod{\pi}{j}$,
        \begin{equation*}
            \agbv(\pi, \periodtext, j)
            \ =\ 
            \frb(\pi, \periodtext, j) +
            \sum\limits_{i \in \intsinterval{\numouterperiods}} a_i \frp_i(\pi, \periodtext, j)
        \end{equation*}
        Reasoning as in the previous constraint,
        \begin{equation*}
            \agbv(\pi, \periodtext, j)
            \ =\ 
            \sum\limits_{\akrlbv \in \cS^\pi_{\frb, \frP}}
            \akrlbv(\periodtext, j) \cdot c^\pi_{\akrlbv}.
        \end{equation*}
        Hence:\\
            $\sum\limits_{i \in \intsinterval{\numouterperiods}} a_i \frp_i(\pi, \periodtext, j) +
            \frb(\pi, \periodtext, j)
             =
            \sum\limits_{\akrlbv \in \cS^\pi_{\frb, \frP}}
            \akrlbv(\periodtext, j) \cdot c^\pi_{\akrlbv}$.

        Finally, we prove the {\bf Shadow--peasant matching constraint}.
        For every $1 \le i \le k$,
        every 2-type $\eta$, and
        every peasant 1-type $\pi$,
        we compare two ways of counting the number of edges with 2-type $\eta$ from $v^\prekingtext_i$ to vertices with peasant 1-type $\pi$.

        On the one hand, this number is determined by the behavior of $v^\prekingtext_i$.
        \begin{equation*}
            \begin{aligned}            
                \bh{v^\prekingtext_i}(\eta, \pi)
                \ =\ &
                \sum\limits_{j \in \intsinterval{\numbhbases{\pibar_i}}}
                \rep_{\cH}(v^\prekingtext_i)(\basetext, j) \cdot \bhbase{\pibar_i}{j}(\eta, \pi) 
                \\
                & \qquad\qquad+
                \sum\limits_{j \in \intsinterval{\numbhperiods{\pibar_i}}}
                \rep_{\cH}(v^\prekingtext_i)(\periodtext, j) \cdot \bhperiod{\pibar_i}{j}(\eta, \pi) \\
                \ =\ &
                \bfbbar_i(\eta, \pi) +
                \sum\limits_{j \in \intsinterval{\numbhperiods{\pibar_i}}}
                b_{i, j} \cdot \bhperiod{\pibar_i}{j}(\eta, \pi).
            \end{aligned}            
        \end{equation*}

        On the other hand,
        recall that the construction guarantees that there are no edges between $(\pi, j)$ split-vertices and $v^\prekingtext_i$.
        We can count the same edges by summing the contribution of $(\pi, \akrlbv)$ split-vertices.
        Recall that $\kappa_{\akrlbv, i, \eta}$ is $1$ if $\twotypetoking{i}(\akrlbv) = \dual{\eta}$ in $\akrlbv$, otherwise, it is $0$.
        Then the total number of such edges is
        $\sum\limits_{\akrlbv \in \cS^{\pi}_{\frb, \frP}} \kappa_{\akrlbv, i, \eta} \cdot c^\pi_{\krlbv}$.
        Thus
        \begin{equation*}
            \bfbbar_i(\eta, \pi) +
            \sum\limits_{j \in \intsinterval{\numbhperiods{\pibar_i}}}
            b_{i, j} \cdot \bhperiod{\pibar_i}{j}(\eta, \pi)
            \ = \ 
            \sum\limits_{\akrlbv \in \cS^{\pi}_{\frb, \frP}} \kappa_{\akrlbv, i, \eta} \cdot c^\pi_{\krlbv}.
        \end{equation*}
\end{claimproof}

We conclude this step by recording the following properties of $\cH$.
\begin{claim}[$\cH$ properties] \label{claim:h}
    \leavevmode
    \begin{compactitem}
        \item
            $\cH$ contains no multi-edges.
            There are no edges between the vertices $v^\prekingtext_1,\ldots,v^\prekingtext_k$
            and any $(\pi,i)$ split-vertex in $\cH$.

        \item
            $\cH$ realizes $\agbv$ with witness $\rep_\cH$.

        \item
            For $1 \le i \le k$, $\bhg{\cH}{v^\prekingtext_i} = \bfbbar + \sum_{j \in \intsinterval{\numbhperiods{\pibar_i}}} b_{i, j} \cdot \bhperiod{\pibar_i}{j}$.
    \end{compactitem}
\end{claim}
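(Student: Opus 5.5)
The three items of Claim~\ref{claim:h} are all consequences of the invariants of the splitting procedure in Step~1, so I will prove each by induction on the iterations $\cG_0,\ldots,\cG_T=\cH$. The base case is $\cG_0=\cG$, which is a genuine asymmetric structure. It therefore has no multi-edges and no self-loops, and it has no phantom vertices at all.

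\textbf{Second item.} This is the invariant already recorded after the splitting step: at every iteration $\cG_{t+1}$ realizes $\agbv$ with witness $\rep_{t+1}$. At termination this gives that $\cH$ realizes $\agbv$ with witness $\rep_\cH$.

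\textbf{Third item.} A king vertex $v^\prekingtext_i$ is never split, since splitting applies only to concrete vertices of peasant 1-type. Its witness is therefore unchanged: $\rep_\cH(v^\prekingtext_i)=\rep_\cG(v^\prekingtext_i)$. Its base component is $\bfbbar_i$ by the choice of the enumeration (Definition~\ref{def:kpibarbfpibar} and Corollary~\ref{cor:kingonetypesindependentofdstructure}). Its period entries are $b_{i,j}$ by Definition~\ref{def:weakinitialbij}. The first invariant gives $\bhg{\cH}{w}=\eval{\rep_\cH(w)}$ for every vertex $w$. Expanding $\eval{\cdot}$ yields the claimed formula, with $\bfbbar$ read as $\bfbbar_i$.

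\textbf{First item, no king--phantom edges.} I argue inductively that every edge incident to a phantom vertex was created at the split that introduced that phantom, or by a later rewiring from a split of a concrete vertex; none is ever incident to a king. At a split of $v$ with index $i$, the edges moved to the new phantom $u$ are exactly $\bhperiod{\pi}{i}(\eta,\pi')$ many $\eta$-edges to $\pi'$-vertices. Because $\bhperiod{\pi}{i}$ is not king-reaching, $\bhperiod{\pi}{i}(\eta,\pi')=0$ whenever $\pi'$ is a king 1-type. So no edge to a king is ever transferred to a phantom.

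One subtlety must be handled: the other endpoint $w$ of a moved edge might itself be a phantom. This is harmless, since the claim is only about king--phantom edges and kings are never endpoints of moved edges. I also note that only concrete vertices are split, so the edges at an existing phantom are never moved again.

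\textbf{First item, no multi-edges.} This is the step I expect to need the most care. Edges are only moved, never duplicated. A moved edge goes from the pair $\{v,w\}$ to the pair $\{u,w\}$, where $u$ is fresh. Two edges between $u$ and the same $w$ would have to be two edges between $v$ and $w$ in $\cG_t$, which the induction hypothesis forbids. The pair $\{v,w\}$ only loses edges, and all other pairs are unchanged. So absence of multi-edges is preserved.

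I must also check that the choice of $\bhperiod{\pi}{i}(\eta,\pi')$ many $\eta$-edges is possible at all. This follows from $\bhg{t}{v}=\eval{\rep_t(v)}\ge\bhperiod{\pi}{i}$ entrywise, because $\rep_t(v)(\periodtext,i)>0$. Self-loops cannot appear either: $u\neq w$, and $w\neq v$ because the base structure had no self-loops. Finally, asymmetry is preserved because no new 2-types are created.
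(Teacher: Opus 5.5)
Your proposal is correct and follows the paper's argument. All three items come from the invariants of the splitting procedure, and the king--phantom separation rests on the transferred period behaviors $\bhperiod{\pi}{i}$ not being king-reaching. The paper reads this last fact off directly from $\bhg{\cH}{u}=\eval{\rep_\cH(u)}=\bhperiod{\pi}{i}$ for each $(\pi,i)$ split-vertex $u$, whereas you track the moved edges. You also spell out why no multi-edges or self-loops arise, which the paper only asserts.
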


\begin{claimproof}
    \leavevmode
    \begin{compactitem}
        \item
            Since $\cG_0$ is a structure,
            it contains no multi-edges.
            The splitting procedure does not introduce any multi-edges,
            and hence $\cH$ contains no multi-edges.

            For every $(\pi,i)$ split-vertex $u$, we have $\bh{u}=\bhperiod{\pi}{i}$.
            Since $\bhperiod{\pi}{i}$ is not king-reaching,
            $u$ is not adjacent to any vertex of king 1-type.

        \item
            By the invariant of the construction, $\cH$ realizes $\agbv$ with witness $\rep_\cH$.
        
        \item
            For $1 \le i \le k$,
            by definition of realization witness,
            we have
            \begin{equation*}
                \begin{aligned}
                    &\bhg{\cH}{v^\prekingtext_i} \\
                    \ =\ &
                    \sum_{j \in \intsinterval{\numbhbases{\pibar_i}}} \rep_\cH(v^\prekingtext_i)(\basetext, j) \cdot \bhbase{\pibar_i}{j}  +
                    \sum_{j \in \intsinterval{\numbhperiods{\pibar_i}}} \rep_\cH(v^\prekingtext_i)(\periodtext, j) \cdot \bhperiod{\pibar_i}{j}.
                \end{aligned}
            \end{equation*}
            Recall that we choose the enumeration of $v^\prekingtext_1, \ldots, v^\prekingtext_k$, see comments after Corollary \ref{cor:kingonetypesindependentofdstructure}.
            Therefore we have
            \begin{equation*}
                \sum_{j \in \intsinterval{\numbhbases{\pibar_i}}} \rep(v^\prekingtext_i)(\basetext, j) \cdot \bhbase{\pibar_i}{j} = \bfbbar_i.
            \end{equation*}
            By Definition~\ref{def:weakinitialbij},
            we have $\rep_\cH(v^\prekingtext_i)(\periodtext, j) = b_{i, j}$.
            Therefore
            \begin{equation*}
                \bhg{\cH}{v^\prekingtext_i}
                \ =\ 
                \bfbbar_i + \sum_{j \in \intsinterval{\numbhperiods{\pibar_i}}} b_{i, j} \cdot \bhperiod{\pibar_i}{j}.
            \end{equation*}
    \end{compactitem}
\end{claimproof}

\myparagraph{Step 2: shrinking the assignment until it is a solution to the village system}
In Claim \ref{clm:weaksolution}, we showed that the initial weak assignment,
consisting of $a_i$, $b_{i,j}$, and $c^\pi_{\akrlbv}$,
is a solution to the weak village system, which is a non-homogeneous integer linear system.
By well-known bounds on minimal solutions in linear systems 
(see Lemma~\ref{lem:minimal} in Appendix \ref{app:solutionsizelemma}), 
there exists another solution
whose values are bounded by an exponential function of the coefficients of the system,
and hence by an exponential function of the entries in $\frb$ and $\frP$
and the entries in the representation of the semilinear $\bhset{\pi}$ for every 1-type $\pi$.

That is, we can define:
\begin{definition}[$a'_i$ $b'_i$, $c'^\pi_{\akrlbv}$] \label{def:a'_ib'_ietc}
    We let $a'_i \le a_i$, $b'_{i,j} \le b_{i,j}$, and $c'^\pi_{\akrlbv} \le c^\pi_{\akrlbv}$ be the bounded solution given
    by Lemma \ref{lem:minimal}.
\end{definition}

\begin{definition}[$a''_i$, $b''_{i,j}$, $c''^\pi_{\akrlbv}$] \label{def:a''ib''ij}
    We define $a''_i := a_i - a'_i \ge 0$,
    $b''_{i,j} := b_{i,j} - b'_{i,j} \ge 0$, and
    $c''^\pi_{\akrlbv} := c^\pi_{\akrlbv} - c'^\pi_{\akrlbv} \ge 0$
\end{definition}

Consider the assignment to the village system variables, corresponding to $a''_i, b''_{i,j}, c''^\pi_{\akrlbv}$ above.
Note that \emph{since this is difference between two solutions to the weak village system,
it is a solution to the homogeneous weak village system}.
We now show:

\begin{claim} \label{clm:lastconstraint} This assignment also satisfies  the last constraint from Definition~\ref{def:village_system},  the
{\bf Absorption of phantom vertices with peasant 1-type constraint}: for every $1 \le i \le \numouterperiods$,
$x_i > 0$. 
\end{claim}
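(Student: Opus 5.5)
The plan is to compare the given assignment $x_i = a''_i = a_i - a'_i$ with the size bound on the small solution $a'$ from Definition~\ref{def:a'_ib'_ietc}. Write $B$ for the exponential bound of Lemma~\ref{lem:minimal}, which depends only on the coefficients of $\villagesystem^-_{\frb,\frP}$, hence only on $\frb$, $\frP$ and the local bases $\bhbase{\pi}{i}, \bhperiod{\pi}{i}$. Then $a'_i \le B$ for every $1 \le i \le \numouterperiods$. So the claim $a''_i > 0$ for every $i$ reduces to showing that the original coefficients satisfy $a_i > B$ for every $i$.

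I would secure $a_i > B$ by fixing the order in which the proof of Theorem~\ref{thm:sattosolve} makes its choices. The pair $\frb', \frP'$ is chosen \emph{before} the minimal-solution step. As long as some coefficient $a_t \le B$, absorb its copies into the base: $\frb \gets \frb + a_t\frp_t$ and $\frP \gets \frP\setminus\set{\frp_t}$, exactly as in the proof of Lemma~\ref{lem:strongly_concretifying}. The structure $\cG$ is left unchanged, as is its realized $\agbv$, which still lies in the new linear set.

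This absorption has three consequences to check.
\begin{compactitem}
\item Concretifying is preserved by Lemma~\ref{lem:concretifying}.
\item The ``no edges between kings in $\frP$'' condition is preserved. Removing a period can only turn peasant 1-types into king 1-types; it never turns a king into a peasant. A remaining period $\frp_j$ that incorporated an edge between two new kings would incorporate such an edge between the old types as well. If those old types are now kings, $\frp_j$ is again a violating period, and we absorb it too, since Lemma~\ref{lem:bound_king} bounds its coefficient by $\projectionbase{\frb}^2$.
\item The size of the base stays bounded by iterating the recurrence from Lemma~\ref{lem:strongly_concretifying}, at most $\numouterperiods$ times, with $B$ added into $M$. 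This gives the computable bound $f(\frb;\frP)$.
\end{compactitem}
The bound $B$ itself does not change when periods are removed, except possibly by shrinking. Removing a period deletes a variable, and rewriting the constant terms only enlarges coefficients by a bounded amount, so I will take $B$ to be the bound for the worst case over all subsets of $\frP$ and bases of the given size. This is the delicate point. To avoid circularity I would fix $B$ \emph{a priori} as a function of the original data and the maximal base size, so that the loop terminates.

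After this normalization every surviving coefficient satisfies $a_i > B \ge a'_i$, and the claim follows immediately: $a''_i = a_i - a'_i \ge 1$ for each $1\le i\le \numouterperiods$. Together with the observation already made before the claim, this shows that $(a'',b'',c'')$ solves the full homogeneous village system $\villagesystem^{\homo}_{\frb,\frP}$. That observation is that $(a'',b'',c'')$ is a difference of two weak solutions and therefore solves the homogeneous weak system.

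The main obstacle is making the bound $B$ uniform, so that $B$ does not increase each time the base is enlarged. I would try first to use the form of Lemma~\ref{lem:minimal} in which the bound depends on the coefficient matrix and the size of the constant vector. Since the constant vector is $\frb$ restricted to period and base entries, I would bound it by the base-size bound $f$ fixed in advance.
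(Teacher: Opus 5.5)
Your strategy is essentially the paper's: absorb periods with small coefficients into the base, restore strong concretifying as in Lemma~\ref{lem:strongly_concretifying}, and rely on $\abs{\frP}$ strictly decreasing. The paper computes the small solution first; if $a''_t=0$, i.e.\ $a_t=a'_t$, it absorbs $a_t\frp_t$ and returns to Step~1. Your pretest $a_t\le B$ is an equivalent trigger.

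The gap is in how you handle the point you flag as delicate. The uniform a priori bound $B$ you propose cannot exist, and your assertion that $B$ can only shrink when periods are removed is false. The constant terms of $\villagesystem^-_{\frb,\frP}$ are entries of the current base and of the $\bfbbar_i$. Absorbing up to $B$ copies of a period $\frp_t$ can therefore push the largest constant $M$ up to order $B\cdot\abs{\frp_t}$. Meanwhile the bound supplied by Lemma~\ref{lem:minimal} is at least $(n+2)M+1$. So a $B$ fixed in advance as a function of the ``maximal base size'' would have to be at least $(n+2)$ times something of order $B$ itself, and no finite $B$ satisfies that. Absorption also changes which 1-types are kings. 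Consequently the variables $y_{i,j}$, the sets $\cS^\pi_{\frb,\frP}$, and the constraints themselves all change, not just the constants.

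The repair is that uniformity is not needed. The loop terminates after at most $\numouterperiods$ rounds whatever threshold is used, because every round deletes a period. So in each round $s$, recompute $B_s$ from the current pair $(\frb_s,\frP_s)$. This is legitimate, since $B_s$ depends only on that pair and the local bases, not on $\cG$. Absorb while some surviving $a_t\le B_s$, together with the strong-concretifying absorptions you describe. Stop when every surviving $a_t>B_s$, or when $\frP_s=\emptyset$, in which case the claim is vacuous.

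Applying Lemma~\ref{lem:minimal} to the final system then gives $a'_i\le B_{\mathrm{final}}<a_i$, hence $a''_i\ge 1$. The final base size is bounded by a $\numouterperiods$-fold iterate of a computable function, which is the same non-elementary but computable bound the paper obtains. The only remaining difference is that you normalize once before Steps~1--2, whereas the paper re-runs Step~1 after each absorption.
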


\begin{proof}
We distinguish two cases.
\begin{itemize}
    \item
        Suppose that $a''_t = 0$ for some $1 \le t \le \numouterperiods$.
        We have $\agbv \in \cL(\frb + a'_t\frp_t; \frP \setminus \set{\frp_t})$.
        It is possible that $\frb + a'_t\frp_t$ and $\frP \setminus \set{\frp_t}$ is not strongly concretifying. If this
        is the case, by Lemma~\ref{lem:strongly_concretifying},
        there exists $\frb' \in \cL(\frb + a'_t\frp_t; \frP \setminus \set{\frp_t})$
        and $\frP' \subsetneq \frP  \setminus \set{\frp_t}$
        such that
        $\frb'$ and $\frP'$ are strongly concretifying,
        $\abs{\frb'} \le \left(2M\abs{\frb + a'_t\frp_t}\right)^{2^{\numouterperiods}}$, and
        $\agbv \in \cL(\frb';\frP')$,
        where $M$ depends on $\frP$, as obtained in Lemma~\ref{lem:strongly_concretifying}.
        We then return to Step~1 with $(\frb';\frP')$. 
        Since the cardinality of $\frP'$ is strictly less than the cardinality of $\frP$,
        this process terminates after at most $\abs{\frP}$ iterations, and after that we must be in the other case.
        We abuse notation by continuing to call the resulting $\gbv$  $\frb$, and to call the resulting set of $\gbv$s $\frP$.

        Note that $a'_t$ is bounded by an exponential function 
        (given by Lemma~\ref{lem:minimal} in Appendix~\ref{app:solutionsizelemma}).
        Thus, there is a triple-exponential blow-up  in the size of $\frb'$.
        Since we perform this iteration $\abs{\frP}$ times,
        this step can yield non-elementary blow-up.
        
    \item
        Otherwise, $a''_i > 0$ for every $1\le i\le\numouterperiods$.
        Since every $a''_i > 0$ for every $1\le i\le\numouterperiods$,
        it follows that $\set{x_i \gets a''_i}$ is a solution of the  {\bf Absorption of phantom vertices with peasant 1-type constraint}.
        Therefore, 
        $\set{x_i \gets a''_i, y_{i, j} \gets b''_{i, j}, z^\pi_{\akrlbv} \gets c''^\pi_{\akrlbv}}$
        is a solution to the homogeneous village system $\villagesystem_{\frb,\frP}^\homo$.
\end{itemize}
\end{proof}

So far, we have shown that
$\set{x_i \gets a'_i, y_{i, j} \gets b'_{i, j}, z^\pi_{\akrlbv} \gets c'^\pi_{\akrlbv}}$
is a solution to the weak village system $\villagesystem_{\frb,\frP}^-$,
and
$\set{x_i \gets a''_i, y_{i, j} \gets b''_{i, j}, z^\pi_{\akrlbv} \gets c''^\pi_{\akrlbv}}$
is a solution to the homogeneous village system $\villagesystem_{\frb,\frP}^\homo$.

We define the \emph{target linear representation} and show that its size if bounded.
\begin{definition}[$\frb'$, $\frP'$, and target linear representation] \label{def:targetrep}
    Let $\frb' = \frb + \sum_{i \in \intsinterval{\numouterperiods}} a'_i \frp_i$ and $\frP' = \frP$.
    We refer the pair $\frb'$ and $\frP'$ as the \emph{target linear representation}.
\end{definition}

\begin{claim}
    The size of $\frb'$ is bounded by a computable function.
\end{claim}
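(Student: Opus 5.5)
We bound $\abs{\frb'}$ by unfolding its definition. By Definition~\ref{def:targetrep}, $\frb' = \frb + \sum_{i \in \intsinterval{\numouterperiods}} a'_i \frp_i$, so by additivity of size,
\[
\abs{\frb'} \;\le\; \abs{\frb} + \numouterperiods \cdot M \cdot \max_i a'_i ,
\]
where $M$ is the maximum of $\abs{\frp_i}$ as in Lemma~\ref{lem:strongly_concretifying}. It therefore suffices to bound each $a'_i$ by a computable function of $\frb$, $\frP$ and the local bases $\bhbase{\pi}{j}$, $\bhperiod{\pi}{j}$.

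The bound on $a'_i$ comes from Definition~\ref{def:a'_ib'_ietc}: the $a'_i$ are the components of the small solution of the weak village system $\villagesystem^-_{\frb,\frP}$ supplied by Lemma~\ref{lem:minimal}. That lemma bounds a minimal solution exponentially in the number of variables and in the maximal absolute value of the coefficients and constants. So the next step is to check that both of these quantities for $\villagesystem^-_{\frb,\frP}$ are computable from the data.

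\emph{Coefficients and constants.} These are entries of $\frb$ and the $\frp_i$, entries of the local bases, the vectors $\bfbbar_i$ (which are base vectors), the numbers $\kappa_{\akrlbv,i,\eta}\in\{0,1\}$, and the entries $\akrlbv(x,j)$ of well-behaved $\krlbv$s. The last are bounded by $k$, by the proof of the lemma bounding well-behaved $\krlbv$s, and $k \le \projectionbase{\frb}$.

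\emph{Number of variables.} This is $\numouterperiods$ plus $\sum_i \numbhperiods{\pibar_i}$ plus $\sum_\pi \abs{\cS^\pi_{\frb,\frP}}$, and the last sum is bounded by that same lemma. Every ingredient is effectively computable from $\anfgptwosentence$ (via Proposition~\ref{prop:conformantsemilinear}), $\frb$ and $\frP$, so the resulting exponential bound is a computable function.

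The main obstacle is the iteration in Claim~\ref{clm:lastconstraint}. There $\frb$ and $\frP$ may be replaced, up to $\abs{\frP}$ times, by a new pair obtained through Lemma~\ref{lem:strongly_concretifying}, each time with a size blow-up of the form $\left(2M\abs{\frb + a'_t\frp_t}\right)^{2^{\numouterperiods}}$. The plan is to define the bound $f$ recursively:
\begin{itemize}
\item $g_0$ is the size of the original $\frb$;
\item $g_{s+1}$ is obtained by plugging the bound on $a'_t$, computed from a base of size at most $g_s$, into the formula of Lemma~\ref{lem:strongly_concretifying}.
\end{itemize}
Here $M$ and the local bases never change, since $\frP$ only shrinks. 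Monotonicity of the Lemma~\ref{lem:minimal} bound in the coefficient sizes makes each $g_s$ computable. The final bound is the estimate above applied with base size $g_{\abs{\frP}}$. The result is non-elementary but computable, which is all the claim asserts.
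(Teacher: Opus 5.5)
Your proposal is correct and follows the same route as the paper. You bound the $a'_i$ via Lemma~\ref{lem:minimal} applied to the weak village system, whose coefficients and size are computable from $\frb$, $\frP$ and the local bases. You then absorb the at most $\abs{\frP}$ restarts of Claim~\ref{clm:lastconstraint} into a recursively defined, non-elementary but computable bound, which just makes the paper's terse ``triple-exponential blow-up per iteration'' argument explicit.

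One small correction: the bound in Lemma~\ref{lem:minimal} is $\left((n+1)M+M+1\right)^m$, which is exponential in the number $m$ of equations rather than in the number of variables. You should therefore also note that the number of constraints of $\villagesystem^-_{\frb,\frP}$ is computable; it is, being bounded in terms of the numbers of 1-types and 2-types, the local basis sizes, and $k$.
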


\begin{claimproof}
    We bound the size of $\frb'$ as follows.
    The coefficients in these constraints come from the entries of $\frb$ and $\frP$, as well as from the entries in the representation of the semilinear set $\bhset{\pi}$ for each 1-type $\pi$.
    The value $a_i'$ is bounded exponentially in the maximum absolute value of these coefficients,
    by the bound on linear systems given in Lemma~\ref{lem:minimal} in Appendix~\ref{app:solutionsizelemma}.
    Recall that we iteratively construct a system of linear equations admitting a solution in which $a_i''>0$ for every $i$.
    Each iteration causes a triple-exponential blow-up in the coefficients of the system.
    Thus, although the construction is non-elementary, it is effective, and hence the size of $\frb'$ is bounded by a computable function.
\end{claimproof}

We conclude this step by showing that the following \emph{target assignment} is a solution of the homogeneous village system of the target linear representation $\frb'$ and $\frP'$.

\begin{definition}[Target assignment] \label{def:targetassign}
    In the remainder of the proof,
    we refer to $\set{x_i \gets a''_i, y_{i, j} \gets b''_{i, j}, z^\pi_{\akrlbv} \gets c''^\pi_{\akrlbv}}$ as the \emph{target assignment}.
\end{definition}

\begin{claim}\label{claim:target_assign_is_sol}
    The target assignment is a solution to the homogeneous village system $\villagesystem_{\frb',\frP'}^\homo$.
\end{claim}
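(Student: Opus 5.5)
We need to check that the target assignment satisfies each constraint of $\villagesystem^\homo_{\frb',\frP'}$. The key observation is that the homogeneous system $\villagesystem^\homo_{\frb',\frP'}$ does not depend on the base vector at all. Its constant terms ($\frb'(\pi,\cdot,\cdot)$ and $\bfbbar_i$) are exactly what gets dropped. What remains, the coefficients $\frp_i(\pi,x,j)$, $\bhperiod{\pibar_i}{j}(\eta,\pi)$, $\akrlbv(x,j)$ and $\kappa_{\akrlbv,i,\eta}$, depends only on $\frP'=\frP$, the local bases, and the king data. So the plan is to show that $\villagesystem^\homo_{\frb',\frP'}$ coincides syntactically with $\villagesystem^\homo_{\frb,\frP}$. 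We then invoke the conclusion already reached in the second case of Claim~\ref{clm:lastconstraint}: the target assignment solves $\villagesystem^\homo_{\frb,\frP}$, being the difference of two solutions of the weak system together with $a''_i>0$ for all $i$.

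The first step is to check that the data on which the system is built is unchanged when passing from $(\frb,\frP)$ to $(\frb',\frP)$. First, the king and peasant $1$-types coincide. A type is peasant iff it is incorporated basely by some $\frp_i$, which depends only on $\frP$. It is king iff it is incorporated basely by the base vector but by no $\frp_i$. Since $\frb'=\frb+\sum a'_i\frp_i$ and no $\frp_i$ incorporates a king type basely, $\frb'$ incorporates the same king types basely as $\frb$. Second, by Lemma~\ref{lem:kernel} and Corollary~\ref{cor:kingonetypesindependentofdstructure}, the king counts $\projectionbasepi{\frb'}{\pi}=\projectionbasepi{\frb}{\pi}$, and the base behaviors $\bfbbar_1,\dots,\bfbbar_k$ are determined by $\frb'(\pi,\basetext,\cdot)=\frb(\pi,\basetext,\cdot)$ on king types. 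Hence $k$, the enumeration $\pibar_1,\ldots,\pibar_k$, and the sets $\cS^\pi_{\frb',\frP'}=\cS^\pi_{\frb,\frP}$ of well-behaved $\krlbv$s agree, and the variable sets of the two systems are identical. We also need strong concretification of $(\frb',\frP')$, which follows from Lemma~\ref{lem:concretifying} together with the observation that the king types, and hence the ``no king--king edges in periods'' condition, are unchanged.

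The second step is to compare the constraints one by one. In the homogeneous versions of the counting shadows, counting peasant bases, counting peasant periods, and shadow--peasant matching constraints, only $x_i$-terms with coefficients from $\frp_i$ remain on the left. The right-hand sides use $y_{i,j}$, $z^\pi_\akrlbv$ with coefficients $\akrlbv(\cdot)$, $\kappa$, and the periods $\bhperiod{\pibar_i}{j}$. All of these were shown to be identical in the first step. The absorption constraint $x_i>0$ is indexed by $\frP'=\frP$. So the two homogeneous systems are equal, and the target assignment satisfies $\villagesystem^\homo_{\frb',\frP'}$.

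The main obstacle is the subtle point that the iteration in Claim~\ref{clm:lastconstraint} may have replaced $(\frb,\frP)$ by a new pair. We must make sure that the ``$\frb,\frP$'' above refers to the final pair, for which $a''_i>0$ holds for every $i$. We also need the homogeneous-difference argument to still be valid after the abuse of notation: both $(a_i,b_{i,j},c)$ and $(a'_i,b'_{i,j},c')$ must solve the weak system for that final pair. I would handle this by re-running Step~1 on the final pair, so that Claim~\ref{clm:weaksolution} applies verbatim to it, and then applying the two steps above.
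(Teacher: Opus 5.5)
Your proposal is correct and follows the paper's own argument: the paper also notes that $(\frb,\frP)$ and $(\frb',\frP')$ induce the same king and peasant $1$-types, so $\villagesystem^\homo_{\frb,\frP}$ and $\villagesystem^\homo_{\frb',\frP'}$ coincide, and the target assignment, already shown in Claim~\ref{clm:lastconstraint} to solve the former, solves the latter. You simply spell out what the paper leaves implicit: the agreement of $k$, of $\pibar_1,\ldots,\pibar_k$ and of the sets $\cS^\pi$, and the bookkeeping when Step~1 is re-run after the iteration.
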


\begin{claimproof}
    Since $(\frb, \frP)$ and $(\frb', \frP')$ induce the same set of king 1-types and peasant 1-types.
    The homogeneous village systems $\villagesystem_{\frb,\frP}^\homo$ and $\villagesystem_{\frb',\frP'}^\homo$ coincide.
    Therefore
    $\set{x_i \gets a''_i, y_{i, j} \gets b''_{i, j}, z^\pi_{\akrlbv} \gets c''^\pi_{\akrlbv}}$
    is a solution to the homogeneous village system $\villagesystem_{\frb',\frP'}^\homo$.
\end{claimproof}

\begin{claim}
    $\set{x_i \gets 0, y_{i, j} \gets b'_{i, j}, z^\pi_{\akrlbv} \gets c'^\pi_{\akrlbv}}$
    is a solution to the weak village system $\villagesystem_{\frb',\frP'}^-$.
\end{claim}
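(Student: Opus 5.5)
The plan is to verify directly that the assignment $x_i \gets 0$, $y_{i,j}\gets b'_{i,j}$, $z^\pi_{\akrlbv}\gets c'^\pi_{\akrlbv}$ satisfies each constraint of $\villagesystem^-_{\frb',\frP'}$. I would use two facts already established. First, $\set{x_i \gets a'_i, y_{i,j}\gets b'_{i,j}, z^\pi_{\akrlbv}\gets c'^\pi_{\akrlbv}}$ solves $\villagesystem^-_{\frb,\frP}$ (Definition~\ref{def:a'_ib'_ietc}). Second, the target representation is $\frb' = \frb + \sum_{i} a'_i \frp_i$ with $\frP'=\frP$ (Definition~\ref{def:targetrep}). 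As in Claim~\ref{claim:target_assign_is_sol}, the pairs $(\frb,\frP)$ and $(\frb',\frP')$ induce the same king and peasant $1$-types. The reason is that $\frb'$ incorporates basely exactly the $1$-types incorporated basely by $\frb$ or by some $\frp_i$ with $a'_i>0$, and every such type incorporated by a period is already peasant. Consequently the variable sets coincide, and so do the sets $\cS^\pi_{\frb,\frP}=\cS^\pi_{\frb',\frP'}$, the data $k$, $\pibar_i$, and the indices $I_\pi$.

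I would then compare the two systems constraint by constraint. The counting shadows, counting peasant bases and counting peasant periods constraints of $\villagesystem^-_{\frb',\frP'}$ have the shape $\frb'(\pi,x,j)+\sum_i x_i\frp_i(\pi,x,j) = \text{RHS}(y,z)$. Substituting $x_i\gets 0$ reduces the left side to $\frb'(\pi,x,j)=\frb(\pi,x,j)+\sum_i a'_i\frp_i(\pi,x,j)$. This is exactly the left side of the corresponding constraint of $\villagesystem^-_{\frb,\frP}$ evaluated at $x_i\gets a'_i$. The right sides agree because $y,z$ take the same values. The shadow--peasant matching constraint involves no $x_i$. Its constant $\bfbbar_i$ is the base behavior of the $i$th king; by Corollary~\ref{cor:kingonetypesindependentofdstructure} this is determined by the king slice of the base vector. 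I would check that $\frb'$ and $\frb$ have the same king base slices. This holds because no $\frp_i$ incorporates a king type basely, so $\frp_i(\pi,\basetext,\cdot)=0$ for every king $\pi$. The constraint is therefore literally the same as before and holds.

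The main obstacle is this identification of the constants: the king data $k$, $\pibar_i$ and $\bfbbar_i$, together with the king-reaching classification and the well-behaved $\krlbv$ sets, must be unchanged when passing to $(\frb',\frP')$. I would settle it by the base-slice argument above. Since the weak system omits the absorption constraint $x_i>0$, nothing further needs to be checked, and the claim follows.
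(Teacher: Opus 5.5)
Your proof is correct and follows the paper's argument. Substituting $x_i\gets 0$ into the counting constraints of the system for $(\frb',\frP')$ reproduces the corresponding constraints for $(\frb,\frP)$ evaluated at $x_i\gets a'_i$, and the shadow--peasant matching constraint does not involve the $x_i$, so it is unchanged. You also check explicitly that the king/peasant classification and the king data $k$, $\pibar_i$, $\bfbbar_i$ are invariant, by noting that $\frb'$ and $\frb$ have the same king base slices; this fills in a step the paper only asserts.
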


\begin{claimproof}
    Note that in the weak village system $\villagesystem_{\frb,\frP}^-$,
    the variables $x_i$ occur only through expressions of the form
    $\frb(\pi, x, j) + \sum_{i \in \intsinterval{\numouterperiods}} a'_i \frp_i(\pi, x, j)$.
    This implies that\\
    $\set{x_i \gets 0, y_{i, j} \gets b'_{i, j}, z^\pi_{\akrlbv} \gets c'^\pi_{\akrlbv}}$
    is a solution to the weak village system $\villagesystem_{\frb',\frP'}^-$.
\end{claimproof}

\myparagraph{Step 3: finding a kernel for the target assignment}
The remainder of the proof of Theorem \ref{thm:sattosolve} will involve showing that
there exists a bounded-size kernel for the target linear representation given by $\frb'$ and $\frP'$
as well as showing that the target assignment
$\set{x_i \gets a''_i, y_{i, j} \gets b''_{i, j}, z^\pi_{\akrlbv} \gets c''^\pi_{\akrlbv}}$
is kernel-friendly. 

We obtain the desired kernel by transforming the D-structure $\cH$ constructed in Step~1,
guided by the small solution obtained in Step~2.
The transformation proceeds in three phases. 

\myparagraph{Step 3A: modifying and decomposing to obtain a bounded kernel}
First, we obtain $\cH'$ by partitioning the vertices of peasant 1-type and splitting the vertices of king 1-type, as follows.
\begin{definition}[$\cH'$, Kernel and village vertices, Kernel part, Village part] \label{def:villageandkernelpartvertices}
    \leavevmode
    \begin{compactitem}
        \item
            We first partition the vertices of peasant 1-type.
            For every peasant 1-type $\pi$ and every well-behaved $\krlbv$ $\akrlbv$,
            we choose $c'^\pi_{\akrlbv}$ many $(\pi,\akrlbv)$ split-vertices and designate them as \emph{$(\pi,\akrlbv)$ kernel-vertices}.
            The remaining $(\pi,\akrlbv)$ split-vertices are designated as \emph{$(\pi,\akrlbv)$ village-vertices}.
            Similarly, for every $1 \leq j \leq \numbhperiods{\pi}$ such that $\bhperiod{\pi}{j}$ is non-king-reaching,
            we choose $\frb'(\pi,\periodtext,j)$ many $(\pi,j)$ split-vertices and designate them as \emph{$(\pi,j)$ kernel-vertices}.
            The remaining $(\pi,j)$ split-vertices are designated as \emph{$(\pi,j)$ village-vertices}.

            We leave the realization witnesses of all vertices of peasant 1-type unchanged; that is,
            $\rep_{\cH'}(v)=\rep_{\cH}(v)$ for every such vertex $v$.

        \item
            We next split each vertex of king 1-type into a king and a shadow.
            For every $1 \le i \le k$, we add a phantom vertex $v^\shadowtext_i$ to $\cH'$ and assign it 1-type $\pibar_i$.
            We redirect every edge between $v^\prekingtext_i$ and a $(\pi,\akrlbv)$ village-vertex so that it is instead incident to $v^\shadowtext_i$,
            while keeping its other endpoint unchanged.
            We then rename $v^\prekingtext_i$ as $v^\kingtext_i$.

            We define the realization witnesses of $v^\kingtext_i$ and $v^\shadowtext_i$ as follows.
            For every $\pi$ and $j$, we let
            $\rep_{\cH'}(v^\kingtext_i)(\pi,\basetext,j) = \rep_{\cH}(v^\prekingtext_i)(\pi,\basetext,j)$ and
            $\rep_{\cH'}(v^\kingtext_i)(\pi,\periodtext,j) = b'_{i,j}$.
            For the shadow $v^\shadowtext_i$, we let
            $\rep_{\cH'}(v^\shadowtext_i)(\pi,\basetext,j) = 0$ and \\
            $\rep_{\cH'}(v^\shadowtext_i)(\pi,\periodtext,j) = b''_{i,j}$.
    \end{compactitem}

    We call the $v^\kingtext_i$, $(\pi, \akrlbv)$ kernel-vertices, and $(\pi, j)$ kernel-vertices the \emph{kernel part of $\cH'$},
    and $v^\shadowtext_i$, $(\pi, \akrlbv)$ village-vertices, and $(\pi, j)$ village-vertices the \emph{village part of $\cH'$}.
\end{definition}

We need to generalize the notion of a D-structure realizing a $\gbv$ (Definition \ref{def:realizingdstructure}) to restrict the behaviors
to a subset.
\begin{definition}[Subset of a D-structure realizing a $\gbv$]\label{def:subset_dstruct_realize_gbv}
    Let $\cG$ be a D-structure and $V$ be a subset of the vertices in $\cG$.
    We say that $V$ of $\cG$ \emph{realizes} a $\gbv$ $\agbv$ if 
    there exists a mapping $\rep$ mapping concrete vertices of $\cG$ to valid $\lbv$s and phantom vertices to
    homogeneously valid $\lbv$s such that
    \begin{inparaenum}
        \item for every element $u$, $u$ realizes $\rep(u)$ (Def. \ref{def:lbv}) and
        \item for every $\pi \in \onetypes$, 
        the slice of $\agbv$ on $\pi$ is precisely the sum of $\rep(u)$ over all elements $u$ of 1-type $\pi$ in $V$.
    \end{inparaenum}
    The mapping $\rep$ is again called a \emph{witness of the realization}.
\end{definition}

We argue for some properties of $\cH'$.
\begin{claim}[$\cH'$ properties] \label{claim:hp}
The following  properties hold for $\cH'$.
    \leavevmode
    \begin{compactenum}
        \item
            $\cH'$ also contains no multi-edges or self-loops.
            There are no edges between any two vertices
            $v^\shadowtext_i$ and $v^\shadowtext_j$.
            There are no edges between $v^\kingtext_i$ and any
            $v^\shadowtext_j$, $(\pi,\akrlbv)$ village-vertex, or $(\pi,j)$ village-vertex;
            There are no edges between $v^\shadowtext_i$ and any
            $v^\kingtext_j$, $(\pi,\akrlbv)$ kernel-vertex, or $(\pi,j)$ kernel-vertex.

            However, there may still be edges between the kernel part and the village part.

        \item
            $\rep_{\cH'}$ satisfies the condition of a realization witness; that is, for every vertex $v$ of $\cH'$,
            $\bhg{\cH'}{v} = \eval{\rep_{\cH'}(v)}$.

        \item
            For $1 \le i \le k$,
            $\bhg{\cH'}{v^\shadowtext_i} = \sum\limits_{j \in \intsinterval{\numbhperiods{\pibar_i}}} b''_{i, j} \bhperiod{\pibar_i}{j}$.
where $b''_{i,j}$ are defined in Definition \ref{def:a''ib''ij}.
        \item
            For $1 \le i \le k$,
            $\bhg{\cH'}{v^\kingtext_i} = \bfbbar_i + \sum\limits_{j \in \intsinterval{\numbhperiods{\pibar_i}}} b'_{i, j} \bhperiod{\pibar_i}{j}$.
where $b'_{i,j}$ are as in Definition \ref{def:a'_ib'_ietc}.
        \item
            The kernel part of $\cH'$ realizes the $\gbv$ $\frb'$, in the sense of Definition \ref{def:subset_dstruct_realize_gbv}, with $\rep_{\cH'}$ as a witness.
            %
        \item
            The village part of $\cH'$ realizes the $\gbv$ $\sum_{i \in \intsinterval{\numouterperiods}} a'_i \frp_i$, with $\rep_{\cH'}$ as a witness.
    \end{compactenum}
\end{claim}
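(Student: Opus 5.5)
We verify the six items of Claim~\ref{claim:hp} in order, each by comparing $\cH'$ with $\cH$ via Claim~\ref{claim:h} and by tracking which edges the redirection in Definition~\ref{def:villageandkernelpartvertices} moves. The only structural change from $\cH$ to $\cH'$ is the redirection of edges between $v^\prekingtext_i$ and $(\pi,\akrlbv)$ village-vertices to the fresh phantom $v^\shadowtext_i$. The partition of peasant vertices does not touch edges at all.

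For item 1, first note that the redirection is injective on edges and keeps one endpoint fixed, while the new endpoint $v^\shadowtext_i$ is fresh. Hence a multi-edge in $\cH'$ would have to come from a multi-edge in $\cH$, and there is none by Claim~\ref{claim:h}. Likewise a self-loop in $\cH'$ would require $v^\shadowtext_i$ to be joined to itself, which cannot happen since its other endpoint is always a peasant. Shadows are only ever joined to $(\pi,\akrlbv)$ village-vertices, so they have no edges to each other, to kings, or to kernel-vertices. Kings lose every edge to $(\pi,\akrlbv)$ village-vertices. They never had edges to $(\pi,j)$ split-vertices (Claim~\ref{claim:h}), and they had no edge to another king's shadow, since shadows are fresh.

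Items 3 and 4 come from counting. Every edge from a pre-king to a peasant is counted in $\bh{v^\prekingtext_i}$. After the split, the edges that went to village-vertices belong to $v^\shadowtext_i$, and the rest, to kings and kernel-vertices, stay at $v^\kingtext_i$. The point to check is that this division produces exactly $\bfbbar_i+\sum_j b'_{i,j}\bhperiod{\pibar_i}{j}$ and $\sum_j b''_{i,j}\bhperiod{\pibar_i}{j}$. To see this, count the edges of each $(\eta,\pi)$ coming from village-vertices by means of the $\kappa_{\akrlbv,i,\eta}$ coefficients and the $c''$ counts. The shadow--peasant matching constraint for the homogeneous solution $(a'',b'',c'')$ then gives $\sum_j b''_{i,j}\bhperiod{\pibar_i}{j}(\eta,\pi)$ for every peasant $\pi$. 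For king targets $\pi$, the shadow has no such edges, and the left-hand side should also be $0$. Here I would appeal to strong concretifying: no period incorporates an edge between king types, so the king-to-king components of every $\bhperiod{\pibar_i}{j}$ that appears with nonzero $b_{i,j}$ vanish. I expect this point to be the main obstacle, since it relies on inferring from the realizability of $\agbv$ that period vectors actually used by kings carry no king-targeted entries. Once item 3 is established, item 4 follows by subtracting it from Claim~\ref{claim:h}, using $b=b'+b''$.

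Item 2 is then immediate. For peasant vertices both the behavior and the witness are unchanged: a peasant edge that is redirected still points to a vertex of the same 1-type $\pibar_i$, so its behavior entry is preserved. For kings and shadows, items 3 and 4 are exactly $\eval{\rep_{\cH'}}$. Validity also holds, since the base part of the king witness is unchanged and the shadow's base entries are $0$. For items 5 and 6, sum the witnesses by 1-type. On a king type, kernel-part sums are $\frb(\pi,\cdot)$ plus the $b'$ totals, and the counting-shadows constraint of the weak system applied to $(a',b',c')$ identifies this with $\frb'(\pi,\cdot)$. On a peasant type, the kernel-vertices contribute $\sum_\akrlbv c'^\pi_\akrlbv\akrlbv$ together with $\frb'(\pi,\periodtext,j)$ copies for each non-king-reaching $j$, and the counting-peasant constraints identify this with $\frb'$. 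Item 6 follows in the same way from the homogeneous constraints for $(a'',b'',c'')$, or equivalently by subtracting item 5 from the total $\agbv$. (Item 6 in the statement is written with $a'_i$, but the village total is $\sum_i a''_i\frp_i$; I would read that as a typo.)
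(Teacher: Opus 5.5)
Your plan follows the paper's proof item by item. Item~3 comes from the shadow--peasant matching constraint for the target assignment $(a'',b'',c'')$. Item~4 comes from subtracting item~3 from Claim~\ref{claim:h}. Item~5 comes from the weak-system constraints for $(a',b',c')$; the paper uses the equivalent solution $(0,b',c')$ of $\villagesystem^-_{\frb',\frP'}$. Item~6 comes from subtracting item~5 from $\agbv$. Your reading of item~6 as $\sum_i a''_i\frp_i$ is right, since $\agbv-\frb'=\sum_i(a_i-a'_i)\frp_i$.

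You are also more complete than the paper in two places. You check item~2 for kings and shadows, not only peasants. You also notice that item~3 must hold on king-targeted entries, which shadow--peasant matching (stated only for peasant $\pi$) does not cover. The paper's proof passes over this, even though item~4 needs it too.

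The justification you propose for those entries, however, fails as stated. It is not true that every $\bhperiod{\pibar_i}{j}$ with $b_{i,j}>0$ has vanishing king-targeted entries. Strong concretifying restricts only the outer periods in $\frP$, not $\frb$. So $\frb(\pi,\periodtext,j)$ can be positive for a king type $\pi$ whose period $\bhperiod{\pi}{j}$ contains an edge to a king type. The counting-shadows constraint then forces $\sum_{i\in I_\pi}b_{i,j}\ge\frb(\pi,\periodtext,j)>0$; those king-to-king edges simply remain at $v^{\kingtext}_i$. No inference from realizability of $\agbv$ can rescue this.

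What you need is the corresponding statement for $b''_{i,j}$, and it follows in one line from the \emph{homogeneous} counting-shadows constraint. The target assignment satisfies that constraint, being a difference of two weak solutions. Suppose $\bhperiod{\pi}{j}(\eta,\pi')>0$ with $\pi,\pi'$ king types. Then $\frp_t(\pi,\periodtext,j)=0$ for every $t$, since otherwise $\frp_t$ would incorporate $\eta$ between $\pi$ and $\pi'$, contradicting strong concretifying. Hence $\sum_{i\in I_\pi}b''_{i,j}=\sum_t a''_t\,\frp_t(\pi,\periodtext,j)=0$. So $\sum_j b''_{i,j}\bhperiod{\pibar_i}{j}(\eta,\pi')=0$ for every king type $\pi'$, which matches the shadow having no king-targeted edges. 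The subtraction giving item~4 then holds on all entries.
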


\begin{claimproof}(of Claim \ref{claim:hp})
    \begin{compactenum}
        \item
            Since $\cH$ contains neither multi-edges nor self-loops, and the splitting procedure introduces neither,
            $\cH'$ also contains no multi-edges or self-loops.
            The remaining properties are also straightforward from our construction.

        \item
            Let $v$ be a vertex of peasant 1-type. Our construction does not change the behavior of $v$, and hence
            $\bhg{\cH'}{v} = \bhg{\cH}{v}$.
            Moreover, we keep its realization witness unchanged, so $\rep_{\cH'}(v) = \rep_{\cH}(v)$.
            Since $\rep_{\cH}$ is a realization witness of $\cH$, we have $\bhg{\cH}{v} = \eval{\rep_{\cH}(v)}$.
            Combining these equalities yields $\bhg{\cH'}{v} = \eval{\rep_{\cH'}(v)}$.

        \item
            For $1 \le i \le k$,
            note that the $v^\shadowtext_i$ are connected with $c''^\pi_{\akrlbv}$ many $(\pi, \akrlbv)$ village-vertices,
            we have
            \begin{equation*}
                \bhg{\cH'}{v^\shadowtext_i}(\eta, \pi)
                \ =\ 
                \sum\limits_{\akrlbv \in \cS^{\pi}_{\frb', \frP'}}
                \kappa_{\akrlbv, i, \eta} \cdot c''^\pi_{\akrlbv},
            \end{equation*}
            where $\kappa_{\akrlbv, i, \eta}$ is $1$ if $\twotypetoking{i}(\akrlbv) = \dual{\eta}$, otherwise, it is $0$.
            By the constraint {\bf shadow--peasant matching} in $\villagesystem_{\frb', \frP'}^\homo$, which is satisfied by the target assignment, we have:
            \begin{equation*}
                \sum\limits_{j \in \intsinterval{\numbhperiods{\pibar_i}}} b''_{i, j} \bhperiod{\pibar_i}{j}(\eta, \pi)
                \ =\ 
                \sum\limits_{\akrlbv \in \cS^{\pi}_{\frb, \frP}}
                \kappa_{\akrlbv, i, \eta} \cdot c''^\pi_{\akrlbv},
            \end{equation*}
            Therefore, $\bhg{\cH'}{v^\shadowtext_i} = \sum\limits_{j \in \intsinterval{\numbhperiods{\pibar_i}}} b''_{i, j} \bhperiod{\pibar_i}{j}$.

        \item
            Recall that the edges of the $v^\shadowtext_i$ in $\cH'$ are redirected from $v^\prekingtext_i$ in $\cH$,
            which implies that
            \begin{equation*}
                \bhg{\cH}{v^\prekingtext_i}
                \ =\ 
                \bhg{\cH'}{v^\shadowtext_i} + \bhg{\cH'}{v^\kingtext_i}.
            \end{equation*}
            By Claim~\ref{claim:h}, it holds that
            \begin{equation*}
                \bhg{\cH}{v^\prekingtext_i}
                \ =\ 
                \bfbbar_i + \sum\limits_{j \in \intsinterval{\numbhperiods{\pibar_i}}} b_{i, j} \cdot \bhperiod{\pibar_i}{j}
            \end{equation*}
            Thus, we have
            \begin{equation*}
                \begin{aligned}
                    \bhg{\cH'}{v^\kingtext_i}
                    \ =\ &
                    \bhg{\cH}{v^\prekingtext_i} - \bhg{\cH'}{v^\shadowtext_i} \\
                    \ =\ &
                    \bfbbar_i
                    + \sum\limits_{j \in \intsinterval{\numbhperiods{\pibar_i}}} b_{i, j} \bhperiod{\pibar_i}{j}
                    - \sum\limits_{j \in \intsinterval{\numbhperiods{\pibar_i}}} b''_{i, j} \bhperiod{\pibar_i}{j} \\
                    \ =\ &
                    \bfbbar_i
                    + \sum\limits_{j \in \intsinterval{\numbhperiods{\pibar_i}}} b'_{i, j} \bhperiod{\pibar_i}{j},
                \end{aligned}
            \end{equation*}
            where the final equality follows from Definition~\ref{def:a''ib''ij}, $b'_{i, j} = b_{i, j} - b''_{i, j}$.
            
        \item
            Recall that, by Definition~\ref{def:struct_realize_gbv},
            the kernel part of $\cH'$ realizes $\frb'$ with witness $\rep_{\cH'}$
            if, for every 1-type $\pi$, the sum of $\rep_{\cH'}(v)$ over all vertices in the kernel part of 1-type $\pi$
            is exactly $\frb'(\pi,-)$.
            We verify this by computing the sum and comparing it with $\frb'$ entrywise, considering each case separately.
            For every 1-type $\pi$, let $U_\pi$ denote the set of vertices of 1-type $\pi$ in the kernel part of $\cH'$.
            \begin{compactitem}
                \item \itemtitle{King 1-type $\pi$, base components}
                    Let $\pi$ be a king 1-type.
                    We first consider the base components.
                    We have
                    \begin{equation*}
                        \begin{aligned}
                            \sum_{v \in U_\pi} \rep_{\cH'}(v)(\basetext, j)
                            \ =\ &
                            \sum_{i \in I_\pi} \rep_{\cH'}(v^\kingtext_i)(\basetext, j) \\
                            \ =\ &
                            \sum_{i \in I_\pi} \rep_{\cH}(v^\prekingtext_i)(\basetext, j) \\
                            \ =\ &
                            \frb'(\pi, \basetext, j).
                        \end{aligned}
                    \end{equation*}

                \item \itemtitle{King 1-type $\pi$, period components}
                    We next consider the period components.
                    By the definition of $\rep_{\cH'}$, we have
                    \begin{equation*}
                        \begin{aligned}
                            \sum_{v \in U_\pi} \rep_{\cH'}(v)(\periodtext, j)
                            \ =\ &
                            \sum_{i \in I_\pi} \rep_{\cH'}(v^\kingtext_i)(\periodtext, j) \\
                            \ =\ &
                            \sum_{i \in I_\pi} b_{i, j}.
                        \end{aligned}
                    \end{equation*}
                    Since
                    $\set{x_i \gets 0, y_{i, j} \gets b'_{i, j}, z^\pi_{\akrlbv} \gets c'^\pi_{\akrlbv}}$
                    is a solution to the weak village system $\villagesystem_{\frb',\frP'}^-$,
                    the constraint {\bf counting shadows} yields
                    \begin{equation*}
                        \begin{aligned}
                            \sum_{i \in I_\pi} b_{i, j}
                            \ =\ &
                            \frb'(\pi, \periodtext, j)
                            +
                            \sum\limits_{i \in \intsinterval{\numouterperiods}} 0 \cdot \frp_i(\pi, \periodtext, j) \\
                            \ =\ &
                            \frb'(\pi, \periodtext, j).
                        \end{aligned}
                    \end{equation*}
                    Therefore, $\sum_{v \in U_\pi} \rep_{\cH'}(v)(\periodtext,j) = \frb'(\pi,\periodtext,j)$.

                \item \itemtitle{Peasant 1-type $\pi$, base components}
                    Let $\pi$ be a peasant 1-type.
                    Recall that, for every $(\pi,\krlbv)$ kernel-vertex $v$ (Definition \ref{def:villageandkernelpartvertices}),
                    $\rep_{\cH'}(v)(\periodtext,j)=0$ whenever $\bhperiod{\pi}{j}$ is not king-reaching.
                    Moreover, for every $(\pi,j)$ kernel-vertex $v$,
                    $\rep_{\cH'}(v)(\periodtext,j)=1$,
                    while all other entries of $\rep_{\cH'}(v)$ are $0$.

                    Recall that the assignment
                    $\set{x_i \gets 0, y_{i, j} \gets b'_{i, j}, z^\pi_{\akrlbv} \gets c'^\pi_{\akrlbv}}$
                    is a solution to the weak village system $\villagesystem_{\frb',\frP'}^-$.
                    We use the constraints of this system to compute the contribution of vertices in the kernel part.

                    We first consider the base components.
                    By the definition of the $(\pi,\krlbv)$ kernel-vertices,
                    \begin{equation*}
                        \sum_{v \in U_\pi} \rep_{\cH'}(v)(\basetext, j)
                        \ =\ 
                        \sum_{\krlbv \in \cS^\pi_{\frb', \frP'}} c'^\pi_\krlbv \cdot \krlbv(\basetext, j)
                    \end{equation*}
                    Since the above assignment is a solution to $\villagesystem_{\frb',\frP'}^-$,
                    the constraint {\bf counting peasant bases} yields
                    \begin{equation*}
                        \begin{aligned}
                            \sum_{\krlbv \in \cS^\pi_{\frb', \frP'}} c'^\pi_\krlbv \cdot \krlbv(\basetext, j)
                            \ =\ &
                            \frb'(\pi, \basetext, j) + \sum_{i \in \intsinterval{\numouterperiods}} 0 \cdot \frp_i(\pi, \basetext, j) \\
                            \ =\ &
                            \frb'(\pi, \basetext, j)
                        \end{aligned}
                    \end{equation*}
                    Therefore, $\sum_{v \in U_\pi} \rep_{\cH'}(v)(\basetext,j) = \frb'(\pi,\basetext,j)$.

                \item \itemtitle{Peasant 1-type $\pi$, period components}
                    We next consider the period components.
                    Let $1 \le j \le \numbhperiods{\pi}$ and suppose first that $\bhperiod{\pi}{j}$ is king-reaching.
                    In this case, the $(\pi,j)$ kernel-vertices make no contribution,
                    and hence
                    \begin{equation*}
                        \sum_{v \in U_\pi} \rep_{\cH'}(v)(\periodtext, j)
                        \ =\ 
                        \sum_{\krlbv \in \cS^\pi_{\frb', \frP'}} c'^\pi_\krlbv \cdot \krlbv(\periodtext, j)
                    \end{equation*}
                    Since the above assignment satisfies the constraint {\bf counting peasant periods}, we obtain
                    \begin{equation*}
                        \begin{aligned}
                            \sum_{\krlbv \in \cS^\pi_{\frb', \frP'}} c'^\pi_\krlbv \cdot \krlbv(\periodtext, j)
                            \ =\ &
                            \frb'(\pi, \periodtext, j) + \sum_{i \in \intsinterval{\numouterperiods}} 0 \cdot\frp_i(\pi, \periodtext, j) \\
                            \ =\ &
                            \frb'(\pi, \periodtext, j)
                        \end{aligned}
                    \end{equation*}
                    Therefore, $\sum_{v \in U_\pi} \rep_{\cH'}(v)(\periodtext,j) = \frb'(\pi,\periodtext,j)$.

                    Finally, suppose that $\bhperiod{\pi}{j}$ is not king-reaching.
                    In this case, the $(\pi,\krlbv)$ kernel-vertices make no contribution.
                    By construction, there are exactly $\frb'(\pi,\periodtext,j)$ many $(\pi,j)$ kernel-vertices,
                    each contributing $1$ to this entry.
                    Hence,
                    \begin{equation*}
                        \sum_{v \in U_\pi} \rep_{\cH'}(v)(\periodtext,j)
                        \ =\
                        \sum_{\text{$v$ is a $(\pi,j)$ kernel-vertex}} 1
                        \ =\
                        \frb'(\pi,\periodtext,j).
                    \end{equation*}
            \end{compactitem}
         
        \item
            Recall that, by Definition~\ref{def:struct_realize_gbv},
            $\cH'$ realizes a $\gbv$ $\agbv$ with witness $\rep_{\cH'}$
            if, for every 1-type $\pi$, the sum of $\rep_{\cH'}(v)$ over all vertices in $\cH'$ is exactly $\frb'(\pi,-)$.

            Because the kernel part and the village part form a partition of the vertices of $\cH'$,
            the $\gbv$ realized by $\cH'$ with witness $\rep_{\cH'}$ is the sum of the $\gbv$s realized by its kernel and village parts with the same witness.

            Since $\cH'$ realizes $\agbv$ with realization witness $\rep_{\cH'}$,
            and its kernel part realizes $\frb'$ with the same witness (according to Definition \ref{def:subset_dstruct_realize_gbv}),
            it follows that the village part of $\cH'$ realizes
            $\agbv - \frb' = \sum_{i \in \intsinterval{\numouterperiods}} a'_i \frp_i$.
    \end{compactenum}
\end{claimproof}

\myparagraph{Step 3B: removing edges crossing from the kernel part to the village part}
After Step 3a, there may still be some edges between the kernel part and the village part in $\cH'$.
This is undesirable, as our goal is to split the model into two D-structures, one of which serves as the desired kernel.
Note that these edges do not involve any $v^\kingtext_i$ or $v^\shadowtext_i$.

In the next stage, we will obtain another D-structure $\cH''$ from $\cH'$ by eliminating all these undesired edges.
This will be achieved by repeatedly applying the swap operation defined in Lemma~\ref{lem:operation}.
This process may introduce new multi-edges and self-loops for vertices with peasant 1-type.
However, this is acceptable, as multi-edges and self-loops involving such vertices are permitted in kernels (Definition \ref{def:kernalkingnoblephantomnoble}).

\begin{definition}[$\cH''$]
    We start by setting $\cH''$ to $\cH'$ and $\rep_{\cH''}$ to $\rep_{\cH'}$.
    Suppose we have an edge $e$ from $u$ to $v$ where $u$ is in the kernel part with 1-type $\pi_1$ and $v$ is in the village part with 1-type $\pi_2$ and $\twotypeof(e)=\eta$.
    Let $V_1$ be the set of vertices with 1-type $\pi_1$ in the kernel part
    and $V_2$ be the set of vertices with 1-type $\pi_2$ in the kernel part.
    By Claim~\ref{claim:hp}, the kernel part of $\cH'$ satisfies the $\gbv$ $\frb'$.
    Recall that, by Definition~\ref{def:targetrep}, $\frb' \in \cL(\frb;\frP)$,
    which implies that $\frb'$ is a solution of $\match$,
    since we assume that $\cL(\frb;\frP) \subseteq \sol{\overapprox{\anfgptwosentence \land \agcc}(\frx)}$.
    Therefore, we have
    \begin{equation*}
        \sum_{v \in V_1} \bhg{\cH''}{v}(\eta, \pi_2)
        \ =\ 
        \sum_{v \in V_2} \bhg{\cH''}{v}(\dual{\eta}, \pi_1).
    \end{equation*}

    Note that the edge $e$ contributes only to the left-hand side of the equality.
    On the other hand, every edge from $V_1$ to $V_2$ with 2-type $\eta$ contributes one to each side,
    as does every edge from $V_2$ to $V_1$ with $\dual{\eta}$.
    Therefore, 
    one of the following holds:
    (i) there exists another edge $e'$ from $u'$ to $v'$ where $u'$ is in $\cK$, $v'$ is in $\cV$
    and $\twotypeof(e')=\dual{\eta}$; or
    (ii) there exists another edge $e'$ from $v'$ to $u'$ where $u'$ is in the kernel, $v'$ is in $\cV$
    and $\twotypeof(e')=\eta$.
    Otherwise -- if there is no such edge --
    then the left-hand side would be strictly larger than the right-hand side,
    contradicting the equality above and hence the fact that $\frb'$ is a solution of $\match$.
    Then, we can perform swapping between these two edges $e$ and $e'$ to remove $e$ and $e'$ and add one new edge
    from $u$ to $u'$ and another new edge from $v$ to $v'$.
    When $u=u'$, this creates a self-loop in the kernel, but this is allowed in the kernel.
    The 2-types of the new edges are defined as in Lemma~\ref{lem:operation}.
\end{definition}

We argue for some properties of $\cH''$.
\begin{claim}[$\cH''$ properties] \label{claim:hpp}
    \leavevmode
    \begin{compactitem}
        \item
            The following structural properties hold for $\cH''$.
            \begin{compactitem}
                \item
                    There are no edges between any two shadow vertices $v^\shadowtext_i$ and $v^\shadowtext_j$.

                \item
                    There are no multi-edges incident to any $v^\kingtext_i$.

                \item
                    There are no edges between the kernel part and the village part.
            \end{compactitem}

        \item
            The kernel part of $\cH''$ realizes the $\gbv$ $\frb''$, with $\rep_{\cH''}$ as a witness.

        \item
            The village part of $\cH'$ realizes the $\gbv$ $\sum_{i \in \intsinterval{\numouterperiods}} a'_i \frp_i$, with $\rep_{\cH''}$ as a witness.
    \end{compactitem}
\end{claim}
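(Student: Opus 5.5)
The plan is to prove the three items by maintaining invariants under each single swap in the construction of $\cH''$, and then arguing that the procedure terminates. I read $\frb''$ in the second item as the target base $\frb'$ of Definition~\ref{def:targetrep}, and ``village part of $\cH'$'' in the third item as the village part of $\cH''$. The vertex partition into kernel part and village part is never changed, and neither is the witness; throughout, $\rep_{\cH''}=\rep_{\cH'}$.

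First I would make the existence of a swap partner precise, since the text preceding the claim states it somewhat loosely. Fix $\pi_1,\pi_2$ and a $2$-type $\eta$. Sum the $(\eta,\pi_2)$-entries of the behaviors of kernel-part $\pi_1$-vertices. By Claim~\ref{claim:hp} (items 2 and 5), this sum equals $\sumcon_{\pi_1,\eta,\pi_2}(\frb')$, because the kernel part realizes $\frb'$ and each behavior equals $\eval{\rep(\cdot)}$. Symmetrically, $\sumcon_{\pi_2,\dual{\eta},\pi_1}(\frb')$ counts $\dual{\eta}$-edges from kernel-part $\pi_2$-vertices. Since $\frb'\in\cL(\frb;\frP)\subseteq\sol{\overapprox{\anfgptwosentence\land\agcc}}$, the constraint $\match$ makes these two sums equal.

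An $\eta$-edge between two kernel-part vertices of types $\pi_1$ and $\pi_2$ contributes exactly one to each side. After cancelling these, the equality says the following: the number of crossing $\eta$-edges from a kernel $\pi_1$-vertex to a village $\pi_2$-vertex equals the number of crossing $\eta$-edges from a village $\pi_1$-vertex to a kernel $\pi_2$-vertex. So whenever a crossing $\eta$-edge $e$ from a kernel vertex $u$ to a village vertex $v$ exists, there is a crossing $\eta$-edge $e'$ from a village vertex $v'$ with $\onetypeof(v')=\onetypeof(u)$ to a kernel vertex $u'$ with $\onetypeof(u')=\onetypeof(v)$. The Swap of Lemma~\ref{lem:operation} applies to $e$ and $e'$. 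It produces an $\eta$-edge $u\to u'$ inside the kernel part and an $\eta$-edge $v'\to v$ inside the village part. Each such swap lowers the number of crossing edges by two, so the procedure terminates with no crossing edges. This gives the third structural bullet.

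For the remaining structural bullets, I use Claim~\ref{claim:hp}(1): no crossing edge has an endpoint $v^\kingtext_i$ or $v^\shadowtext_i$. Hence all of $u,v,u',v'$ have peasant $1$-type, and every edge incident to a king or a shadow is left untouched. Consequently, no multi-edge at a king and no edge between two shadows can arise. The only defects that can be created are self-loops and multi-edges among peasant-type vertices, and kernels and villages permit these (until the later step handling the village). For the realization bullets, Lemma~\ref{lem:operation} shows that a swap preserves every vertex's behavior, so $\bh{w}=\eval{\rep_{\cH''}(w)}$ continues to hold for all $w$. The per-part sums of witnesses are unchanged, so the kernel part still realizes $\frb'$ and the village part still realizes $\sum_i a'_i\frp_i$, by Claim~\ref{claim:hp}(5),(6). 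Asymmetry is preserved because no new $2$-types are introduced.

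I expect the main obstacle to be the partner-existence count: correctly pairing explicit and implicit $\eta$-edges so that the cancellation of internal kernel edges is valid. This includes self-loops at a kernel vertex when $\pi_1=\pi_2$, which I would check contribute one to each side just like other internal edges.
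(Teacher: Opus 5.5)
Your proposal is correct and follows essentially the paper's route. Swaps never touch kings or shadows (by Claim~\ref{claim:hp}(1)), Lemma~\ref{lem:operation} preserves every behavior and the per-part witness sums, and swap partners exist by $\match$ applied to $\frb'$; you make this last step more precise than the paper does in the definition of $\cH''$, via explicit/implicit $\eta$-edges, self-loops counting once on each side, and termination because each swap removes two crossing edges.

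One small remark: the village-part vector should be $\sum_i a''_i\frp_i$, since $\agbv-\frb'=\sum_i(a_i-a'_i)\frp_i$. This typo comes from Claim~\ref{claim:hp}(6), and your invariance argument is unaffected.
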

\begin{claimproof}
    \leavevmode
    \begin{compactitem}
        \item
            Note that the swapping procedure does not involve any $v^\kingtext_i$ or $v^\shadowtext_i$.
            The structural properties for $\cH''$ follow from the construction and properties for $\cH'$ in Claim~\ref{claim:hp}.
        
        \item
            By Lemma~\ref{lem:operation}, the swapping operation does not change the behavior of any vertex.
            Since $\rep_{\cH'}$ is a realization witness of $\cH'$, for every vertex $v$ of $\cH''$ we have
            \begin{equation*}
                \bhg{\cH''}{v}
                \ =\ 
                \bhg{\cH'}{v}
                \ =\ 
                \eval{\rep_{\cH'}(v)}
                \ =\ 
                \eval{\rep_{\cH''}(v)}
            \end{equation*}

            In Claim~\ref{claim:hp},
            we showed that the kernel part of $\cH'$ realizes $\frb'$ with witness $\rep_{\cH'}$.
            Since the swapping operation changes neither the set of vertices in the kernel part nor their assigned $\lbv$s under the realization witness,
            the kernel part of $\cH''$ also realizes $\frb'$ with witness $\rep_{\cH''}$.

        \item
            The property follows by an argument similar to the previous one.
    \end{compactitem}
\end{claimproof}

\myparagraph{Step 3C: showing that the kernel part is the desired kernel}
In Claim~\ref{claim:target_assign_is_sol},
we showed that the target assignment is a solution to the homogeneous village system $\villagesystem_{\frb',\frP'}^\homo$.
Our remaining goals are to construct a bounded size kernel of the target linear representation $\frb'$ and $\frP'$ and to show that the target assignment is friendly with this kernel.

Let $\cK$ be the D-structure corresponding to the kernel part of $\cH''$,
and let $\rep_\cK$ be the witness obtained by restricting $\rep_{\cH''}$ to the kernel part of $\cH''$.
We argue that $\cK$ is a kernel of $\frb$ and $\frP$,
where the $v^\kingtext_i$ serve as kings, the $(\pi,\krlbv)$ kernel-vertices serve as nobles, and the $(\pi,i)$ kernel-vertices serve as phantom nobles.

\begin{claim} \label{claim:k}
    $\cK$ is a kernel of $\frb'$ and $\frP'$.
\end{claim}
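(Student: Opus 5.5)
To prove Claim~\ref{claim:k}, I will check the clauses of Definition~\ref{def:kernalkingnoblephantomnoble} one at a time against the properties of $\cH''$ already recorded in Claims~\ref{claim:hp} and~\ref{claim:hpp}. The verification splits into four parts.

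\emph{Well-formedness and asymmetry.} First I argue that $\cK$ is a well-defined D-structure. By Claim~\ref{claim:hpp} there are no edges between the kernel part and the village part of $\cH''$. Hence restricting $\cH''$ to the kernel part does not alter the behavior of any kernel vertex, and so $\bhg{\cK}{v} = \bhg{\cH''}{v} = \eval{\rep_\cK(v)}$ for every $v$ in $\cK$. Asymmetry is inherited from $\cG$: the splitting of Step~1, the redirection of Step~3A and the swaps of Step~3B only move existing edges or relabel them by $\eta$ or $\dual{\eta}$ of an already asymmetric type. So by Lemma~\ref{lem:operation} no symmetric 2-type is ever introduced.

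\emph{Realizing a $\gbv$ in $\cL(\frb';\frP')$.} Claim~\ref{claim:hpp} gives that the kernel part realizes $\frb'$ with witness $\rep_\cK$. The only thing to check is that the witness is of the right kind: concrete vertices (the $v^\kingtext_i$ and the $(\pi,\akrlbv)$ kernel-vertices) must receive valid $\lbv$s, and phantom vertices (the $(\pi,j)$ kernel-vertices) must receive homogeneously valid ones. This holds for the peasant vertices because their witnesses are copied unchanged from $\rep_\cH$. For $v^\kingtext_i$ it holds because its base entries are those of $v^\prekingtext_i$. Since $\frb' \in \cL(\frb';\frP')$ trivially, this clause is done.

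\emph{K1 and K2.} For K1, the only phantom vertices of $\cK$ are $(\pi,j)$ kernel-vertices with $\pi$ a peasant 1-type. The king and peasant 1-types for $(\frb',\frP')$ coincide with those for $(\frb,\frP)$, as already used in Claim~\ref{claim:target_assign_is_sol}. For K2, there are three conditions, handled as follows.
\begin{itemize}
\item No multi-edge involves a king. This is Claim~\ref{claim:hpp}.
\item No king has a self-loop. The swaps of Step~3B never touch a $v^\kingtext_i$, and $\cH'$ has no self-loops.
\item No edge joins a king and a phantom noble. Claim~\ref{claim:h} says there are no edges between any $v^\prekingtext_i$ and any $(\pi,j)$ split-vertex, because $\bhperiod{\pi}{j}$ is not king-reaching. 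Steps~3A and~3B do not create such edges, since redirection only moves edges onto shadows and swaps avoid kings.
\end{itemize}

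\emph{Main obstacle.} The delicate point is the edge-swapping argument of Step~3B. When it creates self-loops it creates them at the pair $u=u'$, which is allowed. I must also make sure that swapping never reroutes an edge onto a king. This needs the observation that every crossing edge has both endpoints of peasant 1-type. That in turn holds because, after Step~3A, the only edges at $v^\kingtext_i$ go to kernel vertices, and edges at shadows stay within the village part. Once this is established, Claim~\ref{claim:k} follows, and friendliness of the target assignment is a separate subsequent argument.
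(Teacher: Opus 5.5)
Your proposal is correct and follows the paper's proof, which likewise just checks that the kernel part realizes $\frb'$ and satisfies K1 and K2, using Claims~\ref{claim:hp} and~\ref{claim:hpp}. Your version is slightly more complete than the paper's. You explicitly handle the no-self-loop-on-kings clause of K2. You also justify the absence of king--phantom-noble edges from Claim~\ref{claim:h}, together with the fact that neither the redirection of Step~3A nor the swaps of Step~3B touch a king; the paper instead attributes this directly to Claim~\ref{claim:hpp}, which does not literally state it.
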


\begin{claimproof}
    We verify that $\cK$ satisfies the conditions for a kernel in Definition \ref{def:kernalkingnoblephantomnoble}.
    By Claim~\ref{claim:hpp},
    the kernel part of $\cH''$ realizes $\frb'$ by $\rep_{\cH''}$,
    which implies that $\cK$ realizes $\frb'$ by $\rep_\cK$.
    \begin{compactitem}
        \item[\emph{K1.}]
            Since $\cK$ is just the kernel part of $\cH''$ and
            there are no phantom vertices with king 1-types in the kernel part of $\cH''$ by Claim~\ref{claim:hpp},
            the same holds for $\cK$.

        \item[\emph{K2.}]
            Since $\cK$ is just the kernel part of $\cH''$ and no multi-edge of the kernel part of $\cH''$ involves a $v^\kingtext_i$ by Claim~\ref{claim:hpp}, the same holds for $\cK$.
            Moreover, there is no edge in $\cH''$ connecting a $v^\kingtext_i$ to a phantom vertex by Claim~\ref{claim:hpp}, and hence the same holds for $\cK$.
    \end{compactitem}
\end{claimproof}

It remains to show that the size of $\cK$ is bounded by a computable function,
and the target assignment (Definition \ref{def:targetrep}) is $\cK$-friendly.

\begin{claim}
    The size of $\cK$ is bounded by a computable function.
\end{claim}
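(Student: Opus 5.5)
We bound the vertices and the edges of $\cK$ separately, both in terms of $\frb'$, whose size is already bounded by a computable function by the earlier claim on the target linear representation.

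\textbf{Vertices.} By Claim~\ref{claim:hpp}, $\cK$ realizes $\frb'$ with witness $\rep_\cK$.
\begin{compactitem}
\item By Lemma~\ref{lem:num_concreate}, $\cK$ has exactly $\projectionbase{\frb'}\le\abs{\frb'}$ concrete vertices. These are the kings $v^\kingtext_i$ and the $(\pi,\akrlbv)$ kernel-vertices, of which there are $\sum c'^\pi_{\akrlbv}$.
\item The phantom vertices are the $(\pi,j)$ kernel-vertices. By Definition~\ref{def:villageandkernelpartvertices} there are exactly $\sum_{\pi,j}\frb'(\pi,\periodtext,j)$ of them.
\end{compactitem}
So the total number of vertices is at most $\abs{\frb'}$.

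\textbf{Edges.} Every edge of $\cK$ is counted in the behavior of its endpoints. We bound the total behavior mass $\sum_{v}\sum_{\eta,\pi}\bh{v}(\eta,\pi)$ over vertices of $\cK$.
\begin{compactitem}
\item By Claim~\ref{claim:hpp}, $\bh{v}=\eval{\rep_\cK(v)}$ for every vertex $v$. Summing over the vertices of each 1-type $\pi$ gives $\sum_i \frb'(\pi,\basetext,i)\bhbase{\pi}{i}+\sum_j\frb'(\pi,\periodtext,j)\bhperiod{\pi}{j}$.
\item Hence the total mass is at most $\abs{\frb'}\cdot B$, where $B$ is the maximum entry sum of the base and period vectors of the $\bhset{\pi}$. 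These vectors are computable by Proposition~\ref{prop:conformantsemilinear}.
\end{compactitem}

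Each edge of $\cK$ contributes at least one to this mass, counted as an $\eta$-edge from its source. A self-loop also contributes at least one. So the number of edges, counting multi-edges and self-loops with multiplicity, is at most $\abs{\frb'}\cdot B$.

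The one check needed is that all edges of $\cK$ lie within the kernel part. This holds by Claim~\ref{claim:hpp}, which says no edges cross between the kernel and village parts. Therefore the behavior computed inside $\cK$ coincides with the behavior in $\cH''$, and the bound above applies.

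Combining the vertex bound $\abs{\frb'}$ and the edge bound $\abs{\frb'}\cdot B$ gives a computable bound on the size of $\cK$, as required.
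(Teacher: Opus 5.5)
Your proof is correct and takes the same route as the paper. The paper only observes that $\cK$ realizes $\frb'$, so its vertices and edges are bounded in terms of $\abs{\frb'}$ and the computable local bases $\bhbase{\pi}{i},\bhperiod{\pi}{j}$, and then cites the computable bound on $\abs{\frb'}$ from Step~2; you make these vertex and edge counts explicit. One small imprecision that does not affect your bound: the phantom count is $\sum \frb'(\pi,\periodtext,j)$ taken only over peasant $\pi$ and non-king-reaching $j$, so it is at most $\sum_{\pi,j}\frb'(\pi,\periodtext,j)$, not exactly equal to it.
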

\begin{claimproof}
    By Claim~\ref{claim:k},
    $\cK$ realizes the $\gbv$ $\frb'$.
    Thus, the size of $\cK$ is bounded by $\frb'$.
    Moreover, the size of $\frb'$ is bounded by a computable function, as shown in Step~2.
    Hence, the size of $\cK$ is also bounded by a computable function.
\end{claimproof}

\begin{claim}
    The target assignment from Definition \ref{def:targetassign}, 
    $\set{x_i \gets a''_i, y_{i, j} \gets b''_{i, j}, z^\pi_{\akrlbv} \gets c''^\pi_{\akrlbv}}$,
    is $\cK$-friendly (Definition \ref{def:village_friendly}).
\end{claim}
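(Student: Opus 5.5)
We check the three conditions of Definition~\ref{def:village_friendly} one at a time. The common tool is Claim~\ref{claim:hpp}: the village part of $\cH''$ realizes $\sum_{i} a''_i \frp_i$. Note that the statement of Claim~\ref{claim:hpp} writes $a'_i$. However, the village part is what remains after removing the kernel part, which realizes $\frb'=\frb+\sum_i a'_i\frp_i$, so the village part realizes $\agbv-\frb'=\sum_i a''_i\frp_i$. We use this corrected form.

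By Claim~\ref{clm:lastconstraint}, after the possible reductions of the representation we have $a''_i>0$ for every $i$. Hence every condition of the form ``there exists $i$ with $a''_i>0$ and $\frp_i$ incorporates $X$'' reduces to showing that the village-part $\gbv$ $\sum_i a''_i\frp_i$ incorporates $X$. Indeed, incorporation by a sum with all coefficients positive holds exactly when some summand incorporates $X$. So it suffices to exhibit the required feature inside the village part of $\cH''$ and then read it back through the witness $\rep_{\cH''}$.

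\emph{Phantom nobles.} A phantom noble of $\cK$ of 1-type $\pi$ is a $(\pi,j)$ kernel-vertex, so $\pi$ is a peasant 1-type. By Definition~\ref{def:kingandpeasantonetypeforbcvs}, $\pi$ is incorporated basely by some $\frp_i$. Since $a''_i>0$, this condition holds immediately.

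\emph{Multi-edges and self-loops.} Here we trace where these edges come from. $\cH$ has no multi-edges or self-loops, and neither does $\cH'$ (Claim~\ref{claim:hp}). So every multi-edge or self-loop in $\cK$ was created in Step~3B. Each such swap used a crossing edge $e$ from kernel vertex $u$ to village vertex $v$ and a partner crossing edge $e'$. It produced a kernel edge from $u$ to $u'$ of 2-type $\twotypeof(e)$ (suitably dualized), together with a village edge from $v$ to $v'$ of the same 2-type, whose endpoints have the same 1-types.

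We therefore record the following invariant, proved by induction over the swaps. For every edge of $\cK$ created in Step~3B with endpoints of 1-types $\pi_1,\pi_2$ and 2-type $\eta$, the village part contains an $\eta$-edge between vertices of 1-types $\pi_1$ and $\pi_2$. This village edge persists because later swaps never remove village-internal edges; they only act on crossing edges.

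For a multi-edge in $\cK$, at least one of its two parallel edges is new, since $\cH'$ had none. For a self-loop, the edge is new. In both cases the invariant gives an $\eta$-edge in the village part between the right 1-types. Since the village part realizes $\sum_i a''_i\frp_i$, the explicit-edge counting behind $\sumcon$, run in reverse as in the proof of Lemma~\ref{lem:propertiesincorporating}, shows that some summand $\frp_i$ incorporates $\eta$ between those 1-types. Any vertex with a positive entry $(\eta,\pi_2)$ in its behavior has a witness $\lbv$ with a positive base or period component carrying that entry, and that component is counted in some $\frp_i$.

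\emph{Main obstacle.} The delicate step is justifying that invariant, including getting the orientation and dual 2-types of swapped edges right (the explicit/implicit $\eta$-edge conventions of Definition~\ref{def:implicitexplicitetaedge}). We must also ensure that the new kernel edge and the new village edge have the same 2-type, up to consistent dualization, which matters because V5 allows either of the two 2-types. If needed, we would strengthen the Step~3B construction to record, for each swap, the resulting village edge as an explicit witness.
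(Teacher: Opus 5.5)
Your proposal is correct and follows the paper's proof. Phantom nobles are handled by the definition of peasant 1-types together with $a''_i>0$ from Claim~\ref{clm:lastconstraint}. Multi-edges and self-loops are traced back to the Step~3B swaps, where a village-side edge with the same 2-type and endpoint 1-types witnesses incorporation by some $\frp_i$, because the village part realizes $\sum_i a''_i\frp_i$; your correction of $a'_i$ to $a''_i$ is right. The orientation issue you flag as the main obstacle is settled by the type-matching hypotheses of the swap in Lemma~\ref{lem:operation}: the new village edge $e_4$ is an explicit $\eta$-edge from a vertex with the 1-type of the source of the new kernel edge $e_3$ to a vertex with the 1-type of its target, so no extra bookkeeping is needed.
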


\begin{claimproof}
    \leavevmode
    \begin{compactitem}
        \item \itemtitle{Absorption of phantom nobles in $\cK$ constraint}
            Note that, by definition, every peasant 1-type $\pi$ is incorporated basely by some $\frp_i$ for $1 \le i \le \numouterperiods$.
            The assignment constructed in Step~2 guarantees that $a''_i > 0$ for every $1 \le i \le \numouterperiods$.
            Thus for every phantom noble in $\cK$ realizing the peasant 1-type $\pi$,
            there exists $1 \le i \le \numouterperiods$ such that
            $a_i > 0$ and $\pi$ is incorporated basely by $\frp_i$.
            Therefore the assignment satisfies the constraint. 

        \item \itemtitle{Absorption of multi-edges in $\cK$ constraint} 
            The only multi-edges that may arise in the kernel part of $\cH''$ are those created by swapping an edge between the kernel part and the village part of $\cH'$.
            In the construction of $\cH''$, we showed
           in Claim \ref{claim:hpp}
            that for such an edge of 2-type $\eta$,
            there exists a corresponding edge from the village part to the kernel part of 2-type $\eta$.
            Since the village part realizes $\gbv$
            $\sum_{i \in \intsinterval{\numouterperiods}} a''_i \frp_i$ (Definition \ref{def:subset_dstruct_realize_gbv}),
            there exists some $1 \le i \le \numouterperiods$ such that $\frp_i$ incorporates $\eta$.
            Hence, the desired property holds.

        \item \itemtitle{Absorption of self-loops in $\cK$ constraint} 
            The only self-loops in the kernel part of $\cH''$ that are present after the swapping procedure emerge 
            from swapping with an edge between vertices in the kernel part of $\cH''$ and vertices in the village part of $\cH''$.
            Thus the property holds by an argument similar to the previous property.
    \end{compactitem}
\end{claimproof}

\end{proof}

%% file: fig/intuition.tex
\begin{tikzpicture}[scale=.8]
    \begin{scope}[shift={(0, 0)}]
        \draw[bdr] (0.5, -2.5) rectangle (1.5, 2.5);

        \node[newvtx, red] (k1k1) at (1.0, 2.0) {\scriptsize 1};
        \node[newvtx, red] (k1k2) at (1.0, 0.0) {\scriptsize 2};
        \node[newvtx, red] (k1k3) at (1.0,-2.0) {\scriptsize 3};

        \draw[edge]                (k1k1) to (k1k2);
        \draw[edge, bend right=40] (k1k1) to (k1k3);
    \end{scope}

    \begin{scope}[shift={(3, 3)}]
        \draw[bdr] (-0.5,-0.5) rectangle (2.5, 0.5);
        \node[anchor=south] at (2.0, 0.5) {\footnotesize $(\pi_1, \akrlbv_{1, 1})$};

        \node[vtx, blue] (k1n1) at (0.0, 0.0) {};
        \node[vtx, blue] (k1n2) at (1.0, 0.0) {};

        \draw[edge, bend right]  (k1k1) to (k1n1);
        \draw[edge, bend right]  (k1k1) to (k1n2);
        \draw[edge, bend right]  (k1k2) to (k1n1);
        \draw[edge, bend right]  (k1k2) to (k1n2);
    \end{scope}

    \begin{scope}[shift={(3, 0)}]
        \draw[bdr] (-0.5,-0.5) rectangle (2.5, 0.5);
        \node[anchor=south] at (2.0, 0.5) {\footnotesize $(\pi_1, \akrlbv_{1, 2})$};

        \node[vtx, blue] (k2n1) at (0.0, 0.0) {};
        \node[vtx, blue] (k2n2) at (1.0, 0.0) {};
        \node[vtx, blue] (k2n3) at (2.0, 0.0) {};

        \draw[edge]  (k1k3) to (k2n1);
        \draw[edge]  (k1k3) to (k2n2);
        \draw[edge]  (k1k3) to (k2n3);

        \draw[edge, bend right]  (k1k2) to (k2n1);
        \draw[edge, bend right]  (k1k2) to (k2n2);
        \draw[edge, bend right]  (k1k2) to (k2n3);
    \end{scope}

    \node[] at (3.75, -1.25) {\textbf{$\vdots$}};

    \begin{scope}[shift={(3, -3)}]
        \draw[bdr] (-0.5,-0.5) rectangle (2.5, 0.5);
        \node[anchor=south] at (2.0, 0.5) {\footnotesize $(\pi_i, \akrlbv_{i, j})$};

        \node[vtx, blue] (k3n1) at (0.0, 0.0) {};

        \draw[edge] (k1k3) to (k3n1);
    \end{scope}

    \node[] at (3.75, -4) {\textbf{$\vdots$}};

    \begin{scope}[shift={(8, 3)}]
        \draw[bdr] (-0.5,-0.5) rectangle (2.5, 0.5);
        \node[anchor=south] at (2.0, 0.5) {\footnotesize $(\pi_1, 1)$};

        \node[vtx, dashed, blue] (s1n1) at (0.0, 0.0) {};
        \node[vtx, dashed, blue] (s1n2) at (1.0, 0.0) {};
        \node[vtx, dashed, blue] (s1n3) at (2.0, 0.0) {};

        \draw[edge] (s1n2) to (s1n3);
        \draw[edge, bend left] (s1n1) to (k1n1);
    \end{scope}

    \begin{scope}[shift={(8, 0)}]
        \draw[bdr] (-0.5,-0.5) rectangle (2.5, 0.5);
        \node[anchor=south] at (2.0, 0.5) {\footnotesize $(\pi_1, 2)$};

        \node[vtx, dashed, blue] (s2n1) at (0.0, 0.0) {};

        \draw[edge] (s2n1) to (k1n2);
    \end{scope}

    \node[] at (8.75, -1.25) {\textbf{$\vdots$}};

    \begin{scope}[shift={(8, -3)}]
        \draw[bdr] (-0.5,-0.5) rectangle (2.5, 0.5);
        \node[anchor=south] at (2.0, 0.5) {\footnotesize $(\pi_i, j)$};

        \node[vtx, dashed, blue] (s3n1) at (0.0, 0.0) {};
        \node[vtx, dashed, blue] (s3n2) at (1.0, 0.0) {};
        \node[vtx, dashed, blue] (s3n3) at (2.0, 0.0) {};

        \draw[edge] (k2n1) to (s3n1);
        \draw[edge] (k2n2) to (s3n3);
        \draw[edge] (k2n3) to (s3n2);
    \end{scope}

    \node[] at (8.75, -4) {\textbf{$\vdots$}};
\end{tikzpicture}

%% file: 4-gcc.tex

\subsection{Putting it all together to prove Theorem \ref{thm:ourlogic}}

We now complete the proof of Theorem~\ref{thm:ourlogic}.
First, as outlined in Section~\ref{sec:outline},
the satisfiability of a $\ourlogic$ sentence can be effectively 
reduced to the satisfiability of a conjunction of 
a $\gptwo$ sentence in normal form and a $\gcc$. 

Next, 
Lemmas~\ref{lem:pseudo_model_tight} 
and~\ref{lem:sol_semilinear_strong}
show that it suffices to consider the realizability problem for 
a strongly concretifying linear subset $\cL(\frb; \frP)$ of $\sol{\overapprox{\anfgptwosentence \land \agcc}}$.
Moreover, there exists an algorithm to decide whether a village system is kernel-friendly.

\begin{lemma} \label{lem:village_algo}
    Let $\cK$ be a kernel of $\frb$ and $\frP$.
    There is an algorithm that decides whether the homogeneous village system $\villagesystem_{\frb, \frP}^\homo$ has a $\cK$-friendly solution.
\end{lemma}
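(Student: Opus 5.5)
The plan is to express the existence of a $\cK$-friendly solution of $\villagesystem_{\frb,\frP}^\homo$ as a finite disjunction of existential Presburger formulas (equivalently, integer linear systems), and then invoke decidability of existential Presburger arithmetic. First, $\villagesystem_{\frb,\frP}^\homo$ is itself a finite system of linear equalities together with the strict inequalities $x_i>0$. Its variables are finitely many because the index set $\cS^\pi_{\frb,\frP}$ of well-behaved $\krlbv$s is finite and computable, by the lemma bounding the number of well-behaved $\krlbv$s. Its coefficients, namely the entries of $\frp_i$, of $\bhbase{\pi}{j}$, $\bhperiod{\pi}{j}$, of $\bfbbar_i$, and the indicators $\kappa_{\akrlbv,i,\eta}$, are all effectively computable from $\frb$, $\frP$ and the hybrid linear representations of Proposition~\ref{prop:conformantsemilinear}. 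So the system can be written down explicitly.

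Second, each $\cK$-friendliness condition of Definition~\ref{def:village_friendly} becomes a finite disjunction of atomic constraints on the $x_i$. For a phantom noble of type $\pi$, let $J_\pi$ be the set of indices $i$ such that $\frp_i$ incorporates $\pi$ basely; the condition reads $\bigvee_{i\in J_\pi} x_i>0$. For a multi-edge with endpoints $v_1,v_2$ and edges $e_1,e_2$, the condition is $\bigvee_{i\in J} x_i>0$, where $J$ is the set of $i$ with $\frp_i$ incorporating $\twotypeof(e_1)$ or $\twotypeof(e_2)$ between $\onetypeof(v_1)$ and $\onetypeof(v_2)$. Self-loops are handled analogously. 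Membership of $i$ in each such index set is a finite check on the entries of $\frp_i$ and of the behavior bases (Definition~\ref{def:gbvincorporates}), so it is decidable. Since $\cK$ is a finite D-structure, there are finitely many phantom nobles, multi-edge pairs and self-loops, and their enumeration is effective.

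Third, the conjunction of the system with these disjunctions is a quantifier-free Presburger formula. Its satisfiability over $\bbN$ is decidable, either directly by decidability of Presburger arithmetic or by distributing the disjunctions into finitely many integer linear programs and testing each. Strictly, the friendliness constraints are already implied by the absorption constraint $x_i>0$ for all $i$ whenever the relevant index sets are nonempty. The only real content is therefore checking that those sets are nonempty, but the uniform Presburger encoding avoids having to rely on this observation.

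The main obstacle I anticipate is the effectiveness of the ingredients rather than the final decision step. Specifically, the set $\cS^\pi_{\frb,\frP}$ must not merely be bounded but explicitly enumerable. I would argue this by enumerating all candidate pairs $\tuple{\albv,\tuple{\eta_1,\ldots,\eta_k}}$ within the computed bound and testing the two conditions of well-behavedness (King-reaching Periods and Cardinality linking), each of which is a finite arithmetic check. With this in hand, the rest is routine.
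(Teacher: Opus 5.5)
Your proposal is correct and follows the same route as the paper. The $\cK$-friendliness conditions are finite disjunctions of linear constraints on the $x_i$, so together with $\villagesystem_{\frb,\frP}^\homo$ they form an existential Presburger formula, whose satisfiability is decidable. You give more detail than the paper on effectiveness (enumerating $\cS^\pi_{\frb,\frP}$ and deciding the incorporation checks), and your remark is also correct: since the system already forces $x_i>0$ for all $i$, friendliness reduces to nonemptiness of the relevant index sets.
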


\begin{proof}
    Observe that
    the conditions of $\cK$-friendly
    can each be expressed as systems of integer linear inequalities over the variables of the village system.

    Therefore, the conjunction of the homogeneous village system and the three extra ``friendly conditions'' is again an existential Presburger formula.
    Consequently, deciding whether the homogeneous village system has a $\cK$-friendly solution reduces to deciding the satisfiability of an existential Presburger formula,
    which can be done effectively.
\end{proof}

We are now ready to prove Theorem~\ref{thm:ourlogic}.

\begin{proof}[Proof of Theorem~\ref{thm:ourlogic}]
    Let $\anfgptwosentence \land \agcc$ be a conjunction of a $\GPtwo$ sentence and a $\GCC$.
    By Lemma~\ref{lem:sol_semilinear_strong}, we can  construct a finite family of linear sets of $\gbv$s\\
        $\set{
            \cL(\frb_1;\frP_1),
            \ldots,
            \cL(\frb_m;\frP_m)
        }$
    such that for every $1 \le i \le m$,
    $\frb_i$ and $\frP_i$ are strongly concretifying and
    $\cL(\frb_i; \frP_i) \subseteq \sol{\overapprox{\anfgptwosentence \land \agcc}(\frx)}$.
    Moreover, every $\gbv$ realized by a finite model of $\anfgptwosentence \land \agcc$ belongs to one of these linear sets.

    For each $1 \le i \le m$,
    we enumerate all $\frb' \in \cL(\frb_i;\frP_i)$ of size at most $f(\frb_i,\frP_i)$;
    all subsets $\frP' \subseteq \frP_i$ such that $\frb'$ and $\frP'$ are strongly concretifying;
    and all kernels $\cK$ that realize $\frb'$.
    Since the bound $f(\frb_i,\frP_i)$ is computable, all of these can be enumerated effectively.

    For every such kernel $\cK$,
    we construct the corresponding homogeneous village system
    $\villagesystem_{\frb', \frP'}^\homo$.
    By Lemma~\ref{lem:village_algo}, we can decide whether this system has a $\cK$-friendly solution.
    The procedure returns \textsc{SAT} if a kernel has a homogeneous 
    village system with a $\cK$-friendly solution, and returns \textsc{UNSAT} otherwise.

    We verify the correctness of this procedure.
    Suppose first that the procedure returns \textsc{SAT}.
    Then there exists $1 \le i \le m$, $\frb' \in \cL(\frb_i;\frP_i)$, $\frP'\subseteq\frP_i$, and a kernel $\cK$ of $\frb'$ and $\frP'$ realizing $\frb'$,
    such that the homogeneous village system
    $\villagesystem_{\frb', \frP'}^\homo$ has a $\cK$-friendly solution.
    By Theorem~\ref{thm:solvetovillage},
    there exists a village compatible with $\cK$;
    by Theorem~\ref{thm:pump},
    there is an asymmetric self-loop free structure $\cG$ realizing a $\gbv$ in $\cL(\frb';\frP')$.
    Since
    $\cL(\frb'; \frP') \subseteq \cL(\frb_i; \frP_i) \subseteq \sol{\overapprox{\anfgptwosentence \land \agcc}(\frx)}$,
    by Lemma~\ref{lem:pseudo_model_tight}, the structure $\cG$ is a model of $\anfgptwosentence \land \agcc$.
    Hence $\anfgptwosentence \land \agcc$ is satisfiable.

    Conversely, suppose that $\anfgptwosentence \land \agcc$ has a finite model $\cG$.
    Let $\agbv$ be a $\gbv$ realized by $\cG$.
    By Lemma~\ref{lem:sol_semilinear_strong}, there exists
    $1 \le i \le m$ such that $\agbv \in \cL(\frb_i;\frP_i)$.
    Since $\agbv$ is realizable,
    Theorem~\ref{thm:sattosolve} yields $\frb' \in \cL(\frb_i;\frP_i)$ within the prescribed size bound,
    $\frP' \subseteq \frP_i$,
    and a kernel $\cK$ for $\frb'$ and $\frP'$ realizing $\frb'$,
    such that the corresponding homogeneous village system has a $\cK$-friendly solution.
    This kernel is considered by the enumeration, so the procedure returns \textsc{SAT}.

    The procedure performs only finitely many effective computations.
    Therefore, the finite satisfiability problem of $\ourlogic$ is decidable.
\end{proof}

%% file: 7-conclusion.tex

\section{Discussion} \label{sec:conc}

We have shown decidability of a logic with both ``local counting quantifiers'' and ``global counting quantifiers'', the first
such result we are aware of. The result has immediate applications to verifying graph neural networks with global readout.

There are a number of  extensions of our techniques. First, although we follow the literature in referring to ``Presburger quantifiers'', the quantifiers we consider do not support modulus-class restrictions (e.g. the number must be divisible by $3$ etc), which are allowed in Presburger arithmetic. It is routine to add support for these in both local and global quantifiers.
The static analysis problem we focus on is satisfiability. Decidability for other problems like equivalence follows immediately. But in fact we can support more sophisticated statements like ``is there a model where the fraction of elements satisfying $\phi(x)$
is at least $\frac{1}{4}$'': this is again easy, since we can encode the property in another GCC.
A natural subclass of our logic is where local aggregation is ``outgoing only''. Here the proofs simplify substantially;
see the appendix.

We leave open the exact complexity of the satisfiability problem, noting only that our procedure gives a tower bound.
We also note that our logic is incomparable to $\ctwo$, since we do not allow  new one-variable formulas
to be formed via unguarded counting (e.g. constraints on the number of non-neighbors). Ideally one would like a logic that subsumes
$\ourlogic$ and also two-variable logic with unbounded almost-periodic counting~\cite{tonymichael}.

\forarxiv{We have also applied the techniques we developed to
provide a new result on the \emph{spectrum} for $\gptwo$ sentences. 
The spectrum of a logical sentence is the set of cardinalities of finite models of the sentence.
It has been shown that the spectrum of $\ctwo$ is a semilinear set: definable in Presburger arithmetic~\cite{ianrevisited,spectrumeryktony}.
We can show that the same holds for $\gptwo$. This result is surprising, given the negative result on $1$-type spectra from \cite{uscsl} mentioned above. To keep this manuscript manageable, we leave the details for a second preprint.}

Finally we note that our results deal with satisfiability over structures which (by definition) are free of self-loops.
We believe it is easy to reduce the general case to this one, but have not verified this.

%% file: appendix.tex
\input{app-lemmalinearalgebra}
\input{app-normalform}

\input{app-opspreserve}
\input{app-almostcsl}
\input{app-constructDstruct}
\input{app-outgoingonly}

%% file: app-lemmalinearalgebra.tex
\section{Bounds on solution size} \label{app:solutionsizelemma}

The following lemma on the solutions of linear systems is used throughout the body.

\begin{lemma}[\cite{Pottier}, also see Proposition 4 in~\cite{semilinear}]\label{lem:minimal}
    For every  integer linear system $\cQ(\bfx): \bfA \bfx = \bfc$,
    where $\bfA \in \bbZ^{m \times n}$ and $\bfc \in \bbZ^{m}$,
    and for every non-negative solution $\bfv$ of $\cQ(\bfx)$,
    there exists another non-negative solution $\bfu$ such that each entry of $\bfu$ is bounded by
    $\left((n+1) \cdot M + M + 1\right)^{m}$,
    where $M$ is the maximum absolute value of the entries of $\bfA$ and $\bfc$,
    and $\bfu \le \bfv$ componentwise.
\end{lemma}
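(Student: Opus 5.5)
This lemma is the classical small-solution bound of Pottier; the proposal below recovers it from the Hilbert-basis decomposition of a homogenised system. Throughout, $\bfv \in \bbN^n$ is a solution of $\bfA\bfx=\bfc$.

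\textbf{Homogenise.} Introduce a fresh variable $t$ and consider
$\bfA\bfx - t\,\bfc = \bfzero$ over $\bbN^{n+1}$. The vector $(\bfv,1)$ is a non-negative solution of it. The solution set is a finitely generated monoid, so every solution is a sum of minimal (indecomposable) nonzero solutions, i.e.\ elements of the Hilbert basis.

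\textbf{Decompose $(\bfv,1)$.} Write $(\bfv,1)=\sum_r \bfw_r$ with each $\bfw_r$ a minimal nonzero solution. The $t$-coordinates are non-negative and sum to $1$, so exactly one summand, say $\bfw_1=(\bfu,1)$, has $t$-coordinate $1$. This summand gives a solution $\bfu$ of $\bfA\bfx=\bfc$. Since all summands are non-negative, $\bfu\le\bfv$ componentwise, which is the required domination.

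\textbf{Bound minimal solutions.} Every minimal solution of a homogeneous system $\bfB\bfy=\bfzero$ with $m$ rows, $n+1$ columns and entries bounded by $M$ has entries bounded by a quantity of the form $\big((n+1)M+1\big)^m$. Here $\bfB$ includes the column $-\bfc$, so the stated $\left((n+1)M+M+1\right)^m$ leaves slack.

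This bound is Pottier's theorem. Proving it from scratch is the main obstacle, and the cleanest route I would try first is induction on the number of rows $m$. For the base case, solve one equation: a minimal solution of a single equation $\sum_j a_j y_j=0$ pairs positive and negative coefficients, with entries at most $(n+1)M$. For the inductive step, apply the row-$1$ bound, then substitute the row-$1$ Hilbert basis as new variables into the remaining $m-1$ rows. Each step multiplies the entry bound by $(n+1)M+1$, which yields the exponential-in-$m$ bound.

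Since the paper already cites \cite{Pottier} and \cite{semilinear}, the final write-up would simply invoke that bound. The homogenisation and $t$-coordinate argument above is the only part the paper itself needs to supply.
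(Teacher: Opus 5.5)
Your proposal is correct and matches the paper, which gives no argument beyond citing Pottier and Proposition~4 of~\cite{semilinear}: homogenising to $\bfA\bfx - t\,\bfc = \bfzero$ and extracting the unique Hilbert-basis summand of $(\bfv,1)$ with $t$-coordinate $1$ is exactly how that proposition produces a dominated base solution, and Pottier's bound $\|\mathbf{y}\|_1 \le (1+(n+1)M)^m$ for minimal solutions of $[\bfA \mid -\bfc]\,\mathbf{y} = \bfzero$ lies below the stated $((n+1)M+M+1)^m$. Do invoke Pottier as you plan rather than your row-by-row induction, because substituting the row-$1$ Hilbert basis enlarges both the number of variables and the coefficient sizes in the remaining $m-1$ rows, so the claimed recurrence (each step multiplying the bound by $(n+1)M+1$) does not follow as sketched.
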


%% file: app-normalform.tex

\section{Proof of Lemma~\ref{lem:normal}: Normal form for $\ourlogic$}

\normalform*

\begin{proof}

We will need to define an inductive translation that handles formulas that may have free variables; they will translate to a conjunction of a formula with the same free variables
and a $\gcc$. We recall that a global Presburger quantifier is an open formula with one implicit free variable, so it will translate to a formula with one free variable as well.

    Suppose that the input $\ourlogic$ formula is over the vocabulary $\vocab$.
    We normalize the formula in the following three steps.

    \myparagraph{Step 1: Convert global Presburger quantifiers into $\gcc$s}

    We recall from Definition \ref{def:gcc} that a $\gcc$ is a Boolean combination of atomic $\gcc$s, and thus a $\gcc$ can include both $\geq$ and $\leq$ inequalities with constants. Our first part will thus be a straightforward normalization that converts global cardinality constraints into a $\gcc$.
    Formally a global Presburger quantifier has an implicit free variable, while a $\gcc$ defines a property of a structure: the $\gcc$ that the generate will be equivalent to the global Presburger quantifier in the sense that the $\gcc$ holds in a structure if and only if the global Presburger quantifier holds for all elements of the structure.
    
    We simultaneously define four mappings by structural induction on $\varphi$:
    $\transgptwo$, which inputs a $r$-ary $\ourlogic$ formula $\anourlogicsentence$ and produces a $r$-ary $\gptwo$ formula; 
    $\transgptwoex$, which inputs a $\ourlogic$ formula  $\anourlogicsentence$ and produces a $\gptwo$ sentence;
    $\transgcc$, which inputs a $\ourlogic$ formula $\anourlogicsentence$ and produces a $\gcc$;
    $\vocabext$, which inputs a $\ourlogic$ formula  $\anourlogicsentence$ and produce an extended vocabulary;
    and another function $\transstructp{\anourlogicsentence}{\alogicstructure}$, which takes as input a $\ourlogic$ formula $\anourlogicsentence$ and 
    a $\vocab$-structure $\alogicstructure$ and produces a $\vocabextp{\anourlogicsentence}$-structure expanding $\alogicstructure$.

    \begin{compactitem}
        \item \itemtitle{Atomic formulas}
            For atomic formulas $\alpha \in \set{U(x), R(x, y)}$, define
            \begin{equation*}
                \transgptwop{\alpha}       \ =\ \alpha,
                \quad                     
                \transgptwoexp{\alpha}     \ =\ \top,
                \quad                     
                \transgccp{\alpha}         \ =\ \top,
                \quad                     
                \vocabextp{\alpha}         \ =\ \vocab,
                \quad
                \transstructp{\alpha}{\alogicstructure} \ =\ \alogicstructure.
            \end{equation*}

        \item \itemtitle{Negation}
            For $\varphi := \neg \psi$, define
            \begin{equation*}
                \begin{aligned}
                    &
                    \transgptwop{\varphi}       \ =\ \neg\transgptwop{\psi},
                    \quad
                    \transgptwoexp{\varphi}     \ =\ \transgptwoexp{\psi},
                    \quad
                    \transgccp{\varphi}         \ =\ \transgccp{\psi}, \\
                    &
                    \vocabextp{\varphi}         \ =\ \vocabextp{\psi},
                    \quad
                    \transstructp{\varphi}{\alogicstructure} \ =\ \transstructp{\psi}{\alogicstructure}.
                \end{aligned}
            \end{equation*}

        \item \itemtitle{Conjunction}
            For $\varphi := \psi_1 \land \psi_2$, define
            \begin{equation*}
                \begin{aligned}
                    &
                    \transgptwop{\varphi} \ =\ \transgptwop{\psi_1}\ \land\ \transgptwop{\psi_2},
                    \quad
                    \transgptwoexp{\varphi}\ =\ \transgptwoexp{\psi_1}\ \land\ \transgptwoexp{\psi_2}, \\
                    &
                    \transgccp{\varphi}   \ =\ \transgccp{\psi_1}\ \land\ \transgccp{\psi_2},
                    \quad
                    \vocabextp{\varphi}   \ =\ \vocabextp{\psi_1}\ \cup\ \vocabextp{\psi_2}.
                \end{aligned}
            \end{equation*}
            The structure $\transstructp{\varphi}{\alogicstructure}$
            is the common expansion of
            $\transstructp{\psi_1}{\alogicstructure}$ and
            $\transstructp{\psi_2}{\alogicstructure}$.

        \item \itemtitle{Ordinary quantification}
            For $\varphi := Q x\ \psi(x)$, where $Q \in \set{\forall, \exists}$, define
            \begin{equation*}
                \begin{aligned}
                    &
                    \transgptwop{\varphi}       \ =\ Q x\ \transgptwop{\psi(x)},
                    \quad
                    \transgptwoexp{\varphi}     \ =\ \transgptwoexp{\psi},
                    \quad
                    \transgccp{\varphi}         \ =\ \transgccp{\psi}, \\
                    &
                    \vocabextp{\varphi}         \ =\ \vocabextp{\psi},
                    \quad
                    \transstructp{\varphi}{\alogicstructure} \ =\ \transstructp{\psi}{\alogicstructure}.
                \end{aligned}
            \end{equation*}

        \item \itemtitle{Guarded quantification}
            For guarded quantification, define
            \begin{equation*}
                \begin{aligned}
                    \transgptwop{\forall x\ \epsilon(x, y) \to \psi(x, y)}
                    \ =\ &
                    \forall x\ \epsilon(x, y) \to \transgptwop{\psi(x, y)}, \\
                    \transgptwop{\exists x\ \epsilon(x, y) \land \psi(x, y)}
                    \ =\ &
                    \exists x\ \epsilon(x, y) \land \transgptwop{\psi(x, y)}.
                \end{aligned}
            \end{equation*}
            For both forms, $\transgptwoex$, $\transgcc$, and $\vocabext$ are inherited from $\psi$,
            and the structure translation is $\transstruct{\psi}$.

        \item \itemtitle{Local Presburger quantifier}
            For $P(x) := \sum_{t \in \intsinterval{k}} \lambda_t \cdot \presby{\epsilon_t(x, y) \land \psi_t(x,y)} \circledast \delta $, 
            \begin{equation*}
                \begin{aligned}
                    \transgptwop{P(x)}
                    \ =\ &
                    \sum_{t \in \intsinterval{k}}
                    \lambda_t \cdot
                    \presby{
                        \epsilon_t(x, y) \land \transgptwop{\psi_t(x, y)}
                    } \circledast \delta, \\
                    \transgptwoexp{P(x)}
                    \ =\ &
                    \bigwedge_{t \in \intsinterval{k}} \transgptwoexp{\psi_t(x, y)}, \\
                    \transgccp{P(x)}
                    \ =\ &
                    \bigwedge_{t \in \intsinterval{k}} \transgccp{\psi_t(x, y)}, \\
                    \vocabextp{P(x)}
                    \ =\ &
                    \bigcup_{t \in \intsinterval{k}} \vocabextp{\psi_t(x, y)}.
                \end{aligned}
            \end{equation*}
            The structure $\transstructp{P(x)}{\alogicstructure}$ is
            the common expansion of $\transstructp{\psi_t(x, y)}{\alogicstructure}$ for all $t \in \intsinterval{k}$.

        \item \itemtitle{Global Presburger quantifier}
            Finally, for $P(x) := \sum_{t\in\intsinterval{k}} \lambda_t\cdot\presby{\psi_t(y)} \circledast \delta$,
            letting $H$ be a fresh unary predicate symbol associated with this occurrence of $P(x)$, and $U_1, \ldots, U_k$ be fresh unary predicates,
            define 
            \begin{equation*}
                \begin{aligned}
                    \transgptwop{P(x)}
                    \ =\ &
                    H(x), \\ 
                    \transgptwoexp{P(x)}
                    \ =\ &
                    \bigwedge_{t \in \intsinterval{{k}}} \forall x \ U_t(x) \leftrightarrow \transgptwop{\psi_t(x)}
                    \land
                    \bigwedge_{t \in \intsinterval{k}} \transgptwop{\psi_t(y)}\wedge \forall x ~ T(x)), \\
                    \transgccp{P(x)}
                    \ =\ &
                    \left(
                        P^\star
                        \to
                        \left(
                            \termg{H(x)} = \termg{T(x)}
                        \right)
                    \right)
                    \land
                    \left(
                        \neg P^\star
                        \to
                        \left(
                            \termg{H(x)} = 0
                        \right)
                    \right)
                    \land
                    \bigwedge_{t \in \intsinterval{k}} \transgccp{\psi_t(y)}, \\
                    \vocabextp{P(x)} 
                    \ =\ &
                    \set{H, U_1, \ldots, U_k, T} \cup
                    \bigcup_{t \in \intsinterval{k}}
                    \vocabextp{\psi_t(y)},
                \end{aligned}
            \end{equation*}
            where $P^\star$ is the $\gcc$ defined by 
            \begin{equation*}
                P^\star
                \ :=\ 
                \sum_{t \in \intsinterval{k}} \lambda_t \cdot \termg{U_t(x)}
                \circledast \delta.
            \end{equation*}
            Above $T$ is a fresh unary predicate. For structures that satisfy $\alogicstructure \models \forall x ~ T(x)$, $\abs{T^\alogicstructure}$ is just the cardinality of the domain; we do this
            because we do not allow a term $\termg{\top}$ in a $\gcc$.

            Let $\cB$ be the common expansion of $\transstructp{\psi_t(y)}{\alogicstructure}$ for all $t \in \intsinterval{k}$.
            We define $\transstructp{P(x)}{\alogicstructure}$ as the expansion of $\cB$ interpreting $T$ by the whole domain, and interpreting $H$ and $U_t$ by
            \begin{equation*}
                \begin{aligned}
                    H^{\transstructp{P(x)}{\alogicstructure}}
                    \ =\ &
                    \setc{v \in A}{\alogicstructure, v/x \models P(x)} \\
                    U_t^{\transstructp{P(x)}{\alogicstructure}}
                    \ =\ &
                    \setc{v \in A}{\alogicstructure, v/x \models \psi_t(x)} \\
                \end{aligned}
            \end{equation*}
            Note that since $P(x)$ is a global Presburger quantifier,
            if there exists an element $v \in A$ satisfies $\alogicstructure, v/x \models P(x)$,
            then it holds for every $u \in A$ such that $\alogicstructure, u/x \models P(x)$.
            Therefore, either $H^{\transstructp{P(x)}{\alogicstructure}} = A$ or $H^{\transstructp{P(x)}{\alogicstructure}} = \emptyset$.
    \end{compactitem}

    The intuition behind the translation is that the truth value of a global Presburger quantifier is independent of the choice of elements in the model.
    Thus, it is either satisfied by all elements or by none.
    We encode this truth value using a fresh predicate symbol,
    and the constraints generated by $\transgptwoex$ and $\transgcc$ ensure that this predicate has the intended interpretation in the extended structure.

We conclude Step 1 by arguing for the correctness of the translations.

    \begin{claim} \label{claim:ext_struct}
        For every $\ourlogic$ formula $\varphi(x,y)$,
        for every $\vocab$-structure $\alogicstructure$,
        the following properties hold.
        \begin{compactitem}
            \item
                For every pair of elements $v, u \in A$,
                \begin{equation*}
                    \alogicstructure, v/x, u/y \models \varphi
                    \quad\text{if and only if}\quad
                    \transstructp{\varphi}{\alogicstructure}, v/x, u/y \models \transgptwop{\varphi}.
                \end{equation*}
            \item
                $\transstructp{\varphi}{\alogicstructure} \models \transgptwoexp{\varphi}$.
            
            \item
                $\transstructp{\varphi}{\alogicstructure} \models \transgccp{\varphi}$.
        \end{compactitem}
   The analogous invariant holds also for formulas with one or zero free variables.
    \end{claim}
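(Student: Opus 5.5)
The plan is a simultaneous structural induction on $\varphi$, proving all three items of Claim~\ref{claim:ext_struct} together, for formulas with two, one, or zero free variables. I would first fix the convention that $\transstructp{\varphi}{\alogicstructure}$ is always an expansion of $\alogicstructure$. Then I would note that the fresh predicates introduced for distinct occurrences of global Presburger quantifiers are pairwise distinct, so the ``common expansions'' used at conjunctions and at local quantifiers are well defined. Since every translation only mentions symbols of its own extended vocabulary, truth of $\transgptwop{\psi}$, $\transgptwoexp{\psi}$ and $\transgccp{\psi}$ is preserved when we pass from $\transstructp{\psi}{\alogicstructure}$ to any further expansion. This monotonicity fact is used repeatedly.

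For the base case (atoms) all three items are trivial. For negation, conjunction, ordinary quantification and guarded quantification, the first item follows from the induction hypothesis applied pointwise to every assignment. This works because the expansion is independent of the assignment, which is the key reason the construction interprets $H$ and $U_t$ globally rather than per element. The second and third items are inherited from the subformulas by the monotonicity fact. For a local Presburger quantifier, the induction hypothesis gives, for each $v$, that $\alogicstructure\models\epsilon_t(v,u)\land\psi_t(v,u)$ iff the expansion satisfies $\epsilon_t(v,u)\land\transgptwop{\psi_t}(v,u)$, since $\epsilon_t$ is an original atom. Hence the counted sets coincide, and therefore so does the linear (in)equality.

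The main case is the global Presburger quantifier $P(x)$. By the induction hypothesis applied to each $\psi_t$, the set $U_t^{\transstructp{P}{\alogicstructure}}$ equals the set of elements satisfying $\transgptwop{\psi_t}$ in the expansion. This gives the conjunct $\forall x\, U_t(x)\leftrightarrow\transgptwop{\psi_t}(x)$, and $\forall x\,T(x)$ holds by definition of $T$. Consequently $P^\star$ holds in the expansion iff $\sum_t\lambda_t\abs{\psi_t^{\alogicstructure}}\circledast\delta$, i.e.\ iff $\alogicstructure\models P(v)$ for some (equivalently every) $v$. In that case $H$ is the whole domain, so $\termg{H(x)}=\termg{T(x)}$. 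Otherwise $H=\emptyset$, so $\termg{H(x)}=0$. This yields the third item, and the first item holds since $\transgptwop{P}=H(x)$. A $\GCC$ with $=$, or with a general $\circledast$, is a Boolean combination of atomic $\geq$-constraints, so this is fine.

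I expect the only real subtlety to be bookkeeping. One point is that the stray conjunct $\bigwedge_t\transgptwop{\psi_t(y)}$ in $\transgptwoexp{P(x)}$ must be read as the sentences $\transgptwoexp{\psi_t}$, presumably a typo. The other is that the free-variable convention (a global quantifier has an implicit free $x$) must be handled so that the substitution $\psi_t(y)$ versus $\psi_t(x)$ is harmless. I would state the induction hypothesis for all variable renamings to cover this.
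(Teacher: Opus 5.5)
Your proposal is correct and follows essentially the same route as the paper. Both arguments are a structural induction in which the Boolean, quantifier and local-Presburger cases follow from the induction hypothesis plus preservation of truth under common expansions. In both, the global Presburger quantifier is the main case: the interpretations of $U_t$ and $T$ make $P^\star$ equivalent to $P(v)$, so the all-or-nothing interpretation of $H$ satisfies $\transgccp{P(x)}$. Your explicit treatment of the local Presburger case, the freshness of the predicates, and the $\transgptwoex$ typo only spells out what the paper leaves implicit.
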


    \begin{claimproof}
        We prove the claim by structural induction on the formula.
        The base case of atomic formulas is trivial.
        For the inductive step, we consider only conjunction and global Presburger quantification.
        The remaining cases -- negation, ordinary quantification, guarded quantification, and local Presburger quantification -- follow by arguments similar to that for conjunction.
        \begin{compactitem}
            \item \itemtitle{Conjunction}
                Suppose that $\varphi = \psi_1 \land \psi_2$.

                \begin{compactitem}
                    \item
                        Suppose
                        $\alogicstructure, v/x, u/y \models \varphi$,
                        then it holds that
                        \begin{equation*}
                            \alogicstructure, v/x, u/y \models \psi_1
                            \quad\text{and}\quad
                            \alogicstructure, v/x, u/y \models \psi_2.
                        \end{equation*}
                        By the induction hypothesis,
                        it holds that
                        \begin{equation*}
                            \transstructp{\psi_1}{\alogicstructure}, v/x, u/y \models \transgptwop{\psi_1}
                            \quad\text{and}\quad
                            \transstructp{\psi_2}{\alogicstructure}, v/x, u/y \models \transgptwop{\psi_2}.
                        \end{equation*}
                        Since $\transstructp{\varphi}{\alogicstructure}$
                        is the common expansion of
                        $\transstructp{\psi_1}{\alogicstructure}$ and
                        $\transstructp{\psi_2}{\alogicstructure}$.
                        We have
                        \begin{equation*}
                            \transstructp{\varphi}{\alogicstructure}, v/x, u/y \models \transgptwop{\psi_1}
                            \quad\text{and}\quad
                            \transstructp{\varphi}{\alogicstructure}, v/x, u/y \models \transgptwop{\psi_2}.
                        \end{equation*}
                        Therefore
                        \begin{equation*}
                            \transstructp{\varphi}{\alogicstructure}, v/x, u/y \models \transgptwop{\psi_1} \land \transgptwop{\psi_2}
                        \end{equation*}

                        On the other hand,
                        suppose $\transstructp{\varphi}{\alogicstructure}, v/x, u/y \models \transgptwop{\varphi}$.
                        It holds that
                        \begin{equation*}
                            \transstructp{\varphi}{\alogicstructure}, v/x, u/y \models \transgptwop{\psi_1}
                            \quad\text{and}\quad
                            \transstructp{\varphi}{\alogicstructure}, v/x, u/y \models \transgptwop{\psi_2}.
                        \end{equation*}
                        Since $\transstructp{\varphi}{\alogicstructure}$
                        is the common expansion of
                        $\transstructp{\psi_1}{\alogicstructure}$ and
                        $\transstructp{\psi_2}{\alogicstructure}$,
                        and each $\transgptwop{\psi_i}$ is a formula over $\vocabextp{\psi_i}$ for $i \in \set{1, 2}$,
                        the truth values of both formulas are preserved in the common expansion.
                        We have
                        \begin{equation*}
                            \transstructp{\psi_1}{\alogicstructure}, v/x, u/y \models \transgptwop{\psi_1}
                            \quad\text{and}\quad
                            \transstructp{\psi_2}{\alogicstructure}, v/x, u/y \models \transgptwop{\psi_2}.
                        \end{equation*}
                        By the induction hypothesis,
                        it holds that
                        \begin{equation*}
                            \alogicstructure, v/x, u/y \models \psi_1
                            \quad\text{and}\quad
                            \alogicstructure, v/x, u/y \models \psi_2.
                        \end{equation*}
                        Hence, $\alogicstructure, v/x, u/y \models \psi_1 \land \psi_2$.

                    \item
                        For the second property,
                        by the induction hypothesis,
                        we have
                        \begin{equation*}
                            \transstructp{\psi_1}{\alogicstructure} \models \transgptwoexp{\psi_1}
                            \quad\text{and}\quad
                            \transstructp{\psi_2}{\alogicstructure} \models \transgptwoexp{\psi_2}.
                        \end{equation*}
                        Again by the fact that
                        $\transstructp{\varphi}{\alogicstructure}$
                        is the common expansion of
                        $\transstructp{\psi_1}{\alogicstructure}$ and
                        $\transstructp{\psi_2}{\alogicstructure}$,
                        we have
                        \begin{equation*}
                            \transstructp{\varphi}{\alogicstructure} \models \transgptwoexp{\psi_1}
                            \quad\text{and}\quad
                            \transstructp{\varphi}{\alogicstructure} \models \transgptwoexp{\psi_2}.
                        \end{equation*}
                        Hence
                        $\transstructp{\varphi}{\alogicstructure} \models \transgptwoexp{\psi_1} \land \transgptwoexp{\psi_2}$.

                    \item
                        The last property follows by an argument similar to that for the previous property.
                \end{compactitem}

            \item \itemtitle{Global Presburger quantifier}
                Suppose that $\varphi(x) = P(x) = \left(\sum_{t\in\intsinterval{k}} \lambda_t\cdot\presby{\psi_t(y)} \circledast \delta \right)$.
                Let $H, U_1, \ldots, U_k$ be the fresh unary predicate symbol associated with this occurrence of $P(x)$ in the translation.
                \begin{compactitem}
                    \item
                        By the definition of $\transstructp{P(x)}{A}$,
                        \begin{equation*}
                            v \in H^{\transstructp{P(x)}{A}}
                            \quad\text{if and only if}\quad
                            \alogicstructure, v/x \models P(x).
                        \end{equation*}
                        Recall that $\transgptwop{P(x)} = H(x)$.
                        Thus we have
                        \begin{equation*}
                            \alogicstructure, v/x \models P(x)
                            \quad\text{if and only if}\quad
                            \transstructp{P(x)}{A}, v/x \models \transgptwop{P(x)}.
                        \end{equation*}

                    \item
                        For the second property, we focus on the first conjunct.
                        For every $1 \le t \le k$ and $v \in A$,
                        by the definition of $\transstructp{P(x)}{A}$,
                        \begin{equation*}
                            v \in U_t^{\transstructp{P(x)}{A}}
                            \quad\text{if and only if}\quad
                            \alogicstructure, v/x \models \psi_t(x).
                        \end{equation*}
                        By the induction hypothesis,
                        \begin{equation*}
                            \alogicstructure, v/x \models \psi_t(x).
                            \quad\text{if and only if}\quad
                            \transstructp{\psi_t(x)}{A}, v/x \models \transgptwop{\psi_t(x)}.
                        \end{equation*}
                        Since $\transstructp{P(x)}{A}$ is the expansion of $\transstructp{\psi_t(x)}{A}$,
                        we have
                        \begin{equation*}
                            \transstructp{\psi_t(x)}{A}, v/x \models \transgptwop{\psi_t(x)}.
                            \quad\text{if and only if}\quad
                            \transstructp{P(x)}{A}, v/x \models \transgptwop{\psi_t(x)}.
                        \end{equation*}
                        Therefore
                        \begin{equation*}
                            \transstructp{P(x)}{A}, v/x \models U_t(x) \leftrightarrow \transgptwop{\psi_t(x)}.
                        \end{equation*}

                    \item
                        For every $1 \le t \le k$,
                        for every $u \in A$,
                        by the definition of $\transstructp{P(x)}{\alogicstructure}$,
                        \begin{equation*}
                            u \in U_t^{\transstructp{P(x)}{\alogicstructure}}
                            \quad\text{if and only if}\quad
                            \alogicstructure, v/x \models \psi_t(x),
                        \end{equation*}
                        which implies that
                        \begin{equation*}
                            \sum_{t\in\intsinterval{k}} \lambda_t \cdot \abs{\setc{u \in A}{\transstructp{P(x)}{\alogicstructure}, u/x \models U_t(x)}}
                            \ =\ 
                            \sum_{t\in\intsinterval{k}} \lambda_t \cdot \abs{\setc{u \in A}{\alogicstructure, u/y \models \psi_t(x)}}.
                        \end{equation*}
                        Recall that we defined the expanded structure so that the predicate $T$ held of every element, thus the count of $T$ is the same as the number of elements in th domain. Therefore for every $v \in A$,
                        \begin{equation*}
                            \transstructp{P(x)}{\alogicstructure} \models P^\star
                            \quad\text{if and only if}\quad
                            \alogicstructure, v/x \models P(x).
                        \end{equation*}

                        Recall that by the definition of $\transstructp{P(x)}{A}$,
                        \begin{equation*}
                            v \in H^{\transstructp{P(x)}{A}}
                            \quad\text{if and only if}\quad
                            \alogicstructure, v/x \models P(x).
                        \end{equation*}
                        It follows that
                        if $\transstructp{\varphi}{\alogicstructure} \models P^\star$, 
                        then $H^{\transstructp{\varphi}{\alogicstructure}} = A$;
                        otherwise, $H^{\transstructp{\varphi}{\alogicstructure}} = \emptyset$.
                        Hence, $\transstructp{P(x)}{\alogicstructure}$ satisfies the first two conjuncts of $\transgccp{P(x)}$.

                \end{compactitem}
        \end{compactitem}
    \end{claimproof}

    \begin{claim}\label{claim:res_struct}
        For every $\ourlogic$ formula $\varphi$,
        for every $\vocabextp{\varphi}$-structure $\alogicstructure$, 
        let $\cB$ be the $\vocab$-structure obtained by restricting $\alogicstructure$ to the vocabulary $\vocab$.
        It holds that
        if $\alogicstructure \models \transgptwoexp{\varphi} \land \transgccp{\varphi}$, then for every $v, u \in A$ with $v \neq u$, it holds that
        \begin{equation*}
            \alogicstructure, v/x, u/y \models \transgptwop{\varphi}
            \quad\text{if and only if}\quad
            \cB, v/x, u/y \models \varphi.
        \end{equation*}

        The analogous property holds for formulas with zero or one free variable.
    \end{claim}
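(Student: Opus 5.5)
We prove Claim~\ref{claim:res_struct} by structural induction on $\varphi$, in parallel with Claim~\ref{claim:ext_struct}, and in the same generality: formulas with two, one, or zero free variables. We use the first-order invariant that $\transgptwop{\varphi}$ is a formula over $\vocabextp{\varphi}$ whose interpretation, once $\alogicstructure \models \transgptwoexp{\varphi} \land \transgccp{\varphi}$, agrees pointwise with $\varphi$ on the $\vocab$-reduct $\cB$. Two routine facts hold for every case of the definition: $\transgptwoexp{\varphi}$ and $\transgccp{\varphi}$ contain $\transgptwoexp{\psi}$ and $\transgccp{\psi}$ as conjuncts for every immediate subformula $\psi$, and $\alogicstructure$ is also a $\vocabextp{\psi}$-structure by taking reducts. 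So the hypothesis passes down to every subformula, and the induction hypothesis is available for all of them.

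\emph{Atomic} formulas are immediate, since $\transgptwo$ is the identity and $\cB$ agrees with $\alogicstructure$ on $\vocab$. \emph{Negation} and \emph{conjunction} follow directly from the induction hypothesis. For \emph{ordinary quantification}, \emph{guarded quantification} and \emph{local Presburger quantifiers}, the translation keeps the outer quantifier, guard atom $\epsilon$ (a $\vocab$-atom) and coefficients, and replaces only the body. By the induction hypothesis, the body's translation picks out exactly the same set of witnesses $u$ in $\alogicstructure$ as the original body does in $\cB$, so the counts $\presby{\epsilon_t \land \cdot}$ coincide and the truth value of the (in)equality is preserved.

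One technical point concerns the restriction $v \neq u$ in the statement. For counts and quantifiers over $y$ with a guard $\epsilon(x,y)$, structures have no self-loops, so the pair $(v,v)$ never satisfies the guard and the restriction costs nothing. For unguarded $\exists x$/$\forall x$ applied to one-variable formulas, we use the one-variable version of the claim.

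The main case is the \emph{global Presburger quantifier} $P(x)$. Here $\transgptwop{P(x)} = H(x)$, and we must show that $H^{\alogicstructure}$ is $A$ if $\cB \models P$ and $\emptyset$ otherwise. The conjunct $\forall x\, U_t(x) \leftrightarrow \transgptwop{\psi_t(x)}$ of $\transgptwoexp{P(x)}$, together with the induction hypothesis for $\psi_t$, gives $U_t^{\alogicstructure} = \setc{a}{\cB \models \psi_t(a)}$. Hence $\alogicstructure \models P^\star$ iff $\cB \models P$. The conjunct $\forall x\, T(x)$ forces $\abs{T^\alogicstructure} = \abs{A}$. The two implications in $\transgccp{P(x)}$ then yield $H^\alogicstructure = A$ when $P^\star$ holds (from $\termg{H} = \termg{T}$) and $H^\alogicstructure = \emptyset$ otherwise (from $\termg{H} = 0$), which is exactly the required equivalence for every $v$.

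The obstacle I expect is bookkeeping rather than mathematics. The translation as written mixes $\psi_t(y)$ and $\psi_t(x)$ and has an unbalanced parenthesis in $\transgptwoexp{P(x)}$. I would read it as intended: the $U_t$ are defined via $\psi_t$ in the single free variable, and $T$ is universal. The proof also needs the fresh symbols of different occurrences to be distinct, so that the conjunction case does not create clashes. With these conventions, combining this claim with Claim~\ref{claim:ext_struct} gives equisatisfiability of $\Phi$ and $\transgptwoexp{\Phi}\land\transgptwop{\Phi}\land\transgccp{\Phi}$ on each domain size. Steps 2–3 then normalize the $\gptwo$ part by standard Scott-style renaming.
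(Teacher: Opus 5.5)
Your proposal is correct and follows the paper's proof. It is a structural induction in which every case except the global Presburger quantifier is routine; in that case the conjunct $\forall x\, U_t(x) \leftrightarrow \transgptwop{\psi_t(x)}$ plus the induction hypothesis give $\alogicstructure \models P^\star$ iff $\cB \models P(x)$, and the two implications in $\transgccp{P(x)}$ then force $H^{\alogicstructure}$ to be $A$ or $\emptyset$ accordingly. Your extra remarks make explicit what the paper uses implicitly: looplessness makes the $v \neq u$ restriction harmless for guarded quantifiers and local counts, $\forall x\, T(x)$ gives $\termg{T(x)} = \abs{A}$, and the global case of $\transgptwoexp{\cdot}$ is read as containing $\transgptwoexp{\psi_t}$.
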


    \begin{claimproof}
        We prove the claim by structural induction on $\varphi$.
        The base case of atomic formulas and the inductive cases for Boolean connectives, ordinary quantification, guarded quantification, and local Presburger quantification are straightforward.
        We therefore consider only the case of global Presburger quantification, where $\varphi = P(x) = \left(\sum_{t\in\intsinterval{k}} \lambda_t\cdot\presby{\psi_t(y)} \circledast \delta \right)$.
        Let $H, U_1, \ldots, U_k$ be the fresh unary predicate symbols associated with this occurrence of $P(x)$ in the translation.

        Firstly,
        since $\alogicstructure \models \transgptwoexp{P(x)}$,
        for $1 \le t \le k$ and $v \in A$,
        it holds that
        \begin{equation*}
            \alogicstructure, v/x \models U_t(x)
            \quad\text{if and only if}\quad
            \alogicstructure, v/x \models \transgptwop{\psi_t(x)}.
        \end{equation*}
        Moreover, $\alogicstructure \models \transgptwoexp{P(x)} \land \transgccp{P(x)}$
        ensures that the conditions required to apply the induction hypothesis to each $\psi_t$,
        $\alogicstructure \models \transgptwoexp{\psi_t(x)} \land \transgccp{\psi_t(x)}$, are satisfied.
        Hence, for every $1 \le t \le k$ and $v \in A$,
        \begin{equation*}
            \alogicstructure, v/x \models U_t(x)
            \quad\text{if and only if}\quad
            \cB, v/x \models \psi_t(x).
        \end{equation*}

        Consequently, the corresponding counting terms are equal
        \begin{equation*}
            \sum_{t\in\intsinterval{k}} \lambda_t \cdot \abs{\setc{u \in A}{\alogicstructure, u/x \models U_t(x)}}
            \ =\ 
            \sum_{t\in\intsinterval{k}} \lambda_t \cdot \abs{\setc{u \in A}{\cB, u/y \models \psi_t(x)}}.
        \end{equation*}
        Since $P^\star$ and $P(x)$ use the same comparison operator $\circledast$ and constant $\delta$, it follows that, for every $v \in A$,
        \begin{equation}
            \alogicstructure \models P^\star
            \quad\text{if and only if}\quad
            \cB, v/x \models P(x).
        \end{equation}

        Finally,
        since $\alogicstructure \models \transgccp{P(x)}$,
        the constraints ensure that 
        if $\alogicstructure \models P^\star$, then $H^\alogicstructure = A$.
        Otherwise if $\alogicstructure \not\models P^\star$, then $H^\alogicstructure = \emptyset$.
        Combining this with the equivalence above, we obtain, for every $v \in A$,
        \begin{equation}
            \alogicstructure, v/x \models H(x)
            \quad\text{if and only if}\quad
            \cB, v/x \models P(x).
        \end{equation}
    \end{claimproof}

    To conclude this step, Claims~\ref{claim:ext_struct} and~\ref{claim:res_struct} imply that,
    for every $\ourlogic$ formula $\varphi$ and every $k \in \bbN$,
    $\varphi$ has a model with $k$ elements if and only if $\transgptwop{\varphi} \land \transgptwoexp{\varphi} \land \transgccp{\varphi}$ has a model with $k$ elements.

    \myparagraph{Step 2: Convert local Presburger quantifiers to use only the comparison operator $\ge$}

    In Step~1, we showed that every $\ourlogic$ formula $\varphi$ can be translated into an equisatisfiable $\gptwo \land \gcc$ formula while preserving the cardinalities of models.
    It remains to normalize the $\gptwo$ part. The first issue is to ensure that we only use ``lower bound Presburger quantifiers”, as required by our normal form.

    We use the following equivalences between local Presburger quantifiers.
    \begin{equation*}
        \begin{aligned}
            \sum_{t \in \intsinterval{k}} \lambda_t \cdot \presby{-} = \delta
            \ \equiv\ &
            \left(\sum_{t \in \intsinterval{k}} \lambda_t \cdot \presby{-} \ge \delta\right)
            \land
            \left(\sum_{t \in \intsinterval{k}} \lambda_t \cdot \presby{-} \le \delta\right) \\
            \sum_{t \in \intsinterval{k}} \lambda_t \cdot \presby{-} \neq \delta
            \ \equiv\ &
            \left(\sum_{t \in \intsinterval{k}} -\lambda_t \cdot \presby{-} \ge -(\delta - 1)\right) \lor
            \left(\sum_{t \in \intsinterval{k}}  \lambda_t \cdot \presby{-} \ge \delta + 1\right) \\
            \sum_{t \in \intsinterval{k}} \lambda_t \cdot \presby{-} > \delta
            \ \equiv\ &
            \sum_{t \in \intsinterval{k}} \lambda_t \cdot \presby{-} \ge \delta + 1 \\
            \sum_{t \in \intsinterval{k}} \lambda_t \cdot \presby{-} < \delta
            \ \equiv\ &
            \sum_{t \in \intsinterval{k}} -\lambda_t \cdot \presby{-} \ge -(\delta - 1) \\
            \sum_{t \in \intsinterval{k}} \lambda_t \cdot \presby{-} \le \delta
            \ \equiv\ &
            \sum_{t \in \intsinterval{k}} -\lambda_t \cdot \presby{-} \ge -\delta
        \end{aligned}
    \end{equation*}
    Therefore, we can further transform the $\gptwo$ formula so that all local Presburger quantifiers use only the comparison operator $\ge$.

    \myparagraph{Step 3: Normalize the $\gptwo$ component}

    Finally, we normalize the $\gptwo$ component using the procedure of \cite[Lemma 3.3]{twovarpres}.
    In that result, there is no explicit statement about the use of $\leq$ or $\geq$ in Presburger quantifiers. But
    inspection of the procedure shows that it preserves the comparison operators in local Presburger quantifiers. Thus the resulting formula continues to use only $\ge$.
\end{proof}

%% file: app-opspreserve.tex

\section{Proof of Lemma \ref{lem:operation}: operations preserve realized $\gbv$s}

\opspreserve*

Let $\cG$ be a D-structure realizing a $\gbv$ $\agbv$ with witness $\rep$.
By definition of realization,
the following property holds.
\begin{compactitem}
    \item
        For every vertex $v$, $v$ realizes $\rep(v)$.
        If $v$ is a concrete vertex, then $\rep(v)$ is a valid $\lbv$;
        if $v$ is a phantom vertex, then $\rep(v)$ is a homogeneously valid $\lbv$.
    \item
        For every 1-type $\pi$, $x \in \set{\basetext, \periodtext}$, and $1 \le i \le \numbhxs{\pi}$,
        \begin{equation*}
            \agbv(\pi, x, i) = \sum_{u \in V_\pi} \rep(u)(x, i),
        \end{equation*}
        where $V_\pi$ is the set of vertices with 1-type $\pi$.
\end{compactitem}

We prove that the resulting D-structure also realizes $\agbv$, by constructing the corresponding witness.
\begin{compactitem}
    \item \itemtitle{Merge}
        Let $\cG'$ be obtained from $\cG$ by merging a phantom vertex $u_1$ into a concrete or phantom vertex $u_2$, producing a new vertex $w$,
        where $\onetypeof(u_1) = \onetypeof(u_2)$ and there is no edge between $u_1$ and $u_2$.

        Define $\rep'$ by
        \begin{equation*}
            \rep'(v)
            \ :=\
            \begin{cases*}
                \rep(v),             & \text{if $v \notin \{u_1,u_2\}$,} \\
                \rep(u_1)+\rep(u_2), & \text{if $v=w$.}
            \end{cases*}
        \end{equation*}
        It is immediate that, for every 1-type $\pi$, the sum of the witnesses of type $\pi$ is preserved.
        Thus, it suffices to show that $w$ realizes $\rep'(w)$.

        Observe that the behavior of $w$ in $\cG'$ is exactly the sum of the behaviors of $u_1$ and $u_2$ in $\cG$.
        Since $u_1$ realizes $\rep(u_1)$ and $u_2$ realizes $\rep(u_2)$, we have
        \begin{equation*}
            \begin{aligned}
                \bh{w}
                \ =\ &\bh{u_1} + \bh{u_2}
                \ =\  \eval{\rep(u_1)} + \eval{\rep(u_2)} 
                \ =\  \eval{\rep(w)}.
            \end{aligned}
        \end{equation*}
        Hence, $w$ realizes $\rep'(w)$.

    \item \itemtitle{Swap}
        Let $\cG'$ be obtained from $\cG$ by swapping a pair of edges satisfying the conditions described in the statement of the lemma.
        It is clear that the behavior of every vertex remains unchanged.
        Hence, the witness showing that $\cG$ realizes $\agbv$ is also a witness showing that $\cG'$ realizes $\agbv$.
\end{compactitem}

%% file: app-almostcsl.tex

\section{Proof of Proposition \ref{prop:conformantsemilinear}: the set of conformant behaviors is hybrid linear} \label{app:almostcsl}
Recall from the body that $\univpartphi(x)$  is the formula obtained from a normal-form sentence $\anfgptwosentence$ by discarding 
the outer quantifier $\forall x$:

\[
\quantifierfreepartphi(x)\ \land\ 
\bigwedge_{i \in \intsinterval{m}} \forall y\ \left(R_i(x,y) \land x \neq y\right) \to \alpha_i(x, y)
\ \land\ 
\bigwedge_{i \in \intsinterval{n}} U_i(x) \to P_i(x)
\]
where $\quantifierfreepartphi$ is a quantifier-free formula, each $R_i(x,y)$ is a binary predicate,
each $\alpha_i(x,y)$ is a quantifier-free formula, each $U_i(x)$ is a unary predicate
and each $P_i(x)$ is a local Presburger quantifier using only the comparison operator $\ge$:
\[
\sum_{t \in \intsinterval{s_i}} \lambda_{i,k} \cdot \presby{(x \neq y) \land R_{i,t}(x, y)}  \ \ge\ \delta_i \] 
with each $R_{i,t}(x,y)$ a binary atom, 

Recall from Definition  \ref{def:conformanttypeasymmetrictype} that a 1-type $\pi$ is conformant (w.r.t. $\anfgptwosentence$), 
if $\pi(x) \wedge \neg R(x,x) \models \quantifierfreepartphi(x)$.
A $2$-type $\eta$ is conformant with 1-types $\pi_1$ and $\pi_2$ (w.r.t. $\anfgptwosentence$),
if $\pi_1(x) \land \eta(x,y) \land \pi_2(y) \models (R_i(x,y)\land x\neq y) \to \alpha_i(x,y)$
and $\pi_2(x) \land \dual{\eta}(x,y) \land \pi_1(y) \models (R_i(x,y)\land x\neq y) \to \alpha_i(x,y)$
for every $i \in \intsinterval{m}$. 

Lastly, recall from Definition \ref{def:conf_behavior} that a (local) behavior is conformant with $\univpartphi$ and $\pi$
if there exists a structure $\cG$ and an element $v\in \cG$  having that behavior, with 1-type $\pi$, such that every one type and non-null two-type is conformant.

We now recall the statement of Proposition \ref{prop:conformantsemilinear}:

\almostcsl*

The proof is quite straightforward. Consider the following linear constraints, a variation of those introduced in prior work (e.g. \cite{uscsl}).

\begin{definition}[Characteristic system]
\label{def:characteristic-system}
    For a $\GPtwo$ sentence $\varphi$ in normal form and conformant 1-type $\pi$,
    the characteristic system of $\varphi$ and $\pi$, denoted $\cC^\varphi_\pi(\bfx)$,
    is the integer linear system defined as follows:
    \begin{compactitem}
        \item For every $\pi'$ a conformant one-type and $\eta \in \twotypespi{\pi}{\pi'}$,
        there is a non-negative variables $x_{\eta, \pi'}$ in $\cC^\varphi_\pi(\bfx)$.
        \item (Counting)
        For $1 \le i \le n$, if $U_i(x) \in \pi$
        then the following constraint is in $\cC^\varphi_\pi(\bfx)$:
        \begin{equation*}
            \left(
                \sum_{t \in \intsinterval{s_i}}
                \lambda_{i, t} \cdot
                \sum_{\pi' \in \onetypes}
                \sum_{\eta \in \twotypespi{\pi}{\pi'}, R_{i,t}(x,y) \in \eta} 
                x_{\eta, \pi'}
            \right)
            \ \ge\ \delta_i.
        \end{equation*}

    \end{compactitem}
\end{definition}

\begin{claim} \label{clm:charsysworks}
For any conformant $1$-type $\pi$, the solutions of the characteristic system are exactly the conformant behaviors.
\end{claim}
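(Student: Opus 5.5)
We prove the two inclusions separately. A behavior $\abehav$ of an element of 1-type $\pi$ is identified with the vector $\bfx$ given by $x_{\eta,\pi'} := \abehav(\eta,\pi')$, where $\pi'$ ranges over conformant 1-types and $\eta$ over $\twotypespi{\pi}{\pi'}$. Entries outside this index set must be zero for conformant behaviors: a neighbor of non-conformant 1-type, or joined by a non-conformant 2-type, is excluded by condition~2 of Definition~\ref{def:conf_behavior}. We restrict attention to behaviors supported on the index set.

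\emph{Conformant behaviors are solutions.} Suppose $\abehav$ is conformant. Fix a witnessing structure $\cG$ and element $v$ as in Definition~\ref{def:conf_behavior}. For each $i$ with $U_i(x)\in\pi$, the conjunct $U_i(x)\to P_i(x)$ of $\univpartphi$ holds at $v$. Since the structure is loopless, $x\neq y$ is automatic. The count $\presby{(x\neq y)\land R_{i,t}(x,y)}$ at $v$ therefore equals the number of neighbors $u$ such that the 2-type of $(v,u)$ contains $R_{i,t}(x,y)$. Grouping these neighbors by their 1-type $\pi'$ and 2-type $\eta$ gives exactly $\sum_{\pi'}\sum_{\eta\ni R_{i,t}(x,y)} x_{\eta,\pi'}$. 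Hence the Counting constraint holds. Non-negativity is immediate.

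\emph{Solutions are conformant behaviors.} Given a non-negative integer solution $\bfx$, build $\cG$ as a star. Take a centre $v$ of 1-type $\pi$. For each pair $(\eta,\pi')$, add $x_{\eta,\pi'}$ fresh leaves of 1-type $\pi'$, and let the pair $(v,u)$ realize $\eta$; then $(u,v)$ realizes $\dual{\eta}$. Every other pair of distinct elements gets the null 2-type, and no self-loops are added. Every 1-type used is conformant, and every non-null 2-type is in $\twotypespi{\pi}{\pi'}$, so condition~2 holds. The behavior of $v$ is exactly $\bfx$.

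It remains to check $\cG\models\univpartphi(v)$. The conjunct $\quantifierfreepartphi(v)$ holds because $\pi$ is conformant and $v$ has no self-loops. The guarded universal conjuncts hold because every neighbor is joined via a 2-type conformant with $\pi$ and $\pi'$, and non-neighbors satisfy no $R_i$. The Presburger conjuncts $U_i\to P_i$ hold by the Counting constraints, using the same counting identity as in the first direction. Note that Definition~\ref{def:conf_behavior} only requires $\univpartphi$ at $v$, not at the leaves, so no further work is needed there.

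\emph{Hybrid linearity.} The solution set of $\cC^\varphi_\pi$ is the set of non-negative integer points of a finite conjunction of linear inequalities $A\bfx\ge\bfc$, with no disjunction. By the standard result (e.g.\ via Lemma~\ref{lem:minimal} or \cite{semilinear}), such a set equals $\bigcup_i \cL(\bfb_i;\bfP)$. Here the $\bfb_i$ are the minimal solutions, and $\bfP$ is the Hilbert basis of the homogeneous system $A\bfx\ge 0$, which is common to all components. Both are effectively computable, which gives the hybrid linear representation.

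The only delicate point is the bookkeeping of the index set and the out-of-index coordinates; the construction itself is routine.
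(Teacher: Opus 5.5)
Your proof is correct and follows the paper's argument. The forward direction reads the Counting constraints off the conjuncts $U_i\to P_i$ at $v$, using looplessness to drop $x\neq y$. The converse builds the same one-level tree (your star) around a $\pi$-element, and your zero-extension bookkeeping for indices outside $\twotypespi{\pi}{\pi'}$ is a harmless addition the paper leaves implicit.

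One caveat concerns your extra hybrid-linearity paragraph, which lies outside this claim. For inequality systems the bases are not simply the componentwise-minimal solutions. For $x_2-x_1\ge -1$ over $\bbN^2$ the only minimal solution is $\bfzero$, yet $(1,0)$ is a solution not in $\bfzero+\setc{\bfx}{x_2-x_1\ge 0}$. So you need slack variables and should take projections of the minimal solutions of $A\bfx-\mathbf{s}=\bfc$, as in the result of \cite{semilinear} that the paper cites.
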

\begin{proof}
In one direction, we argue that a conformant behavior satisfies the system.  By definition, this is the behavior of an element $v$ in a structure with only conformant types.
Recall also that we only consider structures that are loopless; thus the $x \neq y$ in a local Presburger quantifier is irrelevant for such structures. Thus a local Presburger constrains
the neighbors of an element so that its behavior satisfies the inequality constraint above. Satisfaction of the other constraints is straightforward.

On the other hand,
suppose we have a solution with $x_{\eta, \pi'} \gets c_{\eta, \pi'}$. We create a tree-shaped structure with a root element $v$, labeled
so as to satisfy $\pi$. 
For each two-type $\eta$ and $1$-type $\pi'$ such that both
$\pi'$ and $\pi, \eta, \pi'$ are conformant, we link $v$ via edges consistent with $\eta$ to
$c_{\eta, \pi'}$ distinct vertices, 
each labeled to satisfy $\pi'$.

We claim that the root node in the resulting structure  witnesses that the behavior is conformant.
That is, we claim that the root node satisfies $\univpartphi$. The fact that it satisfies $\quantifierfreepartphi$ follows from the fact that
$\pi$ is conformant.
We now consider a constraint $(R_i(x,y) \land x \neq y) \to \alpha_i(x, y)$. This is only relevant for edges between the root node and one of the
children created below, and these are all arranged to satisfy a conformant $2$-type, which guarantees that the implication holds.
Finally we consider a constraint of the form $U_i(x) \to P_i(x)$. We only need concern ourselves with $U_i$ contained in $\pi$, since otherwise
the implication holds vacuously. But then the counting constraint applies to guarantee $P_i$.
\end{proof}

To complete the proof of Proposition \ref{prop:conformantsemilinear}, note that the characteristic system contains no disjunctions and can therefore be expressed in the form $Ax \ge c$.
Hence, its solution set is hybrid linear~\cite[Proposition 4]{semilinear}. 

%% file: app-constructDstruct.tex

\section{Proof of the rewiring lemma, Lemma~\ref{lem:rewiring}, and Lemma~\ref{lem:match} (constructing a decomposed structure from a $\gbv$ satisfying $\match$)} \label{app:dstructurefrommatch}


\rewiring*

\begin{proof}
 
    Let $\cG_0$ be $\cG$.
    We define the D-structure $\cG_{t + 1}$ extended $\cG_{t}$ by adding edges according to the following procedure.
    For every vertex $v$, we write $\bhg{t}{v}$ for the behavior of $v$ in $\cG_t$.
    The D-structure $\cG_{t+1}$ is obtained as follows:
    \begin{enumerate}
        \item
            Pick a vertex $v_1 \in V_1$ such that
            $\bhg{t}{v_1}(\eta, \pi_2) < f_1(v_1)$.

        \item
            Pick a vertex $v_2 \in V_2$ such that
            $\bhg{t}{v_2}(\dual{\eta},\pi_1) < f_2(v_2)$.

        \item
            Add an edge from $v_1$ to $v_2$ of 2-type $\eta$.
    \end{enumerate}

    Let $T := \sum_{v \in V_1} f_1(v)$.
    We argue that the following invariants are preserved for each $\cG_t$ for $t \le T$.
    \begin{enumerate}
        \item
            For every vertex $v \in V_1$,
            $\bhg{t}{v}(\eta,\pi_2) \le f_1(v)$.

            For the base case, by assumption, we have
            $\bhg{0}{v}(\eta,\pi_2) = 0 \le f_1(v)$.

            For the induction step, we consider three cases.
            \begin{compactitem}
                \item
                    Suppose that $v$ is picked in the $t^{th}$ iteration.
                    Then $\bhg{t}{v}(\eta,\pi_2) < f_1(v)$.
                    Since we add only one edge from $v$ with 2-type $\eta$ to a vertex with 1-type $\pi_2$,
                    $\bhg{t+1}{v}(\eta,\pi_2) = \bhg{t}{v}(\eta,\pi_2)+1$.
                    Combining these two inequalities, we obtain
                    $\bhg{t+1}{v}(\eta,\pi_2) \le f_1(v)$.

                \item
                    If $v$ is not picked and the added edge is not incident to $v$,
                    then clearly the behavior of $v$ remains unchanged, so the invariant continues to hold.

                \item
            
                    Otherwise, if $v$ is not picked but the added edge is incident to $v$,
                    this can only occur when $\pi_1 = \pi_2$,
                    which implies that $V_1 = V_2$.
                    \emph{Here we use that $\eta$ is asymmetric, 
                    so $\eta$ and $\dual{\eta}$ are distinct}.
                    Thus, the added edge does not affect the $(\eta, \pi_2)$ entry of the behavior of $v$,
                    and hence $\bhg{t+1}{v}(\eta,\pi_2) = \bhg{t}{v}(\eta,\pi_2)$.
            \end{compactitem}

            The analogous bounds hold for every vertex in $v \in V_2$,
            $\bhg{t}{v}(\dual{\eta},\pi_1) \le f_2(v)$.

        \item
            We have
            \begin{equation*}
                \sum_{v \in V_1} \bhg{t}{v}(\eta,\pi_2)
                \ =\ 
                \sum_{v \in V_2} \bhg{t}{v}(\dual{\eta},\pi_1)
                \ =\ 
                t.
            \end{equation*}

            We prove this by induction on $t$.
            The base case is immediate, since all the relevant behavior values are zero by assumption.

            For the induction step, suppose that an edge from $v_1$ to $v_2$ of
            2-type $\eta$ is added. Then
            $\bhg{t+1}{v_1}(\eta,\pi_2) = \bhg{t}{v_1}(\eta,\pi_2) + 1$ and
            $\bhg{t+1}{v_2}(\dual{\eta},\pi_1) = \bhg{t}{v_2}(\dual{\eta},\pi_1) + 1$,
            while all other relevant behavior values remain unchanged.
            Thus, both sides of the equality increase by $1$, and the equality is preserved.
    \end{enumerate}

    Note that the second invariant implies that suitable vertices $v_1$ and $v_2$ can always be chosen at every iteration $t < T$.
    Otherwise, we would have
    \begin{equation*}
        \sum_{v \in V_1} \bhg{t}{v}(\eta,\pi_2)
        \ =\ 
        \sum_{v \in V_1} f_1(v)
        \ =\ 
        T,
    \end{equation*}
    which contradicts the second invariant.

    Therefore, after $T$ iterations, every vertex $v \in V_1$ satisfies
    $\bh{v}_T(\eta,\pi_2) = f_1(v)$,
    and every $v \in V_2$ satisfies
    $\bh{v}_T(\dual{\eta},\pi_1) = f_2(v)$.

    Moreover, the procedure affects only the
    $(\eta,\pi_2)$-entries of the behaviors of vertices in $V_1$ and the
    $(\dual{\eta},\pi_1)$-entries of the behaviors of vertices in $V_2$.
    Therefore, $\cG_T$ is the desired D-structure $\cH$.

    Note that if $\cG$ is an asymmetric D-structure,
    since the construction only add edges with asymmetric 2-type $\eta$,
    the resulting D-structure $\cH$ is also asymmetric.
\end{proof}

\constructDstruct*

\begin{proof}
    Let $\agbv$ be a $\gbv$ that satisfies $\match$.
    That is, for every 1-type $\pi_1$, $\pi_2$ and 2-type $\eta$,

    \begin{equation*}
        \begin{aligned}
            &\sum\limits_{i \in \intsinterval{\numbhbases{\pi_1}}}   
            \agbv(\pi_1, \basetext, i)   \cdot \bhbase{\pi_1}{i}(\eta, \pi_2) +
            \sum\limits_{i \in \intsinterval{\numbhperiods{\pi_1}}} 
            \agbv(\pi_1, \periodtext, i) \cdot \bhperiod{\pi_1}{i}(\eta, \pi_2) \\
            \ =\ 
            &\sum\limits_{i \in \intsinterval{\numbhbases{\pi_2}}}   
            \agbv(\pi_2, \basetext, i)   \cdot \bhbase{\pi_2}{i}(\dual{\eta}, \pi_1) +
            \sum\limits_{i \in \intsinterval{\numbhperiods{\pi_2}}} 
            \agbv(\pi_2, \periodtext, i) \cdot \bhperiod{\pi_2}{i}(\dual{\eta}, \pi_1).
        \end{aligned}
    \end{equation*}
    Our goal is to construct a D-structure $\cG$ such that
    for every 1-type $\pi$,
    \begin{itemize}
        \item
            for every $1 \le i \le \numbhbases{\pi}$,
            there are exactly $\gbv(\pi, \basetext, i)$ concrete vertices with 1-type $\pi$ and behavior $\bhbase{\pi}{i}$;
        \item
            for every $1 \le i \le \numbhperiods{\pi}$,
            there are exactly $\gbv(\pi, \periodtext, i)$ phantom vertices with 1-type $\pi$ and behavior $\bhperiod{\pi}{i}$.
    \end{itemize}
    It is straightforward that such a D-structure realizes $\agbv$.
    Indeed, we can construct a realization witness $\rep$ as follows.
    For every concrete vertex $v$ with 1-type $\pi$ and behavior $\bhbase{\pi}{i}$, set
    $\rep(v)(\basetext, i) = 1$, and set all other entries to $0$.
    For every phantom vertex $v$ with 1-type $\pi$ and behavior $\bhperiod{\pi}{i}$, set
    $\rep(v)(\periodtext, i) = 1$, and set all other entries to $0$.

    We now construct a D-structure satisfying the required properties. We start with the following D-structure $\cG_0$.
    For every 1-type $\pi$,
    \begin{itemize}
        \item
            for every $1 \le i \le \numbhbases{\pi}$,
            there are $\gbv(\pi, \basetext, i)$ concrete vertices with 1-type $\pi$,
            called $(\pi, i)$ concrete vertices. We define the \emph{intended behavior} of those vertices to be $\bhbase{\pi}{i}$;
        \item
            for every $1 \le i \le \numbhperiods{\pi}$,
            there are $\gbv(\pi, \periodtext, i)$ phantom vertices with 1-type $\pi$,
            called $(\pi, i)$ phantom vertices. We define the \emph{intended behavior} of those vertices to be $\bhperiod{\pi}{i}$.
    \end{itemize}
    There are no edges in $\cG_0$.

    We fix an enumeration of all triples $(\pi_1,\pi_2,\eta)$,
    where $\pi_1, \pi_2$ are 1-types and $\eta$ is an asymmetric 2-type,
    and let $T$ denote the length of this enumeration. We filter the enumeration so that it does not contain $(\pi_1, \pi_2, \eta)$ and
    $(\pi_2, \pi_1, \dual{\eta})$.
    We construct a sequence of D-structures $\cG_0, \cG_1, \ldots, \cG_T$ as follows.

    For each $1 \le t \le T$,
    let $(\pi_1, \pi_2, \eta)$ be the $t^{th}$ triple in the enumeration.
    We let $f_1$ be a function from vertices with 1-type $\pi_1$ to non-negative integers, and 
    $f_2$ be a function from vertices with 1-type $\pi_2$ to non-negative integers
    defined as follows.
    \begin{equation*}
        \begin{aligned}
            f_1(v)
            \ =\ &
            \begin{cases}
                \bhbase{\pi_1}{i}(\eta, \pi_2), \text{$v$ is a $(\pi_1, i)$ concrete vertex} \\
                \bhperiod{\pi_1}{i}(\eta, \pi_2), \text{$v$ is a $(\pi_1, i)$ phantom vertex}
            \end{cases} \\
            f_2(v)
            \ =\ &
            \begin{cases}
                \bhbase{\pi_2}{i}(\dual{\eta}, \pi_1), \text{$v$ is a $(\pi_2, i)$ concrete vertex} \\
                \bhperiod{\pi_2}{i}(\dual{\eta}, \pi_1), \text{$v$ is a $(\pi_2, i)$ phantom vertex}.
            \end{cases}
        \end{aligned}
    \end{equation*}
    By condition $\match$, the sum of the $(\eta, \pi_2)$-entries of intended behaviors of the $(\pi_1, i)$ concrete and phantom vertices
    equals
    the sum of the intended $(\dual{\eta}, \pi_1)$-behavior entries of the $(\pi_2, i)$ concrete and phantom vertices,
    which implies that the sum of $f_1(v)$ over vertices with 1-type $\pi_1$
    is equal to
    the sum of $f_2(v)$ over vertices with 1-type $\pi_2$.
    We obtain $\cG_t$ from $\cG_{t-1}$ by applying the rewiring lemma, Lemma~\ref{lem:rewiring}.

    We write $\bhg{t}{v}$ for the behavior of a vertex $v$ in $\cG_t$.
    By the conclusion of Lemma~\ref{lem:rewiring}, the following invariant holds for every $1 \le t \le T$:

    \medskip 
    
    for every triple $(\pi_1, \pi_2, \eta)$ occurring among the first $t$ triples in the enumeration,
    every $(\pi_1, i)$ concrete vertex $v$ satisfies $\bhg{t}{v}(\eta, \pi_2) = \bhbase{\pi_1}{i}(\eta, \pi_2)$,
    and every $(\pi_1, j)$ phantom vertex $v$ satisfies $\bhg{t}{v}(\eta, \pi_2) = \bhperiod{\pi_1}{j}(\eta, \pi_2)$.

    \medskip
    
    Notice that each application of the lemma only adds edges of a specific $2$-type and specific $1$-types: all other triples are unaffected.

    Since the enumeration contains all triples of 1-types and 2-types, every vertex has its intended behavior in $\cG_T$.
    Hence, $\cG_T$ is the desired D-structure
\end{proof}

%% file: app-outgoingonly.tex

\section{Outgoing-only aggregation} \label{app:outgoing-only}

We recall from the body that we stated that our decision procedure
simplifies dramatically for a ``modal'' fragment. Indeed, we do not require the machinery of D-structures which is the main technical innovation in this paper. 
Instead, we can make use of a ``homogeneous model property'' that has been established in other contexts \cite{daumantas,nexptimemodalglobalcounting}. We now formalize this.
\begin{definition}[Outgoing-only full logic ($\optwo$), unmixed version $\ounmixedgptwo$]
We let $\optwo$ denote the variant of $\ourlogic$ 
formed by closing unary atoms 
under the usual connectives, with the use of quantifiers restricted as follows.
\begin{compactitem}
\item 
If $\varphi(x)$ is a formula,
then so are $\forall x\ \varphi(x)$ and $\exists x\ \varphi(x)$.
\item 
If $\varphi(y)$ is a formula,
then so are $\forall y\ R(x, y) \to \varphi(y)$ and $\exists y\ R(x, y) \land \varphi(y)$,
for some binary relation $R$.

\item\itemtitle{Mixing local and global Presburger formulas}
If $\set{\varphi_t(x, y)}_{t \in \intsinterval{k}}$ is a set of formulas,
$\set{R_t(x, y)}_{t \in \intsinterval{k}}$ is a set of atoms. $\set{\lambda_t}_{t \in \intsinterval{k}}$ and $\set{\lambda'_t}_{t \in \intsinterval{l}}$
are integers, and
$\set{\varphi'_t(y)}_{t \in \intsinterval{\ell}}$ is another set of formulas,
then 
    $\sum\limits_{t \in \intsinterval{k}} \lambda_t \cdot \presby{R_t(x,y)\land\varphi_t(x, y)}
    + \sum\limits_{t\in \intsinterval{\ell}} \lambda'_t \cdot \presby{\varphi'_t(y)}
    \ \circledast\ \delta$
is also a formula.
\end{compactitem}
We let $\ounmixedgptwo$ be the variant where we do not allow the mixing of local and global counts.
\end{definition}

Note that the standard modal formula $\forall y\ R(x, y) \to \varphi(y)$ and $\exists y\ R(x, y) \land \varphi(y)$
can be expressed in terms of local Presburger quantifiers:
$\presby{R(x,y)\land\neg \varphi(y)}=0$ and $\presby{R(x,y)\land\varphi(y)}\ge 1$.

We can now formalize our main claim:
\begin{theorem}\label{thm:outgoing_main}
The satisfiability problem for $\optwo$ and $\ounmixedgptwo$ is complete for $\nexp$. 
In addition, the spectrum of such sentences is semilinear.
\end{theorem}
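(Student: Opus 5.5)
We will prove NEXP membership, NEXP-hardness, and semilinearity of spectra separately, using the homogeneous model property from \cite{daumantas,nexptimemodalglobalcounting} in place of D-structures. First we normalize. Since $\ounmixedgptwo$ is a fragment of $\optwo$, it suffices to prove the upper bounds for $\optwo$. We introduce fresh unary predicates for subformulas, as in Lemma~\ref{lem:normal}. This turns an $\optwo$ sentence into a conjunction of constraints of the form $\forall x\,(U_i(x)\to P_i(x))$, where each $P_i(x)$ is a mixed quantifier
$\sum_t \lambda_t\presby{R_t(x,y)\land V_t(y)}+\sum_t\lambda'_t\presby{V'_t(y)}\ge\delta$ with unary $V_t,V'_t$. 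Because $y$ appears only in unary positions inside the local counts, the type of an element $v$ is determined by two things: its $1$-type, and the vector counting, for each binary relation pattern and target $1$-type, its outgoing neighbors. The global counts enter only through the vector $(n_\pi)_\pi$ of $1$-type cardinalities. The blow-up of this step is polynomial.

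The key step is a \emph{homogeneity lemma}. If a model exists, then one exists in which every element of $1$-type $\pi$ has the same out-neighbour count vector $\bfc_\pi$, up to a bounded amount of remainder. The reason is that out-edges of distinct elements do not interact: in-degrees are unconstrained, since only outgoing aggregation is allowed. Consequently, the set of admissible count vectors for $\pi$, given fixed global counts $(n_{\pi'})$, is described by a conjunction of linear inequalities. From this we write one existential Presburger formula over the variables $n_\pi$ and $c_{\pi,\rho,\pi'}$, where $\rho$ ranges over subsets of binary relations. It contains the constraints $P_i$ instantiated with $n$ and $\bfc_\pi$ for each $\pi$ with $U_i\in\pi$ and $n_\pi>0$, together with the feasibility constraint
$\sum_\rho c_{\pi,\rho,\pi'}\le n_{\pi'}-[\pi=\pi']$.
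Soundness requires a construction: given a solution, take $n_\pi$ copies of each $\pi$. Each element must choose distinct out-neighbours of type $\pi'$ (excluding itself), and this is possible exactly by the feasibility inequality. For completeness, an arbitrary model realizes a solution; each element individually satisfies the system, so we pick, for each $\pi$, one element's vector and give it to all $\pi$-elements. This works because of the global count independence and the absence of in-degree constraints, and it also preserves $n$ as well as looplessness.

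Membership in $\nexp$ then follows. The formula has exponentially many variables (in the number of $1$-types) and polynomial-size coefficients. By Lemma~\ref{lem:minimal}/standard small-solution bounds \cite{Pottier}, if it has a solution it has one whose entries have exponentially many bits, so we guess it and verify it in exponential time. Semilinearity of the spectrum is immediate: the spectrum is $\{\sum_\pi n_\pi\}$ projected from an existential Presburger formula, and such projections are semilinear \cite{ginsburg-spanier}.

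For hardness, $\ounmixedgptwo$ contains the modal logic with global counting, or equivalently $\fotwo$-style unary constraints with modal successors, for which satisfiability is $\nexp$-hard \cite{nexptimemodalglobalcounting}. Alternatively, we can encode a $2^n\times2^n$ tiling with binary counters on unary predicates, using global counts to force exactly $2^{2n}$ distinct grid positions. The main obstacle is the completeness direction of homogenization in the presence of the mixed constraints and the no-self-loop condition when $n_\pi$ is small. If uniformizing fails there, we would first try to treat types with $n_\pi\le$ (poly bound) explicitly as ``kings'' guessed in exponential time, and homogenize only the remaining types.
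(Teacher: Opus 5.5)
Your proposal follows essentially the paper's argument: homogenize outgoing behaviour per 1-type, encode models by type counts $n_\pi$ and per-type out-neighbour vectors subject to the $P_i$ constraints and a capacity bound (the paper's system $\outonlyintegersystemsat^{\anfoptwosentence}_S$), solve that exponential-size integer program nondeterministically, read off semilinearity of the spectrum, and get hardness from $\oMC$.

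Two small repairs are needed. First, your feasibility constraint must also be guarded by $n_\pi>0$ (the paper instead guesses the set $S$ of realized 1-types and imposes $x_\pi>0$), since otherwise it demands $\sum_\rho c_{\pi,\rho,\pi}\le -1$ for every unrealized $\pi$. Second, your closing worry about small $n_\pi$ is unfounded: for completeness no uniformization is needed, because any single $\pi$-element's out-vector already satisfies the system (looplessness gives it at most $n_\pi-1$ out-neighbours of its own type), so no king/peasant split is required.
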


To prove the upper bound and the semilinearity claim, we start with a normalization result. Unlike Lemma \ref{lem:normal}, this is a straightforward
Scott normal form normalization, with no additional restrictions put on the Presburger quantifiers.

\begin{lemma}[Normal form]\label{lem:outgoingnormal}
From an $\optwo$ sentence $\aoptwosentence$ we can create, in exponential time, a normal form sentence, which is 
a disjunction of \emph{normal form disjunct} sentences over an extended vocabulary where 
\begin{compactitem}
\item each normal form disjunct has the form:
    \[
        \anfoptwosentence
        \ :=\ 
        \forall x~ \quantifierfreepartphi(x)
        \ \land\ 
        \bigwedge_{i \in \intsinterval{m}} \forall x \forall y\ \left(R_i(x,y) \land x \neq y\right) \to \alpha_i(x, y)
        \ \land\ 
        \bigwedge_{i \in \intsinterval{n}} \forall x\ U_i(x) \to P_i(x)
    \]
    Here each $P_i(x)$ is a Presburger quantifier in the ``modal form'':
    \[
        \sum\limits_{t \in \intsinterval{k}} \lambda_t \cdot \presby{(x \neq y) \land R_t(x,y)}
        + \sum\limits_{t' \in \intsinterval{\ell}} \lambda_t' \cdot \presby{U_t(y)}
        \ \circledast\ \delta
    \]
    each $U_t$ is a unary relation, $R_t(x,y)$ is a binary relation,
    and $\quantifierfreepartphi$ and each $\alpha_i$ is quantifier-free.
    We also require that each $\alpha_i(x, y)$ does not contain atoms of the form $R(y, x)$.

    \item $\aoptwosentence$ has a model of size $k$ if and only if $\anfoptwosentence$ has a model of size $k$.

    \item  the size of each disjunct is polynomial in the size of $\aoptwosentence$. 
\end{compactitem}
\end{lemma}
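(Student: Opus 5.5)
We would prove Lemma~\ref{lem:outgoingnormal} by the standard Scott-style renaming of subformulas with fresh unary predicates, processing $\aoptwosentence$ bottom-up, and then splitting into disjuncts to eliminate the remaining top-level Boolean structure. First we push $\aoptwosentence$ into a form where the only quantifiers are Presburger quantifiers. Modal $\forall y\, R(x,y)\to\varphi(y)$ becomes $\presby{R(x,y)\land\neg\varphi(y)}=0$, modal $\exists$ becomes $\presby{R(x,y)\land\varphi(y)}\ge 1$, and a global $\exists x\,\varphi(x)$ becomes the global count $\presbx{\varphi(x)}\ge 1$, with global $\forall$ handled dually. We then proceed innermost first. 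For every maximal subformula $\psi(y)$ occurring inside a count, whether local $\presby{R(x,y)\land\psi(y)}$ or global $\presby{\psi(y)}$, we introduce a fresh unary predicate $U_\psi$ and the axiom $\forall x\, U_\psi(x)\leftrightarrow\psi(x)$. Once $\psi$ has itself been processed, it is quantifier-free over unary atoms, so this axiom falls under $\forall x\,\quantifierfreepartphi(x)$.

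Next, a local count $\presby{R(x,y)\land U(y)}$ is turned into a count over a fresh binary predicate $R'$, using the guarded axiom $\forall x\forall y\,(R'(x,y)\land x\neq y)\to(R(x,y)\land U(y))$ together with its converse $\forall x\forall y\,(R(x,y)\land x\neq y)\to(R'(x,y)\lor\neg U(y))$. Both $\alpha_i$ here mention only atoms $R(x,y)$, $R'(x,y)$, $U(y)$, and never $R(y,x)$, as the lemma requires. Because structures have no self-loops, the $x\neq y$ guards are harmless. A global count $\presby{U(y)}$ stays as it is; it is already allowed in the modal form of $P_i$. A whole Presburger atom $P(x)$ (possibly mixed) is then replaced by a fresh unary $V_P$, and we must express $\forall x\, V_P(x)\leftrightarrow P(x)$. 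We split this into $V_P(x)\to P(x)$ and $\neg V_P(x)\to\neg P(x)$, where $\neg P(x)$ is again a Presburger atom: negate $\circledast$, or for $=$ split into two strict atoms with an extra fresh unary choosing which one holds. Each direction then has the shape $\forall x\, U_i(x)\to P_i(x)$.

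What remains at top level is a Boolean combination of sentences $\forall x\,\chi(x)$ and $\exists x\,\chi(x)$ with $\chi$ quantifier-free. We put it in DNF at the sentence level; this is the exponential-time step. Each disjunct is a conjunction of universal sentences, which we merge into $\quantifierfreepartphi$, and of existential sentences $\exists x\,\chi$, which become $\forall x\, T(x)\to (\presby{W_\chi(y)}\ge 1)$ with $W_\chi\leftrightarrow\chi$ and $T$ forced to hold everywhere. Each disjunct then contains every introduced axiom plus polynomially many literals, so it has polynomial size, while the number of disjuncts is exponential.

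Equisatisfiability with the same domain size holds because every fresh predicate is interpreted by the formula it names, so any model expands and any model of a disjunct restricts to a model of $\aoptwosentence$. The step that needs the most care is the biconditional for Presburger atoms, especially $=$ and $\neq$, together with checking that no $\alpha_i$ ever needs an $R(y,x)$ atom. We would handle that check by always placing the counted relation in the $x\to y$ direction.
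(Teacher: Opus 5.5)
Your proposal is correct and follows essentially the same route as the paper: fresh unary names for counted subformulas and for each (possibly mixed) Presburger atom, a fresh binary $R'$ with the two guarded axioms kept in the $x\to y$ direction, and a two-sided encoding of each Presburger atom, while your sentence-level DNF spells out the disjunction that the paper leaves to the standard procedure. One detail needs tightening: the guard $\neg V_P(x)$ is not a unary atom, so, as the paper does with its pair $U,\tilde{U}$, you should introduce a fresh $\tilde V_P$ with $\forall x\,(V_P(x)\leftrightarrow\neg\tilde V_P(x))$ in $\quantifierfreepartphi$; also, splitting $=$ into two strict atoms is unnecessary, since $\neq$ is permitted in this normal form.
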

\begin{proof}
    The normalization procedure presented here is based on the
    the corresponding procedure for $\gptwo$ in \cite[Lemma 3.3]{twovarpres}, which in turn
    follows the usual normalization  procedure for two-variable logic.
   Each $k$-ary subformula of $\aoptwosentence$
    is associated with a fresh $k$-ary predicate,
    which encodes the truth value of the corresponding subformula.

    We describe the key new step, for  mixed local and global Presburger quantifiers.
    Suppose that
    \[
        P(x) :=
        \sum\limits_{t \in \intsinterval{k}}
        \lambda_t \cdot \presby{R_t(x,y)\land\varphi_t(x,y)}
        +
        \sum\limits_{t\in \intsinterval{\ell}}
        \lambda'_t \cdot \presby{\varphi'_t(y)}
        \ \circledast\ \delta.
    \]
    Since we consider only self-loop-free models,
    $R_t(x,x)$ never holds for any $t \in \intsinterval{k}$.
    Hence, $P(x)$ is equivalent to
    \[
        \sum\limits_{t \in \intsinterval{k}}
        \lambda_t \cdot
        \presby{(x \neq y) \land R_t(x,y) \land \varphi_t(x,y)}
        +
        \sum\limits_{t\in \intsinterval{\ell}}
        \lambda'_t \cdot \presby{\varphi'_t(y)}
        \ \circledast\ \delta.
    \]

    We introduce two fresh unary predicates $U, \tilde{U}$ associated with $P(x)$,
    together with fresh binary predicates $R'_1,\ldots,R'_k$
    and fresh unary predicates $U'_1,\ldots,U'_\ell$.
    We then replace every occurrence of $P(x)$ in $\aoptwosentence$
    with $U(x)$ and conjoin the resulting formula with
    the following constraints:
    \begin{equation*}
        \begin{aligned}
            & \forall x\ 
            \left(U(x) \land \neg \tilde{U}(x)\right) \lor
            \left(\neg U(x) \land \tilde{U}(x)\right) \\
            &\forall x\,
            \left(
                U(x) \to
                \left(
                    \sum\limits_{t \in \intsinterval{k}}
                    \lambda_t \cdot
                    \presby{(x \neq y) \land R'_t(x,y)}
                    +
                    \sum\limits_{t\in \intsinterval{\ell}}
                    \lambda'_t \cdot \presby{U'_t(y)}
                    \ \circledast\ \delta
                \right)
            \right),\\
            &\forall x\,
            \left(
                \tilde{U}(x) \to
                \left(
                    \sum\limits_{t \in \intsinterval{k}}
                    \lambda_t \cdot
                    \presby{(x \neq y) \land R'_t(x,y)}
                    +
                    \sum\limits_{t\in \intsinterval{\ell}}
                    \lambda'_t \cdot \presby{U'_t(y)}
                    \ \tilde{\circledast}\ \delta
                \right)
            \right),\\
            &\forall x\forall y\,
                R'_t(x,y) \to
                (R_t(x,y) \land \varphi_t(x, y))
            \qquad \text{for each } t \in \intsinterval{k},\\
            &\forall x\forall y\ R_t(x, y) \to [\varphi_t(x, y) \to R'_t(x, y)]
            \qquad \text{for each } t \in \intsinterval{k},\\
            &\forall x\,
            \left(
                U'_t(x) \leftrightarrow \varphi'_t(x)
            \right)
            \qquad \text{for each } t \in \intsinterval{\ell},
        \end{aligned}
    \end{equation*}
    where $\tilde{\circledast}$ is defined as follows:
    if $\circledast$ is $=$, then $\tilde{\circledast}$ is $\neq$;
    if $\circledast$ is $\neq$, then $\tilde{\circledast}$ is $=$;
    if $\circledast$ is $\ge$, then $\tilde{\circledast}$ is $<$.

    Note that the conditions guarantee that $U$ and $\tilde{U}$ partition the domain, and thus we have implications in both directions.

    The remainder of the inductive steps are as in \cite[Lemma 3.3]{twovarpres}. These steps will be integrated into a bottom-up inductive procedure. We omit the details, but observe that the normalization procedure
    in \cite[Lemma 3.3]{twovarpres}
    does not introduce binary atoms which reverse the argument order from the original atoms.
    In particular, the fresh binary predicates introduced above
    preserve the direction of the original binary relations.
    Consequently, the normalization procedure preserves
    the outgoing-only restriction.

The inductive invariant will be that every structure and valuation satisfying the original formula expands to a model of the revised formula, and every
structure and valuation satisfying the revised formula restricts to a model of the original formula.

\end{proof}

We now show that the satisfiability of sentences
in normal form as defined in Lemma~\ref{lem:outgoingnormal} is $\nexp$-complete.
We fix $\anfoptwosentence$ a normal form disjunct.
We start with the main observation that simplifies the outgoing-only case. 
We can always restrict our attention to models where the $1$-type determines the behavior, 
and thus we can describe a model with a much simpler set of linear constraints.

\begin{definition}[Outgoing $2$-type, Outgoing behavior, $\twotypescmpi{\pi}{\pi'}$ ] \label{def:outgoingtwotypebehavior}
A 2-type $\eta$ is \emph{outgoing} if it consists only of binary predicates $R(x,y)$ and $\neg R(x,y)$, and contains at least one positive atom.
We define the \emph{outgoing behavior} of an element as the behavior whose 
indices $(\eta,\pi)$ are restricted to $\eta$ being outgoing 2-types.

We say that an outgoing 2-type $\eta$ is \emph{conformant with 1-types $\pi_1$ and $\pi_2$} (w.r.t. $\anfoptwosentence$),
if for every $i \in \intsinterval{m}$,
\begin{equation*}
    \begin{aligned}
        \pi_1(x) \land \eta(x,y) \land \pi_2(y) &\models (R_i(x,y)\land x\neq y) \to \alpha_i(x,y).
    \end{aligned}
\end{equation*}

Following Definition \ref{def:oneandtwotype},
we denote by $\onetypes$ the $1$-types that are conformant with the input sentence $\anfoptwosentence$,
We now let $\twotypescmpi{\pi}{\pi'}$  denote the outgoing $2$-types that are conformant with $\pi$, $\pi'$, and $\anfoptwosentence$.

\end{definition}

\begin{definition}[Behavior homogeneity]
We say a structure $\cG$ is \emph{behavior homogeneous} if  for all $v,v'\in \cG$:
\[
\onetypeof_\cG(v)=\onetypeof_\cG(v')\implies \obh_\cG(v)=\obh_\cG(v')
\]    
\end{definition}

We will show that if there is a satisfying structure, there is a behavior homogeneous one. We first give a sufficient condition for a transformation to preserve
normal form disjuncts.

\begin{claim} \label{clm:behavtransfer}       
    Let $\cG$ and $\cG'$ be structures.
    Suppose that there exists a bijection $f$ from $V$ to $V'$ which preserves 1-types of vertices.
    Suppose also that for every $v' \in G'$, there exists $v \in G$ such that $v$ and $v'$ have the same 1-type and the same outgoing behavior,
    Then for every normal form disjunct $\anfoptwosentence$ 
    if $\cG \models \anfoptwosentence$,
    then $\cG' \models \anfoptwosentence$.
\end{claim}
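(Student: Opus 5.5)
The plan is to check the normal form disjunct conjunct by conjunct, working at the vertex $v'$ of $\cG'$ and transferring each local condition from its counterpart in $\cG$. For each $v'$, fix a vertex $v\in\cG$ with the same 1-type and the same outgoing behavior, as the hypothesis provides. The conjunct $\forall x~\quantifierfreepartphi(x)$ depends only on the 1-type of $v'$ (together with the absence of self-loops), so it transfers immediately from $v$.

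\textbf{Universal binary conjuncts.} Consider $\forall x\forall y\,(R_i(x,y)\land x\neq y)\to\alpha_i(x,y)$. Take any $w'$ with $R_i(v',w')$. The pair $(v',w')$ has some outgoing 2-type $\eta$ (nonempty, since it contains $R_i$). By Lemma~\ref{lem:outgoingnormal}, $\alpha_i$ mentions no atoms $R(y,x)$, so its truth on $(v',w')$ is determined by $\onetypeof(v')$, $\eta$, and $\onetypeof(w')=:\pi'$. Since $v$ has the same outgoing behavior, $\obh(v)(\eta,\pi')>0$. Hence $v$ has some $(\pi',\eta)$-neighbor $w$, and $\cG\models\alpha_i(v,w)$ gives $\cG'\models\alpha_i(v',w')$.

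\textbf{Presburger conjuncts.} Consider $U_i(x)\to P_i(x)$; this holds at $v'$ iff it holds at $v$, because:
\begin{itemize}
\item $U_i(v')$ iff $U_i(v)$, since they share a 1-type.
\item Each local term $\presby{(x\neq y)\land R_t(x,y)}$ equals the sum of $\obh(\cdot)(\eta,\pi)$ over outgoing $\eta$ containing $R_t(x,y)$. These sums agree at $v$ and $v'$.
\item Each global term $\presby{U_t(y)}$ is $|U_t^{\cG}|=|U_t^{\cG'}|$, because the bijection $f$ preserves 1-types.
\end{itemize}

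The main point requiring care is that the hypothesis only goes one way (every $v'$ has some matching $v$), so every check must be phrased from the side of $\cG'$, as above. The bijection is used only to transfer the global counts. I do not expect any genuine obstacle beyond the observation that $\alpha_i$ and the $P_i$ involve only outgoing atoms, which Lemma~\ref{lem:outgoingnormal} guarantees.
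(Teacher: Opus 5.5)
Your proof is correct and follows the same route as the paper. Like the paper, you check the three kinds of conjuncts from the side of $\cG'$: the 1-type settles $\quantifierfreepartphi$, a matching outgoing behavior supplies a witness edge in $\cG$ for each $\alpha_i$, and the type-preserving bijection fixes the global counts in $P_i$; you also make explicit that the argument needs $\alpha_i$ to contain no atoms $R(y,x)$, which the paper uses only implicitly.
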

\begin{proof}
    Suppose that 
    \[
        \anfoptwosentence
        \ :=\ 
        \forall x~ \quantifierfreepartphi(x)
        \ \land\ 
        \bigwedge_{i \in \intsinterval{m}} \forall x \forall y\ \left(R_i(x,y) \land x \neq y\right) \to \alpha_i(x, y)
        \ \land\ 
        \bigwedge_{i \in \intsinterval{n}} \forall x\ U_i(x) \to P_i(x).
    \]
    We show that $\cG' \models \anfoptwosentence$ by checking every component in $\anfoptwosentence$.
    \begin{itemize}
        \item 
            For every $v \in G'$,
            there exists $u \in G$,
            such that $v$ and $u$ have the same 1-type $\pi$.
            Since $\cG \models \anfoptwosentence$,
            it holds that 
            $\cG \models \quantifierfreepartphi(u)$,
            which implies that
            $\pi(x) \models \quantifierfreepartphi(x)$.
            Therefore,
            $\cG' \models \quantifierfreepartphi(v)$.

        \item
            For every edge $e$ with outgoing 2-type $\eta$ from $v_1$ with 1-type $\pi_1$ to $v_2$ with 1-type $\pi_2$ in $\cG'$,
            there exists $u_1$ with 1-type $\pi_1$ in $\cG$ such that $v_1$ and $u_1$ have the same outgoing behavior.
            Therefore, there exists an edge $e'$ with outgoing 2-type $\eta$ from $u_1$ to some $u_2$ with 1-type $\pi_2$ in $\cG$.
            Since $\cG \models \anfoptwosentence$,
            this implies that $\cG \models (R(u_1, u_2) \land (u_1 \neq u_2)) \to \alpha_i(u_1, u_2)$.
            Therefore
            $\cG' \models (R(v_1, v_2) \land (v_1 \neq v_2)) \to \alpha_i(v_1, v_2)$.

        \item
            We first observe that
            for every 1-type $\pi$,
            since $f$ is a bijection that preserves 1-types,
            the number of elements in $\cG$ with 1-type $\pi$ is the same as the number in $\cG'$.

            Next, note that for every vertex $v$ in $\cG'$, the evaluation of $P_i(v)$ depends only on the outgoing behavior of $v$ and the count of elements with specific 1-types in $\cG'$. A similar property holds for $u$ in $\cG$.
            By assumption, there exists an element $u$ in $\cG$ such that $v$ and $u$ have the same outgoing behavior.
            Since $\cG \models \anfoptwosentence$,
            if $U_i \in \pi$, then $P_i(u)$ holds in $\cG$,
            which implies that $P_i(v)$ holds in $\cG'$.
    \end{itemize}


\end{proof}

The following key lemma gives the crucial property of outgoing-only languages: 
for the purposes of satisfiability, we can always deal with a behavior homogeneous structure. 
Again we emphasize that a similar homogeneity property was observed  in earlier work, either concerning a different logic \cite{daumantas}, or for a different purpose \cite[Lemma 18]{daumantasgnn}. And similar bounds were established -- again for a  different logic -- in \cite{nexptimemodalglobalcounting}.

\begin{lemma}\label{homogeneous_structure_existence}
If there is a structure $\cG$ that satisfies a normal form disjunct $\varphi$, then
there exists a behavior homogeneous structure $\cG'$ with the same cardinality which also satisfies $\varphi$.
\end{lemma}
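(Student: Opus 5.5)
We will prove the lemma by a direct rewiring construction that keeps the domain and the 1-type labelling of $\cG$ fixed. The changes are confined to the binary relations, and Claim~\ref{clm:behavtransfer} then transfers satisfaction of $\varphi$. Fix $\cG \models \varphi$.

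\textbf{Step 1: pick representatives.} For every 1-type $\pi$ realized in $\cG$, choose a representative element $r_\pi$ of type $\pi$. Let $\beta_\pi := \obh_\cG(r_\pi)$ be its outgoing behavior. The target is a structure $\cG'$ on the same domain $V$ with the same 1-types in which every element of type $\pi$ has outgoing behavior exactly $\beta_\pi$. Once such a $\cG'$ exists, take $f$ to be the identity. Then every $v'$ of type $\pi$ in $\cG'$ has the same 1-type and the same outgoing behavior as $r_\pi \in \cG$, so Claim~\ref{clm:behavtransfer} gives $\cG' \models \varphi$. The cardinality is unchanged, and $\cG'$ is behavior homogeneous by construction.

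\textbf{Step 2: build the outgoing edges independently per vertex.} The point of the outgoing-only restriction is that the normal form constrains only outgoing behaviors. The $\alpha_i$ contain no atoms $R(y,x)$, and the $P_i$ count only out-neighbors together with global 1-type counts. Moreover, the 1-type counts are preserved because the unary labelling is untouched. So each vertex $v$ of type $\pi$ may choose its out-edges freely, without regard to what other vertices choose.

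Concretely, for each outgoing 2-type $\eta$ and 1-type $\pi'$, vertex $v$ must send $\beta_\pi(\eta,\pi')$ out-edges realizing exactly the positive atoms of $\eta$ to distinct $\pi'$-elements other than $v$. Different $(\eta,\pi')$ entries must use distinct targets, since an ordered pair realizes only one outgoing 2-type. We define the binary relations of $\cG'$ so that the pair $(v,u)$ carries exactly the outgoing atoms assigned by $v$'s choice, and nothing otherwise. The pairs $(v,u)$ and $(u,v)$ are set independently; this is legitimate because a structure may freely contain both $R(u,v)$ and $R(v,u)$. Under this construction $\obh_{\cG'}(v)=\beta_\pi$ holds exactly.

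\textbf{Step 3: feasibility — the main obstacle.} We must show that $v$ has enough distinct targets available. The requirement is
\[ \sum_{\eta} \beta_\pi(\eta,\pi') \le |\{u \ne v : \onetypeof(u)=\pi'\}| \]
for every $\pi'$. In $\cG$, the representative $r_\pi$ satisfies $\sum_\eta \beta_\pi(\eta,\pi') \le N_{\pi'} - [\pi'=\pi]$, where $N_{\pi'}$ is the number of $\pi'$-elements. This holds because $r_\pi$'s out-neighbors of type $\pi'$ are distinct and exclude $r_\pi$ itself. The same bound holds for any $v$ of type $\pi$, since $v$ also excludes only itself when $\pi'=\pi$. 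Hence a greedy injective assignment of targets succeeds.

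Finally, no self-loops arise, and every pair realizes either a null outgoing part or a conformant one. Conformance holds because the types $(\pi,\eta,\pi')$ used were already realized in $\cG \models \varphi$. This is in any case subsumed by the check in Claim~\ref{clm:behavtransfer}.
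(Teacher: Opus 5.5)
Your proof is correct and follows essentially the paper's argument. Like the paper, you keep the domain and unary labelling, make every element of 1-type $\pi$ copy the outgoing behavior of a fixed representative, and conclude via Claim~\ref{clm:behavtransfer} with the identity bijection; the only difference is that the paper copies the representative $v^*$'s actual out-neighbourhood and redirects the would-be self-loop $v\to v$ to $v\to v^*$, whereas you obtain suitable distinct targets from the counting bound and a greedy assignment.
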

\begin{proof}
We construct $\cG'$ as follows. The set of elements is the same as $\cG$. 
For any $1$-type $\pi$, let $V_\pi$ be the set of elements with that 1-type in $\cG$. 
If $V_\pi$ is empty, then there is nothing to do. 
Otherwise, take an arbitrary $v^*\in V_\pi$, and for every other $v\in V_\pi$, 
ensure that the outgoing relations are exactly those of $v^*$ in $\cG$. 
In the case that there is an $E$ relation from $v^*$ to $v$ in $\cG$, 
we do \emph{not} add a self-loop $E(v,v)$, but add $E(v,v^*)$ instead.

Clearly the resulting structure has the same elements, is behavior homogeneous, 
and does not introduce new outgoing behaviors; 
in particular, note that making a change to the outgoing relations on one element 
cannot impact the outgoing behavior vector of another.
The identity is a bijection satisfying the hypothesis of Claim \ref{clm:behavtransfer}, so the conclusion follows.
\end{proof}

We are now ready to reduce satisfiability to solving a linear system.

\begin{definition}[Linear system for an $\optwo$ normal form disjunct]\label{def:global_cardinalities_system}
For any $\optwo$ normal form disjunct  $\varphi$, for any nonempty set $S$ of conformant $1$-types, we define the integer linear system $\outonlyintegersystemsat^\varphi_S$ as follows:
\begin{itemize}
\item 
There is a non-negative integer variable $x_\pi$ for every $\pi \in S$.
Informally, this stands for the number of realizers of the $1$-type in $S$. 
We denote by $\bld x$ the vector of these variables.

\item 
There is a non-negative integer variable $y_{\pi,\eta,\pi'}$ for every $\pi,\pi'\in S$ and outgoing $\eta\in\twotypescmpi{\pi}{\pi'}$. 
Intuitively, this stands for the $\eta$-degree of a $\pi$ element for edges linking to $\pi'$ elements.
We denote by $\bld y$ the vector of these variables.

\item (Presburger constraint)
For every 1-type $\pi \in S$,
for $1\le i\le n$,
if $\pi(x) \models U_i(x)$,
we include the following constraint, 
corresponding to a conjunct $U_i(x)\to P_i(x)$ where the Presburger quantifier $P_i(x)$ is in the normal form  
$(\sum\limits_t \lambda_{i,t}\cdot \presby{U_{i,t}(y)}) +  
(\sum\limits_t \lambda'_{i,t} \cdot \presby{R_{i,t}(x,y)\land (x \neq y)}) 
\circledast_i \delta_i$:
\begin{align*} 
& \sum\limits_{t} 
\left( \lambda_{i,t}\cdot\sum\limits_{\pi' \in S, \pi'(x)\models U_{i,t}(x)} x_{\pi'}\right)
 + 
\left(\sum\limits_{t} \lambda'_{i,t}\cdot \sum\limits_{\pi' \in S, \eta \in \twotypescmpi{\pi}{\pi'}, \eta(x, y) \models R_{i, t}(x, y)}
y_{\pi, \eta, \pi'}\right) 
\ \circledast_i\  \delta_i
\end{align*}
\item (Capacity) 
For each $\pi_1, \pi_2 \in S$,
if $\pi_1 \neq \pi_2$, then $\sum\limits_{\eta \in \twotypescmpi{\pi_1}{\pi_2}} y_{\pi_1, \eta, \pi_2} \leq x_{\pi_2}$.
To cover the case $\pi_1 = \pi_2$,
 we have
$\sum\limits_{\eta \in \twotypescmpi{\pi_1}{\pi_1}} y_{\pi_1, \eta, \pi_1} \leq x_{\pi_1} - 1$.
\item (Sanity)
 For each $\pi \in S$ $x_{\pi}>0$
\end{itemize} 
\end{definition}

The following result shows that this system captures the cardinalities of finite models, which implies the semilinearity of the spectrum:

\begin{theorem}  \label{thm:systmecapturesolutions}
$\anfoptwosentence$ has a satisfying model of size $k > 0$ exactly when there is some non-empty $S$ such that $\outonlyintegersystemsat^{\anfoptwosentence}_S$ has a 
solution where $\Sigma_{\pi \in S} ~ x_\pi=k$.
\end{theorem}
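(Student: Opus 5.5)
We prove the two directions separately. For the forward direction, start from a model $\cG$ of $\anfoptwosentence$ of size $k$. By Lemma~\ref{homogeneous_structure_existence} we may assume $\cG$ is behavior homogeneous and still has $k$ elements. Let $S$ be the set of 1-types realized in $\cG$; it is non-empty, and every type in it is conformant, because $\cG$ satisfies $\forall x\,\quantifierfreepartphi(x)$.

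We define the solution as follows. Set $x_\pi := \abs{V_\pi}$, the number of $\pi$-elements. For $\eta \in \twotypescmpi{\pi}{\pi'}$, set $y_{\pi,\eta,\pi'}$ to the number of $(\pi',\eta)$-out-neighbors of any $\pi$-element; by homogeneity this does not depend on the element chosen. We check the constraints one at a time.
\begin{itemize}
\item Sanity holds because every type in $S$ is realized.
\item For the Presburger constraints, each global count $\presby{U_{i,t}(y)}$ equals $\sum_{\pi'\models U_{i,t}} x_{\pi'}$. Each local count is the stated sum of $y$'s. Here we use that every outgoing 2-type actually realized between a $\pi$- and a $\pi'$-element is conformant, which follows from the $\alpha_i$ conjuncts.
\item For capacity, the out-neighbours of a fixed $\pi_1$-element that have type $\pi_2$ are distinct elements, and each lies under exactly one outgoing 2-type. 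So their number is at most $x_{\pi_2}$, or at most $x_{\pi_1}-1$ when $\pi_1=\pi_2$, since there are no self-loops.
\end{itemize}
Finally $\sum_\pi x_\pi = k$.

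For the converse, take a solution with $\sum_{\pi\in S} x_\pi = k$. Create $x_\pi$ elements of each type $\pi\in S$, and fix an arbitrary linear order on the $\pi_2$-elements. For each $\pi$-element $v$ and each $\pi'$, choose $\sum_\eta y_{\pi,\eta,\pi'}$ distinct $\pi'$-elements other than $v$. Capacity guarantees that enough exist, and the excluded element $v$ is exactly where the $-1$ in the case $\pi=\pi'$ is used. Assign exactly $y_{\pi,\eta,\pi'}$ of the chosen targets to each 2-type $\eta$, and set the outgoing atoms from $v$ according to $\eta$.

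Because only outgoing atoms are set, and each ordered pair $(v,w)$ is decided solely when processing $v$, no conflicts arise. The 2-type of $(v,w)$ is then determined by the outgoing choices at $v$ and the outgoing choices at $w$. Every pair left unassigned receives the null outgoing part.

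We then verify the three conjunct families of $\anfoptwosentence$. The conjunct $\quantifierfreepartphi$ holds since the types in $S$ are conformant. The $\alpha_i$ conjuncts hold because conformance of an outgoing $\eta$ only refers to atoms $R(x,y)$: the normal form guarantees that $\alpha_i$ contains no $R(y,x)$, so the reverse edges chosen independently at the target cannot falsify $\alpha_i$. Each Presburger conjunct evaluates at $v$ to exactly the linear constraint imposed on $\pi$.

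The main point needing care is this decoupling of the two directions of a pair in the converse. It depends on the normal-form property that no $\alpha_i$ mentions $R(y,x)$, and on 2-types being treated only through their outgoing part, as in Definition~\ref{def:outgoingtwotypebehavior}. Null outgoing types must not be counted, which is consistent with each $\eta\in\twotypescmpi{\pi}{\pi'}$ containing a positive atom.

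Semilinearity of the spectrum then follows: the spectrum is the projection $\sum_\pi x_\pi$ of the union over $S$ of the solution sets of these linear systems, and a projection of a finite union of such solution sets is semilinear.
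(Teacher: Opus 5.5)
Your proposal is correct and follows the paper's proof essentially step for step. In the forward direction you homogenize via Lemma~\ref{homogeneous_structure_existence} and read off $x_\pi$ and $y_{\pi,\eta,\pi'}$ as counts; in the converse you pick, for each element, disjoint target sets of the prescribed sizes using Capacity, with the $-1$ excluding the element itself.

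One small relocation: the normal-form condition that no $\alpha_i$ mentions $R(y,x)$ is genuinely needed in your forward direction, to conclude that every realized outgoing 2-type is conformant. In the converse it is not needed, because conformance already quantifies over all values of the reverse atoms and so applies directly to the constructed structure.
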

\begin{proof}
Suppose $\anfoptwosentence$ has a satisfying model $\cG$ of size $ > 0$. Let $S$ be the set of $1$-types realized in $\cG$.
Since $\cG$ is non-empty, $S$ is non-empty.
By Lemma~\ref{homogeneous_structure_existence} and Claim \ref{clm:behavtransfer}, we may assume that $\cG$ is a behavior homogeneous structure.
For each $\pi \in S$, let $V_{\pi}$ be the set of elements with $1$-type $\pi$.
Since each $V_\pi$ is nonempty, by realizability of $\pi$, we pick an arbitrary element $a_{\pi} \in V_{\pi}$.
For each $\pi,\pi'\in S$, $\eta \in \twotypescmpi{\pi}{\pi'}$,
let $V_{\pi,\eta,\pi'}$ be the set $\{b \in V_{\pi'} \mid \twotypeof(a_\pi,b)=\eta \}$ where $\twotypeof(a_\pi,b)$ denotes the outgoing-only two-type of the pair.
Since $\cG$ is behavior homogeneous, 
the cardinality of $V_{\pi,\eta,\pi'}$ does not depend on the choice of $a_{\pi}$.

We claim that $\{x_\pi\gets |V_{\pi}|,y_{\pi,\eta,\pi'}\gets |V_{\pi,\eta,\pi'}|\}$ is a solution of 
$\outonlyintegersystemsat^{\anfoptwosentence}$.
First, each Presburger constraint in $\outonlyintegersystemsat^{\anfoptwosentence}_S$ is satisfied
because $\cG\models \forall x ~ U_i(x)\to P_i(x)$, and thus for each element satisfying a $\pi$ considered in the constraint for $U_i$, $P_i$ must
hold of that element.
Second, to see that the Capacity constraint is satisfied, note that $\bigcup_{\eta}V_{\pi,\eta,\pi'}\subseteq V_{\pi'}$.
Thus in the case $\pi \neq \pi'$ we have the desired bound. When $\pi=\pi'$ we note that there cannot be a self-loop, by definition of a structure, and
thus the upper bound follows for this case.
The Sanity constraint is easily seen to hold by definition.

For the converse direction,
suppose for some non-empty set $S$,
$\outonlyintegersystemsat^{\anfoptwosentence}_S$ has a solution 
$\{x_\pi\gets N_{\pi},y_{\pi,\eta,\pi'}\gets N_{\pi,\eta,\pi'}\}$.
We show how to construct a model $\cG$ that satisfies $\anfoptwosentence$.
For each $\pi\in S$, we have a set $V_{\pi}$ of $N_\pi$ elements
and these sets are pairwise disjoint. By the sanity constraint, all $V_{\pi}$ are nonempty, and we will let their union be the domain of our structure.
We set the 1-type of all the elements in $V_{\pi}$ to be $\pi$.
Then, for each element $a\in V_{\pi}$, we do the following.
\begin{compactitem}
\item
For each $\pi'\in S$ with $\pi \neq \pi'$ and $\eta\in\twotypescmpi{\pi}{\pi'}$,
we choose from the set $V_{\pi'}$ a set $V_{\eta,\pi'}$ of $N_{\pi,\eta,\pi'}$ elements, while in the case $\pi=\pi'$, we choose from the set $V_{\pi'}-\{a\}$; we
do this in such a way
that the sets for different $\pi'$ and $\eta$ are pairwise disjoint.

Such sets exist due to the Capacity constraint. 
\item 
Then for each $b\in V_{\eta,\pi'}$,
we set the 2-type of $(a,b)$ to be $\eta$. The exception for the case $\pi=\pi'$ above guarantees that we are not creating self-loops, as required for a structure.
\end{compactitem}
The 2-types of the remaining pairs $(a,b)$ for $a\neq b$ are set to be the null type.
We claim that the constructed model $\cG$ satisfies $\anfoptwosentence$.
Indeed, $\cG$ satisfies $\forall x~ \quantifierfreepartphi(x)$,
since the system $\outonlyintegersystemsat^{\anfoptwosentence}$ has only variable $x_{\pi}$ where $\pi\in S$.
Moreover, when we add an edge with outgoing 2-type $\eta$ from a vertex with 1-type $\pi_1$ to a vertex with 1-type $\pi_2$, $\eta$ is conformant with $\pi_1$ and $\pi_2$.
Therefore, $\pi_1(x) \land \eta(x, y) \land \pi_2(y) \models \left(R_i(x,y) \land x \neq y\right) \to \alpha_i(x, y)$ for every $1 \le i \le m$.
Finally, $\cG$ satisfies $\forall x~U_i(x)\to P_i(x)$ because
$\{x_\pi\gets N_{\pi},y_{\pi,\eta,\pi'}\gets N_{\pi,\eta,\pi'}\}$ satisfies
the Presburger constraint in the system $\outonlyintegersystemsat^{\anfoptwosentence}$.
This completes the proof.
\end{proof}

 Theorem \ref{thm:systmecapturesolutions} immediately implies the desired upper bound:
\begin{theorem}\label{thm:outgoing_upper}
    The satisfiability problem for $\optwo$ is in $\nexp$.
\end{theorem}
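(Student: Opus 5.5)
The plan is to reduce satisfiability to a finite family of existential Presburger problems via Theorem~\ref{thm:systmecapturesolutions}. Given an $\optwo$ (or $\ounmixedgptwo$) sentence $\aoptwosentence$, first apply Lemma~\ref{lem:outgoingnormal} to obtain, in exponential time, a disjunction of normal form disjuncts $\anfoptwosentence$, each of polynomial size. Satisfiability of $\aoptwosentence$ is equivalent to satisfiability of some disjunct. The $\nexp$ algorithm nondeterministically guesses a disjunct; this requires only exponential time to write down the whole disjunction, or equivalently a polynomial-size guess of the choices made during normalization. It then guesses a nonempty set $S$ of conformant 1-types and checks whether $\outonlyintegersystemsat^{\anfoptwosentence}_S$ has a solution. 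By Theorem~\ref{thm:systmecapturesolutions}, this check succeeds for some guess exactly when $\anfoptwosentence$ has a finite model of positive size.

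Next we check the sizes involved. The number of 1-types and outgoing 2-types is at most exponential in the size of the disjunct, which is polynomial in $\abs{\aoptwosentence}$. Hence $S$ and the system $\outonlyintegersystemsat^{\anfoptwosentence}_S$ have exponentially many variables and constraints, with coefficients of polynomial bit-length. Conformance of each type, and whether $\pi(x)\models U_i(x)$ or $\eta(x,y)\models R_{i,t}(x,y)$, can each be checked in polynomial time per type, so the system can be built in exponential time.

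The main obstacle is deciding solvability of this exponential-size system within $\nexp$. The constraints are linear, except for those with $\circledast_i$ being $\neq$. Each such disequality can be replaced by a nondeterministic guess of $<$ or $>$, which leaves a pure integer linear program $A\bfx\ge \bfc$. For this we invoke the standard small-solution bound (Papadimitriou's, or Lemma~\ref{lem:minimal} in the inequality form after adding slack variables). If a non-negative solution exists, one exists whose entries have bit-length polynomial in the number of variables and constraints and in the coefficient bit-lengths, i.e.\ exponential in $\abs{\aoptwosentence}$. So we guess such a solution and verify all constraints in exponential time. The step needing the most care is this: the bound from Lemma~\ref{lem:minimal} is of the form $((n+1)M+M+1)^m$ with $m$ exponential, giving entries of exponential bit-length rather than doubly-exponential magnitude written in binary. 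That still suffices, because writing an exponential-bit number takes only exponential time.

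Finally, the spectrum and lower bound. The set of $k=\sum_{\pi\in S} x_\pi$ over solutions is the projection of the solution set of an existential Presburger formula, hence semilinear. The spectrum of $\aoptwosentence$ is the finite union over disjuncts and $S$ of these sets, so it is semilinear, with the caveat that the empty model is excluded by our convention. For $\nexp$-hardness we reduce from $\fotwo$ satisfiability restricted to the modal setting. The simplest route is the known $\nexp$-hardness of modal logic with global counting or universal modality \cite{nexptimemodalglobalcounting}. Alternatively, one can encode the exponential tiling problem: tile positions are given by binary counters on unary predicates, global cardinality constraints $\presby{U(y)}$ force one element per position, and local outgoing successors enforce the adjacency constraints. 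Both encodings fit in $\ounmixedgptwo$, so hardness holds for both logics.
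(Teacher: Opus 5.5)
Your proposal is correct and follows the paper's argument: you guess a normal-form disjunct from Lemma~\ref{lem:outgoingnormal} and a nonempty set $S$ of 1-types, then decide the exponential-size system $\outonlyintegersystemsat^{\anfoptwosentence}_S$ using the fact that integer programming is in $\np$. You make that last step explicit through the small-solution bound, handling $\neq$ by guessing $<$ or $>$; the paper simply cites the $\np$ result, and your spectrum and hardness remarks are not needed for this statement.
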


\begin{proof}
    We can guess the disjunct first, then guess $S$,  and there are exponentially many choices for each. The system $\outonlyintegersystemsat^\varphi_S$ contains exponentially many variables, and exponentially many constraints. Since solving such constraints is in $\np$ \cite{linearandintegerprogramming}, we can find a solution in $\nexp$. 
\end{proof}

\bigskip For the matching lower bound, 
we note that $\optwo$ subsumes the logic $\oMC$ which is shown to be $\nexp$-complete in~\cite{nexptimemodalglobalcounting}.
For completeness, we present its definition here.

\begin{definition}[$\oMC$]
    The syntax of \emph{outgoing-only two-variable modal logic with global counting} ($\oMC$), over a vocabulary $\tau$ with a single binary predicate $E$ and unary predicates $U_1,\dots,U_n$ , is built up according to the following rules:
    \begin{itemize}
        \item For every unary $U_i\in \tau$, the atom $U_i(x)$ is a formula.
        
        \item If $\varphi$ is a formula, then so is $\neg \varphi$.

        \item If $\varphi$ and $\psi$ are formulas, then so is $\varphi \wedge \psi$.
        
        \item If $\varphi(x)$ is a formula, then so is $\exists y\ [E(x,y) \wedge \varphi(y)]$ 

        \item If $\varphi(x)$ is a formula, then so is $\exists^{<k} x\ \varphi(x)$
    \end{itemize}
    The semantics of the Boolean connectives and existential quantifier is as usual. For the counting quantifier,
        $\cG\models \exists^{<k}x\ \varphi(x) 
        \ \iff \ 
        \left|\set{
            v\in \cG : \cG\models \varphi(v)
        }\right|  <k$
\end{definition}

\bigskip

It was shown in \cite{nexptimemodalglobalcounting} that finite satisfiability for $\oMC$ is $\nexp$-complete,
using a reduction from the exponential tiling problem.

\begin{theorem}\label{thm:omc}
    The finite satisfiability problem of $\oMC$ is $\nexp$-complete.
\end{theorem}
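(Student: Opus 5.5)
The statement has two halves, and I would treat them separately: the upper bound follows from the machinery of this appendix, and the lower bound comes from a reduction from exponential tiling, following \cite{nexptimemodalglobalcounting}.

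\emph{Upper bound.} I would give a linear-size translation of $\oMC$ into $\optwo$ and then invoke Theorem~\ref{thm:outgoing_upper}. The translation is by induction on formulas.
\begin{itemize}
\item Unary atoms and Boolean connectives are kept unchanged.
\item $\exists y\,[E(x,y)\wedge\varphi(y)]$ becomes the local Presburger formula $\presby{E(x,y)\wedge\varphi(y)}\ge 1$, which is allowed in the outgoing-only fragment.
\item $\exists^{<k}x\,\varphi(x)$ becomes the global Presburger formula $\presbx{\varphi(x)}<k$.
\end{itemize}
Since Presburger constants are written in binary, the result has size linear in the input even when $k$ is large. The semantics coincide on loopless graphs.

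If $\oMC$ is meant over graphs that may contain self-loops, I would add a fresh unary predicate $L$ marking the elements with a loop. The modal clause then becomes $(L(x)\wedge\varphi(x))\vee\presby{E(x,y)\wedge\varphi(y)}\ge 1$. Finite models of the two formulas correspond via this predicate. In either case membership in $\nexp$ follows.

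\emph{Lower bound.} I would reduce from the $2^n\times 2^n$ tiling problem with a fixed bottom-left tile.
\begin{itemize}
\item Each grid cell is an element carrying $2n$ unary bits $X_1,\dots,X_n,Y_1,\dots,Y_n$ for its coordinates and one unary predicate per tile, exactly one of which holds.
\item Two unary markers $H$ and $V$ label an element as a horizontal or vertical witness.
\item The origin exists: a formula $\neg\exists^{<1}x\,(\text{all bits }0\wedge\text{initial tile})$.
\item Every element whose $X$-coordinate is not maximal has an $E$-successor marked $H$, via $\exists y\,[E(x,y)\wedge H(y)]$. Every element whose $Y$-coordinate is not maximal has an $E$-successor marked $V$.
\item All $H$-successors have $X$ incremented and $Y$ copied, and carry a tile horizontally compatible with $x$'s tile; $V$-successors are treated symmetrically.
\end{itemize}

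All of these constraints are imposed globally by statements of the form $\exists^{<1}x\,\neg(\ldots)$.

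The step needing care is that modal formulas cannot compare $x$ with $y$ directly. I would express the comparison through polynomially many implications of the form $\alpha(x)\to\neg\exists y\,[E(x,y)\wedge H(y)\wedge\neg\beta(y)]$, where $\alpha$ and $\beta$ are fixed conjunctions of bits and tiles:
\begin{itemize}
\item For the increment, for each carry position $i$: if $x$ has bits $0..i-1$ equal to $1$ and bit $i$ equal to $0$, then every $H$-successor has bits $0..i-1$ equal to $0$ and bit $i$ equal to $1$.
\item For copying, for each $j$ and each value $b$ of $X_j$ (with $j$ above the carry position) or of $Y_j$: if $x$ has value $b$ at that bit, then every $H$-successor has the same value there.
\item For tiles, for each tile $t$ of $x$: every $H$-successor has a tile horizontally compatible with $t$.
\end{itemize}

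Finally I would add $\exists^{<2^{2n}+1}x\,\top$, with the constant written in binary. By induction from the origin along $H$- and $V$-successors, every coordinate pair is realized. The cardinality bound then forces exactly one element per cell, so the tile assignment is well defined and respects all adjacency constraints. Conversely, a tiling yields the grid model with $E$ given by the right and up neighbours. The main obstacle, and the part I would check most carefully, is this uniqueness argument: the counting quantifier together with reachability must rule out duplicate cells, since duplicates could otherwise carry inconsistent tiles.
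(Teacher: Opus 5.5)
\textbf{Upper bound.} Your upper bound is fine. Translating into $\optwo$ (indeed into $\ounmixedgptwo$, which is how the paper relates the two logics) and applying Theorem~\ref{thm:outgoing_upper} gives membership in $\nexp$. Note that the paper itself gives no proof of this theorem; it cites \cite{nexptimemodalglobalcounting} for it.

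\textbf{The gap: the lower-bound reduction.} Absolute unary markers $H,V$ on elements cannot distinguish horizontal from vertical successors. As stated, your target formula is unsatisfiable for \emph{every} tiling instance, so it is not a reduction.

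Take any model. The bound $\exists^{<2^{2n}+1}x\,\top$ together with reachability from the origin forces a unique element $w$ with coordinates $(1,1)$.
\begin{itemize}
\item $w$ is the required $H$-successor of the cell $(0,1)$, so $w$ carries $H$.
\item $w$ is also the required $V$-successor of the cell $(1,0)$, so it is an $E$-successor of $(1,0)$.
\item Being $H$-marked, $w$ is therefore an $H$-successor of $(1,0)$.
\item Your $Y$-copy constraint then forces the $Y$-coordinate of $w$ to be $0$, a contradiction.
\end{itemize}
The grid model you propose for the converse direction fails in the same way. The cell $(1,1)$ must carry $H$ to serve as the horizontal witness of $(0,1)$, and it is then an $H$-marked successor of $(1,0)$ with the wrong $Y$-coordinate.

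So the uniqueness you flagged as the delicate step is exactly what breaks the construction. Every interior cell is at once the horizontal successor of one cell and the vertical successor of another. A unary label on the cell cannot record its role relative to each predecessor.

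\textbf{How to fix it.} Make the distinction relative to $x$, using unary information that $x$ can case-split on.
\begin{itemize}
\item Incrementing a binary counter always flips its lowest-order bit.
\item Hence the horizontal successor of $x$ is the one whose lowest $X$-bit differs from $x$'s and whose lowest $Y$-bit agrees with it.
\item The vertical successor has the opposite pattern.
\end{itemize}
Drop $H$ and $V$. For each pair $a,b\in\{0,1\}$ of values of $x$'s two lowest bits, state both the existence constraint and the ``every such successor satisfies $\beta$'' constraints. In these, replace $H(y)$ by ``$y$ has lowest $X$-bit $1-a$ and lowest $Y$-bit $b$'', and treat $V$ symmetrically. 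This multiplies the number of constraints only by a constant, so the reduction stays polynomial. With this change the grid model, with $E$ given by right and up neighbours, satisfies the formula. Your reachability-plus-cardinality argument for the other direction then goes through unchanged.
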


We conclude the section with the proof of the main result:

\begin{proof}[Proof (of \autoref{thm:outgoing_main})]
    By Lemma \ref{lem:outgoingnormal} we may consider a normal form sentence. The upper bound was shown in \autoref{thm:outgoing_upper}. 
    
    For the matching lower bound, observe that $\oMC$ is contained in $\ounmixedgptwo$. 
    In particular, the counting quantifier $\exists^{<k} x\ [\varphi (x)]$ in $\oMC$ may be expressed via:
    $\presbx{\varphi(x)}<k$.

\end{proof}